\def\eqalign#1{\null\,\vcenter{\openup\jot\mathsurround=0 pt
  \ialign{\strut\hfil$\displaystyle{##}$&$\displaystyle{{}##}$\hfil
      \crcr#1\crcr}}\,}
\newif\iffull\fullfalse
\renewcommand{\inflabel}[1]{%
  \ifsuppressrulenames\else
    \def\lab{#1}%
    \ifx\lab\empty
      \relax
    \else
      \rn{(}\rn{\lab}\rn{)}%
  \fi\fi
}
\def\doi{6 (4:8) 2010}
\begin{document}
\title{A Logical Foundation for Environment Classifiers}

\author[T.~Tsukada]{Takeshi Tsukada\rsuper a}
\address{{\lsuper a}Graduate School of Information Science, Tohoku University%,
%6-3-09, Aramaki-Aza-Aoba, Aoba-ku, Sendai city, Miyagi prefecture, Japan
}
\email{tsukada@kb.ecei.tohoku.ac.jp}

\author[A.~Igarashi]{Atsushi Igarashi\rsuper b}
\address{{\lsuper b}Graduate School of Informatics, Kyoto University%,
%36-1, Yoshidahonmachi, Sakyo-ku, Kyoto city, Kyoto prefecture, Japan
}
\email{igarashi@kuis.kyoto-u.ac.jp}

\keywords{Curry-Howard correspondence, Environment classifiers, Modal logic, Multi-stage calculus}
\subjclass{D.3.3, F.3.3, F.4.1}

\begin{abstract}
  Taha and Nielsen have developed a multi-stage calculus
  \(\lambda^{\alpha}\) with a sound type system using the notion of
  \emph{environment classifiers}.  They are special identifiers, with
  which code fragments and variable declarations are annotated, and
  their scoping mechanism is used to ensure statically that certain
  code fragments are closed and safely runnable.

  In this paper, we investigate the Curry-Howard isomorphism for
  environment classifiers by developing a typed \(\lambda\)-calculus
  \(\sname\).  It corresponds to multi-modal logic that allows
  quantification by transition variables---a counterpart of
  classifiers---which range over (possibly empty) sequences of labeled
  transitions between possible worlds.  This interpretation will
  reduce the ``run'' construct---which has a special typing rule in
  \(\lambda^{\alpha}\)---and embedding of closed code into other code
  fragments of different stages---which would be only realized by the
  cross-stage persistence operator in \(\lambda^{\alpha}\)---to merely
  a special case of classifier application.  \(\sname\) enjoys not
  only basic properties including subject reduction, confluence, and
  strong normalization but also an important property as a multi-stage
  calculus: time-ordered normalization of full reduction.  

  Then, we develop a big-step evaluation semantics for an ML-like
  language based on \(\sname\) with its type system and prove that the
  evaluation of a well-typed \(\sname\) program is properly staged.
  We also identify a fragment of the language, where erasure
  evaluation is possible.

  Finally, we show that the proof system augmented with a classical
  axiom is sound and complete with respect to a Kripke semantics of
  the logic.
\end{abstract}

\maketitle

\section{Introduction}
A number of programming languages and systems that support
manipulation of programs as
data~\cite{JonesGomardSestoft93PEbook,ConselLawallLeMeur04SCP,WicklineLeePfenning98PLDI,PolettoHsiehEnglerKaashoek99TOPLAS,TahaSheard00TCS}
have been developed in the last two decades.  A popular language
abstraction in these languages consists of the Lisp-like
\emph{quasiquotation} mechanism to create and compose code fragments
and a function to run them like \texttt{eval} in Lisp.  For those
languages and systems, a number of type systems for so-called
``multi-stage'' calculi have been
studied~\cite{TahaSheard00TCS,GlueckJorgensen95PLILP,Davies-Atemporallogicappro,Davies2001,604134,Yuse2006,1111060}
to guarantee safety of generated programs \emph{even before the
  generating program runs}.
%% A multi-stage calculus, whose execution is divided into stages, is a
%% framework of partial evaluation, run-time code generation, program
%% generation and macro expansion
%% \finish{cite: need citations}.
%% A multi-stage calculus can handle a code fragment
%% as further stage computatiton
%% and has constructors for code fragments.
%% Famous constructors for handling code fragments are
%% quoting, unquoting and evaluation,
%% which are implemented in List like languages.
%% A challenge of a type system for a multi-stage calculus
%% is to ensure safety not only of the execution,
%% but also of a generated code fragment.

Among them, some seminal work on the principled design of type systems
for multi-stage calculi is due to
Davies~\cite{Davies-Atemporallogicappro} and Davies and
Pfenning~\cite{DBLP:conf/popl/DaviesP96,Davies2001}.
They discovered the Curry-Howard
isomorphism between modal/temporal logics and multi-stage calculi by
identifying (1) modal operators in modal logic with type constructors
for code fragments treated as data and, in the case of temporal logic,
(2) the notion of time with computation stages.  For example, the
calculus $ \lambda^{\bigcirc} $~\cite{Davies-Atemporallogicappro},
which can be thought as a reformulation of Gl{\"u}ck and
J{\o}rgensen's calculus for multi-level generating
extensions~\cite{GlueckJorgensen95PLILP} by using explicit quasiquote
and unquote in the language, corresponds to a fragment of linear-time
temporal logic (LTL) with the temporal operator ``next'' (written $
\bigcirc $)~\cite{StirlingHandbook}.  Here, linearly ordered time
corresponds to the level of nesting of quasiquotations, and a modal
formula \(\bigcirc A\) to the type of \emph{code} of type \(A\).  It,
however, does not treat \texttt{eval}; in fact, the code type in
\(\lambda^{\bigcirc}\) represents code values, whose bodies are open,
that is, may have free variables, so simply adding \texttt{eval} to
the calculus does not work---execution may fail by referencing free
variables in the code.  The calculus developed by Davies and
Pfenning~\cite{DBLP:conf/popl/DaviesP96,Davies2001}, on the other
hand, corresponds to (intuitionistic) modal logic S4 (only with the
necessity operator $ \square $), in which a formula \(\square A\) is
considered the type of \emph{closed code values} of type \(A\).  It supports
safe \texttt{eval} since every code is closed, but inability to deal
with open code hampers generation of efficient code.  The following
work by Taha and
others~\cite{TahaSheard00TCS,MoggiTahaBenaissaSheard99ESOP,BenaissaMoggiTahaSheard99IMLA,604134,Calcagno2004}
sought various forms of combinations of the two systems above to
develop expressive type systems for multi-stage calculi.

% a possible world is indexed by a natural number,
% and can deal with open code fragments,
% which are code fragments with free variables.
% But $ \lambda^{\bigcirc} $ has no construct
% for runtime code execution,
% because execution of a code with free variable
% causes an unbound variable error.
% The calculus $ \lambda^{\square} $ corresponds to
% the $ \square $ fragment of the S4 modal logic
% and has only closed code fragments,
% which are code fragments with no free variable.
% Because of this restriction,
% every code fragments are safely runnable
% and can be embedded into any future stage.

Finally, Taha and Nielsen~\cite{604134} developed a multi-stage
calculus $ \lambda^{\alpha} $, which was later modified to make type
inference possible~\cite{Calcagno2004} and implemented as a basis of
MetaOCaml.  The calculus $ \lambda^{\alpha} $ has a strong type system
while supporting open code, \textbf{run}, (which corresponds to \texttt{eval}), and
a mechanism called cross-stage persistence (CSP), which allows a
value to be embedded in a code fragment evaluated later.  They
introduced the notion of \emph{environment classifiers} (or, simply,
classifiers), which are special identifiers with which code fragments
and variable declarations are annotated, to the type system.  A key
idea is to reduce the closedness checking of a code fragment (which is
useful to guarantee the safety of \textbf{run}) to the freshness
checking of a classifier.  Unfortunately, however, correspondence to a
logic is not clear for \(\lambda^{\alpha}\) any longer, resulting in
somewhat ad-hoc typing rules and complicated operational semantics,
which would be difficult to adapt to different settings.

% Using this notion, Taha and Nielsen
% developed a multi-stage typed calculus which can deal with both open
% and closed code fragments and has a construct for safe-running.
% However, the intuitive meaning of classifiers and its binder are not
% clear and the typing rule for the \texttt{run} construct is unnatural.
% Moreover, reduction rules are less understandable.

In this paper, we investigate a Curry-Howard isomorphism for
environment classifiers by developing a typed \(\lambda\)-calculus
\(\sname\).  As a computational calculus, \(\sname\) is equipped with
quasiquotation (annotated with environment classifiers) and
abstraction over environment classifiers just like \(\lambda^\alpha\),
with application of a classifier abstraction to a \emph{possibly empty
  sequence of} environment classifiers, which makes \(\sname\)
different from \(\lambda^\alpha\).  Intuitively, (the type system of)
\(\sname\) can be considered a proof system of a multi-modal logic to
reason about deterministic labeled transition systems.  Here, modal
operators are indexed with transition labels, and so the logic is
multi-modal.  One notable feature of the logic is that it has
quantification that allows one to express ``for any state
transitions,'' where a state transition is a possibly empty sequence
of labels.  This quantifier corresponds to types for classifier
abstractions, used to ensure freshness of classifiers, which correspond
to transition labels (and variables ranging over their sequences).

% The new calculus corresponds to an (intuitionistic)
% multi-modal logic that allows quantification over state transitions in
% the underlying Kripke structure.  Intuitively, a Kripke structure for
% this modal logic would be an LTS where transitions between states, or
% possible worlds, are annotated with different labels.  Modal operators
% are indexed with these labels, hence the logic is multi-modal.  The
% quantification allows one to express ``for any state transitions''
% where a state transition is a possibly empty sequence of labels.  From
% a computational point of view, multiple modalities correspond to
% indexing of code types by classifiers and quantifiers to types for
% classifier abstractions, used to ensure freshness of classifiers.

% to set, in the
% Kripke semantics, classifiers to range over possibly empty
% \emph{sequences} of labels, attached to the transition function on
% possible worlds.  

A pleasant effect of this logical interpretation---in particular,
interpreting environment classifiers as variables ranging over 
sequences of transition labels---is that it will reduce the
\textbf{run} construct---which has a peculiar typing rule in
\(\lambda^{\alpha}\)---and embedding of closed code into other code
fragments of different stages---which would be only realized by the
CSP operator in \(\lambda^{\alpha}\)---to merely a special case of
classifier application.

Our technical contributions can be summarized as follows:
\begin{enumerate}[$\bullet$]
\item Identification of a modal logic that corresponds to (a
  computational calculus with) environment classifiers;
\item Development of a new typed \(\lambda\)-calculus \(\sname\),
  naturally emerged from the correspondence, with its syntax,
  operational (small-step reduction and big-step evaluation)
  semantics, and type system;
\item Proofs of basic properties, which a multi-stage calculus is expected to enjoy; and 
\item Proofs of soundness and completeness of the proof system (augmented with a
classical axiom) with respect to a Kripke semantics of the logic.
\end{enumerate}
Our calculus \(\sname\) not only enjoys the basic properties such as
subject reduction, confluence, and strong normalization but also
time-ordered normalization~\cite{Davies-Atemporallogicappro,Yuse2006},
which says (full) reduction to a normal form can always be performed
according to the order of stages.  We extend \(\sname\) with base
types and recursion, define a big-step evaluation semantics as a basis
of a multi-stage programming language such as MetaOCaml, and prove the
evaluation of a well-typed program is safe and staged, i.e., if a
program of a code type evaluates to a result, it is a code value whose
body is a well-typed program, again.  We also develop erasure
semantics, where information on classifiers is (mostly) discarded, and
identify a subset of the language, where the original and erasure
semantics agree, by an alternative type system.  It turns out that the
subset is rather similar to \(\lambda^i\)~\cite{Calcagno2004}, whose
type system is used in the current implementation of MetaOCaml.

One missing feature in \(\sname\) is CSP for all types of values but
we do not think it is a big problem.  First, CSP for primitive types
such as integers is easy to add as a primitive; CSP for function types
is also possible as long as they do not deal with open code, which, we
believe, is usually the case.
% (if one gives up printing code representation of functional values
% as in MetaOCaml).  \finish{I find that the CSP for function values
%   is difficult to add because function may return open code
%   fragments.  For example, \( x : \tau @ \alpha \vdash^\varepsilon
%   \lambda y : \sigma.  \Next{\alpha}{x} : \sigma \to
%   \Seal{\alpha}{\tau} \).}
Second, as mentioned above, embedding closed code into code fragments
of later stages is supported by a different means.  It does not seem
very easy to add CSP for open code to \(\sname\), but we think it is
rarely needed.  For more detail, see Section~\ref{sec:compare-alpha}.

% To give a foundation and an intuitive explanation for
% environment classifiers,
% we develop a logic which corresponds a multi-stage calculus
% with environment classifiers,
% following the suggestion by Daveis and Davies and Pfenning.
% The logic is a multi-modal logic
% because a calculus with environment classifiers has
% code type constructors as many as classifiers,
% and possible worlds of its Kripke semantics are
% related to the sequences of classifiers
% because a stages of a calculus
% is naturally indicated by sequences of classifiers.
% From this principle,
% we construct a Kripke semantics
% in which the structure of possible worlds is
% a labeled transition system.
% Then a code type named a classifier corresponds to a labeled transition
% and a stage indicated by a sequence of classifiers
% to a state which is reachable through a labeled path.
% These two correspondence are key idea of this paper
% and give an intuitive explanation of classifiers.

% We also propose a multi-stage calculus $ \sname $,
% which corresponds to the logic by Curry-Howard correspondence
% and this calculus differs from $ \lambda^{\alpha} $
% in the treatment of classifiers.
% In what follows,
% we call the counterpart of classifiers in our system \emph{transitions}
% which comes from the meaning in Kripke semantics,
% and use the name classifier only for the system
% given by Taha and Nielsen to avoid confusion.
% \finish{explanation about difference}

We can obtain a natural deduction proof system of a new logic that
corresponds to the calculus \( \sname \) just by removing terms from
typing rules, as usual.  It is also easy to see that terms and
reduction in the calculus correspond to proofs and proof normalization
in the logic, respectively.
% (Note that there is one-to-one correspondence between
% typable terms and type derivations in the calculus.)

Of course, we should answer an important question: ``What does this
logic really mean?''  We will elaborate the intuitive meaning of
formulae in Section~\ref{sec:overview} and proof rules can be
understood according to this informal interpretation but, to answer
this question more precisely, one has to give a semantics and prove
the proof system is sound and complete with respect to the semantics.
However, the logic is intuitionistic and it is not straightforward to
give (Kripke) semantics~\cite{PlotkinStirling86TARK}.  So, instead of
Kripke semantics of the logic directly corresponding to \(\sname\), we
give that of a classical version of the proof system, which has a
proof rule for double negation elimination and prove that the proof
system is sound and complete with it in Section~\ref{sec:logic}.  Even
though the semantics does not really correspond to \(\sname\), it
justifies our informal interpretation.

\subsection*{Organization of the Paper.}

In Section~\ref{sec:overview}, we review \(\lambda^\alpha\) and
informally describe how the features of its type system correspond to
those of a logic.  In Section~\ref{sec:calculus}, we define the
multi-stage calculus $ \sname $ and prove basic properties including
subject reduction, strong normalization, confluence, and time-ordered
normalization.  In Section~\ref{sec:miniml}, we define \miniML, an
extension of $ \sname $ with base types and recursion, with its
big-step semantics and prove that the big-step semantics implements
staged execution.  We also investigate erasure semantics of a subset
of \miniML{} here.  In Section~\ref{sec:logic}, we formally define (a
classical version of) the logic and prove soundness and completeness
of the proof system (augmented with a classical rule) with respect to
a Kripke semantics.  Lastly, we discuss related work and conclude.

\section{Interpreting Environment Classifiers in a Modal Logic}
\label{sec:overview}
In this section, we informally describe how environment classifiers
can be interpreted in a modal logic.  We start with
reviewing Davies'
\(\lambda^{\bigcirc}\)~\cite{Davies-Atemporallogicappro} to get an
intuition of how notions in a modal logic correspond to those in
a multi-stage calculus.  Then, along with reviewing main ideas of
environment classifiers, we describe our logic informally and how our
calculus \(\sname\) is different from \(\lambda^\alpha\) by Taha and
Nielsen~\cite{604134}.

\subsection{\(\lambda^\bigcirc\): Multi-Stage Calculus Based on LTL}
\label{subsec:bigcirc}
Davies has developed the typed multi-stage calculus
\(\lambda^{\bigcirc}\), which corresponds to a fragment of
intuitionistic LTL by the Curry-Howard isomorphism.  It can be
considered the \(\lambda\)-calculus with a Lisp-like quasiquotation
mechanism.  We first review linear-time temporal logic and the
correspondence between the logic and the calculus.

In LTL, the truth of propositions may depend on discrete and linearly
ordered time, i.e., a given time has a unique time that follows it.
Some of the standard temporal operators are \(\bigcirc\) (to mean
``next''), \(\Box\) (to mean ``always''), and \(U\) (to mean
``until'').  The Kripke semantics of (classical) LTL can be given by
taking the set of natural numbers as possible worlds;%
\footnote{%
  Note that this is equivalent to another, perhaps more standard
  presentation as a sublogic of CTL${}^*$~\cite{StirlingHandbook}.  }
then, for example, the semantics of \(\bigcirc\) is given by: $ n
\Vdash \bigcirc \tau $ if and only if $ n + 1 \Vdash \tau $, where \(
n \Vdash \tau \) is the satisfaction relation, which means ``\(\tau\)
is true in world---or, at time---\(n\).''

In addition to the usual Curry-Howard correspondence between
propositions and types and between proofs and terms, Davies has
pointed out additional correspondences between time and computation
stages (i.e., levels of nested quotations) and between the temporal
operator \(\bigcirc\) and the type constructor meaning ``the type of
code of''.  So, for example, \(\bigcirc \tau_1 \rightarrow \bigcirc
\tau_2\), which means ``if \(\tau_1\) holds at next time, then
\(\tau_2\) holds at next time,'' is considered the type of functions
that take a piece of code of type \(\tau_1\) and return another piece
of code of type \(\tau_2\).  According to this intuition, he has
developed \(\lambda^\bigcirc\), corresponding to the fragment of LTL
only with \(\bigcirc\).

$ \lambda^{\bigcirc} $ has two new term constructors $ \textbf{next }
M$ and $ \textbf{prev } M$, which correspond to the introduction and
elimination rules of \(\bigcirc\), respectively.  The type judgment of
\(\lambda^\bigcirc\) is of the form $ \Gamma \vdash^{n} M : \tau $,
where $ \Gamma $ is a context, $ M $ is a term, $ \tau $ is a type (a
proposition of LTL, only with \(\bigcirc\)) and $ n $ is a natural
number indicating a stage.  A context, which corresponds to
assumptions, is a mapping from variables to pairs of a type and a
natural number, since the truth of a proposition depends on time.  The
key typing rules are those for \textbf{next} and \textbf{prev}:
\vspace{1ex}
\begin{center}
\infrule{
  \Gamma \vdash^{n + 1} M : \tau
}{
  \Gamma \vdash^{n} \textbf{next } M : \bigcirc \tau
}
\qquad
\infrule{
  \Gamma \vdash^{n} M : \bigcirc \tau
}{
  \Gamma \vdash^{n + 1} \textbf{prev } M : \tau
} .
\end{center}
\vspace{1ex}
The former means that, if \(M\) is of type \(\tau\) at stage \(n+1\),
then, at stage \(n\), \(\textbf{next }M\) is code of type \(\tau\);
the latter is its converse.  Computationally, \textbf{next} and
\textbf{prev} can be considered quasiquote and unquote, respectively.
So, in addition to the standard \(\beta\)-reduction,
\(\lambda^{\bigcirc}\) has the reduction rule \(\textbf{prev
}(\textbf{next }M) \longrightarrow M\), which cancels \textbf{next} by
\textbf{prev}.

The code types in \(\lambda^\bigcirc\) are often called open code
types, since the quoted code may contain free variables, so naively
adding the construct to ``run'' quoted code does not work, since
it may cause unbound variable errors.

Although the logic is considered intuitionistic, Davies has only shown
that the proof system augmented with double negation elimination is
equivalent to a standard axiomatic
formulation~\cite{StirlingHandbook}, which is known to be sound and
complete with the Kripke semantics described above.  Kojima and
Igarashi~\cite{KojimaIgarashi08IMLA,KojimaIgarashi10IC} have studied the semantics of
intuitionistic LTL and shown that the proof system obtained from
\(\lambda^\bigcirc\) is sound and complete with the given semantics.
Even though the Kripke semantics discussed here does not really
correspond to the logic obtained from the calculus, it certainly helps
understand intuition behind the logic and we will continue to use
Kripke semantics in what follows for an explanatory purpose.

\subsection{Multi-Modal Logic for Environment Classifiers}
\label{subsec:quoting-with-classifiers}

Taha and Nielsen~\cite{604134} have introduced environment classifiers
to develop \(\lambda^\alpha\), which has quasiquotation, \textbf{run},
and CSP with a strong type system.  We explain how \(\lambda^\alpha\)
can be derived from \(\lambda^\bigcirc\).\footnote{%
  Unlike the original presentation, classifiers do not appear
  explicitly in contexts here.  The typing rules shown are accordingly
  adapted.}  Environment classifiers are a special kind of identifiers
with which code types and quoting are annotated: for each classifier $
\alpha $, there are a type constructor $\langle \tau \rangle^\alpha$
for code and a term constructor $ \langle M \rangle^{\alpha} $ to
quote \(M\).  Then, a stage is naturally expressed by a sequence of
classifiers, and a type judgment is of the form \(\Gamma \vdash^{A} M
: \tau\), where natural numbers in a \(\lambda^\bigcirc\) type
judgment are replaced with sequences \(A\) of classifiers.  So, the
typing rules of quoting and unquoting (written \(\tilde\; M\)) in $
\lambda^{\alpha} $ are given as follows:
\vspace{1ex}
\begin{center}
\infrule{
  \Gamma \vdash^{A \alpha} M : \tau
}{
  \Gamma \vdash^{A} \langle M \rangle^{\alpha} : \langle \tau \rangle^{\alpha}
}
\qquad
\infrule{
  \Gamma \vdash^{A} M : \langle \tau \rangle^{\alpha}
}{
  \Gamma \vdash^{A \alpha} \; \tilde{} \; M : \tau
} .
\end{center}
\vspace{1ex}
% Intuitively, code fragments in incomparable
% stages can be computed independently. 
Obviously, this is a generalization of $ \lambda^{\bigcirc} $: if only
one classifier is allowed, then the calculus is essentially $
\lambda^{\bigcirc} $.  
% As a result of this extension, however, the set
% of stages has non-linear ordering; it means that there are
% incomparable stages, e.g., \( \alpha \) and \( \beta \).

The corresponding logic would also be a generalization of LTL, in
which there are several ``dimensions'' of linearly ordered time.  A
Kripke frame for the logic is given by a transition system in which
each transition relation is a map.  More formally, a frame is a triple
\((S, L, \{\stackrel{\alpha}{\longrightarrow} \mid \alpha \in L\})\)
where \(S\) is the (non-empty) set of states, \(L\) is the set of
labels, and \(\mathord{\stackrel{\alpha}{\longrightarrow}} \in
S\rightarrow S\) for each \(\alpha \in L\).  Then, the semantics of
\(\langle \tau \rangle^\alpha\) is given by: $ s \Vdash \langle \tau
\rangle^{\alpha} $ if and only if $ s' \Vdash \tau $ for any \( s'
  \) such that \(s \stackrel{\alpha}{\longrightarrow} s'\), where $s$
and $s'$ are states.

The calculus $ \lambda^{\alpha} $ has also a scoping mechanism for
classifiers and it plays a central role to guarantee safety of \textbf{run}.
The term $ (\alpha) M $, which binds \(\alpha\) in \(M\), declares
that \(\alpha\) is used locally in \(M\) and such a local classifier
can be instantiated with another classifier by term $ M [\beta] $.
We show typing rules for them with one for \textbf{run} below:
\vspace{1ex}
\begin{center}
\infrule{
  \Gamma \vdash^{A} M : \tau \AND
    \alpha \notin \mathop{\textrm{FV}}(\Gamma, A)
}{
  \Gamma \vdash^{A} (\alpha) M : (\alpha) \tau
}
\quad
\infrule{
  \Gamma \vdash^{A} M : (\alpha) \tau
}{
  \Gamma \vdash^{A} M [\beta] : \tau \Subst{\alpha := \beta}
}
\quad
\infrule{
  \Gamma \vdash^{A} M : (\alpha) \langle \tau \rangle^{\alpha}
}{
  \Gamma \vdash^{A} \textbf{run} \; M : (\alpha) \tau
} .
\end{center}
\vspace{1ex}
The rule for $ (\alpha) M $ requires that $ \alpha $ does not occur in
the context---the term $ M $ has no free variable\footnote{%
It is important to distinguish labels of free variables from
free occurrences of classifiers.  For example, the term \( \langle \lam x : b. x \rangle^\alpha \)
has free occurrence of the classifier \( \alpha \), but no free variable labeled by \( \alpha \)
because there is no free variable at all.} labeled $ \alpha
$---and gives a type of the form \((\alpha)\tau\), which Taha and
Nielsen called \emph{$ \alpha $-closed type}, which characterizes a relaxed
notion of closedness.  For example,
the term \( \langle \lam x : b. x \rangle^\alpha \) is a closed term, 
so this term is \( \alpha \)-closed and the judgment
\( \emptyset \vdash^\varepsilon (\alpha)\langle \lam x : b. x \rangle^\alpha : (\alpha)\langle b \to b \rangle^\alpha \)
is valid.
The term \( \langle x \rangle^\alpha \), however, is not \( \alpha \)-closed because this
term has free variable \( x \) in the stage \( \alpha \),
but \( \beta \)-closed (if \( \beta \neq \alpha \)) because there is no free variable
in the stage containing the classifier \( \beta \).
The rule for $ \textbf{run} \; M $ says that
an $ \alpha $-closed code fragment annotated
with $ \alpha $ can be run.  Note that \(\langle \cdot \rangle^\alpha\)
(but not \((\alpha)\cdot\)) is removed in the type of \(\textbf{run
}M\).  Taha and Nielsen have shown that $\alpha$-closedness is
sufficient to guarantee safety of 
\textbf{run}.

When this system is to be interpreted as logic, it is fairly clear
that \((\alpha)\tau\) is a kind of universal quantifier, as Taha and
Nielsen have also pointed out~\cite{604134}.  Then, the question is
``What does a classifier range over?'', which has not really been
answered so far.  Another interesting question is ``How can the
typing rule for \textbf{run} be read logically?''

One plausible answer to the first question is that ``classifiers
range over the set of transition labels''.  This interpretation
matches the rule for \(M[\beta]\) and it seems that the typing rules
without \textbf{run} (with a classical axiom) are sound and complete
with the Kripke semantics that defines \(s \Vdash (\alpha)\tau \)
by $ s \Vdash \tau[\alpha:=\beta] $ for all \(\beta \in L\).  However,
it is then difficult to explain the rule for \textbf{run}.

The key idea to solve this problem is to have classifiers range over
the set of finite (and possibly empty) \emph{sequences} of transition
labels and to allow a classifier abstraction \((\alpha)M\) to be
applied to also sequences of classifiers.  Then, \textbf{run} will be
unified to a special case of application of a classifier abstraction
to the \emph{empty} sequence.  More concretely, we change the term
\(M[\beta]\) to \(M[B]\), where \(B\) is a possibly empty sequence of
classifiers (the left rule below).  When \(B\) is empty and \(\tau\)
is \(\langle \tau_0 \rangle^\alpha\) (assuming \(\tau_0\) do not
include \(\alpha\)), the rule (as shown as the right rule below) can
be thought as the typing rule of (another version of) \textbf{run},
since \(\alpha\)-closed code of \(\tau_0\) becomes simply \(\tau_0\)
(without \((\alpha)\cdot\) as in the original \(\lambda^\alpha\)).
\vspace{1ex}
\begin{center}
\infrule{
  \Gamma \vdash^{A} M : (\alpha) \tau
}{
  \Gamma \vdash^{A} M [B] : \tau \Subst{\alpha := B}
} 
\qquad
\infrule{
  \Gamma \vdash^{A} M : (\alpha) \langle \tau_0 \rangle^\alpha
}{
  \Gamma \vdash^{A} M [\varepsilon] : \tau_0
}
\end{center}
\vspace{1ex}

Another benefit of this change is that CSP for closed code (or
embedding of persistent code~\cite{Yuse2006}) can be easily expressed.
For example, if \(x\) is of the type \((\alpha)\langle \textbf{int}
\rangle^\alpha\), then it can be used as code computing an integer at
\emph{different} stages as in, say, \(\langle \cdots (\;\tilde{} \; x
[\alpha]) + 3 \cdots \langle \cdots 4 + (\;\tilde{}\,\;\tilde{}\;
x[\alpha\beta]) \cdots \rangle^\beta \cdots \rangle^\alpha\).  So,
once a programmer obtains closed code, she can use it at any later
stage.  While our calculus \( \sname \) does not have a primitive
  of CSP for all types, we can express CSP in many cases. In
  Section~\ref{sec:compare-alpha}, we discuss this subject in more
  detail.

Correspondingly, the semantics is now given by \(v, \rho; s \Vdash \tau\)
where \( v \) is a valuation for propositional variables
and \(\rho\) is a mapping from classifiers to sequences of
transition labels.  Then, \(v, \rho; s \Vdash \langle \tau
\rangle^\alpha\) is defined by \(v, \rho; s' \Vdash \tau\) where \(s'\)
is reachable from \(s\) through the sequence \(\rho(\alpha)\) of
transitions and \(v, \rho; s \Vdash (\alpha)\tau\) by: \(v,
\rho[A/\alpha]; s \Vdash \tau\) for any sequence \(A\) of labels
(\(\rho[A/\alpha]\) updates the value of \(\alpha\) to be \(A\)).  In
Section~\ref{sec:logic}, we give a formal definition of the Kripke
semantics and show that the proof system, based in the ideas above,
with double negation elimination is sound and complete with respect to it.

\section{The Calculus $ \sname $}
\label{sec:calculus}
In this section, we define the calculus $ \sname $, based on the ideas
described in the previous section: we first define its syntax, type
system, and small-step full reduction semantics and states some basic
properties; then we prove the time-ordered normalization property.
Finally, we give an example of programming in \(\sname\).  We
intentionally make notations for type and term constructors different
from \(\lambda^\alpha\) because their precise meanings are different;
it is also to avoid confusion when we compare the two calculi.

\subsection{Syntax}
Let $ \Gene $ be a countably infinite set of \emph{transition
  variables}, ranged over by $ \alpha $ and $ \beta $.  A
\emph{transition}, denoted by $ A $ and $ B $, is a finite sequence of
transition variables; we write $ \varepsilon $ for the empty sequence
and \(AB\) for the concatenation of the two transitions.  We write $
\Gene^* $ for the set of transitions.  A transition is often called a
\emph{stage}.  We write $ \FMV(A) $ for the set of transition
variables in $ A $, defined by $ \FMV(\alpha_1 \alpha_2 \dots
\alpha_n) = \{ \alpha_i ~|~ 1 \le i \le n \} $.

Let $ \PropVar $ be the set of \emph{base types} (corresponding to
propositional variables), ranged over by $ b $.  The set $ \TypeSet $
of \emph{types}, ranged over by $ \tau $ and $ \sigma $, is defined by
the following grammar:
\[\begin{array}{@{}lr@{}r@{}l}
    \textrm{\it Types}\quad
      & \tau &{}::={}&
        b \OR
	\tau \to \tau \OR
	\Seal{\alpha}{\tau} \OR
	\forall \alpha . \tau
  \end{array}\ .
\]
A type is a base type, a function type, a code type, which corresponds
to \(\langle \cdot \rangle^\alpha\) of \(\lambda^\alpha\), or an
\(\alpha\)-closed type, which corresponds to \((\alpha)\tau\). 
The
transition variable $ \alpha $ of $ \forall \alpha . \tau $ is bound
in $ \tau $.
In what follows, we assume tacit renaming of bound variables in types.
The type constructor $ \Seal{\alpha}{} $ connects tighter than $ \to $
and $\to$ tighter than \(\forall\): for example, $ \Seal{\alpha}{\tau}
\to \sigma $ means $ (\Seal{\alpha}{\tau}) \to \sigma $ and $ \forall
\alpha . \tau \to \sigma $ means $ \forall \alpha . (\tau \to \sigma) $.
We write $ \FMV(\tau) $ for the set of free transition variables,
which is defined in a straightforward manner.

Let $ \VarSet $ be a countably infinite set of \emph{variables},
ranged over by $ x $ and $ y $.  The set of \emph{terms}, ranged over
by $ M $ and $ N $, is defined by the following grammar:
\[
  \begin{array}{@{}lr@{}r@{}l}
    \textit{\it Terms}\quad
    & M &{}::={}&
        x \OR
        M \app M \OR
        \lam x : \tau . M \OR
	\Next{\alpha} M \OR 
	\Prev{\alpha} M \OR
	\Gen{\alpha}{M} \OR
        \Ins{M}{A}
  \end{array}\ .
\]
In addition to the standard \(\lambda\)-terms, there are four more
terms, which correspond to \(\langle M \rangle^\alpha\),
\(\tilde\; M\), \((\alpha)M\), and \(M [\beta]\) of \(\lambda^\alpha\)
(respectively, in the order presented).  Note that, unlike \(\tilde\; M\)
in \(\lambda^\alpha\), the term \(\Prev{\alpha} M\) for
unquote is also annotated.  This annotation is needed because
a single transition variable can be instantiated with a sequence,
in which case a single unquote has to be duplicated accordingly.
The variable $ x $ in $ \lam x : \tau . M
$ and the transition variable $ \alpha $ in $ \Gen{\alpha}{M} $ are
bound in $ M $.  Bound variables are tacitly
renamed to avoid variable capture in substitution.
We write \( \FMV(M) \) for the set of free transition variables,
which is defined in a straightforward manner, e.g.,
\( \FMV(\Next{\alpha}{M}) = \FMV(\Prev{\alpha}{M}) = \FMV(M) \cup \{ \alpha \} \),
\( \FMV(\Lambda \alpha. M) = \FMV(M) - \{ \alpha \} \) and
\( \FMV(M \app A) = \FMV(M) \cup \FMV(A) \).

\subsection{Type System}
\label{sec:type_system}
As mentioned above, a type judgment and variable declarations in a context are
annotated with stages.
A \emph{context} \(\Env\) is a finite set $ \{ x_1 : \tau_1 @ A_1,
\dots, x_n : \tau_n @ A_n \}$, where $x_i$ are distinct variables.  We
often omit braces $\{\}$.
We write $ \FMV(\Env) $ for the set of free transition
variables in $ \Env $, defined by: $ \FMV(\{ x_i : \tau_i @ A_i ~|~ 1
\le i \le n \}) = \bigcup_{i = 1}^{n} (\FMV(\tau_i) \cup \FMV(A_i)) $.

A \emph{type judgment} is of the form $ \Env \vdash^{A} M : \tau $, read
``term $M$ is given type $\tau$ under context $\Env$ at stage \(A\).''
Figure \ref{fig:typing_rules} presents the typing rules to derive type
judgments.
%
%
% Typing Rules
%
\begin{figure}[t]
\begin{multicols}{2}
  \typicallabel{Axiom}
  \RTypeVar\\ \vspace{.5em}
  \RTypeAbs\\ \vspace{.5em}
  \RTypeApp\\ \vspace{3.5em}
  \RTypeNext\\ \vspace{.5em}
  \RTypePrev\\ \vspace{.5em}
  \RTypeGen\\ \vspace{.5em}
  \RTypeIns
\end{multicols}
  \vspace{-1.4em}
  \caption{Typing rules.}
  \label{fig:typing_rules}
\end{figure}
The notation $ \tau \Subst{\alpha := B} $, used in the rule
(\textsc{Ins}), is capture-avoiding substitution of transition \(B\)
for \(\alpha\) in \(\tau\).  When \(\alpha\) in \(\Seal{\alpha}{}\) is
replaced by a transition, we identify $ \Seal{\varepsilon}{\tau} $
with $ \tau $ and $ \Seal{AB}{\tau} $ with $ \Seal{A}{\Seal{B}{\tau}}
$.  For example, \( (\Seal{\alpha}{\forall \alpha . \Seal{\alpha}{b}})
\Subst{\alpha := \varepsilon} = \forall \alpha . \Seal{\alpha}{b}\)
and \( (\forall \alpha . \Seal{\beta}{b}) \Subst{\beta := \alpha
  \alpha} = \forall \alpha' . \Seal{\alpha}{\Seal{\alpha}{b}}\).

The first three rules on the left are mostly standard except for stage
annotations.  The conditions on stage annotations are similar to those
in most multi-stage calculi: The rule (\textsc{Var}) means that
variables can appear only at the stage in which those variables are
declared.  and the rule (\textsc{Abs}) requires the stage of the
parameter to be the same as that of the body and, correpondingly, the
rule (\textsc{App}) requires \(M\) and \(N\) are typed at the same
stage.  The next two rules ($\Next{}{}$) and ($\Prev{}{}$) are for
quoting and unquoting and already explained in the previous section.
The last two rules (\textsc{Gen}) and (\textsc{Ins}) are for
generalization and instantiation of a transition variable,
respectively.  They resemble the introduction and elimination rules of
\(\forall x.A(x)\) in first-order predicate logic: the side condition
of the rule (\textsc{Gen}) ensures that the choice of $ \alpha $ is
independent of the context.  Computationally, this side condition
expresses $ \alpha $-closedness of $ M $, that means $ M $ has no free
variable which has annotation $ \alpha $ in its type or its stage.
This is a weaker form of closedness, which means $ M $ has no free
variable at all.

\subsection{Reduction}
We will introduce full reduction
 $ M \red N $, read ``\(M\) reduces to \(N\)
in one step,''
and prove basic properties including subject reduction,
confluence and strong normalization.
% This is a smallest step of calculation in this calculus.

Before giving the definition of reduction, we define substitution.
Since the calculus has binders for term variables and transition
variables, we need two kinds of substitutions for both kinds of
variables.  Substitution $ M \Subst{x := N} $ for a term variable is
the standard capture-avoiding one, and its definition is omitted here.
Substitution $ M \Subst{\alpha := A} $ of \(A\) for \(\alpha\) is
defined similarly to $ \tau \Subst{\alpha := A} $.
We show representative cases below:
\begin{eqnarray*}
  (\lam x : \tau . M) \Subst{\alpha := A} & = &
    \lam x : (\tau \Subst{\alpha := A}) . (M \Subst{\alpha := A}) \\
  (M \app B) \Subst{\alpha := A} & = &
    (M \Subst{\alpha := A}) \app (B \Subst{\alpha := A}) \\
  (\Next{\beta}{M}) \Subst{\alpha := A} & = &
    \Next{\beta \Subst{\alpha := A}}{(M \Subst{\alpha := A})} \\
  (\Prev{\beta}{M}) \Subst{\alpha := A} & = &
    \Prev{\beta \Subst{\alpha := A}}{(M \Subst{\alpha := A})},
\end{eqnarray*}
%where $ \Next{\alpha_1 \dots \alpha_n}{M} = \Next{\alpha_1}{\cdots
%  \Next{\alpha_n}{M}} $ and $ \Prev{\alpha_1 \dots \alpha_n}{M} =
%\Prev{\alpha_n}{\cdots \Prev{\alpha_1}{M}} $.
where $ \Next{\alpha_1 \dots \alpha_n}{M} $ is an abbreviation for
$ \Next{\alpha_1}{\cdots \Next{\alpha_n}{M}} $, and
$ \Prev{\alpha_1 \dots \alpha_n}{M} $ for
$ \Prev{\alpha_n}{\cdots \Prev{\alpha_1}{M}} $.  In particular, $
(\Next{\alpha}{M}) \Subst{\alpha := \varepsilon} = (\Prev{\alpha}{M})
\Subst{\alpha := \varepsilon} = M \Subst{\alpha := \varepsilon} $.
Note that, when a transition variable in \(\Prev{}{}\) is
replaced, the order of transition variables is reversed, because this
is the inverse operation of $ \Next{}{} $.  This is similar to
the inversion operation in group theory: $ (a_1 a_2 \dots a_n)^{-1} =
a_n^{-1} a_{n-1}^{-1} \dots a_1^{-1} $.

The \emph{reduction relation} \(M \red N\) is the least relation closed under
the following three computation rules
\[
  (\lam x :\tau . M) \app N  \red  M \Subst{x := N} \qquad
  \Prev{\alpha}{(\Next{\alpha}{M})} \red M \qquad
  (\Lambda \alpha . M) \app A  \red  M \Subst{\alpha := A}
\]
and congruence rules, which are omitted here.  In addition to the
standard \(\beta\)-reduction, there are two rules: the second one,
which is already explained previously, cancels quote by unquote and
the last one, instantiation of a transition variable, is similar to
polymorphic function application in System F.  Note that the reduction
is full---reduction occurs under any context in any stage.  This
reduction relation can be thought as (non-deterministic) proof
normalization, which should preserve types, be confluent and strongly
normalizing.  Then, we will define another reduction relation as a
triple $ M \stackrel{T}{\red} N $, with $ T $ standing for the stage
of reduction in Section~\ref{sec:annotated}, as done in $
\lambda^\bigcirc $~\cite{Davies-Atemporallogicappro} and $
\lambda^{\bigcirc\square} $~\cite{Yuse2006}, to prove time-ordered
normalization.

\subsection{Basic Properties}

We will prove three basic properties, namely, subject reduction,
strong normalization and confluence.

The key lemma is, as usual, Substitution Lemma, which says
substitution preserves typing.  We will prove such a property for each
kind of substitution.  We define substitution $ \Env \Subst{\alpha
  := A} $ for contexts as follows:
\begin{equation*}
  \{ x_i : \tau_i @ A_i \} \Subst{\alpha := A}
    = \{ x_i : \tau_i \Subst{\alpha := A} @ A_i \Subst{\alpha := A} \}
\end{equation*}
\begin{lem}[Substitution Lemma]
\label{lem:subst}
\quad
\begin{enumerate}[\em(1)]
\item If $ \Env, x : \sigma @ B \vdash^{A} M : \tau $ and
  $ \Env \vdash^{B} N : \sigma $,
	then $ \Env \vdash^{A} M \Subst{x := N} : \tau $.
\item If $ \Env \vdash^{A} M : \tau $,
  then $ \Env \Subst{\alpha := B} \vdash^{A \Subst{\alpha := B}}
    M \Subst{\alpha := B} : \tau \Subst{\alpha := B} $.
\end{enumerate}
\proof
Easy induction on the typing rules.
We only show main cases.

The proof of (1):
\begin{asparaitem}
\item Case \( M = x \): It is the case that \( \tau = \sigma \) and \(
  A = B \).  So, what we have to show is \( \Env \vdash^B N : \tau \),
  which is already assumed.
  \\
\item Case \( M = M_1 M_2 \): By the typing rules, we know that
\( \Env, x : \sigma @ B \vdash^A M_1 : \tau_0 \to \tau \) and
\( \Env, x : \sigma @ B \vdash^A M_2 : \tau_0 \) for some \( \tau_0 \).
By the induction hypothesis, \( \Env \vdash^A (M_1 \Subst{x := N}) : \tau_0 \to \tau \) and
\( \Env \vdash^A (M_2 \Subst{x := N}) : \tau_0 \). By the rule \textsc{(App)},
we have \( \Env \vdash^A (M_1 \app M_2) \Subst{x := N} : \tau \).
\\
\item Case \( M = \Next{\alpha}{M_0} \): By the typing rules, we know that
\( \tau = \Seal{\alpha}{\tau_0} \) and \( \Env, x : \sigma @ B \vdash^{A \alpha} M_0 : \tau_0 \).
By the induction hypothesis, \( \Env \vdash^{A \alpha} M_0 \Subst{x := N} : \tau_0 \).
So, we obtain \( \Env \vdash^A (\Next{\alpha}{M_0}) \Subst{x := N} : \Seal{\alpha}{\tau_0} \).
\end{asparaitem}
\ 

The proof of (2):
\begin{asparaitem}
\item Case \( M = x \): By the typing rules, \( x : \tau @ A \in \Env \).
So, \( x : (\tau \Subst{\alpha := B}) : (A \Subst{\alpha := B}) \in \Env \Subst{\alpha := B} \).
Therefore, \( \Env \Subst{\alpha := B} \vdash^{A \Subst{\alpha := B}} x : \tau \Subst{\alpha := B} \).
\\
\item Case \( M = \Next{\beta}{M_0} \): By the typing rules, we have
\( \tau = \Seal{\beta}{\tau_0} \) and \( \Env \vdash^{A \beta} M_0 : \tau_0 \).
By the induction hypothesis, \( \Env \Subst{\alpha := B} \vdash^{(A \Subst{\alpha := B}) (\beta \Subst{\alpha := B})} M_0 \Subst{\alpha := B}
: \tau_0 \Subst{\alpha := B} \).  By applying the rule \textsc{\((\Next{}{})\)} as many times as required,
we obtain \( \Env \Subst{\alpha := B} \vdash^{A \Subst{\alpha := B}} \Next{\beta \Subst{\alpha := B}}{(M_0 \Subst{\alpha := B})}
: \Seal{\beta \Subst{\alpha := B}}{(\tau_0 \Subst{\alpha := B})} \).
By the definition of substitution, this is equal to what we have to show.
\\
\item Case \( M = \Gen{\beta}{M_0} \): By the typing rules, we have
\( \tau = \forall \beta. \tau_0 \) and \( \Env \vdash^A M_0 : \tau_0 \).
Moreover we can assume without loss of generality that \( \beta \notin \FMV(\Env) \cup \FMV(A) \cup \FMV(B) \cup \{ \alpha \} \).
By the induction hypothesis, we have \( \Env \Subst{\alpha := B} \vdash^{A \Subst{\alpha := B}} M_0 \Subst{\alpha := B} :
\tau_0 \Subst{\alpha := B} \).  Because \( \beta \notin \FMV(\Env \Subst{\alpha := B}) \cup \FMV(A \Subst{\alpha := B}) \),
we obtain \( \Env \Subst{\alpha := B} \vdash^{A \Subst{\alpha := B}} (\Gen{\beta}{(M_0 \Subst{\alpha := B})}) :
\forall \beta. (\tau_0 \Subst{\alpha := B}) \).  This judgement is equal to what we have to show. \qed
\end{asparaitem}
%\end{proof}
\end{lem}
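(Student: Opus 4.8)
The plan is to prove both statements by induction on the typing derivation — part~(1) by induction on the derivation of $\Env, x : \sigma @ B \vdash^{A} M : \tau$ and part~(2) by induction on the derivation of $\Env \vdash^{A} M : \tau$ — with a case split on the last rule used (equivalently, on the shape of $M$). Before starting I would record a handful of routine facts about transition substitution that the interesting cases invoke: that the identification $\Seal{\varepsilon}{\tau} = \tau$ and $\Seal{A_1 A_2}{\tau} = \Seal{A_1}{\Seal{A_2}{\tau}}$ makes $\tau\Subst{\alpha := B}$ well defined; the free-variable estimate $\FMV(\tau\Subst{\alpha := B}) \subseteq (\FMV(\tau)\setminus\{\alpha\})\cup\FMV(B)$ and its analogues for stages, terms, and contexts; and a substitution-composition identity $(\tau\Subst{\beta := C})\Subst{\alpha := B} = (\tau\Subst{\alpha := B})\Subst{\beta := C\Subst{\alpha := B}}$ (and the analogue for terms) valid whenever $\beta \notin \FMV(B)\cup\{\alpha\}$, which can be arranged by $\alpha$-renaming.

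For part~(1), the base case $M = x$ splits: if $x$ is the substituted variable then $\tau = \sigma$ and $A = B$, and the goal is exactly the hypothesis $\Env \vdash^{B} N : \sigma$; if it is another variable, the substitution does nothing. The cases \textsc{(Abs)}, \textsc{(App)}, $(\Next{}{})$, $(\Prev{}{})$, \textsc{(Ins)} go through mechanically: apply the induction hypothesis to the immediate subterm(s) and reassemble with the same rule. Note that although the stage may grow in a subderivation (e.g.\ from $A$ to $A\alpha$ in $(\Next{}{})$), the declaration $x : \sigma @ B$ and the typing $\Env \vdash^{B} N : \sigma$ are stage-independent, so the induction hypothesis still applies. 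The only delicate case is \textsc{(Gen)}, $M = \Gen{\alpha}{M_0}$: by $\alpha$-renaming assume $\alpha \notin \FMV(N)$ (so no capture) and $\alpha \notin \FMV(\Env)\cup\FMV(A)$ (the side condition); since $\Subst{x := N}$ touches neither $\Env$ nor $A$, the side condition survives and we reapply \textsc{(Gen)}.

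For part~(2), $M = x$ is immediate from the definition of $\Env\Subst{\alpha := B}$, and \textsc{(Abs)}, \textsc{(App)} are again routine. The sequence nature of transitions surfaces in $(\Next{}{})$ and $(\Prev{}{})$: if $M = \Next{\beta}{M_0}$ with $\tau = \Seal{\beta}{\tau_0}$, then $\beta\Subst{\alpha := B}$ may be a sequence $\alpha_1\cdots\alpha_n$; one applies the induction hypothesis to $M_0$ (landing at stage $(A\Subst{\alpha := B})\,\alpha_1\cdots\alpha_n$) and then uses $(\Next{}{})$ $n$ times, obtaining type $\Seal{\alpha_1}{\cdots\Seal{\alpha_n}{(\tau_0\Subst{\alpha := B})}}$, which is $\Seal{\beta}{\tau_0}\Subst{\alpha := B}$ under the $\Seal{\cdot}{}$ identification; the $(\Prev{}{})$ case is dual, with the reversed-sequence bookkeeping matching the reversed order in the definition of the substitution. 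The \textsc{(Gen)} case is as in part~(1): $\alpha$-rename the bound variable $\beta$ so that $\beta \notin \FMV(B)\cup\{\alpha\}$, apply the induction hypothesis, use the $\FMV$ estimate to see $\beta \notin \FMV(\Env\Subst{\alpha := B})\cup\FMV(A\Subst{\alpha := B})$, and reapply \textsc{(Gen)}. For \textsc{(Ins)}, $M = \Ins{M_0}{C}$ with $\Env \vdash^{A} M_0 : \forall\beta.\tau_0$ and conclusion $\tau_0\Subst{\beta := C}$, the substitution-composition identity rewrites $(\tau_0\Subst{\beta := C})\Subst{\alpha := B}$ as $(\tau_0\Subst{\alpha := B})\Subst{\beta := C\Subst{\alpha := B}}$ (after $\alpha$-renaming $\beta$), which is exactly what applying \textsc{(Ins)} to the induction hypothesis yields.

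The step I expect to cost the most effort is the bookkeeping for $(\Next{}{})$/$(\Prev{}{})$ in part~(2): making precise, by a small side induction on the length of the substituted transition, that iterating the rule builds precisely $\Seal{B}{\tau}$ modulo the identification $\Seal{\varepsilon}{\tau} = \tau$, $\Seal{A_1 A_2}{\tau} = \Seal{A_1}{\Seal{A_2}{\tau}}$, together with the companion substitution-composition identity needed for \textsc{(Ins)}. Everything else is the standard ``apply the induction hypothesis, reapply the rule'' pattern, with $\alpha$-renaming used in the two quantifier cases to avoid capture and to preserve the \textsc{(Gen)} side condition.
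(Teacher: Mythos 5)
Your proposal is correct and follows essentially the same route as the paper: induction on the typing derivation, with the \textsc{(Gen)} side condition preserved because term substitution does not touch $\Env$ or $A$ (after renaming the bound transition variable away from $N$, $B$, and $\alpha$), and with the $(\Next{}{})$/$(\Prev{}{})$ rules applied as many times as the length of the substituted transition, using the identification $\Seal{\varepsilon}{\tau} = \tau$, $\Seal{A_1A_2}{\tau} = \Seal{A_1}{\Seal{A_2}{\tau}}$. The extra bookkeeping you flag (the iterated-rule argument and the substitution-composition identity for \textsc{(Ins)}) is exactly what the paper leaves implicit in its ``as many times as required'' step and its omitted cases.
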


\begin{thm}[Subject Reduction]
\label{thm:subject-reduction}
If $ \Env \vdash^{A} M : \tau $ and
$ M \red M' $, then
$ \Env \vdash^{A} M' : \tau $.
\proof
  By straightforward induction on the derivation of $ M \red M' $,
  using Substitution Lemma (Lemma~\ref{lem:subst}).  We only show
  three base cases and omit induction steps.

\begin{asparaitem}
\item Case \( M = (\lambda x : \tau_0. M_0) \app M_1 \red M_0 \Subst{x := M_1} = M' \):
Because \( \Env \vdash^A (\lam x : \tau_0. M_0) \app M_1 : \tau \),
we have \( \Env, x : \tau_0 @ A \vdash^A M_0 : \tau \) and \( \Env \vdash^A M_1 : \tau_0 \).
By Substitution Lemma (1), we obtain \( \Env \vdash^A M_0 \Subst{x := M_1} : \tau \).
\\
\item Case \( M = (\Gen{\alpha}{M_0}) \app B \red M_0 \Subst{\alpha := B} = M' \):
Because \( \Env \vdash^A (\Gen{\alpha}{M_0}) \app B : \tau \),
we have \( \alpha \notin \FMV(\Env) \cup \FMV(A) \) and \( \Env \vdash^A M_0 : \tau_0 \) such that \( \tau_0 \Subst{\alpha := B} = \tau \).
By the second statement of Substitution Lemma (Lemma~\ref{lem:subst}), we obtain
\( \Env \Subst{\alpha := B} \vdash^{A \Subst{\alpha := B}} M_0 \Subst{\alpha := B} : \tau_0 \Subst{\alpha := B} \).
Because \( \alpha \notin \FMV(\Env) \cup \FMV(A) \), \( \Env \Subst{\alpha := B} = \Env \) and \( A \Subst{\alpha := B} = A \).
Therefore, \( \Env \vdash^A M_0 \Subst{\alpha := B} : \tau \).
\\
\item Case \( M = \Prev{\alpha}{\Next{\alpha}{M_0}} \red M_0 = M' \):
By replacing \( M \) in the assumption, we obtain
\( \Env \vdash^A \Prev{\alpha}{\Next{\alpha}{M_0}} : \tau \).
So we have \( \Env \vdash^A M_0 : \tau \) as required.
\qed
\end{asparaitem}
\end{thm}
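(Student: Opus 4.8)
The plan is to prove the theorem by induction on the derivation of $M \red M'$, with the Substitution Lemma (Lemma~\ref{lem:subst}) carrying the base cases. The one auxiliary observation I would make up front is that the typing rules of Figure~\ref{fig:typing_rules} are syntax-directed, so from a derivation of $\Env \vdash^{A} M : \tau$ one can invert to recover the premise(s) for the outermost constructor of $M$ --- and, for ($\Next{}{}$) and ($\Prev{}{}$), the shape of the ambient stage $A$ as well.

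For the congruence cases --- $M \red M'$ derived by a compatibility rule for $\lambda$-abstraction, for an application $M \app N$, for $\Next{}{}$, for $\Prev{}{}$, for a classifier abstraction, or for an instantiation $M \app A$ --- the argument is uniform: invert the typing derivation to type the immediate subterm that is reduced, apply the induction hypothesis to it at the same stage and type, and reapply the same typing rule. The only case worth a second look is reduction under a classifier abstraction: the rule (\textsc{Gen}) carries the side condition $\alpha \notin \FMV(\Env) \cup \FMV(A)$, but this constrains only the context and the stage, not the reduced term, so it is preserved and (\textsc{Gen}) can be reapplied verbatim.

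The real content is in the three computation rules. For $(\lam x : \tau_0 . M_0) \app N \red M_0 \Subst{x := N}$, inverting (\textsc{App}) and (\textsc{Abs}) gives $\Env, x : \tau_0 @ A \vdash^{A} M_0 : \tau$ and $\Env \vdash^{A} N : \tau_0$, whence Lemma~\ref{lem:subst}(1) yields $\Env \vdash^{A} M_0 \Subst{x := N} : \tau$. For $\Prev{\alpha}{(\Next{\alpha}{M_0})} \red M_0$, inverting ($\Prev{}{}$) forces the ambient stage to have the form $A' \alpha$ with $\Env \vdash^{A'} \Next{\alpha}{M_0} : \Seal{\alpha}{\tau}$, and inverting ($\Next{}{}$) then gives $\Env \vdash^{A' \alpha} M_0 : \tau$, that is, $\Env \vdash^{A} M_0 : \tau$. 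For $(\Lambda \alpha . M_0) \app B \red M_0 \Subst{\alpha := B}$, inverting (\textsc{Ins}) and (\textsc{Gen}) gives $\Env \vdash^{A} M_0 : \tau_0$ with $\tau = \tau_0 \Subst{\alpha := B}$ and $\alpha \notin \FMV(\Env) \cup \FMV(A)$; then Lemma~\ref{lem:subst}(2) gives $\Env \Subst{\alpha := B} \vdash^{A \Subst{\alpha := B}} M_0 \Subst{\alpha := B} : \tau_0 \Subst{\alpha := B}$, and since $\alpha$ is free in neither $\Env$ nor $A$ the two substitutions act as the identity there, so we conclude $\Env \vdash^{A} M_0 \Subst{\alpha := B} : \tau$.

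I do not expect a genuinely difficult step; the only thing needing care is the bookkeeping of stage annotations. Concretely, the ($\Prev{}{}$) inversion has to thread the stage correctly --- the unquote $\Prev{\alpha}{M_0}$ is typable only when the current stage ends in $\alpha$, which is exactly what makes the prev/next cancellation preserve stage and type --- and in the $\forall$-elimination case one must notice that the side condition on (\textsc{Gen}) is precisely what makes $\Env \Subst{\alpha := B} = \Env$ and $A \Subst{\alpha := B} = A$, so that the conclusion of Lemma~\ref{lem:subst}(2) lands back at the original context and stage rather than at their $\Subst{\alpha := B}$-instances. Everything else is a routine structural induction, and in the write-up the congruence steps can simply be stated and omitted.
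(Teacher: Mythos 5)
Your proposal is correct and follows essentially the same route as the paper: induction on the derivation of \( M \red M' \), inversion of the syntax-directed typing rules, Lemma~\ref{lem:subst}(1) for \(\beta\)-reduction and Lemma~\ref{lem:subst}(2) together with the (\textsc{Gen}) side condition for classifier instantiation, with the \(\Prev{}{}\)/\(\Next{}{}\) cancellation and the congruence cases handled by direct inversion and reapplication of the rules. Your write-up is in fact slightly more explicit than the paper's (e.g.\ spelling out that the stage must end in \(\alpha\) in the unquote case), but the argument is the same.
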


\begin{thm}[Strong Normalization]
\label{thm:strong-normalization}
Let $ M $ be a typable term.
There is no infinite reduction sequence
$ M \red N_1 \red N_2 \red \cdots $.
\begin{proof}
\newcommand{\lambdaarrow}{\lambda^\rightarrow}
  We construct a term $ \natural(M) $ of the simply typed
  \(\lambda\)-calculus ($\lambdaarrow$) as follows:
\begin{eqnarray*}
  \natural(x) & = & x \\
  \natural(\lam x : \tau . N) & = & \lam x . \natural (N) \\
  \natural(N_1 \app N_2) & = & \natural(N_1) \app \natural(N_2) \\
  \natural(\Next{\alpha}{N}) & = & \natural(N) \\
  \natural(\Prev{\alpha}{N}) & = & \natural(N) \\
  \natural(\Lambda \alpha . N) & = & \natural(N) \\
  \natural(N \app A) & = & \natural(N)
\end{eqnarray*}
We can easily prove the following propositions
by induction of the structure of $ M $:
\begin{enumerate}[(1)]
\item If $ M $ is typable in $ \sname $,
then $ \natural(M) $ is typable in $ \lambdaarrow$.
\item Suppose $ M \red M' $.  If this is a \(\beta\)-reduction step,
  then $ \natural(M) \red \natural(M') $ in $\lambdaarrow$.
  Otherwise, $ \natural(M) = \natural(M') $.
\end{enumerate}
Now, assume there exists an infinite reduction sequence from a typable
term $ M $.  It is clear that there are infinitely many
$\beta$-reduction.  By (2), there
exists an infinite reduction sequence from $ \natural(M) $, which 
is typable by (1).  This contradicts the strong normalization property
of \(\lambdaarrow\).
\end{proof}
\end{thm}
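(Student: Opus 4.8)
The plan is to reduce strong normalization of \(\sname\) to strong normalization of the simply typed \(\lambda\)-calculus \(\lambda^{\rightarrow}\) by erasing all the staging structure. Concretely I would define an erasure \(\natural(\cdot)\) on terms that deletes every \(\Next{}{}\), \(\Prev{}{}\), \(\Gen{}{}\), and instantiation \(\Ins{\cdot}{A}\), keeping only the underlying pure \(\lambda\)-term, so that \(\natural(\Next{\alpha}{M}) = \natural(\Prev{\alpha}{M}) = \natural(\Gen{\alpha}{M}) = \natural(\Ins{M}{A}) = \natural(M)\) and \(\natural\) commutes with variable, abstraction, and application; together with a matching erasure on types that drops \(\Seal{\alpha}{}\) and \(\forall \alpha.\). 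First I would check that erasure preserves typability: the stage-free skeleton of a \(\sname\) derivation is a valid \(\lambda^{\rightarrow}\) derivation, because under the erasures the rules \((\textsc{Next})\), \((\textsc{Prev})\), and \((\textsc{Gen})\) become trivial (premise and conclusion erase to the same judgment), and \((\textsc{Ins})\) is handled by the easy observation that \(\natural(\tau \Subst{\alpha := B}) = \natural(\tau)\) — the identifications \(\Seal{\varepsilon}{\tau} = \tau\) and \(\Seal{AB}{\tau} = \Seal{A}{\Seal{B}{\tau}}\) cause no trouble since every \(\Seal{}{}\) is erased anyway.

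Next I would show that erasure simulates reduction in a weak sense: a \(\beta\)-step \(M \red M'\) gives \(\natural(M) \red \natural(M')\) in \(\lambda^{\rightarrow}\) (using that \(\natural\) commutes with term substitution), whereas the two non-\(\beta\) computation rules \(\Prev{\alpha}{(\Next{\alpha}{M})} \red M\) and \(\Ins{(\Gen{\alpha}{M})}{A} \red M \Subst{\alpha := A}\) give \(\natural(M) = \natural(M')\) (using that \(\natural\) is invariant under transition substitution in terms); congruence is immediate since \(\natural\) is defined compositionally. Then, given any infinite \(\sname\)-reduction from a typable \(M\), I would argue it contains infinitely many \(\beta\)-steps, so that \(\natural(M)\)—typable in \(\lambda^{\rightarrow}\) by the previous paragraph—admits an infinite \(\beta\)-reduction, contradicting strong normalization of \(\lambda^{\rightarrow}\).

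The one point that deserves care, and which I regard as the main obstacle, is precisely the claim that an infinite \(\sname\)-reduction must use \(\beta\) infinitely often; equivalently, that the fragment consisting only of the two non-\(\beta\) rules is strongly normalizing. I would prove this by a two-stage well-founded measure: in any reduction sequence with no \(\beta\)-step, the number of \(\Gen{}{}\) occurrences never increases (transition substitution cannot create new \(\Gen{}{}\)'s, since transitions are sequences of variables, not terms) and strictly decreases at each \(\Ins{(\Gen{\alpha}{M})}{A}\) step, so only finitely many such steps occur; after the last of them, every step is a \(\Prev{\alpha}{(\Next{\alpha}{M})} \red M\) step, each of which strictly decreases the size of the term, hence the sequence terminates. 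Everything else is routine: a direct reducibility-candidates argument over \(\sname\) would also work but is considerably heavier, so I would avoid it in favor of this erasure.
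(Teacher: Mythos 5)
Your proposal is correct and follows essentially the same route as the paper: erase all staging constructs to obtain a simply typed \(\lambda\)-term, check that erasure preserves typability and maps \(\beta\)-steps to \(\beta\)-steps while collapsing the other two computation rules to equalities, and conclude by strong normalization of \(\lambda^{\rightarrow}\). The only difference is that where the paper simply asserts that an infinite reduction must contain infinitely many \(\beta\)-steps, you supply an explicit (and correct) lexicographic measure --- number of \(\Lambda\)-abstractions, then term size --- showing the non-\(\beta\) fragment terminates, which is a welcome filling-in of a step the paper leaves implicit.
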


The last property we will show is confluence.  We prove this by
using parallel reduction and complete development~\cite{DBLP:journals/iandc/Takahashi95}.  We define the
\emph{parallel reduction} relation \( M \pred N \) as in
Figure~\ref{fig:par-red}.
Notice that, the rule \textsc{(P-\(\Prev{}{}\Next{}{}\))}
allows more than one nested pairs of quoting and unquoting to be cancelled in one step: For example,
\( \Prev{\alpha_1}{\Prev{\alpha_2}{\dots \Prev{\alpha_n}{\Next{\alpha_n}{\dots \Next{\alpha_2}{\Next{\alpha_1}{x}}}}}}
\pred x \).  It is not very standard in the sense that parallel reduction usually
does not allow ``hidden'' redices (that is, redices that appear only
after some other reduction steps) to be contracted in one step.
We require this definition because a transition variable \(\alpha\) can be 
replaced with a sequence \(A\) of transition variables during reduction.  
If \(A\) in \textsc{(P-\(\Prev{}{}\Next{}{}\))} were \(\alpha\),
Lemma~\ref{lem:par-red-prop} (2) below would not hold any longer.

\begin{figure}[t]
\begin{multicols}{2}
  \typicallabel{Axiom}
\infrule[P-Var]{
  \mathstrut
}{
  x \pred x
}
\\ \vspace{.5em}
\infrule[P-Lam]{
  M \pred N
}{
  \lam x:\tau.M \pred \lam x:\tau.N
}
\\ \vspace{.5em}
\infrule[P-App]{
  M_1 \pred N_1 \qquad M_2 \pred N_2
}{
  M_1 \app M_2 \pred N_1 \app N_2
}
\\ \vspace{.5em}
\infrule[P-Beta]{
  M_1 \pred N_1 \qquad M_2 \pred N_2
}{
  (\lam x:\tau.M_2) \app M_2 \pred N_1 \Subst{x := N_2}
}
\\ \vspace{.5em}
\infrule[P-Gen]{
  M \pred N
}{
  \Lambda \alpha. M \pred \Lambda \alpha. N
}
\\ \vspace{.5em}
\infrule[P-TApp]{
  M \pred N
}{
  M \app A \pred N \app A
}
\\ \vspace{.5em}
\infrule[P-Ins]{
  M \pred N
}{
  (\Lambda \alpha. M) \app A \pred N \Subst{\alpha := A}
}
\\ \vspace{.5em}
\infrule[P-\(\Next{}{}\)]{
  M \pred N
}{
  \Next{\alpha}{M} \pred \Next{\alpha}{N}
}
\\ \vspace{.5em}
\infrule[P-\(\Prev{}{}\)]{
  M \pred N
}{
  \Prev{\alpha}{M} \pred \Prev{\alpha}{N}
}
\\ \vspace{.5em}
\infrule[P-\(\Prev{}{}\Next{}{}\)]{
  M \pred N
}{
  \Prev{A}{\Next{A}{M}} \pred N
}
\end{multicols}
  \caption{Rules for Parallel Reduction.}
  \label{fig:par-red}
\end{figure}

The following lemma relates the reduction relation and the parallel
reduction relation.
\begin{lem}
 \label{lem:par-red-basic}
 \( (\red) \subseteq (\pred) \subseteq (\red^*) \)
\end{lem}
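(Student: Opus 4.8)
The plan is to prove the two inclusions separately, each by a routine rule induction.

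\emph{Inclusion \((\red) \subseteq (\pred)\).} First I would check that \(\pred\) is reflexive, i.e. \(M \pred M\) for every term \(M\), by an easy induction on the structure of \(M\) using the congruence rules \textsc{(P-Var)}, \textsc{(P-Lam)}, \textsc{(P-App)}, \textsc{(P-Gen)}, \textsc{(P-TApp)}, \textsc{(P-\(\Next{}{}\))}, and \textsc{(P-\(\Prev{}{}\))}. Then I would induct on the derivation of \(M \red N\): each of the three computation rules is the conclusion of a parallel-reduction rule whose premises are discharged by reflexivity --- \((\lam x : \tau . M) \app N \red M\Subst{x := N}\) by \textsc{(P-Beta)}; \(\Prev{\alpha}{\Next{\alpha}{M}} \red M\) by \textsc{(P-\(\Prev{}{}\Next{}{}\))} with the one-element transition \(A = \alpha\); and \((\Lambda \alpha. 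M) \app A \red M\Subst{\alpha := A}\) by \textsc{(P-Ins)}. For every congruence rule of \(\red\), the matching congruence rule of \(\pred\) applies to the induction hypothesis, combined with reflexivity on the subterms that are left unchanged.

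\emph{Inclusion \((\pred) \subseteq (\red^*)\).} I would induct on the derivation of \(M \pred N\). Since \(\red^*\) is closed under every term constructor (being the reflexive--transitive closure of the congruence relation \(\red\)), the congruence rules \textsc{(P-Var)}, \textsc{(P-Lam)}, \textsc{(P-App)}, \textsc{(P-Gen)}, \textsc{(P-TApp)}, \textsc{(P-\(\Next{}{}\))}, \textsc{(P-\(\Prev{}{}\))} are handled by lifting the induction hypotheses through the corresponding one-hole context. For \textsc{(P-Beta)} we have \((\lam x : \tau . M_1) \app M_2 \red^* (\lam x : \tau . N_1) \app N_2 \red N_1\Subst{x := N_2}\), and for \textsc{(P-Ins)} we have \((\Lambda \alpha. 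M) \app A \red^* (\Lambda \alpha. N) \app A \red N\Subst{\alpha := A}\). The one slightly non-routine case is \textsc{(P-\(\Prev{}{}\Next{}{}\))}: from \(M \pred N\) the induction hypothesis and congruence give \(\Prev{A}{\Next{A}{M}} \red^* \Prev{A}{\Next{A}{N}}\), and it remains to cancel the whole block \(\Prev{A}{\Next{A}{\cdot}}\). Writing \(A = \alpha_1 \cdots \alpha_n\), the fully expanded term \(\Prev{\alpha_n}{\cdots \Prev{\alpha_1}{\Next{\alpha_1}{\cdots \Next{\alpha_n}{N}}}}\) contains the redex \(\Prev{\alpha_1}{\Next{\alpha_1}{\cdot}}\) at its core; contracting it yields \(\Prev{\alpha_2 \cdots \alpha_n}{\Next{\alpha_2 \cdots \alpha_n}{N}}\), so \(n\) applications of the quote/unquote cancellation rule give \(\Prev{A}{\Next{A}{N}} \red^* N\) (and \(0\) steps when \(A = \varepsilon\), consistent with \(\Prev{\varepsilon}{\Next{\varepsilon}{N}} = N\)). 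Composing the two reduction sequences gives \(\Prev{A}{\Next{A}{M}} \red^* N\).

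I do not anticipate a genuine obstacle here: the only point that is not completely mechanical is the simulation of \textsc{(P-\(\Prev{}{}\Next{}{}\))}, which --- because that rule deliberately contracts a whole nest of quote/unquote pairs at once --- needs the iterated-cancellation observation above rather than a single \(\red\)-step; everything else is the textbook argument relating one-step reduction, parallel reduction, and multi-step reduction.
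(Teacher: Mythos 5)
Your proof is correct and follows the same route as the paper, which simply states that both inclusions are proved by induction on the respective derivations without giving details. Your handling of \textsc{(P-\(\Prev{}{}\Next{}{}\))} by unfolding \(\Prev{A}{\Next{A}{N}}\) and cancelling the nested quote/unquote pairs one at a time (with the \(A = \varepsilon\) case being the identity) is exactly the point the paper leaves implicit, so the proposal matches the intended argument.
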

\begin{proof}
 \( (\red) \subseteq (\pred) \)
 can be shown by induction on the derivation \( M \red N \).
 We can prove \( (\pred) \subseteq (\red^*) \) by induction on the
 structure of \( M \pred N \).
\end{proof}

Thanks to this lemma, we know that confluence of \( \red \) is
equivalent to confluence of \( \pred \).  We prove confluence of \(
\pred \) by showing that \(\pred\) enjoys the diamond property.  The
following properties of parallel reduction are useful.

\begin{lem}
 \label{lem:par-red-prop}\ 
 \begin{enumerate}[\em(1)]
  \item If \( M_1 \pred N_1 \) and \( M_2 \pred N_2 \), then
	\( M_1 \Subst{x := M_2} \pred N_1 \Subst{x := N_2} \).
  \item If \( M \pred N \), then
	\( M \Subst{\alpha := A} \pred N \Subst{\alpha := A} \).
 \end{enumerate}
\end{lem}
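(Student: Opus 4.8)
The plan is to prove both statements by induction on the derivation of the parallel reduction \( M \pred N \), following the standard pattern for parallel reduction proofs but paying attention to the interaction between the two kinds of substitution and, crucially, the non-standard rule \textsc{(P-\(\Prev{}{}\Next{}{}\))}. For part (1), the induction is on the derivation of \( M_1 \pred N_1 \); the cases \textsc{(P-Var)}, \textsc{(P-Lam)}, \textsc{(P-App)}, \textsc{(P-Beta)}, \textsc{(P-Gen)}, \textsc{(P-TApp)}, \textsc{(P-Ins)}, \textsc{(P-\(\Next{}{}\))} and \textsc{(P-\(\Prev{}{}\))} are all routine, relying on the usual properties of term-variable substitution (commuting with the other constructors, and the substitution-composition identity needed in the \textsc{(P-Beta)} case), together with part (1) of this same lemma applied to subderivations. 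One should also use a standard fact that term-variable substitution and transition-variable substitution commute suitably; but since \( x \) is a term variable, no transition variables are involved here, so this is mostly bookkeeping.

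For part (2), the induction is on the derivation of \( M \pred N \). Again most cases are straightforward, using the fact that transition-variable substitution commutes with term constructors (as listed in the definition of \( M \Subst{\alpha := A} \)) and the induction hypothesis on subderivations. Two cases deserve care. In the \textsc{(P-Ins)} case, where \( M = (\Lambda \beta. M_0) \app B \pred N_0 \Subst{\beta := B} = N \) with \( M_0 \pred N_0 \): after renaming so that \( \beta \notin \FMV(A) \cup \{\alpha\} \), we have \( M \Subst{\alpha := A} = (\Lambda \beta. M_0 \Subst{\alpha := A}) \app (B \Subst{\alpha := A}) \), which by \textsc{(P-Ins)} and the induction hypothesis parallel-reduces to \( (N_0 \Subst{\alpha := A}) \Subst{\beta := B \Subst{\alpha := A}} \); this must be shown equal to \( (N_0 \Subst{\beta := B}) \Subst{\alpha := A} \), which is exactly the substitution-composition lemma for transition-variable substitutions (the analogue of the standard \( \Subst{\beta := B} \) then \( \Subst{\alpha := A} \) versus \( \Subst{\alpha := A} \) then \( \Subst{\beta := B \Subst{\alpha := A}} \) identity, valid under the renaming condition). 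The \textsc{(P-Beta)} case needs the analogous commutation of transition-substitution with term-substitution.

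The main obstacle is the case \textsc{(P-\(\Prev{}{}\Next{}{}\))}, where \( M = \Prev{C}{\Next{C}{M_0}} \pred N \) is derived from \( M_0 \pred N \) with \( C \) a transition (sequence of transition variables). Here we must check that \( M \Subst{\alpha := A} \) again has the shape required to apply \textsc{(P-\(\Prev{}{}\Next{}{}\))}. By the definition of substitution on \( \Next{}{} \) and \( \Prev{}{} \) (which act on the annotation transitions and reverse order appropriately on \( \Prev{}{} \)), we get \( M \Subst{\alpha := A} = \Prev{C \Subst{\alpha := A}}{\Next{C \Subst{\alpha := A}}{(M_0 \Subst{\alpha := A})}} \): the outer unquote annotation and the inner quote annotation are substituted by the \emph{same} transition \( C \Subst{\alpha := A} \) (the order-reversal in \( \Prev{}{} \) is consistent because \( \Next{C}{-} \) and \( \Prev{C}{-} \) are defined as mirror-image iterated constructors, so substitution keeps them matched). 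Hence \textsc{(P-\(\Prev{}{}\Next{}{}\))} applies with the transition \( C \Subst{\alpha := A} \), giving \( M \Subst{\alpha := A} \pred N \Subst{\alpha := A} \) from the induction hypothesis \( M_0 \Subst{\alpha := A} \pred N \Subst{\alpha := A} \). This is precisely why the rule is stated with an arbitrary transition \( A \) rather than a single variable, as the remark before the lemma points out: if a single variable \( \alpha \) could be substituted by a longer sequence, a rule restricted to single-variable \( \Prev{}{}\Next{}{} \) pairs would no longer be closed under \( \Subst{\alpha := A} \). I would isolate this observation as a small auxiliary fact about substitution on the abbreviations \( \Next{C}{-} \) and \( \Prev{C}{-} \) and cite it in this case.
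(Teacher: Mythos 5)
Your proof is correct and follows the same route as the paper, which simply proves both parts by induction on the derivation of the parallel reduction (the paper gives no further detail). Your additional care with the substitution-composition identities and with the \textsc{(P-\(\Prev{}{}\Next{}{}\))} case (checking that \( \Prev{C}{\Next{C}{\cdot}} \) stays matched under \( \Subst{\alpha := A} \)) is exactly the bookkeeping the paper leaves implicit.
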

\begin{proof}
 Easy induction on the structure of the derivation
 \( M_1 \pred N_1 \) and \( M \pred N \), respectively.
\end{proof}

Now, we define the notion of complete development and show its key property.
The \emph{complete development} \( \compdev{M} \) of \( M \) is defined by
 induction as in Figure~\ref{fig:def-compdev}.

\begin{figure}[t]
\begin{eqnarray*}
 \compdev{x} & = & x \\
 \compdev{(\lam x:\tau. M)} & = & \lam x:\tau. \compdev{M} \\
 \compdev{((\lam x:\tau. M) \app N)} & = &
  \compdev{M} \Subst{x := \compdev{N}} \\
 \compdev{(M \app N)} & = & \compdev{M} \app \compdev{N}
  \textrm{ \qquad\qquad(if \( M \neq \lam x:\tau. M' \))} \\
 \compdev{(\Next{\alpha}{M})} & = & \Next{\alpha}{\compdev{M}} \\
 \compdev{(\Prev{A}{\Next{A}{M}})} & = & \compdev{M} \\
 \compdev{(\Prev{\alpha}{M})} & = &
  \Prev{\alpha}{\compdev{M}}
  \textrm{ \qquad\qquad(if \( M \neq \Prev{A}{\Next{A \alpha}{M'}}\))} \\
 \compdev{(\Gen{\alpha}{M})} & = & \Gen{\alpha}{\compdev{M}} \\
 \compdev{((\Gen{\alpha}{M}) \app A)} & = &
  \compdev{M} \Subst{\alpha := A} \\
 \compdev{(M \app A)} & = & \compdev{M} \app A
  \textrm{ \qquad\qquad\hspace{.5em}(if \( M \neq \Gen{\alpha}{M'} \))}
\end{eqnarray*}
 \caption{Definition of Complete Development.}
 \label{fig:def-compdev}
\end{figure}

\begin{lem}\label{lem:complete-development}
 If \( M \pred N \), then \( N \pred \compdev{M} \).
\end{lem}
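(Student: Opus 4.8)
The plan is to prove the statement by induction on the structure of the derivation of \( M \pred N \), following the standard complete-development argument of Takahashi~\cite{DBLP:journals/iandc/Takahashi95}, but with care taken for the nonstandard rule \textsc{(P-\(\Prev{}{}\Next{}{}\))} that may collapse a whole stack of quote/unquote pairs. For each parallel-reduction rule, I would inspect the corresponding clause (or clauses) in the definition of \( \compdev{\cdot} \) in Figure~\ref{fig:def-compdev}, use the induction hypotheses on the immediate subderivations, and assemble a parallel-reduction derivation of \( N \pred \compdev{M} \). The two substitution properties of Lemma~\ref{lem:par-red-prop} are exactly what is needed to push parallel reduction through the substitutions that appear in the \(\beta\), \textsc{Ins}, and redex clauses of \( \compdev{\cdot} \).

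The routine cases go as follows. For \textsc{(P-Var)}, \textsc{(P-Lam)}, \textsc{(P-Gen)}, \textsc{(P-\(\Next{}{}\))}, and \textsc{(P-TApp)}, the head symbol of \( M \) is not a redex, so \( \compdev{M} \) just recurses into the subterms and the result is immediate from the IH plus the matching congruence rule for \(\pred\). For \textsc{(P-App)} I would split on whether \( M_1 \) is a \(\lambda\)-abstraction: if not, \( \compdev{M} = \compdev{M_1}\app\compdev{M_2} \) and we use \textsc{(P-App)}; if \( M_1 = \lam x{:}\tau.M_1' \), then \( N_1 = \lam x{:}\tau.N_1' \) with \( M_1' \pred N_1' \), and \( \compdev{M} = \compdev{M_1'}\Subst{x := \compdev{M_2}} \), so we apply \textsc{(P-Beta)} to \( N_1' \pred \compdev{M_1'} \) (from the IH) and \( N_2 \pred \compdev{M_2} \) (IH). The case \textsc{(P-Beta)} itself is analogous, using Lemma~\ref{lem:par-red-prop}(1). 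Similarly, \textsc{(P-Ins)} uses the \( (\Gen{\alpha}{M})\app A \) clause of \( \compdev{\cdot} \) together with Lemma~\ref{lem:par-red-prop}(2), and the plain \textsc{(P-TApp)} case where \( M \) is not of that shape uses the last clause.

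The delicate part is the interaction between \textsc{(P-\(\Prev{}{}\))}, \textsc{(P-\(\Prev{}{}\Next{}{}\))}, and the two \(\Prev{}{}\)-clauses of \( \compdev{\cdot} \). For \textsc{(P-\(\Prev{}{}\Next{}{}\))} with \( M = \Prev{A}{\Next{A}{M_0}} \pred N \) where \( M_0 \pred N \): by definition \( \compdev{M} = \compdev{M_0} \), and the IH gives exactly \( N \pred \compdev{M_0} \). For \textsc{(P-\(\Prev{}{}\))}, i.e.\ \( M = \Prev{\alpha}{M'} \pred \Prev{\alpha}{N'} \) with \( M' \pred N' \), I would case-split according to which \(\Prev{}{}\)-clause of \( \compdev{\cdot} \) applies to \( \Prev{\alpha}{M'} \): if \( M' \) is not of the form \( \Prev{A}{\Next{A\alpha}{M''}} \), then \( \compdev{M} = \Prev{\alpha}{\compdev{M'}} \) and we just use \textsc{(P-\(\Prev{}{}\))} on the IH \( N' \pred \compdev{M'} \); if \( M' = \Prev{A}{\Next{A\alpha}{M''}} \), so that \( M = \Prev{\alpha A}{\Next{A\alpha}{M''}} = \Prev{A\alpha}{\Next{A\alpha}{M''}} \) (noting that as sequences the unquote stack reads \( \alpha \) after \( A \)) is itself a \textsc{(P-\(\Prev{}{}\Next{}{}\))}-redex with \( \compdev{M} = \compdev{M''} \), I would analyze the shape of \( N' \): inverting \( M' = \Prev{A}{\Next{A\alpha}{M''}} \pred N' \), either the outer \(\Prev{A}{}\) was contracted together with a matching part of \(\Next{A\alpha}{}\), leaving \( N' = \Next{\alpha}{N''} \) with \( M'' \pred N'' \), in which case \( N = \Prev{\alpha}{\Next{\alpha}{N''}} \pred N'' \) and \( N'' \pred \compdev{M''} \) by IH so \( N \pred \compdev{M''} \) using \textsc{(P-\(\Prev{}{}\Next{}{}\))}; or \( N' = \Prev{A}{\Next{A\alpha}{N''}} \) with \( M'' \pred N'' \), and then \( N = \Prev{\alpha}{\Prev{A}{\Next{A\alpha}{N''}}} = \Prev{A\alpha}{\Next{A\alpha}{N''}} \pred N'' \pred \compdev{M''} \) again by \textsc{(P-\(\Prev{}{}\Next{}{}\))} and IH. The bookkeeping of how the single-label \(\alpha\) merges with the sequence \(A\) on both the \(\Prev{}{}\) and \(\Next{}{}\) sides — and the fact that \textsc{(P-\(\Prev{}{}\Next{}{}\))} is stated with an arbitrary sequence \(A\), not a single variable, precisely so that these merged redices are still recognized — is the main obstacle; everything else is routine inversion of \(\pred\) derivations together with Lemma~\ref{lem:par-red-prop}.
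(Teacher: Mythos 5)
Your overall strategy is exactly the paper's: induction on the derivation of \( M \pred N \), matching each rule against the clauses of \( \compdev{\cdot} \), with Lemma~\ref{lem:par-red-prop}(1) and (2) discharging the \textsc{(P-Beta)}/\textsc{(P-Ins)} and contracted-redex clauses. The paper in fact only writes out \textsc{(P-Lam)}, \textsc{(P-TApp)} (with its split on whether \( M_0 \) is a \(\Lambda\)-abstraction) and \textsc{(P-Ins)}, leaving the \(\Prev{}{}\)/\(\Prev{}{}\Next{}{}\) interaction implicit, so you go further than the printed proof does.

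One point in that delicate case is not quite right as stated, though it is repairable by the same idea you already use. When \( M = \Prev{\alpha}{M'} \) with \( M' = \Prev{A}{\Next{A\alpha}{M''}} \) and \( |A| \ge 2 \), your inversion of \( M' \pred N' \) into exactly two shapes --- \( N' = \Next{\alpha}{N''} \) (full collapse) or \( N' = \Prev{A}{\Next{A\alpha}{N''}} \) (pure congruence) --- is incomplete: the derivation may peel off only the outer \( k \) unquotes by \textsc{(P-\(\Prev{}{}\))} and then apply \textsc{(P-\(\Prev{}{}\Next{}{}\))} to the remaining inner stack, giving \( N' = \Prev{C}{\Next{C\alpha}{N''}} \) for any suffix \( C \) of \( A \) (your two cases are \( C = \varepsilon \) and \( C = A \)). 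Fortunately every such reduct has this uniform shape with \( M'' \pred N'' \), and then \( N = \Prev{\alpha}{N'} = \Prev{C\alpha}{\Next{C\alpha}{N''}} \) is again a \textsc{(P-\(\Prev{}{}\Next{}{}\))} redex, so applying that rule to the induction hypothesis \( N'' \pred \compdev{M''} \) yields \( N \pred \compdev{M''} = \compdev{M} \) in all cases --- exactly the move you make for your two cases, just stated once for arbitrary \( C \). (Also note that ``\( N \pred N'' \) and \( N'' \pred \compdev{M''} \), so \( N \pred \compdev{M''} \)'' should not be read as transitivity of \(\pred\), which you do not have in one step; the correct reading, which your final clause suggests, is to apply \textsc{(P-\(\Prev{}{}\Next{}{}\))} directly to the premise \( N'' \pred \compdev{M''} \).) With that case analysis completed, your proof matches the paper's.
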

\proof
By induction on the derivation of \( M \pred N \) with case analysis on
\( M \).  We show only interesting cases.

\begin{asparaitem}
\item Case \textsc{(P-Lam)}: We have \( M = \lam x : \tau. M_0 \) and
\( N = \lam x : \tau. N_0 \) with \( M_0 \pred N_0 \).
By the induction hypothesis, we have \( N_0 \pred
\compdev{M_0} \).  So, by applying \textsc{(P-Lam)} rule, we obtain \( \lam
 x:\tau. N_0 \pred \lam x:\tau. \compdev{M_0} \), as required.
\\
\item Case \textsc{(P-TApp)}: We have
 \( M = M_0 \app A \) and \( N = N_0 \app A \) with \( M_0 \pred N_0 \).  There are
 two subcases.

 Assume \( M_0 \neq \Gen{\alpha}{M_1} \).
 By the induction hypothesis, we have \( N_0 \pred \compdev{M_0} \).  By applying
 \textsc{(P-TApp)} rule, we obtain \( N_0 \app A \pred \compdev{M_0} \app A
 \).
 
 Assume \( M_0 = \Gen{\alpha}{M_1} \) for some \( M_1 \).
 By the induction hypothesis, we have \( N_0 \pred \compdev{M_0} =
 \Gen{\alpha}{\compdev{M_1}} \).  By the definition of parallel reduction,
 we have that \( N_0 = \Gen{\alpha}{N_1} \) for some \( N_1 \) and
\[
% \Gen{\alpha}{M_1} \pred \Gen{\alpha}{N_1} \pred \Gen{\alpha}{\compdev{M_1}},
N_1 \pred \compdev{M_1}.
\]
%which implies \( N_0 \pred \compdev{M_0} \).
So, by applying \textsc{(P-TIns)} rule, we obtain \( (\Gen{\alpha}{N_1})
 \app A \pred \compdev{M_1} \Subst{\alpha := A}  = \compdev{M}\).
\\
\item Case \textsc{(P-Ins)}: We have \( M = (\Gen{\alpha}{M_0}) \app A \)
and \( N = N_0 \Subst{\alpha := A} \) with \( M_0 \pred N_0 \).
 By the induction hypothesis, we have \( N_0 \pred
 \compdev{M_0} \).  By applying Lemma~\ref{lem:par-red-prop}, we obtain \(
 N_0 \Subst{\alpha := A} \pred \compdev{M_0} \Subst{\alpha := A}\)
\qed
\end{asparaitem}

It is easy to show diamond property of \( \pred \) by using Lemma~\ref{lem:complete-development}.
\begin{lem}
 \label{lem:diamond}
 If \( M \pred N_1 \) and \( M \pred N_2 \), then there exists \( N \)
 which satisfies \( N_1 \pred N \) and \( N_2 \pred N \).
\end{lem}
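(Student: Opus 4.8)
The plan is to use the standard Takahashi-style argument: the common reduct $N$ will simply be the complete development $\compdev{M}$ of the shared source term $M$, which works uniformly and independently of the particular shapes of $N_1$ and $N_2$.

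Concretely, suppose $M \pred N_1$ and $M \pred N_2$. Applying Lemma~\ref{lem:complete-development} to $M \pred N_1$ yields $N_1 \pred \compdev{M}$, and applying the same lemma to $M \pred N_2$ yields $N_2 \pred \compdev{M}$. Taking $N := \compdev{M}$, we have both $N_1 \pred N$ and $N_2 \pred N$, which is exactly the statement of the diamond property. So the entire proof is two invocations of the previous lemma with $N$ chosen as the complete development of $M$.

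There is essentially no obstacle remaining at this point: the real work---showing that \emph{every} one-step parallel reduct of $M$ can itself be completed back down to $\compdev{M}$---has already been carried out in Lemma~\ref{lem:complete-development}, where the delicate case was rule \textsc{(P-\(\Prev{}{}\Next{}{}\))} handled via Lemma~\ref{lem:par-red-prop}(2). The only minor points worth noting here are that $\compdev{M}$ is a genuinely well-defined function on terms (the clauses of Figure~\ref{fig:def-compdev} are exhaustive and non-overlapping thanks to the side conditions that separate, e.g., the case $M = \lam x:\tau.M'$ from the general application case, and likewise for $\Prev{}{}$), and that the witness $N$ does not depend on $N_1$ or $N_2$, so a single $N$ serves simultaneously for both reducts.

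Finally, once the diamond property of $\pred$ is established, I would record as an immediate corollary that $\pred$ is confluent (by the usual tiling argument from the diamond property), and hence that $\red$ is confluent as well, using $(\red) \subseteq (\pred) \subseteq (\red^*)$ from Lemma~\ref{lem:par-red-basic} so that $\red^*$ and $\pred^*$ coincide.
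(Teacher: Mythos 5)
Your proof is correct and is exactly the paper's argument: take \( N = \compdev{M} \) and apply Lemma~\ref{lem:complete-development} to each of the two parallel reductions.
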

\begin{proof}
 Choose \( \compdev{M} \) as \( N \) and use the previous lemma.
\end{proof}

\begin{thm}[Confluence]
\label{thm:confluence}
If $ M \red^{*} N_1 $ and $ M \red^{*} N_2 $,
then there exists $ N $ such that
$ N_1 \red^{*} N $ and $ N_2 \red^{*} N $.
\begin{proof}
 By Lemma~\ref{lem:par-red-basic}, we have
\[
 (\red) \subseteq (\pred) \subseteq (\red^*).
\]
So we obtain
\[
 ( \red^* ) \subseteq (\pred^*) \subseteq (\red^*),
\]
 which implies \( (\red^*) = (\pred^*) \).  Therefore what we should show
 is confluence of \( \pred \), which is an easy consequence of Lemma~\ref{lem:diamond}.
\end{proof}
\end{thm}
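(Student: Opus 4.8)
The plan is to carry out the classical Tait--Martin-L\"of argument, in which the preceding lemmas have already done all the substantive work, leaving only a routine diagram chase. The first step is to pass from $\red$ to the parallel reduction $\pred$. By Lemma~\ref{lem:par-red-basic} we have $(\red) \subseteq (\pred) \subseteq (\red^*)$; taking reflexive--transitive closures of all three relations gives $(\red^*) \subseteq (\pred^*) \subseteq (\red^*)$, and therefore $(\red^*) = (\pred^*)$. Consequently $\red$ is confluent exactly when $\pred^*$ has the diamond property, i.e.\ exactly when $\pred$ is confluent, so it suffices to establish confluence of $\pred$.

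For confluence of $\pred$ I would invoke Lemma~\ref{lem:diamond}, which gives the one-step diamond property: whenever $M \pred N_1$ and $M \pred N_2$, the complete development $\compdev{M}$ is a common one-step reduct, since $N_1 \pred \compdev{M}$ and $N_2 \pred \compdev{M}$ by Lemma~\ref{lem:complete-development}. Lifting this to $\pred^*$ is the standard tiling argument: presenting $M \pred^* N_1$ and $M \pred^* N_2$ as $\pred$-chains of lengths $m$ and $n$, one fills an $m \times n$ grid of elementary diamonds, closing each cell with Lemma~\ref{lem:diamond}, so that the opposite corner yields $N$ with $N_1 \pred^* N$ and $N_2 \pred^* N$; formally, a double induction on $m$ and $n$. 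Finally, using $(\red^*) = (\pred^*)$ once more turns this into $N_1 \red^* N$ and $N_2 \red^* N$, which is the statement.

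I do not anticipate a real obstacle in proving the theorem itself: every delicate point has been absorbed into the earlier lemmas. The single subtlety worth flagging is the non-standard rule \textsc{(P-\(\Prev{}{}\Next{}{}\))}, which must be allowed to cancel an entire stack of nested quote/unquote pairs $\Prev{A}{\Next{A}{M}}$ in one step, not merely a single pair $\Prev{\alpha}{\Next{\alpha}{M}}$. If only single-variable cancellation were permitted, substituting a multi-variable transition $A$ for $\alpha$ would transform a one-step redex into a non-redex, so Lemma~\ref{lem:par-red-prop}(2)---and with it the diamond property---would break. Since that design choice is already built into Figure~\ref{fig:par-red} and Lemmas~\ref{lem:par-red-prop}--\ref{lem:diamond}, the proof of Confluence reduces to the single routine tiling step and the result is immediate.
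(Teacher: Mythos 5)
Your proposal is correct and follows essentially the same route as the paper: pass to parallel reduction via Lemma~\ref{lem:par-red-basic} to get \((\red^*) = (\pred^*)\), then conclude confluence of \(\pred\) from the diamond property of Lemma~\ref{lem:diamond} (itself via complete development), lifting to \(\pred^*\) by the standard tiling argument. Your remark on the multi-variable cancellation in \textsc{(P-\(\Prev{}{}\Next{}{}\))} matches the paper's own justification for that design choice.
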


\subsection{Annotated Reduction and Time-Ordered Normalization}
\label{sec:annotated}
We introduce the notion of stages into reduction and prove the
property called \emph{time-ordered
  normalization}~\cite{Davies-Atemporallogicappro,Yuse2006}.
Intuitively, it says that normalization can be done in the increasing
order of stages and does not need to `go back' to earlier stages.  In
other words, once all redices at some stage are contracted, subsequent
reductions never yield a new redex at that stage.  To state
time-ordered normalization formally, we first introduce the notion of
\emph{paths} from one stage to another and a new reduction relation,
annotated with paths to represent the stage at which reduction occurs.

%  can be normalized as
% \( M = N_0 \red N_1 \red \dots \red N_m \) where the sequence of stages of the redexes \( N_i \)
% is in increasing order.

A path represents how the stage of a subterm is reached from the stage
of a given term.  For example, if \( \Gamma \vdash^{\alpha} M \) and
\( \Gamma' \vdash^{\alpha\beta} N \) for a subterm \( N \) of \( M \),
then we say the path from (the stage of) \( M \) to (that of) \( N \)
is \( \beta \).  The stage of a subterm may not be able to be
expressed by a transition (a sequence of transition variables),
however: For example, consider the path from \( \Prev{\alpha}{M} \) to
\(M\).  We introduce \emph{formal inverses} \(\alpha^{-1}\) to deal
with such cases: the path from the stage of \(\Prev{\alpha} M\) to
that of \(M\) is represented by \(\alpha^{-1}\).  Similarly, the path
from \(\Next{\alpha}\Prev{\beta} M\) to \(M\) will be
\(\alpha\beta^{-1}\).

% For the technical convenience,
% we introduce a notion \emph{relative path},
% which is a path from the current stage to the stage of a subterm.
% The contrary notion of the relative path is the \emph{absolute path},
% which indicate the stage of a term.
% For example, if \( \Gamma \vdash^A M \) and \( \Gamma \vdash^{AB} N \) for
% a subterm \( N \) of \( M \),
% we say the absolute path of \( N \) is \( A B \) and
% the relative path of \( N \) from \( M \) is \( B \).

Formally, the set of paths, ranged over by \(T\) and \(U\), is the
free group generated by the set of transition variables \( \Gene \).
% In other words,
% the set of relative path is the set of sequences of
% \( \alpha \) and \( \alpha^{-1} \), where \( \alpha \in \Gene \),
% with the equation \( \alpha \alpha^{-1} = \alpha^{-1} \alpha = \varepsilon \)
In other words, a path is a finite sequence $ \xi_1 \xi_2 \dots \xi_n
$, where $ \xi_i = \alpha $ or $ \alpha^{-1} $, such that it includes
no subsequence of the form \(\alpha\alpha^{-1}\) or
\(\alpha^{-1}\alpha\) for any \(\alpha\).
% Especially, any transition is a relative path.
We define $ (\alpha^{-1})^{-1} = \alpha $ and
\[
  (\xi_1 \cdots \xi_n) \cdot (\xi_{n+1} \dots \xi_m) =
    \left\{
      \begin{array}{ll}
          (\xi_1 \cdots \xi_{n-1}) \cdot (\xi_{n+2} \dots \xi_m) & \textrm{(if $ \xi_n = \xi_{n+1}^{-1} $)} \\
          (\xi_1 \cdots \xi_{n-1}) \cdot (\xi_n \xi_{n+1} \dots \xi_m) & \textrm{(if $ \xi_n \neq \xi_{n+1}^{-1} $).}
      \end{array}
    \right.
\]
The empty sequence \(\varepsilon\) is the unit element for
the operation \(T\cdot U\).  We simply write $ T U $ for $T \cdot U$.  We define $
(\xi_1 \xi_2 \dots \xi_n)^{-1} = \xi_n^{-1} \xi_{n - 1}^{-1} \dots
\xi_1^{-1} $.
% The formal negation \( \alpha^{-1} \) is needed to describe
% relative path of un-quoting:
% for example, the relative path of \( x \) from \( \Prev{\alpha}{x} \)
% is \( \alpha^{-1} \).

% Let \( T \) and \( U \) range over the set of relative paths.
We say a path \( T \) is \emph{positive} if %(the canonical form of)
\( T \) does not contain formal inverses, in other words, the
canonical form of \( T \) is in \( \Gene^* \).  We can naturally
identify the positive paths with transitions and use metavariables \( A
\) and \( B \) for positive paths.  We write $ T \le U $ when there
exists a positive path $ A $ which satisfies $ T A = U $.  Clearly, $
\varepsilon \le T $ if and only if $ T $ is positive.

The \emph{annotated reduction} relation is a triple of the form $ M
\stackrel{T}{\red} N $, where $ M $ and $ N $ are terms and $ T $ is a
path from the stage of $ M $ to that of its redex---more precisely,
that of the constructor destructed by the reduction, since the stage
of a redex and that of its constructor may be different as in \(
\Next{\alpha}{} \) in redex \( \Prev{\alpha}{\Next{\alpha}{M}}\).
% that of its redex.
The definition of the annotated reduction, presented in
Figure~\ref{fig:annotated_reduction}, is mostly straightforward.  For
example, \(\alpha^{-1}\) is given to \(
\Prev{\alpha}{\Next{\alpha}{M}} \red M \) (the rule \(
\textsc{(AR-Quote)} \)), because the path to the constructor \(
\Next{\alpha}{} \) is \( \alpha^{-1} \).  As for the rule
\textsc{(AR-\( \Next{}{} \))}, the path from \( M \) to the
constructor destructed by the reduction is \( T \) and the path from
\( \Next{\alpha}{M} \) to \( M \) is \( \alpha \), hence the path from
\( \Next{\alpha}{M} \) to the constructor is given by their
concatenation \( \alpha T \).  The rule \textsc{(AR-\( \Prev{}{} \))}
is similar.  The rule \textsc{(AR-Gen)} is the most interesting.
First of all, \(\alpha\) is bound here, so, we cannot propagate \(T\)
in the premise to the conclusion to prevent \(\alpha\) from escaping
its scope.  We have found that replacing \(\alpha\) with
\(\varepsilon\), which is the earliest possible stage, is a reasonable
choice, especially for time-ordered normalization.
% because \(\alpha\) can be instantiated by any transition.

Annotated reduction is closely related to reduction defined in the previous section.
It is easy to see that \( M \red N \) if and only if there exists \( T \) such that
\( M \stackrel{T}{\red} N \).  Furthermore, such \( T \) is unique.
% , replacing to the following simpler
% rule is problematic because \( T \) may contain \( \alpha \) or \(
% \alpha^{-1} \), which is invisible outside its scope \( M \).
% \begin{center}
% \infrule{
%   \vdash^A M \stackrel{T}{\red} N
% }{
%   \vdash^A \Lambda \alpha. M \stackrel{T}{\red} \Lambda \alpha. N
% }
% \end{center}
% So we should substitute \( \alpha \) in \( T \) to some other transition
% and we choose it \( \varepsilon \).
\begin{figure}[t]
  \begin{multicols}{2}
%  \centering
\leavevmode
  \RRBeta \\
  \RRIns \\
  \RRPrevNext \\
  \RRCAbs \\
  \RRCAppOne \\
  \RRCAppTwo \\
  \RRCNext \\
  \RRCPrev \\
  \RRCGen \\
  \RRCIns
  \end{multicols}
  \caption{Annotated Reduction.}
  \label{fig:annotated_reduction}
\end{figure}

The next theorem shows that any reduction occurs indeed at a positive stage.
\begin{thm}
If $ \Env \vdash^{A} M : \tau $ and
$ M \stackrel{T}{\red} N $, then  $ \varepsilon \le A T $.
\proof
Easy induction on the structure of $ M \stackrel{T}{\red} N $.
\qed
\end{thm}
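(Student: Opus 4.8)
The statement to prove is: if $\Env \vdash^A M : \tau$ and $M \stackrel{T}{\red} N$, then $\varepsilon \le AT$, i.e., $AT$ is a positive path (an element of $\Gene^*$). The plan is a straightforward structural induction on the derivation of $M \stackrel{T}{\red} N$, using the typing rules of Figure~\ref{fig:typing_rules} to constrain the shape of the stage annotation $A$ at each node. Throughout, the key observation is that the annotated reduction rules are designed so that the path $T$ records exactly the sequence of $\Next{}{}$ and $\Prev{}{}$ constructors (and their inverses) traversed from the root of $M$ down to the constructor being destructed, and the typing rules guarantee that every such descent keeps the running stage in $\Gene^*$.

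First I would handle the three base cases. For \textsc{(AR-Beta)}, $M = (\lam x:\tau_0.M_0)\app M_1 \stackrel{\varepsilon}{\red} M_0\Subst{x:=M_1}$; here $T = \varepsilon$, so $AT = A$, and $A$ is positive simply because it is the stage of a well-typed term (an invariant that, strictly, I would first record as an easy lemma: in any derivable judgment $\Env \vdash^A M : \tau$, the stage $A$ is a transition in $\Gene^*$, not a general path — this is immediate since the stage only ever grows by appending a transition variable in rule $(\Next{}{})$ and shrinks by a prefix in $(\Prev{}{})$). The case \textsc{(AR-Ins)} is identical with $T=\varepsilon$. The interesting base case is \textsc{(AR-PrevNext)}: $M = \Prev{\alpha}{\Next{\alpha}{M_0}} \stackrel{\alpha^{-1}}{\red} M_0$. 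By inverting the typing rules, $\Env \vdash^{A} \Prev{\alpha}{\Next{\alpha}{M_0}} : \tau$ forces $A = A'\alpha$ for some transition $A'$ (the $(\Prev{}{})$ rule strips a trailing $\alpha$... actually it *adds* one reading bottom-up, so reading the rule $(\Prev{}{})$ downward: $\Env\vdash^{A'}\Next{\alpha}{M_0}:\Seal{\alpha}{\tau}$ and the conclusion is at stage $A'\alpha$, so $A = A'\alpha$). Then $AT = A'\alpha\alpha^{-1} = A'$, which is positive.

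For the congruence cases I would push the induction hypothesis through. In \textsc{(AR-\(\Next{}{}\))}, from $\Env\vdash^A \Next{\alpha}{M_0}:\Seal{\alpha}{\sigma}$ we get $\Env\vdash^{A\alpha}M_0:\sigma$, and $M_0 \stackrel{T'}{\red} N_0$ with the reduction at stage $A\alpha T'$ positive by IH; the conclusion has annotation $T = \alpha T'$, so $AT = A\alpha T' = (A\alpha)T'$ is positive, done. The case \textsc{(AR-\(\Prev{}{}\))} is symmetric: from the conclusion stage $A$, the premise is at stage $A\alpha^{-1}$; but here one must check $A$ has the form $A'\alpha$ so that $A\alpha^{-1} = A'$ is a legitimate (positive) stage for the premise judgment — this again falls out of inverting the $(\Prev{}{})$ typing rule, and then $AT = A\alpha^{-1}T' = A'T'$ is positive by IH applied to the premise judgment $\Env\vdash^{A'}M_0:\Seal{\alpha}{\tau}$. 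The cases \textsc{(AR-CAbs)}, \textsc{(AR-CApp1/2)}, \textsc{(AR-CIns)} are trivial since the stage is unchanged between premise and conclusion and $T$ is carried over verbatim. The one case needing a little care is \textsc{(AR-Gen)}: $M = \Gen{\alpha}{M_0} \stackrel{T\Subst{\alpha:=\varepsilon}}{\red} \Gen{\alpha}{N_0}$ where $M_0 \stackrel{T}{\red} N_0$; here from $\Env\vdash^A\Gen{\alpha}{M_0}:\forall\alpha.\sigma$ we get $\Env\vdash^A M_0:\sigma$ with, by $(\Gen)$'s side condition, $\alpha\notin\FMV(\Env)\cup\FMV(A)$, and the IH gives that $AT$ is positive. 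I must conclude $A\,(T\Subst{\alpha:=\varepsilon})$ is positive. Since $\alpha\notin\FMV(A)$, $A\Subst{\alpha:=\varepsilon}=A$, so $A\,(T\Subst{\alpha:=\varepsilon}) = (AT)\Subst{\alpha:=\varepsilon}$; and substituting $\varepsilon$ for a variable in a positive path (which lies in $\Gene^*$) just deletes occurrences of that variable, yielding again an element of $\Gene^*$ — positivity is preserved. This substitution-preserves-positivity observation is the only genuinely non-routine point, and I would state it as a one-line sublemma: if $T \in \Gene^*$ then $T\Subst{\alpha:=A} \in \Gene^*$ for any positive $A$ (in particular $A = \varepsilon$). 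With that, the induction closes, and since by the remark following Figure~\ref{fig:annotated_reduction} the path $T$ with $M\stackrel{T}{\red}N$ is unique, the theorem holds as stated. The main obstacle, such as it is, is simply being careful about the direction in which the $(\Next{}{})$/$(\Prev{}{})$ typing rules move the stage annotation versus how \textsc{(AR-\(\Next{}{}\))}/\textsc{(AR-\(\Prev{}{}\))} prepend to the path; once those are aligned, everything is mechanical.
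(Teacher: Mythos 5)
Your proof is correct and follows exactly the route the paper intends: induction on the derivation of $M \stackrel{T}{\red} N$, using inversion of the typing rules (in particular that a judgment for $\Prev{\alpha}{\cdot}$ forces the stage to end in $\alpha$) and, for the \textsc{(AR-Gen)} case, the side condition $\alpha \notin \FMV(A)$ together with the observation that substituting $\varepsilon$ into a positive path preserves positivity. The paper records this only as ``easy induction,'' and your write-up fills in precisely those details.
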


We say $ M $ is \emph{\(T\)-normal} when there are no $ U \le T $ and $ N $
such that $ M \stackrel{U}{\red} N $.  Then, we can state time-ordered
normalization as follows:
\begin{thm}[Time Ordered Normalization]
\label{thm:TON}
Let $ M $ be a typable term. If $ M $ is $ T $-normal and 
$ M \red^{*} N $, then $ N $ is $T$-normal.
\end{thm}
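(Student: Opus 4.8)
The plan is to reduce Time-Ordered Normalization to a single-step "stage stability" lemma: if $M$ is $T$-normal and $M \stackrel{U}{\red} N$ for some $U$, then $N$ is still $T$-normal. Once this is established, the theorem follows immediately by induction on the length of the reduction sequence $M \red^* N$, since by the remark preceding the theorem every step $M \red N'$ factors uniquely as $M \stackrel{U}{\red} N'$ for some path $U$, and $T$-normality is preserved at each step.

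To prove the stability lemma, I would argue by contraposition: assuming $M \stackrel{U}{\red} N$ and $N$ is \emph{not} $T$-normal — so $N \stackrel{V}{\red} N'$ for some $V \le T$ — I want to produce a redex in $M$ at some stage $W \le T$, contradicting $T$-normality of $M$. The redex witnessing non-normality of $N$ either (a) already existed in $M$ at the same stage, or (b) was created by the step $M \stackrel{U}{\red} N$. Case (a) is immediate. For case (b), the key observation is a locality/monotonicity fact about redex creation: a $\stackrel{U}{\red}$ step can only create a new redex at a stage $W$ with $U \le W$ (a newly created redex lives "at or below" — in the $\le$ sense, meaning further along — the destructed constructor). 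Combined with $V \le T$ for the new redex's stage, and careful tracking of how $U$, $V$, and the stage of the created redex relate, one derives that the pre-existing material in $M$ that gave rise to this redex already sat at a stage $\le T$; since $M$ was $T$-normal this is the contradiction. The substitution cases (the three computation rules, in particular $\beta$-reduction and transition instantiation $(\Lambda\alpha.M)\app A \red M\Subst{\alpha:=A}$) require Lemma~\ref{lem:subst}-style reasoning about how stages of subterms transform under $M\Subst{x:=N}$ and $M\Subst{\alpha:=A}$: substituting $N$ for $x$ at stage $B$ places a copy of $N$'s structure at stage $B$, so any redex created inside the substituted copies lies at a stage reachable through $B$, which must be $\ge$ the stage of the original occurrence of $x$ and hence was already present (at the corresponding position) in the redex $M$ itself; and the $(\Lambda\alpha.M)\app A$ rule is exactly why the annotated reduction sends $\alpha\mapsto\varepsilon$ in rule \textsc{(AR-Gen)}, so a redex exposed by the instantiation step at a stage mentioning (the old) $\alpha$ becomes, after instantiation, a redex at a stage $\le$ the one it had inside the $\Lambda$-body.

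I expect the main obstacle to be the redex-creation analysis for the $\Prev{}{}\Next{}{}$-cancellation rule together with its interaction with transition-variable instantiation. Because a single $\Prev{\alpha}{\Next{\alpha}{M_0}}$ redex can, after substituting a sequence $A$ for $\alpha$, turn into a whole cascade $\Prev{A}{\Next{A}{M_0}}$ — and because \textsc{(AR-Quote)} annotates the original cancellation with $\alpha^{-1}$, a \emph{negative} path — one must be careful that contracting such a redex does not expose a fresh redex at an \emph{earlier} (smaller-in-$\le$) positive stage than where any constructor was destroyed. The right bookkeeping is, I believe, to phrase stability relative to the set of stages $\{W : W \le T\}$ and to show that the multiset of "positions-with-their-stages" of $\Prev{}{}\Next{}{}$-, $\beta$-, and instantiation-redices at stages $\le T$ strictly decreases (or stays stable in a way that cannot regenerate) under any step — essentially a descending measure on $\le T$-redices. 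Formalizing "new redex forces $U \le W$" cleanly, and handling the formal-inverse arithmetic in the concatenation of paths so that the comparison $V \le T \Rightarrow$ (original stage) $\le T$ goes through in every congruence case, is where the real work lies; the rest is a routine case split over the ten rules of Figure~\ref{fig:annotated_reduction}.
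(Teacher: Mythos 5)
Your top-level move — reduce to a one-step statement (``if \(M\) is \(T\)-normal and \(M \stackrel{U}{\red} N\), then \(N\) is \(T\)-normal'') and then induct on the reduction sequence using uniqueness of the annotation — matches the paper. The gap is in how you prove the one-step statement. Your dichotomy ``(a) the redex already existed in \(M\) at the same stage, or (b) it was created by the step'' misses the case that does the real damage: \emph{pre-existing redices whose annotation changes}. In an \textsc{(AR-Ins)} step \((\Lambda\alpha.P)\app A \red P\Subst{\alpha := A}\) at stage \(U_0\), a redex sitting at internal path \(W\) inside \(P\) is counted in \(M\) at stage \(U_0\,(W\Subst{\alpha := \varepsilon})\) (because \textsc{(AR-Gen)} sets \(\alpha\) to \(\varepsilon\)), but in \(N\) it sits at \(U_0\,(W\Subst{\alpha := A})\). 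This is neither ``the same stage'' nor ``created'', and the inequality you assert for it is false in general: for \(W = \alpha\beta\) and \(A = \gamma\), the two stages \(\beta\) and \(\gamma\beta\) are \(\le\)-incomparable, so neither ``new \(\le\) old'' nor ``old \(\le\) new'' is available. To refute \(U_0\,(W\Subst{\alpha := A}) \le T\) you must either fall back on the \textsc{Ins} redex at \(U_0\) itself (which works only when \(W\Subst{\alpha:=A}\) is positive) or exploit typability to constrain which paths \(W\) can occur under a \(\Lambda\alpha\) with \(\alpha \notin \FMV(\Gamma)\cup\FMV(A)\); your sketch supplies neither. Likewise, your central ``locality'' lemma (``a step at \(U\) creates redices only at \(W\) with \(U \le W\)'') is exactly where the hard work sits and is left unproven: e.g.\ showing that the cancellation redices produced by \(\Subst{\alpha := \varepsilon}\) (the hidden \(\Prev{}{}\)/\(\Next{}{}\) adjacencies) land only at stages where \(M\) already had redices requires the same stage/typability analysis, and the alternative ``descending measure on \(\le T\)-redices'' is not developed enough to check.

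For comparison, the paper sidesteps redex-tracking altogether. It gives an inductive characterization of \(T\)-normality, \(\Delta \vdash \Downarrow^T M\) together with a \(T\)-neutrality judgment \(\Delta \vdash \bigtriangledown^T M\), where \(\Delta\) records the enclosing \(\Lambda\)-bound transition variables and all stage comparisons are made after \(\Subst{\Delta := \varepsilon}\); it then proves that this syntactic property is preserved by term substitution (under a side condition on the stage of the substituted variable) and by transition substitution (under \(\alpha \notin \FMV(AT)\), using typability), and finally proves the one-step preservation lemma by induction on the derivation of \(M \stackrel{U}{\red} N\). The generalization over \(\Delta\) is precisely the device that makes the \textsc{(AR-Gen)}/\textsc{(AR-Ins)} cases go through, and it (or some equivalent bookkeeping) is the missing ingredient in your proposal.
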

\proof
See Appendix~\ref{sec:proof-of-TON}.
\qed

As its corollary, we know that for any reduction to a normal form from a
typable term \( M \) is ``rearranged'' according to an increasing order
between stages.  Moreover, this increasing order can be any total order
that respects \(\le\), i.e., includes \(\le\) as a subset.
% there is a reduction sequence reaching the normal form,
% in which the stages of reductions are in ``increasing order''.
% But there are incomparable stages like \( \alpha \) and \( \beta \),
% so how sholud we order them.  The answer is in arbitrarily order.
% More precisely, for any total ordering \( \preceq \) which respects \( \le \),
% i.e., \( \preceq \) satisfies that if \( T \le U \), then \( T \preceq U \) for any \( T \) and \( U \),
% there is a reduction sequence, in which stages of reductions are in increasing ordering
% with respect to \( \preceq \).
\begin{cor}
Let \( M \) be a typable term and
\( \preceq \) be a total order that respects \( \le \).
Then, there is a reduction sequence
\( M \stackrel{T_1}{\red} N_1 \stackrel{T_2}{\red} \dots
\stackrel{T_n}{\red} N_n \), which satisfies
\( T_1 \preceq T_2 \preceq \dots \preceq T_n \) and
\( N_n \) is a normal form.
\qed
\end{cor}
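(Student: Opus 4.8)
The plan is to build the required sequence \emph{greedily}: starting from \(N_0 := M\), whenever the current term \(N_{j-1}\) is typable and not yet normal, contract a redex whose annotation is \(\preceq\)-minimal, and then iterate. Time-Ordered Normalization (Theorem~\ref{thm:TON}) is exactly the tool that will force the successive annotations to be non-decreasing. Confluence alone would not suffice here; the point is to control the \emph{stage} at which each step occurs.

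In more detail, I would argue as follows. A term is a finite syntax tree, so \(N_{j-1}\) has only finitely many redices, and each one-step reduction carries a unique annotation (recall the discussion of annotated reduction), so the set of annotations of redices of \(N_{j-1}\) is a finite, nonempty subset of the totally ordered set of paths and hence has a \(\preceq\)-least element \(T_j\). Pick a redex with annotation \(T_j\) and contract it: \(N_{j-1} \stackrel{T_j}{\red} N_j\). By Subject Reduction (Theorem~\ref{thm:subject-reduction}), \(N_j\) is again typable, so the construction iterates, and by Strong Normalization (Theorem~\ref{thm:strong-normalization}) there is no infinite reduction sequence from a typable term, so the process halts after finitely many steps at a term \(N_n\) with no redex, i.e.\ a normal form.

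It then remains to verify \(T_j \preceq T_{j+1}\) for each \(j\). Suppose not; since \(\preceq\) is total, \(T_{j+1} \prec T_j\). I would first show that \(N_{j-1}\) is \(T_{j+1}\)-normal: if \(N_{j-1} \stackrel{U}{\red} N'\) with \(U \le T_{j+1}\), then \(U\) is the annotation of a redex of \(N_{j-1}\), so \(T_j \preceq U\) by the choice of \(T_j\); but \(\preceq\) extends \(\le\), so \(U \preceq T_{j+1}\), whence \(T_j \preceq T_{j+1} \prec T_j\), contradicting irreflexivity of \(\prec\). Thus \(N_{j-1}\) is typable, \(T_{j+1}\)-normal, and \(N_{j-1} \red^{*} N_j\), so Theorem~\ref{thm:TON} gives that \(N_j\) is \(T_{j+1}\)-normal as well. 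This contradicts \(N_j \stackrel{T_{j+1}}{\red} N_{j+1}\) together with \(T_{j+1} \le T_{j+1}\). Hence the annotations are non-decreasing and \(M \stackrel{T_1}{\red} N_1 \stackrel{T_2}{\red} \cdots \stackrel{T_n}{\red} N_n\) is the sequence required.

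The only delicate point---and the one I expect to take the most care---is the mismatch between the two orders: ``\(T\)-normal'' in Theorem~\ref{thm:TON} is defined via the reachability partial order \(\le\), whereas the greedy strategy is driven by the given total order \(\preceq\). The argument above goes through precisely because \(\preceq\) is assumed to include \(\le\): \(\preceq\)-minimality of \(T_j\) then yields the \(\le\)-closed \(T_{j+1}\)-normality of \(N_{j-1}\) needed to invoke Theorem~\ref{thm:TON}. Beyond this bookkeeping I do not anticipate any genuinely hard step.
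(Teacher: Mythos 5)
Your proof is correct and is essentially the argument the paper intends: the corollary is stated as an immediate consequence of Theorem~\ref{thm:TON}, and your greedy construction (pick a \(\preceq\)-minimal annotation at each step, use Subject Reduction and Strong Normalization to iterate and terminate, and invoke Theorem~\ref{thm:TON} via the inclusion of \(\le\) in \(\preceq\) to rule out a decrease) is exactly the intended derivation. The handling of the mismatch between \(\le\) and \(\preceq\) is the right delicate point and you resolve it correctly.
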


\subsection{Programming in $ \sname $}
\label{sec:power}
\newcommand{\clet}{\ensuremath{\mathop{\textbf{let}}}}
\newcommand{\cfix}{\ensuremath{\mathop{\textbf{fix}}}}
\newcommand{\cint}{\ensuremath{\mathop{\textbf{int}}}}
\newcommand{\cif}{\ensuremath{\mathop{\textbf{if}}}}
\newcommand{\cthen}{\ensuremath{\mathop{\textbf{then}}}}
\newcommand{\celse}{\ensuremath{\mathop{\textbf{else}}}}
\definecolor{gray96}{gray}{.9}
\newcommand{\gb}[1]{\!\colorbox{gray96}{\!\(#1\)\!}\!}
\newcommand{\wb}[1]{\!\colorbox{white}{\!\(#1\)\!}\!}
We give an example of programming in \(\sname\).  The example is the
power function, which is a classical example in multi-stage calculi
and partial evaluation.  We augment \(\sname\) with integers,
Booleans, arithmetic and comparison operators, \cif-\cthen-\celse, a
fixed point operator \cfix, and \clet.  In the next section, we will
formalize such a language (without \clet) as \miniML{} and study its
evaluation in more detail.  For readability, we often omit type
annotations and put terms under quotation in shaded boxes.

We start with the ordinary power function without staging.
\[
\begin{array}{l}
  \clet \texttt{power}_0 \colon \cint \to \cint \to \cint \\
  \qquad
  = \cfix f .\; %\colon \!\!\! \cint \to \cint \to \cint \! . \;\;
    \lam n . \; % \colon \!\!\! \cint \! . \;\; 
    \lam x . %\colon \!\!\! \cint \! .
    \cif\; n = 0 \; \cthen \; 1 \; \celse \; x * (f \app (n - 1) \app x)
\end{array}
\]
Our purpose is to get a code generator $ \texttt{power}_{\forall} $
that takes the exponent \(n\) and returns (closed, hence runnable)
code of $\lam x. x * x * \dots x * 1$, which computes \(x^n\) without
recursion.  Here, we follow the construction of code generators in the
previous
work~\cite{BenaissaMoggiTahaSheard99IMLA,MoggiTahaBenaissaSheard99ESOP}.

First, we construct a code manipulator $ \texttt{power}_1 : \cint \to
\Seal{\alpha}{\cint} \to \Seal{\alpha}{\cint} $, which takes an
integer \(n\) and a piece of integer code and then outputs a piece of
code which concatenates the input code by ``\(*\)'' \(n\) times.  It can be
obtained by changing type annotation and introducing quasiquotation.
\[
\begin{array}{l}
  \clet \texttt{power}_1
    \colon \cint \to \Seal{\alpha}{\cint} \to \Seal{\alpha}{\cint} \\
  \qquad
  = \cfix f . \; % \colon \!\!\! \cint \to \Seal{\alpha}{\cint} \to \Seal{\alpha}{\cint} . \;\;
    \lam n . \; % \colon \!\! \cint . \; % \;\;
    \lam x \colon \! \Seal{\alpha}{\cint} .\\
  \qquad \qquad
    \cif\; n = 0 \; \cthen \; (\Next{\alpha}{\gb{1}})
    \; \celse \; \Next{\alpha}{\gb{((\Prev{\alpha}{\wb{x}}) * 
      (\Prev{\alpha}{\wb{f \app (n - 1) \app x}}))}}
\end{array}
\]
Then, from $ \texttt{power}_1 $, we can construct a code generator $ \texttt{power}_\alpha $ of type
$ \cint \to \Seal{\alpha}{(\cint \to \cint)} $,
which means it takes an integer and returns code of a function.
\[
\begin{array}{l}
  \clet \texttt{power}_{\alpha}
    \colon \! \cint \to \Seal{\alpha}{(\cint \to \cint)} \\
  \qquad
  = \lam n . \; %\colon \!\!\! \cint \! . \;
    \Next{\alpha}{
    \gb{\lam x \colon \!\!\! \cint \! . \; \Prev{\alpha}{\wb{
      (\texttt{power}_1 \app n \app (\Next{\alpha}{\gb{x}}))}}\;}}
\end{array}
\]
It indeed behaves as a code generator: for example, $
\texttt{power}_{\alpha} \app 3 $ would evaluate to the code value
$
\Next{\alpha}{\gb{ \lam x \colon \! \cint . x * (x * (x * 1))}} $.

This construction is independent of the choice of the stage $ \alpha
$.  So, by abstracting \(\alpha\) at appropriate places in
\(\texttt{power}_1\) and 
\(\texttt{power}_{\alpha}\), we can obtain the desired code generator,
whose return type is a closed code type $ \forall \gamma
. \Seal{\gamma}{(\cint \to \cint)} $.
\[
\begin{array}{l}
  \clet \texttt{power}_2
    \colon \forall \beta. \cint \to \Seal{\beta}{\cint} \to \Seal{\beta}{\cint} \\
  \qquad
  = \Lambda \beta. \cfix f . \;
    \lam n . \;
    \lam x \colon \! \Seal{\beta}{\cint} .\\
  \qquad \qquad
    \cif\; n = 0 \; \cthen \; (\Next{\beta}{\gb{1}})
    \; \celse \; \Next{\beta}{\gb{((\Prev{\beta}{\wb{x}}) * 
      (\Prev{\beta}{\wb{f \app (n - 1) \app x}}))}} \\
  \clet \texttt{power}_{\forall}
    \colon \!\! \cint \to \forall \gamma. \;
      \Seal{\gamma}{(\cint \to \cint)} \\
  \qquad
  = \lam n . \; %\colon \!\!\! \cint \! . \;
    \Lambda \gamma. \Next{\gamma}{
    \gb{\lam x \colon \!\!\! \cint \! . \; \Prev{\gamma}{\wb{
      (\texttt{power}_2 \app \gamma \app n \app (\Next{\gamma}{\gb{x}}))}}\;}}
\end{array}
\]

The output from $ \texttt{power}_{\forall} $ is usable in any stage.
For example, if we want code of a cube function at the later stage,
say $ A $, then we write $ \texttt{power}_{\forall} \app 3 \app A $.
In particular, when $ A $ is the empty sequence $ \varepsilon $, $
\texttt{power}_{\forall} \app 3 \app \varepsilon : \cint \to \cint $
evaluates to a function closure which computes \(x * x * x * 1\) from
the input \(x\).  The former corresponds to CSP (of closed code) and
the latter to \textbf{run}.

\section{\miniML}
\label{sec:miniml}
\newcommand{\svdash}{\vdash_s}
\newcommand{\ered}[3]{\vdash_e^{#1} {#2} \Downarrow {#3}}

We extend \( \sname \) and define an ML-like functional language
\miniML, which has, in addition to the features of \(\sname\),
integers, arithmetic and comparison operations, Booleans, conditional
expressions, and the (call-by-value) fixed-point combinator
\textbf{fix}.  We define the type system and big-step evaluation
semantics for \miniML{} and prove type soundness.  In this semantics,
bindings of transition variables have to be maintained at run time.
So, we investigate a fragment of \miniML, in which programs can be
executed by mostly forgetting information on transition variables.  We
give another type system, which identifies such a fragment, and
erasure translation, which removes transitions from terms, and
alternative evaluation semantics for erased terms.  Then, we prove the
erasure property, which says program executions before and after
erasure agree.

\subsection{Syntax and Type System}
The syntax of types and terms of \miniML{} is defined as follows,
where \( n \) and \(bv\) are metavariables ranging over integers
and Boolean constants \textbf{true} and \textbf{false}.
\\
\begin{center}
\(\begin{array}{@{}lr@{}r@{}l}
    \textrm{\it Types}\quad
      & \tau &{}::={}&
        \textbf{int} \OR
        \textbf{bool} \OR
	\tau \to \tau \OR
	\Seal{\alpha}{\tau} \OR
	\forall \alpha .\tau \\
    \textrm{\it Terms}\quad
      & M &{}::={}&
        x \OR
        n \OR
        bv \OR
        M = M \OR
        M + M \OR
        M - M \OR
        M * M \\
% \OR
%         \textbf{true} \OR
%         \textbf{false} \\
      & & \OR &
        \eIf{M}{M}{M}\OR
        \textbf{fix}\ f : \tau \to \sigma. M \OR
        \lam x : \tau . M \OR
        M M \\
      & & \OR &
        \Next{\alpha}{M} \OR
        \Prev{\alpha}{M} \OR
        \Gen{\alpha}{M} \OR
        \Ins{M}{A}
  \end{array}
\)
\end{center}

The type system is given as a straightforward extension of that of \(\sname\).
We show typing rules for the additional constructs.
\begin{center}
\infrule[IntC]{
  n \in \mathbb{Z}
}{
  \Gamma \vdash^A n : \textbf{int}
}
\qquad
\infrule[BoolC]{
  bv \in \{ \textbf{true}, \textbf{false} \}
}{
  \Gamma \vdash^A bv : \textbf{bool}
}
\\[1ex]
\infrule[Eq]{
  \Gamma \vdash^A M : \textbf{int} \AND
  \Gamma \vdash^A N : \textbf{int}
}{
  \Gamma \vdash^A M = N : \textbf{bool}
}
\\[1ex]
\infrule[IntOp]{
  \Gamma \vdash^A M : \textbf{int} \AND
  \Gamma \vdash^A N : \textbf{int} \AND
  \Diamond \in \{ +, -, * \}
}{
  \Gamma \vdash^A M \mathop{\Diamond} N : \textbf{int}
}
\\[1ex]
\infrule[If]{
  \Gamma \vdash^A M : \textbf{bool} \AND
  \Gamma \vdash^A N_1 : \tau \AND
  \Gamma \vdash^A N_2 : \tau
}{
  \Gamma \vdash^A \eIf{M}{N_1}{N_2}
}
\\[1ex]
\infrule[Fix]{
  \Gamma, f : \tau \to \sigma @ A \vdash^A M : \tau \to \sigma
}{
  \Gamma \vdash^A \textbf{fix}\ f : \tau \to \sigma. M : \tau \to \sigma
}
\end{center}

We use the same notations for term substitution \( M \Subst{x := N}
\), and transition substitution \( \tau \Subst{\alpha := A} \) and \(
M \Subst{\alpha := A} \) and other auxiliary notions, which can be similarly defined.
It is easy to prove that \miniML{} also enjoys Substitution Lemma.
\begin{lem}[Substitution Lemma]
\label{lem:ml-subst}\ 
\begin{enumerate}[\em(1)]
\item If \( \Gamma, x : \sigma @ B \vdash^A M : \tau \) and
\( \Gamma \vdash^B N : \sigma \),
then \( \Gamma \vdash^A M \Subst{x := N} : \tau \).
\item If \( \Gamma \vdash^A M : \tau \),
then, % for any transition \( B \) and transition variable \( \alpha \),
\( \Gamma \Subst{\alpha := B} \vdash^{A \Subst{\alpha := B}}
M \Subst{\alpha := B} : \tau \Subst{\alpha := B} \).
\end{enumerate}
\begin{proof}
The proof is essentially the same as that of Substitution Lemma for \( \sname \)
(Lemma~\ref{lem:subst}).
\end{proof}
\end{lem}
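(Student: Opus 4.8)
The plan is to mirror the proof of Lemma~\ref{lem:subst} essentially verbatim: I would argue by induction on the derivation of the first typing judgment in each part --- on $\Gamma, x : \sigma @ B \vdash^A M : \tau$ for part (1) and on $\Gamma \vdash^A M : \tau$ for part (2). Every case for a rule inherited from $\sname$ (\textsc{Var}, \textsc{Abs}, \textsc{App}, $(\Next{}{})$, $(\Prev{}{})$, \textsc{Gen}, \textsc{Ins}) goes through word for word, so the only new work is to discharge the six additional rules of \miniML{}: \textsc{IntC}, \textsc{BoolC}, \textsc{Eq}, \textsc{IntOp}, \textsc{If}, and \textsc{Fix}.

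For \textsc{IntC} and \textsc{BoolC} the argument is immediate: neither kind of substitution touches a constant ($n\Subst{x:=N}=n$, $n\Subst{\alpha:=B}=n$, and similarly for $\textbf{true}$ and $\textbf{false}$) nor its type ($\textbf{int}$ and $\textbf{bool}$ have no free transition variables), so re-applying the same rule gives the result --- for part (2) after noting $\Gamma\Subst{\alpha:=B}$ is still a well-formed context. For \textsc{Eq}, \textsc{IntOp}, and \textsc{If} the rules are purely structural with all premises at the same stage $A$, so I would just appeal to the induction hypothesis on each premise and re-apply the rule, using for part (2) that substitution commutes with these term constructors and leaves $\textbf{int}$ and $\textbf{bool}$ fixed.

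The one case that deserves a moment's care is \textsc{Fix}, which is handled exactly like \textsc{Abs}. If $M = \textbf{fix}\ f : \tau\to\sigma.\,M_0$ with $\Gamma, f:\tau\to\sigma @ A \vdash^A M_0 : \tau\to\sigma$, then in part (1) I would use the usual variable convention to assume $f\neq x$ and that $f$ does not occur free in $N$; since contexts are finite sets of declarations over distinct variables, $\Gamma, f:\tau\to\sigma @ A, x:\sigma @ B$ is a legal context and $\Gamma, f:\tau\to\sigma @ A \vdash^B N : \sigma$ still holds, so the induction hypothesis yields $\Gamma, f:\tau\to\sigma @ A \vdash^A M_0\Subst{x:=N} : \tau\to\sigma$, and \textsc{Fix} closes the case because substitution does not reach the bound $f$. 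In part (2) the induction hypothesis directly gives $\Gamma\Subst{\alpha:=B}, f:(\tau\to\sigma)\Subst{\alpha:=B} @ A\Subst{\alpha:=B} \vdash^{A\Subst{\alpha:=B}} M_0\Subst{\alpha:=B} : (\tau\to\sigma)\Subst{\alpha:=B}$, and re-applying \textsc{Fix} finishes it.

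I do not expect a genuine obstacle here: all the subtle points of Lemma~\ref{lem:subst} --- capture-avoidance for $\lambda$, the reversal of transition order under $\Prev{}{}$, and renaming the bound transition variable of \textsc{Gen} away from $\alpha$ and $\FMV(B)$ --- are inherited unchanged, and the new rules are either constant-like or purely structural. The only bookkeeping worth stating explicitly is the context manipulation in the \textsc{Fix} case of part (1), which is harmless precisely because contexts are finite sets of declarations over distinct variables, so permuting and weakening with $f:\tau\to\sigma @ A$ preserves derivability.
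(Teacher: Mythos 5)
Your proposal is correct and matches the paper's proof, which simply observes that the argument is essentially the same induction as for Lemma~\ref{lem:subst}; your treatment of the new rules (the constants, the structural arithmetic/conditional rules, and \textsc{Fix} handled like \textsc{Abs} with the implicit weakening by $f$) is exactly the routine elaboration the paper leaves unstated.
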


\subsection{Evaluation and Type Soundness}
\label{sec:evaluation}
Now, we give a big-step semantics and prove that the execution of a
well-typed program is properly divided into stages.  The judgment
has the form $ \bred{A}{M}{R} $, read ``evaluating term \(M\) at stage
\(A\) yields result \(R\),'' where \(R\) is either \(\err\), which
stands for a run-time error, or a value \(v\), defined below.  \emph{Values}
are given via a family of sets \(V^{A}\) indexed by transitions, that
is, stages.
The family \(V^{A}\) is defined by the following grammar:
\[
  \begin{array}{@{}l@{}r@{}l}
    v^{\varepsilon} \in V^{\varepsilon} &{}::={}&
      n \OR \textbf{true} \OR \textbf{false} \OR
      \lam x : \tau . M \OR
      \Next{\alpha}{V^{\alpha}} \OR
      \Lambda \alpha . V^{\varepsilon} \\
    v^{A} \in V^{A} \; ( A \neq \varepsilon) &{}::={}&
      x \OR n \OR \textbf{true} \OR \textbf{false} \OR
      \lam x : \tau . V^{A} \OR
      \textbf{fix}\ f : \tau \to \sigma. V^A \\
    & \OR &
      V^{A} \app V^{A} \OR
      \Next{\alpha}{V^{A \alpha}} \OR
      \Lambda \alpha . V^{A} \OR
      \Ins{V^{A}}{B} \\
    & \OR &
      \Prev{\alpha}{V^{A'}}
        \quad \text{\rm (if $ A' \alpha = A $ and $ A' \neq \varepsilon $)}
  \end{array}
\]
The index \( A \) represents the current stage in which a value is
typed.  So, the index changes under quoting and unquoting.  Note that
a value at a higher stage (that is, under quotation) include free
variables, applications and instantiation since computation is
suspended.  For example, \(x \app y \in V^\alpha\) and so
\(\Next{\alpha}{x \app y} \in V^\varepsilon\).

Figure~\ref{fig:evaluation} shows the evaluation rules.  Notice that
metavariables \(M\) or \(N\) for terms (not values) are used on the
right side of \(\Downarrow\), since it is not immediately clear that 
a result is really a value of a proper form (or $\err$)---we will
prove such a property as a theorem.  The evaluation is left-to-right
and call-by-value.  The rules in Figure~\ref{fig:evaluation}(1) are
for ordinary evaluation.  The rule for \( \Prev{\alpha}{M} \) means
that quote is canceled by unquote; since the resulting term \(M'\)
belongs to the stage \(\alpha\) (inside quotation), \(\alpha\) is
attached to the conclusion.  As seen in the rule for \(\Lambda
\alpha.M\), \(\Lambda\) does \emph{not} delay the evaluation of the
body.  The rule about instantiation of a transition abstraction is
straightforward.  The rules for stages later than \(\varepsilon\),
which are in Figure~\ref{fig:evaluation}(2), are all similar: since
the term to be evaluated is inside quotation, each term constructor is
left as it is and only subterms of stage \(\varepsilon\) will be
evaluated.  We also need rules for handling erroneous terms, such as:
\begin{center}
\infrule{
  \bred{\varepsilon}{M}{M'} \AND
  M' \notin \mathbb{Z} \AND
  \Diamond \in \{ +, -, *, = \}
}{
  \bred{\varepsilon}{M \Diamond N}{\err}
}
\hfil
\infrule{
  \bred{A}{M}{\err} \AND
  \Diamond \in \{ +, -, *, = \}
}{
  \bred{A}{M \Diamond N}{\err}
}
\end{center}
They are shown in Appendix~\ref{sec:complete-rules}.

\begin{figure}[p]
  \centering
%(for the case $ A = \varepsilon $) \hfill
\infrule{
  \bred{\varepsilon}{M}{m} \AND
  \bred{\varepsilon}{N}{n} \AND
  m = n
}{
  \bred{\varepsilon}{M = N}{\textbf{true}}
}
\qquad
\infrule{
  \bred{\varepsilon}{M}{m} \AND
  \bred{\varepsilon}{N}{n} \AND
  m \neq n
}{
  \bred{\varepsilon}{M = N}{\textbf{false}}
}
\\[2ex]
  \EOprE \\[2ex]
  \EIfTrue \qquad
  \EIfFalse \\[2ex]
  \ELamE \qquad
  \EAppE \\[2ex]
  \ENextE \qquad
  \EPrevE \qquad
  \EGenE \\[2ex]
  \EInsE \qquad
  \EConstE \\[2ex]
  \EFixE \\[1ex]
\mbox{(1) Rules for ordinary evaluation.}\\[5ex]
%\vspace{1em}
%(for the case $ A \neq \varepsilon $) \hfill
  \EOprNE \\[2ex]
  \EIfNE \\[2ex]
  \EVarNE \qquad
  \ELamNE \qquad
  \EAppNE \\[2ex]
  \ENextNE \qquad
  \EPrevNE \qquad
  \EInsNE \\[2ex]
  \EGenNE \qquad
  \EConstNE \\[2ex]
  \EFixNE \\[1ex]
\mbox{(2) Rules for evaluation inside quotation. Here, \( A \neq \varepsilon \).}\\[3ex]
%\vspace{1em}
%(stage independent) \hfill
  \caption{Big-Step Semantics of \miniML.}
  \label{fig:evaluation}
\end{figure}

We show a few properties of the big-step semantics.  The first theorem
says that evaluation is deterministic.
% , and the second says that values
% evaluates to themselves and, in particular, a term, which is the body of
% a quoted value, evaluates to the same term under any quotation.
%
\begin{thm}
If \(\bred{A}{M}{R}\) and \(\bred{A}{M}{R'}\), then \(R = R'\).
\proof
By straightforward induction on the derivation of \(\bred{A}{M}{R}\).
\qed
\end{thm}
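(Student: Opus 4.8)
The plan is to argue by induction on the derivation of the first judgment $\bred{A}{M}{R}$, performing a case analysis on the shape of $M$ and on whether $A = \varepsilon$ (``ordinary'' evaluation) or $A \neq \varepsilon$ (evaluation inside quotation). The crucial structural observation to establish first is that the evaluation rules are \emph{syntax-directed} once the pair $(M,A)$ is fixed: for each head constructor of $M$ and each of the two regimes — with the further side conditions $A' \alpha = A$, $A' \neq \varepsilon$ distinguishing the congruence rule for $\Prev{\alpha}{\cdot}$ from the quote-cancelling rule, as reflected already in the grammar of $V^{A}$ — the rules that can conclude $\bred{A}{M}{R'}$ form a small family consisting of one ``success'' rule together with finitely many ``error'' rules. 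Moreover, within each such family the premises concerning a given immediate subterm of $M$ are literally the same across all the rules (same subterm, same sub-stage, which is $\varepsilon$, or $A\alpha$, or $A$ itself, as dictated by the rule); the rules differ only in side conditions on the \emph{result} of that subterm, namely whether it is an integer, one of $\textbf{true}/\textbf{false}$, a $\lambda$-abstraction, a term of the form $\Next{\alpha}{\cdot}$, one of the form $\Lambda\alpha.\cdot$, or $\err$.

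For the inductive step, take the last rule of the derivation of $\bred{A}{M}{R}$, and for each immediate subterm $M_i$ of $M$ whose evaluation $\bred{A_i}{M_i}{R_i}$ appears as a premise, invoke the induction hypothesis: the derivation of $\bred{A}{M}{R'}$ must contain an evaluation of the same $M_i$ at the same stage $A_i$ (since the candidate rules share these premises), so its result is forced to equal $R_i$. With the $R_i$ thus pinned down, exactly one of the side conditions separating the candidate rules can hold, so the two derivations must end with the same rule; a final appeal to the induction hypothesis on the remaining premise — typically the evaluation of a substitution instance such as $M_0\Subst{x := v}$ in the $\beta$-rule, $M_0\Subst{\alpha := A}$ in the instantiation rule, or the $\textbf{fix}$-unfolding — yields $R = R'$. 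The base cases (variables at $A \neq \varepsilon$, integer and Boolean constants at any stage) are immediate, each having a unique axiom.

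The one place that needs genuine care — and hence the main obstacle — is making the syntax-directedness claim precise enough to cover all the error rules (those displayed in the text together with the remaining ones in Appendix~\ref{sec:complete-rules}). Concretely, one must verify, constructor by constructor, that the success and error rules are \emph{mutually exclusive} once the sub-results are determined: e.g.\ for $M = N$ the three cases $m = n$, $m \neq n$, and ``$M$ or $N$ does not evaluate to an integer'' are exhaustive and pairwise disjoint; analogously for the arithmetic operators, for conditionals (guard is $\textbf{true}$, is $\textbf{false}$, or is neither), for application (operand evaluates to a $\lambda$-abstraction or not), for $\Prev{\alpha}{\cdot}$ at stage $\varepsilon$ (subterm evaluates to some $\Next{\alpha}{\cdot}$ or not), and for $\Ins{\cdot}{A}$ (subterm evaluates to some $\Lambda\alpha.\cdot$ or not). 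One must also check that the stage partition ($A = \varepsilon$ versus $A \neq \varepsilon$, together with the $\Prev$ side conditions) assigns a unique rule family to every $(M,A)$. Once this bookkeeping is carried out, determinism follows routinely from the induction.
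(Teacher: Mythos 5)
Your proposal is correct and follows essentially the same route as the paper, which disposes of the theorem by a straightforward induction on the derivation of \(\bred{A}{M}{R}\); your elaboration of the syntax-directedness and the mutual exclusivity of the success and error rules is exactly the bookkeeping that the paper leaves implicit.
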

%
% \begin{thm}
%   For any \(A\) and \(v^A\), \(\bred{A}{v^A}{v^A}\).  Moreover, for
%   any \(B\neq \varepsilon\), \(\alpha\), and \(v^\alpha\),
%   \(\bred{B}{v^\alpha}{v^\alpha}\).
% \proof
% By induction on \(v^A\).
% \end{thm}
%
The second theorem below
says that, unless the result is \(\err\), the result must be a value
even though the rules do not say it is the case.
\begin{thm}
Suppose \(\bred{A}{M}{R} \).
Then, either $ R = \err  $ or 
$ R \in V^A$.
\begin{proof}
By easy induction on the derivation \( \bred{A}{M}{R} \).
\end{proof}
\end{thm}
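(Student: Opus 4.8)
The plan is a routine induction on the derivation of $\bred{A}{M}{R}$, with a case split on the last rule applied. The rules fall into three groups: the ordinary-evaluation rules at stage $\varepsilon$ (Figure~\ref{fig:evaluation}(1)), the ``inside quotation'' rules at a stage $A \neq \varepsilon$ (Figure~\ref{fig:evaluation}(2)), and the error-propagation rules (Appendix~\ref{sec:complete-rules}). The error rules are trivial, since their conclusion is literally $\bred{A}{\cdot}{\err}$. For every other rule the recipe is the same: apply the induction hypothesis to each premise $\bred{A'}{N}{R'}$; if some $R'$ equals $\err$ then $R = \err$ as well by the error-propagation rules, and we are done; otherwise each premise delivers a value in the set indexed by its own stage, and it only remains to verify that the term appearing to the right of $\Downarrow$ in the conclusion lies in $V^A$.

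For the rules of Figure~\ref{fig:evaluation}(2) ($A \neq \varepsilon$) this verification is uniform and can be read off the grammar of $V^A$: every term constructor is preserved and its immediate subterms are evaluated either at the same stage $A$ (application, conditional, arithmetic and comparison, $\lambda$, $\textbf{fix}$, $\Lambda$, instantiation), at $A\alpha$ (for $\Next{\alpha}{\cdot}$), or at $A'$ with $A = A'\alpha$ and $A' \neq \varepsilon$ (for $\Prev{\alpha}{\cdot}$). In each case the reassembled term matches a clause of the $V^A$ grammar: in particular $\Next{\alpha}{v}$ with $v \in V^{A\alpha}$ lies in $V^A$; $\Prev{\alpha}{v}$ with $v \in V^{A'}$ and $A' \neq \varepsilon$ lies in $V^A$ by the side-conditioned clause; and $x \in V^A$ because $A \neq \varepsilon$.

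For Figure~\ref{fig:evaluation}(1) (stage $\varepsilon$), the cases in which the conclusion's result $R$ is itself produced by a recursive evaluation --- $\beta$-application, unfolding of $\textbf{fix}$, the selected branch of a conditional, and instantiation of a transition abstraction --- are immediate from the induction hypothesis. The remaining cases produce $R$ syntactically: integer and Boolean constants and $\lam x:\tau.M$ are their own values and belong to $V^\varepsilon$; arithmetic and comparison return a numeral or a Boolean constant; $\Lambda\alpha.M$ returns $\Lambda\alpha.v$ with $v \in V^\varepsilon$, hence in $V^\varepsilon$; and $\Next{\alpha}{M}$ returns $\Next{\alpha}{v}$ with $v \in V^\alpha$, hence in $V^\varepsilon$. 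The one genuinely stage-sensitive case is $\Prev{\alpha}{M}$ at stage $\alpha$: its premise evaluates $M$ at stage $\varepsilon$ to some $\Next{\alpha}{M'}$, which by the induction hypothesis is a $V^\varepsilon$-value, so by the grammar it must have the form $\Next{\alpha}{v}$ with $v \in V^\alpha$; since the conclusion returns $M' = v$ at stage $\alpha$, we get $R \in V^\alpha$ as required.

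I expect the only real --- and still minor --- obstacle to be the bookkeeping: one must consistently track which of the two value grammars, $V^\varepsilon$ or $V^A$ with $A \neq \varepsilon$, a given subresult inhabits as the stage shifts under $\Next{\cdot}$ and $\Prev{\cdot}$, the critical instance being the stage-$\varepsilon$ rule for $\Prev{\cdot}$ just discussed. A secondary point is the instantiation rule: depending on its exact formulation in Figure~\ref{fig:evaluation}(1), one either invokes the induction hypothesis on a re-evaluation of the substituted body, or, if the substituted value is left in place, one checks by a short induction on the value grammar that applying $\Subst{\alpha := A}$ to a member of $V^\varepsilon$ again yields a member of $V^\varepsilon$.
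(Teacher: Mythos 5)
Your proposal is correct and is exactly the argument the paper intends: a straightforward induction on the evaluation derivation, with the error rules trivial and each ordinary rule's conclusion reassembled according to the stage-indexed value grammar (the stage-\(\varepsilon\) rule for \(\Prev{\alpha}{\cdot}\) and the re-evaluated body in the instantiation rule being handled just as you describe). The paper gives no more detail than ``easy induction on the derivation,'' so there is nothing to reconcile beyond the minor observation that when a premise yields \(\err\) the last rule must itself be an error rule, since the metavariables on the right of \(\Downarrow\) in the ordinary rules range over terms.
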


The last property is type soundness and its corollary that if a
well-typed program of a code type yields a result, then the result is
a quoted term, whose body is also typable at stage \(\varepsilon\).
Unlike a usual setting where only closed terms are considered
programs, free variables at non-\(\varepsilon\) stages are considered
symbols and do not cause unbound variable errors in \miniML, so we
relax the notion of programs to include terms that contain such
symbolic variables.  We say that \(\Env\) is \emph{$\varepsilon$-free} if it
satisfies \(A \neq \varepsilon\) for any \(x:\tau@A \in \Env\); then,
a program is a term which is typed under an \(\varepsilon\)-free
environment.  In the statement of Type Soundness Theorem, we also use
the notation $ \Env^{-A} $, defined by: \( \Env^{-A} = \{ x : \tau @ B
~|~ x : \tau @ AB \in \Env \} \).
\begin{thm}[Type Soundness]
\label{thm:ml-sound}
If \(\Env\) is \(\varepsilon\)-free and $ \Env \vdash^{\varepsilon}
M : \tau $ and $ \bred{\varepsilon}{M}{R} $,
then $ R = v $ and $ v \in V^{\varepsilon} $ for some $ v $
and $ \Env \vdash^{\varepsilon} v : \tau $.
In particular, if \(\tau = \Seal{\alpha}{\tau_0}\), then
$ v = \Next{\alpha}{N} $ and
$ \Env^{-\alpha} \vdash^{\varepsilon} N : \tau_0 $.
\begin{proof}
See Section~\ref{sec:proof-sound}.
\end{proof}
\end{thm}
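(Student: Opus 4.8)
The plan is to prove Theorem~\ref{thm:ml-sound} by induction on the derivation of the evaluation judgement, after first \emph{strengthening} the statement so that the induction carries through. The statement as given concerns stage \(\varepsilon\) only, but the evaluation rules descend into subterms at arbitrary higher stages (the ``inside quotation'' rules), so I would instead prove the following claim simultaneously for every stage \(A\): \emph{if \(\Env\) is \(\varepsilon\)-free, \(\Env \vdash^{A} M : \tau\) and \(\bred{A}{M}{R}\), then \(R \neq \err\), \(R \in V^{A}\), and \(\Env \vdash^{A} R : \tau\).} Carrying \(R \neq \err\) is necessary because the original statement asserts \(R = v\), not merely that \(R\) is well typed if it happens to be a value; carrying \(\varepsilon\)-freeness is legitimate because it is preserved by the recursive calls (the only bindings ever added are \(\lambda\)- and \textbf{fix}-bindings at the current, non-\(\varepsilon\) stage inside a quotation, and \(\Next{}{}\) adds nothing). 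Before the induction I would record three routine preliminaries: straightforward inversion of the typing rules (the type system is syntax-directed up to stage annotations); a canonical-forms lemma for stage \(\varepsilon\), saying that \(v \in V^{\varepsilon}\) of type \(\tau_1 \to \tau_2\) is a \(\lambda\)-abstraction, of type \(\textbf{int}\) (resp.\ \(\textbf{bool}\)) is an integer literal (resp.\ \(\textbf{true}\)/\(\textbf{false}\)), of type \(\Seal{\alpha}{\tau_0}\) is \(\Next{\alpha}{N}\) with \(N \in V^{\alpha}\), and of type \(\forall\alpha.\tau'\) is \(\Gen{\alpha}{N}\) with \(N \in V^{\varepsilon}\) (all read off the grammar of \(V^{\varepsilon}\) and the typing rules); and the Substitution Lemma, already available as Lemma~\ref{lem:ml-subst}.

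The induction splits on the last evaluation rule. For the ``inside quotation'' rules (\(A \neq \varepsilon\)) everything is uniform: each constructor is rebuilt from recursively evaluated immediate subterms, so I invert the corresponding typing rule, apply the induction hypothesis to each subterm, and re-apply the typing rule, checking that the reconstructed term lies in \(V^{A}\). The interesting rules are those at stage \(\varepsilon\). For application (and, similarly, arithmetic/comparison and \(\textbf{if}\)) I use canonical forms to learn that the value of the operator is a \(\lambda\)-abstraction of matching type, then conclude by Lemma~\ref{lem:ml-subst}(1) and the induction hypothesis on the substituted body; \(\textbf{fix}\) is handled the same way after unfolding, the unfolded term being typed by Lemma~\ref{lem:ml-subst}(1). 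For \(\Next{\alpha}{M_0}\) I invert its typing rule, apply the induction hypothesis at stage \(\alpha\), re-apply the rule, and note \(\Next{\alpha}{v} \in V^{\varepsilon}\) when \(v \in V^{\alpha}\). For \(\Prev{\alpha}{M_0}\) at stage \(\alpha\) I evaluate \(M_0\) at stage \(\varepsilon\), use canonical forms to see its value is some \(\Next{\alpha}{N}\), and invert the \(\Next{}{}\) rule to recover \(\Env \vdash^{\alpha} N : \tau\) with \(N \in V^{\alpha}\). For \(\Ins{M_0}{B}\) at stage \(\varepsilon\) I invert \textsc{(Ins)}, apply the induction hypothesis to obtain \(\Gen{\alpha}{v_0}\), invert \textsc{(Gen)} (after \(\alpha\)-renaming, this gives \(\alpha \notin \FMV(\Env)\) and \(\Env \vdash^{\varepsilon} v_0 : \tau'\)), and then apply Lemma~\ref{lem:ml-subst}(2): since \(\alpha \notin \FMV(\Env)\) the induced substitution on \(\Env\) is the identity, so \(\Env \vdash^{\varepsilon} v_0 \Subst{\alpha := B} : \tau' \Subst{\alpha := B}\), and the induction hypothesis on the (re-)evaluation of \(v_0 \Subst{\alpha := B}\) closes the case. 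Every error rule concludes \(R = \err\); each is discharged by observing that its premises, together with inversion and canonical forms, would require a well-typed value of an incompatible shape, contradicting the induction hypothesis (for the error-propagation rules, it is precisely the clause \(R \neq \err\) on the subderivation that is contradicted).

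For the ``in particular'' clause, the strengthened claim gives \(v \in V^{\varepsilon}\) with \(\Env \vdash^{\varepsilon} v : \Seal{\alpha}{\tau_0}\); canonical forms gives \(v = \Next{\alpha}{N}\) with \(N \in V^{\alpha}\), and inverting the \(\Next{}{}\) rule gives \(\Env \vdash^{\alpha} N : \tau_0\). To obtain \(\Env^{-\alpha} \vdash^{\varepsilon} N : \tau_0\) I would prove a \emph{demotion lemma for values}: if \(\Env \vdash^{\alpha A'} v : \rho\) and \(v \in V^{\alpha A'}\), then \(\Env^{-\alpha} \vdash^{A'} v : \rho\) (we use it with \(A' = \varepsilon\), but the general form is what the induction needs). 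This is proved by induction on \(v\): in the variable case the judgement forces \(x : \rho @ \alpha A' \in \Env\), hence \(x : \rho @ A' \in \Env^{-\alpha}\); under \(\Next{}{}\), \(\Prev{}{}\), \(\lambda\), \(\textbf{fix}\), application, \(\Gen{}{}\) and \(\Ins{}{}\) one peels the \(\alpha\)-prefix and re-applies the typing rule, using \(\FMV(\Env^{-\alpha}) \subseteq \FMV(\Env)\) to keep the \textsc{(Gen)} side condition valid. The essential point is that the grammar of \(V^{\alpha A'}\) admits \(\Prev{\beta}{v'}\) only when \(v' \in V^{A''}\) with \(A'' \beta = \alpha A'\) and \(A'' \neq \varepsilon\): a value never contains a \(\Prev{}{}\) that pops the stage below the \(\alpha\)-prefix, which is exactly what keeps demotion type-correct (it is false for arbitrary typable \(M\)).

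I expect the main obstacle to be not any single case but pinning down the induction hypothesis precisely---recognising that it must quantify over \emph{all} stages, maintain \(\varepsilon\)-freeness, and assert \(R \neq \err\)---so that the structural rules, the \(\varepsilon\)-freeness bookkeeping, and the error rules all close up simultaneously. The most delicate individual step is the instantiation (``run'') case, where the \textsc{(Gen)} side condition \(\alpha \notin \FMV(\Env)\) is exactly what makes the substitution lemma applicable without perturbing the context; dually, the demotion lemma for values is fragile in that it hinges on the specific shape of the value grammar. One further point to watch: if the evaluation rule for \(\Ins{}{}\) at stage \(\varepsilon\) returned \(v_0 \Subst{\alpha := B}\) \emph{without} re-evaluating it, one would additionally need a lemma that transition substitution preserves membership in \(V^{\varepsilon}\) under the \(\alpha \notin \FMV(\Env)\) proviso; with re-evaluation that obligation is absorbed into the induction hypothesis.
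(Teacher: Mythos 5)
Your proposal is correct and follows essentially the same route as the paper: the paper also generalises the first claim to arbitrary stages under an \(\varepsilon\)-free context, proves it by induction on the evaluation derivation with case analysis (using Lemma~\ref{lem:ml-subst} in the application, \textbf{fix}, and instantiation cases, and discharging the error rules by contradiction), and obtains the ``in particular'' clause from your ``demotion lemma for values,'' which is exactly the paper's Lemma~\ref{lem:sound2}, proved by the same induction on the value with the same key observation about \(\Prev{}{}\) in the value grammar.
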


\subsection{Staged Transition Variables for Erasure Property}
\label{sec:erasure-semantics}
The evaluation of \miniML\ introduced above relies on the annotation
of transition variables.  For example, consider two terms
\begin{eqnarray*}
  M_1 & = & (\Gen{\alpha}{\Next{\alpha}{(\lam x. 1) (\Next{\beta}{((\textbf{fix }f. f) 2)})}})\ \varepsilon \\
  M_2 & = & (\Gen{\alpha}{\Next{\alpha}{(\lam x. 1) (\Next{\alpha}{((\textbf{fix }f. f) 2)})}})\ \varepsilon.
\end{eqnarray*}
The only difference is the annotation on \( \Next{}{((\textbf{fix
  }f. f) 2)} \), but \( \bred{\varepsilon}{M_1}{1} \) whereas there is no
term \( N \) such that \( \bred{\varepsilon}{M_2}{N} \). In other
words, the evaluation of \( M_1 \) terminates but that of \( M_2 \)
diverges.  Therefore, we must record how transition variables are bound
to transitions during evaluation.

From the implementation point of view, it is desirable that evaluation
is insensitive to the annotation as much as possible to avoid
overhead.  In \(\lambda^\alpha\)~\cite{604134},
%(as well as MetaOCaml)
environment classifiers can be regarded as completely static citizens
so that the evaluation does not require them, although the authors do
not explicitly state it.  The property that the evaluation goes well
even if we erase the annotations is called \emph{erasure property}.
The previous example shows that the erasure property does \emph{not}
hold for \miniML.  Since the argument \(B\), especially, its length,
in an instantiation \(M\app B\) is significant at run time, we cannot
erase transitions completely.  So, we consider a slightly weaker
notion of erasure, which removes transition variables only from
\(\Next{}{}, \Prev{}{}\) and \(\Lambda\) and replaces the transition
\(B\) in \(M\app B\) with its length.  The goal of this section is to
find a practically meaningful subset of \miniML{}, which enjoys the
erasure property under the translation sketched above.

The reasons why the erasure property is broken are (1) \( \Lambda \)-bound
transition variables are used ``too far'' from the binder, as is the case in \( M_2 \)
and (2) the ``depth'' of quoting \( \Next{\alpha}{} \) can be changed
by using instantiation with a transition, whose length is not \( 1 \).
In the case of \( M_2 \), there is an occurrence of transition variable \( \alpha \)
far from the binder and \( \alpha \) is instantiated by \( \varepsilon \), whose length is \( 0 \).
So, to ensure the erasure property, it is enough to prevent both (1) and (2) from holding
at once, in other words, to guarantee that \( \Lambda \)-bound transition variables occur near the binder
or to restrict instantiations to only transitions of length \( 1 \).

Based on this observation, we will introduce two instantiation rules.
The first rule is for instantiation of transition variables used only
near the binder.  We can change the depth of quoting by using this
rule, but this rule can be applied only in limited situations.  The
second rule is for instantiation of transition variables by
transitions, whose lengths are \( 1 \).  This rule can be applied to
any \(\forall\)-types, but we cannot change the depth of quoting.  We
introduce a new term constructor \( \SIns{M}{\alpha} \) to distinguish
from the former.

The first instantiation rule requires some control on the occurrences
of transition variables.  We enforce one additional
restriction, which requires that transition variables be also staged
like term variables.  This restriction rejects a type with nested
occurrences of \(\Seal{\alpha}{}\), such as \(\forall
\alpha. \Seal{\alpha}{\Seal{\alpha}{\tau}}\), whose term would have a
distant use of \(\Next{\alpha}{}\).  This restriction is closely
related to the distinction between open and closed code types in
\(\lambda^i\)~\cite{Calcagno2004}.

We define a new type system with staged transition variables.  We need
two changes to deal with the stages for transition variables.  First,
we introduce environments for transition variables.  A
\emph{transition environment} is a set of the form \( \{ \alpha_1 @
A_1, \dots, \alpha_n @ A_n \} \), where \( \alpha_i \) are distinct
transition variables.  An intuitive meaning of \( \alpha @ A \) is
that the valid occurrence of \( \alpha \) is always of the form \( A
\alpha \).  The second change is the annotation for the universal
quantifier.  The new syntax for universal quantification is \( \forall
\alpha @ A. \tau \), where \( A \) is the (positive) path from the
current stage to the stage in which \( \alpha \) is usable.
% This notation and meaning are very similar to the formula
% \( \forall x \in D. P \) in the predicate logic.

Next, we define well-formed transitions, transition environments,
types, and type environments to ensure every use of a transition
variable is valid.
We say a transition \( A = \alpha_1 \dots \alpha_n \) is well formed
under a transition environment \( \Delta \) if, for any \( i \le n \),
\( \alpha_i @ \alpha_1 \dots \alpha_{i - 1} \in \Delta \).  We say \(
\Delta \) is well formed if, for any \( \alpha @ A \in \Delta \), \( A
\) is well formed under \( \Delta \), i.e., all stages where
transition variables are declared are well formed.  This definition
avoids the circular definition of transition variables, e.g., \(
\alpha @ \beta, \beta @ \alpha \).  We write \( \Delta
\svdash A \) if \( A \) is well formed under \( \Delta \), and \(
\svdash \Delta \) if \( \Delta \) is a well-formed transition
environment.

The judgment of the form \( \Delta \svdash^A \tau \) means ``type \(
\tau \) is well formed at stage \( A \) under \( \Delta \)'', and
defined by the rules in Figure~\ref{fig:staged-type}.  The base types
\textbf{int} and \textbf{bool} are always well formed at any
well-formed stage.  The rules for \( \tau \to \sigma \) and \(
\Seal{\alpha}{\tau} \) resemble the typing rules \(\textsc{(Abs)}\)
and \textsc{(\(\Next{}{}\))}, respectively.  The type \( \forall \alpha @ B. \tau \),
which binds a new transition variable \( \alpha \), is well formed at
\( A \) under \( \Delta \) if \( \tau \) is well formed under the
transition variables environment extended by the new transition
variable declaration \( \alpha @ AB \).  Finally, we define
well-formedness of type environment \( \Gamma \) under \( \Delta \),
written \( \Delta \svdash \Gamma \), by: \( \Gamma \) is well formed
under \( \Delta \) if and only if \(\Delta\) is well formed and, for
any \( x : \tau @ A \in \Gamma\), \( \tau \) is well formed at \( A \)
under \( \Delta \) (i.e., \(\Delta \svdash^A \tau\)).

\begin{figure}[t]
\begin{multicols}{2}
  \centering
\infrule[ST-Base]{
  \Delta \svdash A \AND
  b \in \{ \textbf{int}, \textbf{bool} \}
}{
  \Delta \svdash^A b
}
\\[1ex]
\infrule[ST-Imp]{
  \Delta \svdash^A \tau \AND
  \Delta \svdash^A \sigma
}{
  \Delta \svdash^A \tau \to \sigma
}
\\[1ex]
\infrule[ST-Code]{
  \Delta \svdash^{A \alpha} \tau
}{
  \Delta \svdash^A \Seal{\alpha}{\tau}
}
\\[1ex]
\infrule[ST-Univ]{
  \Delta, \alpha @ AB \svdash^A \tau
}{
  \Delta \svdash^A \forall \alpha@B. \tau
}
\end{multicols}
\vspace{-1em}
  \caption{The definition of well-formed types under the transition environment \( \Delta \).}
  \label{fig:staged-type}
\end{figure}

Figure~\ref{fig:staged-typing} shows the typing rules that differ from
the previous type system (except the addition of \(\Delta\)). They
have additional premises about well-formedness.  The rule
\textsc{(S-Var)} requires the well-formedness of environment \(
\Gamma, x : \tau @ A \), which will require well-formedness of the
type \( \tau \) at \( A \) and the transition environment \( \Delta
\).  The rules \textsc{(S-Num)} and \textsc{(S-Bool)} require the
well-formedness of the environment \( \Gamma \) and the stage \( A \),
which ensures the well-formedness of the base types.
The typing rule \textsc{(S-Gen)} records the path
from the current stage to the stage in which \( \alpha \) is usable.
This information is used by the rules \textsc{(S-Ins1)} and \textsc{(S-Ins2)}.
As mentioned above, there are two kinds of transition instantiation rules and corresponding term constructors.
The first one \textsc{(S-Ins1)} is computationally meaningful, in other words it may change the depth of quoting,
but can be used only in limited situations.  The second one \textsc{(S-Ins2)} does not change the depth of
quoting, so this is computationally meaningless and we can use anytime.
Here, substitution for a transition variable \( \alpha \) in \( \forall
\alpha @ A. \tau \) (among other types) is defined as follows:
\[
  (\forall \alpha @ A. \tau) \Subst{\beta := B} = \forall \alpha @ (A \Subst{\beta := B}). (\tau \Subst{\beta := B})
\]
% So to keep the well-formedness, the well-formedness of \( A B
% \) under \( \Delta \) is required.
It is easy to see that \( \Gamma; \Delta \svdash^A M : \tau \) implies \( \Gamma \vdash^A M : \tau \).
% So the type soundness of \( \svdash \) is trivially holds.

\begin{figure}
\begin{multicols}{2}
  \centering
\infrule[S-Var]{
  \Delta \svdash \Gamma, x : \tau @ A
}{
  \Gamma, x : \tau @ A; \Delta \svdash^A x : \tau
}
\\[1ex]
\infrule[S-Num]{
  \Delta \svdash \Gamma \AND
  \Delta \svdash A
}{
  \Gamma; \Delta \svdash^A n : \textbf{int}
}
\\[1ex]
\infrule[S-Bool]{
  \Delta \svdash \Gamma \AND
  \Delta \svdash A \\
  bv \in \{ \textbf{true}, \textbf{false} \}
}{
  \Gamma; \Delta \svdash^A bv : \textbf{bool}
}
\\[1ex]
\infrule[S-Gen]{
  \alpha \notin \FMV(\Gamma) \cup \FMV(\Delta) \\
  \Gamma; \Delta, \alpha @ A B \svdash^A M : \tau
}{
  \Gamma; \Delta \svdash^A \Lambda \alpha. M : \forall \alpha @ B. \tau
}
\\[1ex]
\infrule[S-Ins1]{
  \Gamma; \Delta \svdash^A M : \forall \alpha @ \varepsilon. \tau \\
  \Delta \svdash A B \qquad \tau = \Seal{\alpha}{\sigma}
}{
  \Gamma; \Delta \svdash^A \Ins{M}{B} : \tau \Subst{\alpha := B}
}
\\[1ex]
\infrule[S-Ins2]{
  \Gamma; \Delta \svdash^A M : \forall \alpha @ B. \tau \\
  \beta @ AB \in \Delta
}{
  \Gamma; \Delta \svdash^A \SIns{M}{\beta} : \tau \Subst{\alpha := \beta}
}
\end{multicols}
  \vspace{-1em}
  \caption{The typing rules which differ from the previous type system.}
  \label{fig:staged-typing}
\end{figure}

Now, we define the syntax for \emph{erased terms}, terms after erasure
and the erasure translation from \(\sname\) terms to erased terms, and the
big-step semantics for erased terms.
%  the erasure property, i.e.,
% for any term \( M \) which is typable in \( \svdash \),
% we can evaluate \( M \) even if we forget the annotations in \( M \).
% To state formally, we define 
% \emph{erased types} \( \tau^\flat \), 
% \emph{erased terms} \( M^\flat \),
% which have no annotation of transition variables, and \emph{erasure semantics} \( \ered{n}{M^\flat}{N^\flat} \).
 The syntax of erased %types and 
terms, ranged over by \(M^\flat\), is as follows:
\begin{center}
\(\begin{array}{@{}lr@{}r@{}l}
%     \textrm{\it Erased Types}\quad
%       & \tau^\flat &{}::={}&
%         \textbf{int} \OR
%         \textbf{bool} \OR
% 	\tau^\flat \to \tau^\flat \OR
% 	\Seal{}{\tau^\flat} \OR
% 	\forall . \tau^\flat \\
    \textrm{\it Erased Terms}\quad
      & M^\flat &{}::={}&
        x \OR
        n \OR
        b \OR
        M^\flat = M^\flat \OR
        M^\flat + M^\flat \OR
        M^\flat - M^\flat \OR
        M^\flat * M^\flat 
% \OR
%         \textbf{true} \OR
%         \textbf{false} 
      \\
      & & \OR &
        \eIf{M^\flat}{M^\flat}{M^\flat} \OR
        \textbf{fix}\ f %: \tau \to \sigma
           . M^\flat \OR
        \lam x % : \tau^\flat 
           . M^\flat \OR
        M^\flat M^\flat \\
      & & \OR &
        \Next{}{M^\flat} \OR
        \Prev{}{M^\flat} \OR
        \Lambda {M^\flat} \OR
        M^\flat \app \unittrans \OR
        \Ins{M^\flat}{n} \textrm{ (\(n \ge 0\))}
  \end{array}
\)
\end{center}
The \emph{erasing function} \( \flat(\cdot) \) from % types and 
terms to erased % types and 
erased terms is defined as follows:
\begin{eqnarray*}
%   \flat(b) & = & b \quad \textrm{(if \( b \in \{ \textbf{int}, \textbf{bool} \} \))} \\
%   \flat(\tau \to \sigma) & = & \flat(\tau) \to \flat(\sigma) \\
%   \flat(\Seal{\alpha}{\tau}) & = & \Seal{}{\flat(\tau)} \\
%   \flat(\forall \alpha. \tau) & = & \forall \flat(\tau) \\
%   \\
  \flat(c) & = & c \qquad (c \in \mathbb{Z} \cup \{ \textbf{true}, \textbf{false} \})\\
  \flat(M \Diamond N) & = & \flat(M) \Diamond \flat(N)  \qquad (\Diamond \in \{+, -, *, =\}) \\
  \flat(\eIf{M}{N_1}{N_2}) & = & \eIf{\flat(M)}{\flat(N_1)}{\flat(N_2)} \\
  \flat(\textbf{fix}\ f : \tau \to \sigma . M) & = & \textbf{fix}\ f. \flat(M) \\
  \flat(\lam x : \tau. M) & = & \lam x %: \flat(\tau)
      . \flat(M) \\
  \flat(M \app N) & = & \flat(M) \app \flat(N) \\
  \flat(\Next{\alpha}{M}) & = & \Next{}{\flat(M)} \\
  \flat(\Prev{\alpha}{M}) & = & \Prev{}{\flat(M)} \\
  \flat(\Gen{\alpha}{M}) & = & \Lambda \flat(M) \\
  \flat(M [\beta]) & = & \flat(M) \app \unittrans \\
  \flat(\Ins{M}{A}) & = & \flat(M) \app n \quad \textrm{(\( n \) is the length of \( A \)).}
\end{eqnarray*}
So, \( \flat(\Next{A}{M}) = \overbrace{\Next{}{\dots \Next{}{}}}^n \flat(M) \) and
\( \flat(\Prev{A}{M}) = \overbrace{\Prev{}{\dots \Prev{}{}}}^n \flat(M) \) where \( n \) is the length of \( A \).

The \emph{erasure semantics} is essentially the same as the ordinary
evaluation semantics in Section~\ref{sec:evaluation}, except for the
two differences: one is that the stage \( A \) of \( \bred{A}{M}{N} \)
is replaced by the natural number \( n \), which is the length of \( A
\); and the other is that the rule for \( M \app n \) at the stage \(
0 \).  In this case, \( M \) must be evaluated to \( \Lambda
\Next{}{M'} \) and \(\Next{}{}\) at its head is duplicated by \( n \)
times.  We show only main rules below.
\begin{center}
\infrule{
  \ered{0}{M^\flat}{\Next{}{N^\flat}}
}{
  \ered{1}{\Prev{}{M^\flat}}{N^\flat}
}
\qquad
\infrule{
  \ered{n + 1}{M^\flat}{N^\flat}
}{
  \ered{n}{\Next{}{M^\flat}}{\Next{}{N^\flat}}
}
\qquad
\infrule{
  \ered{0}{M^\flat}{\Lambda N^\flat}
}{
  \ered{0}{M^\flat \app []}{N^\flat}
}
\\
\infrule{
  \ered{0}{M^\flat}{\Lambda \Next{}{M_0^\flat}} \AND
  \ered{0}{\overbrace{\Next{}{\Next{}{\cdots\Next{}{}}}}^n M_0^\flat}{N^\flat}
}{
  \ered{0}{M^\flat \app n}{N^\flat}
}
\end{center}

Finally, we state the erasure property: the erasure semantics is
equivalent to the semantics with transition variables for terms typed
under the new type system.
\begin{thm}[Erasure Property]
\label{thm:erasure}
Suppose \(\Gamma\) is \(\varepsilon\)-free and \( \Gamma; \Delta \svdash^\varepsilon M : \tau \).  Then,
\begin{enumerate}[\em(1)]
\item if \( \bred{\varepsilon}{M}{N} \), then
\( \ered{0}{\flat(M)}{\flat(N)} \); and 
\item if \( \ered{0}{\flat(M)}{N^\flat} \), then
there is some \( N' \) such that \( \bred{\varepsilon}{M}{N'} \)
and \( N^\flat = \flat(N') \).
\end{enumerate}
\proof
See Appendix~\ref{sec:proof-erasure}.
\qed
\end{thm}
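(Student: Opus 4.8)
The plan is to prove, by induction on evaluation derivations, a statement generalized from stage $\varepsilon$ to an arbitrary positive path $A$ and from $0$ to the length $|A|$: assuming $\Gamma$ is $\varepsilon$-free, $\svdash\Delta$, $\Delta\svdash A$ and $\Gamma;\Delta\svdash^A M:\tau$, part~(1) becomes ``$\bred{A}{M}{N}$ implies $\ered{|A|}{\flat(M)}{\flat(N)}$'' and part~(2) its converse. This generalization is forced because evaluation inside quotations recurses through stages other than $\varepsilon$, and because $\flat$ sends an $A$-annotated quotation $\Next{A}{M}$ to $|A|$ copies of $\Next{}{}$ over $\flat(M)$ (and similarly for $\Prev{}{}$) --- exactly the length accounting the erasure semantics performs.

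Before the induction I would establish the infrastructure for the staged type system, mirroring what is already available for \miniML: a Substitution Lemma (the analogue of Lemma~\ref{lem:ml-subst}, for both term and transition substitution, together with the auxiliary substitution lemma for well-formedness of types and transition environments), a canonical-forms lemma describing the shape of a staged-well-typed value $v\in V^A$, and a big-step preservation lemma ``$\Gamma;\Delta\svdash^A M:\tau$ and $\bred{A}{M}{v}$ imply $\Gamma;\Delta\svdash^A v:\tau$'', which is proved by a prior induction on the evaluation using only the first two. I would also record two facts about erasure: $\flat(M\Subst{x:=N})=\flat(M)\Subst{x:=\flat(N)}$, which is immediate since $\flat$ does not touch term variables; and, crucially, that if $\alpha\notin\FMV(M)$ then $\flat(M\Subst{\alpha:=B})=\flat(M)$ --- the only way a transition substitution disturbs $\flat$ is by changing the length of a block of $\Next{}{}$'s, $\Prev{}{}$'s, or the argument of an $\Ins{}{}$, and each such change occurs at an occurrence of the substituted variable. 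Finally, I would prove the key structural consequence of the staging discipline, needed in exactly one place below: a staged-well-typed value at stage $\alpha$, typed under a transition environment in which $\alpha$ is declared at $\varepsilon$, has all of its subterms typed at stages having $\alpha$ as a prefix, and hence does not mention $\alpha$ at all --- reaching stage $\varepsilon$ (the only stage at which $\alpha$ could be used) from inside would require a $\Prev{\alpha}{}$ whose body lives at stage $\varepsilon$, which is never in value position.

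For part~(1) I would do induction on the derivation of $\bred{A}{M}{N}$. For all congruence-style rules --- the arithmetic and comparison operators, \eIf, application, $\lambda$, \textbf{fix}, $\Next{}{}$, $\Prev{}{}$, $\Gen{}{}$, $\SIns{}{}$ and the entire block of rules for evaluation inside quotations --- the argument is routine: inversion of the staged typing derivation together with big-step preservation supply staged-well-typed, $\varepsilon$-free subterms at the appropriate stages, so the induction hypotheses apply, and the matching erased rule fires, using $\flat$/substitution commutation in the $\beta$- and \textbf{fix}-cases. The one genuinely interesting case is the rule that performs a transition instantiation, which (by inspection of the semantics and of the erasure rules) is only reached at ambient stage $\varepsilon$: here $M=\Ins{M_0}{B}$, and since $\Ins{}{}$ is introduced only by \textsc{(S-Ins1)}, inversion gives $M_0:\forall\alpha@\varepsilon.\Seal{\alpha}{\sigma}$ at stage $\varepsilon$; canonical forms force $\bred{\varepsilon}{M_0}{\Lambda\alpha.\Next{\alpha}{v_0}}$ with $v_0\in V^\alpha$ typed at stage $\alpha$; the structural lemma gives $\alpha\notin\FMV(v_0)$, so the other premise of the evaluation rule is $\bred{\varepsilon}{\Next{B}{v_0}}{N}$, and $\Next{B}{v_0}$ is again staged-well-typed at stage $\varepsilon$ by the Substitution Lemma (using the hypothesis $\Delta\svdash B$ of \textsc{(S-Ins1)}). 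Both premises are subderivations, so the induction hypotheses give $\ered{0}{\flat(M_0)}{\Lambda\Next{}{\flat(v_0)}}$ and $\ered{0}{\flat(\Next{B}{v_0})}{\flat(N)}$; since $\flat(\Next{B}{v_0})$ is $|B|$ copies of $\Next{}{}$ over $\flat(v_0)$, the dedicated erasure rule for $M^\flat\app n$ with $n=|B|$ yields $\ered{0}{\flat(M_0)\app|B|}{\flat(N)}=\ered{0}{\flat(M)}{\flat(N)}$. Part~(2) is the symmetric induction on the derivation of $\ered{|A|}{\flat(M)}{N^\flat}$: staged well-typedness of $M$ lets us read off from the shape of $\flat(M)$ which big-step rule applies (the transition annotations and types discarded by $\flat$ are irrelevant to rule selection, and $|A|$ records whether we are at $\varepsilon$), so the corresponding big-step derivation is reconstructed and $N^\flat=\flat(N')$ follows, again with the \textsc{(S-Ins1)}/$\Ins{}{}$ case as the crux and handled exactly as above.

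The main obstacle is the structural lemma about values and the bookkeeping in the $\Ins{}{}$ case it feeds: one must verify that the staging discipline --- through the restriction to $\forall\alpha@\varepsilon$ in \textsc{(S-Ins1)}, the well-formedness side conditions on types and transition environments, and the shape of values at non-$\varepsilon$ stages --- really does confine the abstracted transition variable so that instantiation by a transition of arbitrary length is mirrored faithfully by the single ``replicate the head $\Next{}{}$ $n$ times'' step of the erasure semantics. Everything else is lengthy but mechanical induction over stages and lengths.
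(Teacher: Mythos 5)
Your proposal matches the paper's own proof in all essentials: the same generalization to an arbitrary stage \(A\) and its length (the paper's Theorem~\ref{thm:general-erasure}), the same infrastructure of staged substitution lemmas, staged type preservation (Lemma~\ref{lem:staged-preservation}) and value canonical forms, and the same identification of \( \Ins{M_0}{B} \) at stage \(\varepsilon\) as the only interesting case, resolved by showing the quoted value body does not mention the abstracted transition variable, so that instantiation is mirrored by the erased rule that replicates the head \(\Next{}{}\) \(|B|\) times. The only minor divergence is how that non-occurrence fact is justified: the paper demotes the value via Lemma~\ref{lem:staged-sound2} and reads \( \alpha \notin \FMV(M_2) \) off the absence of \(\alpha\) in \( \Delta^{-\alpha} \), whereas you argue it directly from the shape of values and stage well-formedness --- an equivalent step, not a different proof.
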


We believe that the calculus with this new type system does not lose
much expressiveness for practical use; in fact, the example of power
functions in Section~\ref{sec:power} can be typed with the new type
system.

% This restriction, which resembles one in $ \lambda^i $~\cite{Calcagno2004},
%  still allows embedding of $ \lambda^\bigcirc $ and $ \lambda^\square $
%  and $ \texttt{power}_\forall $ ().

% The new type system is sufficiently expressive by integrating ML-style
% let polymorphism.  If we consider the following typing rule for \clet
% %
% \begin{center}
% \infrule{
%   \Gamma; \Delta \svdash M : \tau' \AND
%   \Gamma; \Delta \svdash N\Subst{x := M} : \tau
% }{
%   \Gamma; \Delta \svdash \clet x = M \mathop{\textbf{in}} N : \tau
% }
% \end{center}
% %
% then the example of the power function in Section~\ref{sec:power} will
% be typable.  Note that we use a typing rule using substitution and
% \clet-bound variable does not have a type declaration, because
% \(\forall \alpha. \cint \to \Seal{\alpha}{\cint} \to
% \Seal{\alpha}{\cint}\) is not a well-formed type.  Formalization of
% another type system using type schemes is left for future work.

\section{Kripke Semantics  for $ \sname $ and Logical Completeness}
\label{sec:logic}
In this section, we formalize the Kripke semantics discussed in
Section~\ref{sec:overview} and prove completeness of a classical
version of the proof system obtained from \(\sname\) to justify the
informal interpretation of types in \(\sname\) as formulae of a logic.
We augment the set of propositions (namely types) with falsity and the
proof rules with double negation elimination.  It is left for future
work to study the semantics of the intuitionistic version, of which
recent work on Kripke semantics for intuitionistic
LTL~\cite{KojimaIgarashi08IMLA,KojimaIgarashi10IC} can be a basis.

First, we (re)define the set of propositions and the natural deduction
proof system.  Then, we proceed to the formal definition of the Kripke
semantics and prove soundness and completeness of the proof system.
Finally, we examine another rule for the double negation elimination.

\subsection{Natural Deduction}
\label{sec:deduction-rule}
The set $ \TypeSet_\bot $, ranged over by \(\phi\) and \(\psi\), of
\emph{propositions} is given by the grammar for $ \TypeSet $ extended with a
new constant $ \bot $.

The natural deduction system can be obtained by forgetting variables
and terms in the typing rules.  We add the following new rule,
which is the ordinary double negation elimination rule,
adapted for this setting:
\begin{center}
 \DBotET\ .
\end{center}

\subsection{Kripke Semantics and Completeness}
\newcommand{\reachablevia}[1]{%
\mathbin{\stackrel{#1}{\longrightarrow}}}
\newcommand{\notreachablevia}[1]{%
%\mathbin{\stackrel{#1}{\not\longrightarrow}}}
\mathbin{\,\,\,\,\,\not\!\!\!\!\! \reachablevia{#1}}}

As mentioned in Section 2, the Kripke semantics for this logic is
based on a functional transition system \(\mathcal{T} = (S, L,
\{\reachablevia{a} \mid a \in L\})\) where \(S\) is the (non-empty)
countable set of states, \(L\) is the countable set of labels, and
\(\mathord{\reachablevia{a}} \in S \to S\) for each label \(a \in L\).
We write \(s \reachablevia{a_1 \cdots a_n} s'\) if there exist \(s_1,
\ldots, s_{n-1}\) such that \(s \reachablevia{a_1} s_1
\reachablevia{a_2} \cdots \reachablevia{a_{n-1}} s_{n-1}
\reachablevia{a_n} s'\).  Actually, given \(s, a_1, \ldots, a_n\),
\(s'\) always exists in this setting because \( \reachablevia{a_i} \)
is a total function for all \( 1 \le i \le n \).

To interpret a proposition, we need two \emph{valuations}, one for
propositional variables and the other for transition variables.  The
former is a total function $ v \in S \times \PropVar \to \{ 0, 1 \} $;
the latter is a total function $ \rho \in \Gene \to L^{*} $, where
\(L^{*}\) is the set of all finite sequences of labels.
Then, we define the \emph{satisfaction relation} $ {\mathcal T}, v, \rho; s
\Vdash \phi $, where $ s \in S $ is a state, as follows:
\newcommand{\ifff}{\;\;\textrm{iff}\;\;}
\[
\begin{array}{lcl}
  \mathcal{T}, v, \rho; s \Vdash p & \ifff & v(s , p) = 1 \\
  \mathcal{T}, v, \rho; s \Vdash \bot & & \textrm{never occurs} \\
  \mathcal{T}, v, \rho; s \Vdash \phi \to \psi
    & \ifff & \mathcal{T}, v, \rho; s \not \Vdash \phi
      \;\; \textrm{or} \;\;
      \mathcal{T}, v, \rho; s \Vdash \psi \\
  \mathcal{T}, v, \rho; s \Vdash \Seal{\alpha}{\phi}
    & \ifff & \mathcal{T}, v, \rho; s' \Vdash \phi \textrm{ where }
    s \reachablevia{\rho(\alpha)} s' \\
%    & \ifff & s \notreachablevia{\rho(\alpha)} \textrm{ or }
%      \mathcal{T}, v, \rho; s' \Vdash \phi \textrm{ where }
% s \reachablevia{\rho(\alpha)} s'\\
  \mathcal{T}, v, \rho;  s \Vdash \forall \alpha . \phi
    & \ifff & \textrm{for all} \; A \in L^{*}, \;\;
      \mathcal{T}, v, \rho[A/\alpha]; s \Vdash \phi
\end{array}
\]
Here, $ \rho[A/\alpha] $ is defined by: $ \rho[A/\alpha](\alpha) = A
$ and $ \rho[A/\alpha](\beta) = \rho(\beta) $ (for $ \beta \neq \alpha $).
The satisfaction relation is extended pointwise to contexts \(\Gamma\)
(possibly infinite sets of pairs of a proposition and a transition%
\footnote{We allow \( \Gamma \) to be infinite for technical convenience.
For an infinite context \( \Gamma \), we write \( \Gamma \vdash^A \phi \)
if there exists a finite context \( \Gamma' \subseteq \Gamma \) such that
\( \Gamma' \vdash^A \phi \).  The following argument holds if we restrict
\( \Gamma \) to be finite because the logic is compact, i.e., \( \Gamma \)
is unsatisfiable if and only if there exists a finite subset \( \Gamma' \subseteq \Gamma \)
such that \( \Gamma' \) is also unsatisfiable.  For more detail,
see Appendix~\ref{sec:proof-of-completeness}.
}%
) by:
\[
  \mathcal{T}, v, \rho; s \Vdash \Gamma  \ifff 
    \mathcal{T}, v, \rho; s \Vdash \Seal{A}{\phi}
    \textrm{ for all } \phi @ A \in \Gamma \ .
\]
The local consequence relation \(\Gamma \Vdash \phi\) is
defined by:
\[
 \Gamma \Vdash \phi \ifff
  \mathcal{T}, v, \rho; s \Vdash \Gamma \text{ implies }
  \mathcal{T}, v, \rho; s \Vdash \phi \text{ for any }
    \mathcal{T}, v, \rho, s \ . 
\]

Then, the natural deduction proof system is sound and complete with
respect to the local consequence relation.
The proof is similar to the one for first-order predicate logic:
we use the standard techniques of Skolemization and Herbrand structure.
\begin{thm}[Soundness]
If $ \Gamma \vdash^{\varepsilon} \phi $, then
$ \Gamma \Vdash \phi $.
\begin{proof}
By induction on the derivation of $ \Gamma \vdash^{\varepsilon} \phi $.
\end{proof}
\end{thm}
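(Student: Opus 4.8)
The plan is to prove, by induction on derivations, a statement slightly stronger than the theorem, so as to cover judgments at an arbitrary stage. Since the rules ($\Next{}{}$) and ($\Prev{}{}$) change the stage annotation, I would show that $\Gamma \vdash^A \phi$ implies $\Gamma \Vdash \Seal{A}{\phi}$, where $\Seal{A}{\phi}$ abbreviates the iterated code modality (so $\Seal{\varepsilon}{\phi} = \phi$ and $\Seal{A B}{\phi} = \Seal{A}{\Seal{B}{\phi}}$, exactly the identification used in the rule (\textsc{Ins})); the theorem is then the case $A = \varepsilon$. Concretely, for every $\mathcal{T}, v, \rho, s$ with $\mathcal{T}, v, \rho; s \Vdash \Gamma$ I must establish $\mathcal{T}, v, \rho; s' \Vdash \phi$, where $s'$ is the state determined by $s \reachablevia{\hat\rho(A)} s'$ and $\hat\rho$ extends $\rho$ to transitions by concatenation. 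Since $\reachablevia{a}$ is a total function for each label $a$, such an $s'$ exists and is unique, so this is well defined. The case of an infinite $\Gamma$ reduces at once to the finite case, since a model of $\Gamma$ is a model of every finite $\Gamma' \subseteq \Gamma$.

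Before the induction I would prove two auxiliary lemmas by structural induction on propositions. The first is a coincidence lemma: $\mathcal{T}, v, \rho; s \Vdash \phi$ depends on $\rho$ only through its restriction to $\FMV(\phi)$, and $\mathcal{T}, v, \rho; s \Vdash \Gamma$ only through $\FMV(\Gamma)$. The second is a semantic substitution lemma: $\mathcal{T}, v, \rho; s \Vdash \phi\Subst{\alpha := B}$ iff $\mathcal{T}, v, \rho[\hat\rho(B)/\alpha]; s \Vdash \phi$. These are the analogues of the corresponding facts in the model theory of first-order logic, and their proofs use the standard renaming conventions to handle the binder $\forall \alpha. (\cdot)$.

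With these in place, the induction is routine. For (\textsc{Var}) one reads off the conclusion from the definition of $\mathcal{T}, v, \rho; s \Vdash \Gamma$; for (\textsc{Abs}), (\textsc{App}), and the added double-negation elimination rule, one uses the fact that the per-world truth assignment is classical, so each world validates all the propositional tautologies needed (in particular, $s' \Vdash (\phi \to \bot) \to \bot$ forces $s' \Vdash \phi$ since $\bot$ is never satisfied). The cases ($\Next{}{}$) and ($\Prev{}{}$) are immediate after unfolding $\Seal{A\alpha}{\phi} = \Seal{A}{\Seal{\alpha}{\phi}}$ and using that $\reachablevia{\hat\rho(A)}$ followed by $\reachablevia{\hat\rho(\alpha)}$ is $\reachablevia{\hat\rho(A\alpha)}$. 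For (\textsc{Gen}) the side condition $\alpha \notin \FMV(\Gamma) \cup \FMV(A)$ is precisely what the coincidence lemma needs: when quantifying over $B \in L^{*}$ one may replace $\rho$ by $\rho[B/\alpha]$ without disturbing either $\mathcal{T}, v, \rho; s \Vdash \Gamma$ or the target state $s'$, and then the induction hypothesis applies. For (\textsc{Ins}) one combines the induction hypothesis, which gives satisfaction of $\forall \alpha. \phi$ at $s'$, with the semantic substitution lemma.

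I expect the one genuinely delicate point to be the semantic substitution lemma in combination with the syntactic identifications on code types. When $B$ is empty, or of length greater than one, a single occurrence of $\Seal{\alpha}{}$ in $\phi$ must expand into zero or several nested code modalities under the substitution, and the proof has to track that $\Seal{\alpha}{\psi}\Subst{\alpha := B} = \Seal{B}{\psi\Subst{\alpha := B}}$ is interpreted against the concatenated label sequence $\hat\rho(B)$, while respecting capture-avoidance when $B$ mentions variables bound by a $\forall$ in $\phi$. Everything else is bookkeeping on the Kripke semantics.
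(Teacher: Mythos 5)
Your proposal is correct and follows essentially the same route as the paper, whose proof is simply ``by induction on the derivation of $\Gamma \vdash^{\varepsilon} \phi$'': your strengthening to arbitrary stages ($\Gamma \vdash^{A} \phi$ implies $\Gamma \Vdash \Seal{A}{\phi}$), together with the coincidence and semantic substitution lemmas, is exactly the standard bookkeeping needed to make that induction go through, including the (\textsc{Gen}), (\textsc{Ins}) and double-negation cases (where totality of the transition maps makes $\Seal{B}{\bot}$ unsatisfiable at every state).
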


\begin{thm}[Completeness]
\label{thm:completeness}
If $ \Gamma \Vdash \phi $, then
$ \Gamma \vdash^{\varepsilon} \phi $.
\begin{proof}
  We give a proof sketch; more detailed proofs are found in
  Appendix~\ref{sec:proof-of-completeness}.

We assume $ \Gamma \nvdash^{\varepsilon} \phi $.
We construct a transition system $ {\mathcal T} $,
two valuations $ v $ and $ \rho $ and
a state $ s $
such that $ {\mathcal T}, v, \rho; s \Vdash \Gamma $
and $ {\mathcal T}, v, \rho; s \nVdash \phi $.
The construction is similar to the construction of counter models
in first-order predicate logic.

First, we prove the following proposition.
\begin{center}
$ \Gamma \nvdash^{\varepsilon} \bot $
implies there exists $ {\mathcal T} $, $ v $, $ \rho $ and $ s $
such that $ {\mathcal T}, v, \rho; s \Vdash \Gamma $.
\end{center}
We can reduce this proposition to
the completeness of quantifier-free logic
(the logic without the quantifier over transition variables)
by using Skolemization and Herbrand Universes.

Then, we prove this theorem by contraposition.
Because $ \Gamma \nvdash^{\varepsilon} \phi $,
we have $ \Gamma, \phi \to \bot @ \varepsilon \nvdash^{\varepsilon} \bot $.
Therefore, there exist $ {\mathcal T} $, $ v $, $ \rho $ and $ s $
which satisfy $ {\mathcal T}, v, \rho; s \Vdash \Gamma $ and
$ {\mathcal T}, v, \rho; s \Vdash \phi \to \bot $.
Then, $ {\mathcal T}, v, \rho; s \nVdash \phi $
because $ {\mathcal T}, v, \rho; s \Vdash \phi \to \bot $.
Therefore, $ \Gamma \nVdash \phi $.
\end{proof}
\end{thm}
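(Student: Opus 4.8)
The plan is to reproduce, in this modal setting, the Henkin--Herbrand completeness proof for classical first-order logic, treating the transition quantifier $\forall\alpha.\phi$ exactly as an ordinary universal quantifier whose individuals are (finite words of) transition variables. The first move, as in the given sketch, is to reduce completeness to model existence: it suffices to prove that every context with $\Gamma\nvdash^{\varepsilon}\bot$ is satisfiable. Indeed, from $\Gamma\nvdash^{\varepsilon}\phi$ one gets $\Gamma,\phi\to\bot@\varepsilon\nvdash^{\varepsilon}\bot$ (otherwise implication introduction followed by double negation elimination would give $\Gamma\vdash^{\varepsilon}\phi$), so by model existence there is a model of $\Gamma,\phi\to\bot@\varepsilon$; at its state $\phi\to\bot$ holds, hence --- since $\bot$ never holds --- $\phi$ fails, and this model witnesses $\Gamma\nVdash\phi$. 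So from now on assume $\Gamma\nvdash^{\varepsilon}\bot$ and aim to build $\mathcal{T},v,\rho,s$ with $\mathcal{T},v,\rho;s\Vdash\Gamma$.

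The second step removes the quantifier by Skolemization. I would first normalize the propositions in $\Gamma$ so that the transition quantifiers become prenex: push $\forall$ outward through $\to$ by the usual classical laws (which turns antecedent-position $\forall$ into $\exists$) and through the code modality using the interderivability of $\Seal{\alpha}{\forall\beta.\psi}$ with $\forall\beta.\Seal{\alpha}{\psi}$ for $\beta\neq\alpha$, which follows from $(\Next{}{})$, $(\Prev{}{})$, \textsc{(Gen)} and \textsc{(Ins)}. Then Skolemize the existentials --- by fresh transition-valued constants, and by Skolem function symbols where an existential lies under a universal --- to obtain a quantifier-free context $\Gamma'$ over an enlarged signature, together with the set of all its Herbrand instances $\psi\Subst{\alpha:=t}$ with $t$ in the Herbrand universe of transitions. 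Two points then need checking: that $\Gamma'$ together with its Herbrand instances is quantifier-free consistent (the usual conservativity of Skolemization, re-proved for the present natural-deduction system) and that a quantifier-free model of the Skolemized, Herbrand-saturated theory restricts, after the Skolem data is discarded, to a model of $\Gamma$. The one non-mechanical point is that substitution of a compound transition into $\Seal{\cdot}{}$ has to be read with the flattening conventions $\Seal{\varepsilon}{\tau}=\tau$ and $\Seal{AB}{\tau}=\Seal{A}{\Seal{B}{\tau}}$, exactly as in the definition of substitution used by \textsc{(Ins)}.

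The third step is completeness of the quantifier-free fragment, by a canonical model built over the Herbrand universe. The structural fact that makes this light is that each transition relation is a total function, so the code modality is self-dual: $(\Next{}{})$ and $(\Prev{}{})$ give $\Gamma\vdash^{A}\Seal{\alpha}{\phi}$ if and only if $\Gamma\vdash^{A\alpha}\phi$, hence every quantifier-free proposition read at a stage is provably equivalent to a Boolean combination of atoms ``$p$ at stage $t$'', and on such atoms the system is just classical propositional logic. Concretely, take the states and labels from the Herbrand universe of transitions --- in the common case where only Skolem constants are needed these are simply finite words over the enlarged alphabet with $\mathord{\reachablevia{a}}$ appending $a$, and in general the term model of transitions under concatenation --- which is a functional transition system by construction; extend the quantifier-free theory to a maximal consistent set $\Delta$ by a Lindenbaum argument, using compactness to pass through finite subcontexts as in the footnote; define $v$ from the atoms lying in $\Delta$ and $\rho$ from the interpretation of the free transition variables. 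A truth lemma, by induction on propositions using the unfolding equivalences above and the derivability of $\Seal{t}{\bot}\to\bot$ (which is what keeps $\Delta$ from containing a boxed falsity), yields $\mathcal{T},v,\rho;\varepsilon\Vdash\Gamma'$, and hence $\mathcal{T},v,\rho;\varepsilon\Vdash\Gamma$, completing the chain.

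The principal obstacle is the Skolemization step: because the quantifier is interleaved with the modality, one must verify that $\forall$ can be moved across $\Seal{\cdot}{}$ soundly and derivably in both directions (a Barcan/converse-Barcan argument specialized to functional frames), that Skolem terms interact correctly with the flattening conventions, that the classical rule is strong enough to refute the unsatisfiable boxed-$\bot$ statements so that maximal consistent sets are genuine models, and that the frame extracted from the Herbrand universe really is a functional transition system on which $\rho$ is well-defined; threading the possibly infinite $\Gamma$ through all of this via compactness adds further bookkeeping. By contrast, soundness, the propositional core of the canonical model construction, and the truth lemma itself are routine.
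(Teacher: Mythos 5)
Your proposal is correct and follows the paper's own argument essentially step for step: reduce completeness to model existence by contraposition using the classical rule, prenex and Skolemize the quantifiers over a Herbrand universe of transitions (your two ``points to check'' are exactly the paper's Lemmas~\ref{thm:lemma1} and~\ref{thm:lemma2}), and prove quantifier-free completeness via a Zorn/Lindenbaum maximal consistent extension and a canonical functional frame with a truth lemma at the state $\varepsilon$. The only difference is presentational: the paper avoids your enlarged-signature and flattening bookkeeping by fixing a bijection between the Herbrand terms and fresh transition variables, so that all instances substitute single variables and the canonical model over $\Gene^*$ with $\rho(\alpha)=\alpha$ is reused verbatim.
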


\subsection{Alternative Semantics}
\label{sec:another-semantics}
We can give an alternative deduction rule for double negation elimination.
\begin{center}
\textrm{\DBotEP}
\end{center}
The difference is in the stage of the premise.  This rule requires
that the stage of $ \bot $ is equal to the stage of $ \phi \to \bot $,
but in the rule in Section~\ref{sec:deduction-rule} the stage of $
\bot $ is arbitrary.  This restriction makes the proof system weaker:
for example, in this setting $ \Seal{\alpha}{} $ is not self-dual,
that means $ \neg \Seal{\alpha}{\neg \phi} \leftrightarrow
\Seal{\alpha}{\phi} $ is not provable (here \( \neg \phi \) is an abbreviation
of \( \phi \to \bot \)), while under the previous rules
$ \Seal{\alpha}{} $ is self-dual.  The difference is equivalent to the
axiom $ \Seal{A}{\bot} \leftrightarrow \Seal{B}{\bot} $ (or a weaker
form $ \forall \alpha . (\Seal{\alpha}{\bot} \to \bot) $) in the sense
that this system with this axiom is equivalent to the previous proof
system.  This axiom corresponds to the axiom $ \bigcirc \neg A
\leftrightarrow \neg \bigcirc A $ in linear-time temporal logic, due
to Stirling~\cite{StirlingHandbook}.

There is a corresponding semantics, with respect to which the new
proof system is sound and complete.  In fact, this is achieved by a
minor change: we allow transition functions $ \reachablevia{a} $ to be
partial.  As a result, there can be no $ s' $ such that $ s
\reachablevia{\rho(\alpha)} s' $.
In this setting, the semantics for $ \Seal{\alpha}{\phi} $
has to be modified.  We define $ s \notreachablevia{a_1
  \dots a_n} $ as there is no $ s' $ which satisfies $ s
\reachablevia{a_1 \dots a_n} s' $, and $ {\mathcal T}, v, \rho; s
\Vdash \Seal{\alpha}{\phi} $ as follows:
\begin{eqnarray*}
  {\mathcal T}, v, \rho; s \Vdash \Seal{\alpha}{\phi}
    & \ifff &
      s \notreachablevia{\rho(\alpha)} \textrm{ or }
      {\mathcal T}, v, \rho; s' \Vdash \phi \textrm{ where }
      s \reachablevia{\rho(\alpha)} s'
\end{eqnarray*}
Completeness and soundness are proved similarly.

\section{Comparing with other multi-stage calculi}
\label{sec:embedding}
In this section, we will compare \( \sname \) with closely related
calculi \( \lambda^\bigcirc \)~\cite{Davies-Atemporallogicappro}, the
Kripke-style modal \(\lambda\)-calculus~\cite{Davies2001}, \(
\lambda^\alpha \)~\cite{604134} and \( \lambda^i
\)~\cite{Calcagno2004}.  The first two calculi are based on
Curry-Howard correspondence between multi-stage calculi and modal
logics and our work can be considered a generalization of them.  In
fact, there are embeddings from these two calculi to \(\sname\).  The
calculi \( \lambda^\alpha \) and \( \lambda^i \) are multi-stage
calculi with environment classifiers.  We discuss several differences
among these two calculi and \(\sname\).  Although it does not seem
possible to give (straightforward) embeddings from them, due to the
presence of CSP, we will show that an embedding from the CSP-free
fragment of \( \lambda^i\) to \(\sname\) is possible.

\subsection{Comparing with \( \lambda^\bigcirc \)}
\( \lambda^\bigcirc \)~\cite{Davies-Atemporallogicappro} is a
multi-stage calculus corresponding to linear-time temporal logic
(LTL).  As already mentioned in Section~\ref{sec:overview}, \(
\lambda^\bigcirc \) is obtained by using only one transition variable,
say \( \alpha \). Then the modal operator \( \bigcirc \) to mean
``next'' in LTL corresponds to the modal operator \( \Seal{\alpha}{}
\) and the stage \( n \) in LTL corresponds to \( \alpha^n \) and so
on.  We define an embedding \( \LV \cdot \RV \) from \( \lambda^\bigcirc
\) into \( \sname \) in Figure~\ref{fig:embedding-lambda-circle}.  
This embedding is essentially the same as that from
\(\lambda^\bigcirc\) calculus into \(\lambda^\alpha\), given by Taha
and Nielsen~\cite{604134}.

\begin{figure}%[h]
\begin{eqnarray*}
  \EmbCT{b} & = & b \\
  \EmbCT{\tau \to \sigma} & = & \EmbCT{\tau} \to \EmbCT{\sigma} \\
  \EmbCT{\bigcirc \tau} & = & \Seal{\alpha}{\EmbCT{\tau}} \\
  \\
  \EmbCT{x} & = & x \\
  \EmbCT{\lam x : \tau . M} & = & \lam x : \EmbCT{\tau} . \EmbCT{M} \\
  \EmbCT{M \app N} & = & \EmbCT{M} \app \EmbCT{N} \\
  \EmbCT{\textbf{next } M} & = & \Next{\alpha}{\EmbCT{M}} \\
  \EmbCT{\textbf{prev } M} & = & \Prev{\alpha}{\EmbCT{M}} \\
  \\
  \EmbCT{\cdot} & = & \cdot \\
  \EmbCT{\Gamma, x : A @ n} & = & \EmbCT{\Gamma}, x : \EmbCT{A} @ \alpha^n
\end{eqnarray*}
\caption{Embedding from \(\lambda^\bigcirc\) to \(\sname\).  \( \alpha
  \) is a fixed transition variable.  }
\label{fig:embedding-lambda-circle}
\end{figure}

The following two theorems show the correctness of the embedding.
\begin{thm}
If \( \Gamma \vdash^n M : \tau \) in \( \lambda^\bigcirc \),
then \( \EmbCT{\Gamma} \vdash^{\alpha^n} \EmbCT{M} : \EmbCT{\tau} \).
\proof
By induction on the type derivation.
\qed
\end{thm}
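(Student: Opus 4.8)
The plan is to proceed by a routine structural induction on the derivation of $\Gamma \vdash^n M : \tau$ in $\lambda^\bigcirc$, and to check in each case that the corresponding typing rule of $\sname$ applies to the $\EmbCT{\cdot}$-images of the conclusion. The only properties of the embedding that are needed are that it commutes with the type and context constructors --- in particular $\EmbCT{\tau \to \sigma} = \EmbCT{\tau} \to \EmbCT{\sigma}$ and $\EmbCT{\bigcirc \tau} = \Seal{\alpha}{\EmbCT{\tau}}$, and that a declaration $x : \sigma @ n$ in $\Gamma$ becomes $x : \EmbCT{\sigma} @ \alpha^n$ in $\EmbCT{\Gamma}$ --- together with the elementary identity $\alpha^{n+1} = \alpha^n \alpha$ on stages (written as transitions).

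The three base/structural cases for the ordinary $\lambda$-calculus fragment go through directly. If $\Gamma \vdash^n x : \tau$ because $x : \tau @ n \in \Gamma$, then $x : \EmbCT{\tau} @ \alpha^n \in \EmbCT{\Gamma}$, so $\textsc{(Var)}$ of $\sname$ gives $\EmbCT{\Gamma} \vdash^{\alpha^n} x : \EmbCT{\tau}$, and $\EmbCT{x} = x$. For abstraction and application the stage is unchanged by both calculi's rules, the extended context $\Gamma, x : \tau @ n$ maps to $\EmbCT{\Gamma}, x : \EmbCT{\tau} @ \alpha^n$, and the induction hypothesis together with $\textsc{(Abs)}$, respectively $\textsc{(App)}$, closes the case (using the fact that $\EmbCT{\cdot}$ preserves the function-type and application structure). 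The two cases that actually use the modal part are $\textbf{next}$ and $\textbf{prev}$: from $\Gamma \vdash^{n+1} M : \tau$ the induction hypothesis gives $\EmbCT{\Gamma} \vdash^{\alpha^{n+1}} \EmbCT{M} : \EmbCT{\tau}$, and since $\alpha^{n+1} = \alpha^n \alpha$ the rule $(\Next{}{})$ yields $\EmbCT{\Gamma} \vdash^{\alpha^n} \Next{\alpha}{\EmbCT{M}} : \Seal{\alpha}{\EmbCT{\tau}}$, which is exactly $\EmbCT{\Gamma} \vdash^{\alpha^n} \EmbCT{\textbf{next } M} : \EmbCT{\bigcirc \tau}$; dually, from $\Gamma \vdash^n M : \bigcirc \tau$ one gets $\EmbCT{\Gamma} \vdash^{\alpha^n} \EmbCT{M} : \Seal{\alpha}{\EmbCT{\tau}}$ and then $(\Prev{}{})$ gives $\EmbCT{\Gamma} \vdash^{\alpha^n \alpha} \Prev{\alpha}{\EmbCT{M}} : \EmbCT{\tau}$, i.e.\ $\EmbCT{\Gamma} \vdash^{\alpha^{n+1}} \EmbCT{\textbf{prev } M} : \EmbCT{\tau}$.

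I do not expect any genuine obstacle: the embedding was designed precisely so that $\sname$, restricted to a single transition variable $\alpha$ and to stages of the shape $\alpha^n$, behaves like $\lambda^\bigcirc$, so every $\lambda^\bigcirc$ typing rule has an exact $\sname$ counterpart on images. The only point that warrants a moment's care is the stage bookkeeping --- verifying that the natural-number stage $n$ of $\lambda^\bigcirc$ and the transition stage $\alpha^n$ of $\sname$ stay in lockstep under every rule, and in particular that incrementing $n$ corresponds to appending one $\alpha$ on the right, which is what matches the side of the stage on which $(\Next{}{})$ and $(\Prev{}{})$ act.
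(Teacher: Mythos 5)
Your proposal is correct and follows exactly the route the paper takes: the paper's entire proof is ``By induction on the type derivation,'' and your case analysis (with the stage bookkeeping $\alpha^{n+1} = \alpha^n\alpha$ matching the premises of the $(\Next{}{})$ and $(\Prev{}{})$ rules) is just the expected expansion of that induction.
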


\begin{thm}
Let \( M \), \( N \) be \( \lambda^\bigcirc \) terms.
Then, \( M \red N \) iff \( \EmbCT{M} \red \EmbCT{N} \).
\proof
By induction on the structure of \( M \).
\qed
\end{thm}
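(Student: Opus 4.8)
The plan is to prove the two directions separately, both by induction on the structure of \( M \) (equivalently, on the relevant reduction derivation), resting on two routine auxiliary facts about the embedding: that it commutes with term substitution, i.e.\ \( \EmbCT{M}\Subst{x := \EmbCT{N}} = \EmbCT{M\Subst{x := N}} \), and that it is injective on terms and on types up to \(\alpha\)-equivalence. The first I would check by an easy induction on \( M \) (the translation is homomorphic and leaves variables untouched); the second is immediate since distinct outermost constructors of \( \lambda^\bigcirc \) are sent to distinct outermost constructors of \( \sname \), and the same holds at the level of types.

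For the forward direction I would analyse a step \( M \red N \). If it is the root \(\beta\)-step \( (\lam x:\tau.M_0)\,N_0 \red M_0\Subst{x:=N_0} \), then \( \EmbCT{M} = (\lam x:\EmbCT{\tau}.\EmbCT{M_0})\,\EmbCT{N_0} \red \EmbCT{M_0}\Subst{x:=\EmbCT{N_0}} = \EmbCT{N} \) by the substitution lemma; if it is \( \textbf{prev}(\textbf{next }M_0) \red M_0 \) at the root, then \( \EmbCT{M} = \Prev{\alpha}{\Next{\alpha}{\EmbCT{M_0}}} \red \EmbCT{M_0} = \EmbCT{N} \) by the quote/unquote cancellation rule of \( \sname \) instantiated at the fixed \(\alpha\); and if the step occurs strictly inside a subterm, the induction hypothesis together with the corresponding \( \sname \) congruence rule finishes the case, because \( \EmbCT{\cdot} \) is defined component-wise on every term former.

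For the backward direction I would prove the sharper claim that if \( \EmbCT{M} \red P \) in \( \sname \), then \( P = \EmbCT{N} \) for some \( \lambda^\bigcirc \) term \( N \) with \( M \red N \); the stated equivalence then follows by injectivity. The key observation is that no term in the image of \( \EmbCT{\cdot} \) contains a classifier abstraction \( \Lambda\alpha.(-) \) or an instantiation \( (-)\app A \), and every \( \Next{}{} \)/\( \Prev{}{} \) carries the single fixed annotation \(\alpha\); hence the only \( \sname \) redices that a term \( \EmbCT{M} \) can contain are \(\beta\)-redices and \( \Prev{\alpha}{\Next{\alpha}{-}} \)-redices, and the instantiation rule of \( \sname \) is never triggered. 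Then I would case-split on the outermost constructor of \( M \): a top-level reduction in \( \EmbCT{M_1\app M_2} \) forces, by injectivity, \( M_1 = \lam x:\tau.M_1' \), and the substitution lemma identifies the reduct with \( \EmbCT{M_1'\Subst{x:=M_2}} \); a top-level reduction in \( \EmbCT{\textbf{prev }M_0} \) forces \( M_0 = \textbf{next }M_0' \) (the inner annotation must be \(\alpha\), matching the outer \( \Prev{}{} \)), giving reduct \( \EmbCT{M_0'} \); and a reduction strictly inside an immediate subterm is handled by the induction hypothesis and the matching \( \lambda^\bigcirc \) congruence rule, with the cases \( M = x,\ \lam x:\tau.M_0,\ \textbf{next }M_0 \) having no root redex at all.

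The work here is essentially bookkeeping, so there is no real ``hard part''; the one point that needs genuine care is verifying that the image of \( \EmbCT{\cdot} \) is closed under \( \sname \)-reduction and that, in particular, the extra instantiation rule of \( \sname \) can never fire inside it. Once this closure property is established alongside injectivity and the substitution lemma, both inductions go through directly.
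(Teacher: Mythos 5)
Your proof is correct and takes essentially the same approach as the paper, whose entire argument is the phrase ``by induction on the structure of \( M \)''; your auxiliary facts (commutation of \( \EmbCT{\cdot} \) with substitution, injectivity, and the observation that the image of the embedding contains no \( \Lambda \)-abstractions or transition instantiations, so only \(\beta\)- and quote/unquote-redices can fire) are precisely the bookkeeping that induction needs.
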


Moreover, by giving a suitable definition of reduction annotated with
time for \(\lambda^\bigcirc\), we can easily show that the embedding
preserves the stage of reduction.

We can construct a reverse mapping, a type- and reduction-preserving embedding
from the quantifier-free fragment of \( \sname \) to \( \lambda^\bigcirc \),
by simply forgetting annotations of transition variables.  Moreover, the quantifier-free
fragment of \( \sname \) with only one transition variable is isomorphic to \( \lambda^\bigcirc \)
in the sense that there is a bijection that preserves typability and reduction.
%  We focus on $\beta$-reduction because the stage of
% reduction is defined in \( \lambda^\bigcirc \) only for
% $beta$-reduction.  In \( \lambda^\bigcirc \), the stage of the
% reduction \( M \red N \) is defined as the stage of the redex, not the
% path as in \( \sname \).  So, we should define the path of the
% reduction for \( \lambda^\bigcirc \).  We define the staged reduction
% \( M \stackrel{n}{\red} N \) of \( \lambda^\bigcirc \) by \( M
% \stackrel{m - n}{\red} N \) if the stage of \( M \) is \( n \) and the
% stage of the redex is \( m \).  This is well-defined and equivalent to
% the definition in \( \lambda^{\bigcirc \square} \), which includes \(
% \lambda^\bigcirc \) as a fragment.  Then we can state the preservation
% of the stage of reduction as \( M \stackrel{n}{\red} N \) if and only
% if \( \EmbCT{M} \stackrel{\alpha^n}{\red} \EmbCT{N} \).

\subsection{Comparing with Calculus based on S4}
Davies and Pfenning~\cite{Davies2001,DBLP:conf/popl/DaviesP96} develop
calculi that correspond to intuitionistic modal logic S4 (only with
\(\Box\)).  The type \(\Box \tau\) represents closed code values,
which thus can be run or embedded in code of any later stages, as is
possible in \(\sname\).  We compare \(\sname\) with one of their
calculi (what they call the Kripke-style modal \(\lambda\)-calculus in
Section~4 of \cite{Davies2001}), in which, there are \(\textbf{box}\)
and \(\textbf{unbox}_n\) for quoting and unquoting, respectively (see
Pfenning and Davies~\cite{Davies2001} for details).
An embedding \( \EmbBT{\cdot} \) from this calculus 
into \( \sname \) is given in Figure~\ref{fig:embedding-lambda-square}.
% We can also embed $ \lambda^{\square} $ into $ \sname $ as follows:
% We fix an infinite sequence of distinct classifiers \( \alpha_1 \alpha_2 \dots \),
% and define \( A_n = \alpha_1 \alpha_2 \dots \alpha_n \).
% (This is essentially same as the embedding of \( \lambda^\square \) into \( \lambda^i \)~\cite{604134}).

\begin{figure}
\begin{eqnarray*}
  \EmbBT{b} & = & b \\
  \EmbBT{\tau \to \sigma} & = & \EmbBT{\tau} \to \EmbBT{\sigma} \\
  \EmbBT{\square \tau} & = & \forall \alpha. \Seal{\alpha}{\EmbBT{\tau}} \\
  \\
  \EmbBT{x}^A & = & x \\
  \EmbBT{\lam x : \tau . M}^A & = & \lam x : \EmbBT{\tau} . \EmbBT{M}^A \\
  \EmbBT{M \app N}^A & = & \EmbBT{M}^A \app \EmbBT{N}^A \\
  \EmbBT{\textbf{box } M}^A & = & \Lambda \alpha. \Next{\alpha}{\EmbBT{M}^{A \alpha}}
    \quad \textrm{(where \( \alpha \notin \FMV(A) \))} \\
  \EmbBT{\textbf{unbox}_n\ M}^A & = & \Prev{B}{\EmbBT{M}^{A'} \app B} \\ &&
    \quad \textrm{(where \( A = A' B \) and the length of \( B \) is \( n \))}
  \\
  \EmbBT{x_1 : \tau_1, \dots x_n : \tau_n}^A & = &
    x_1 : \EmbBT{\tau_1} @ A, \dots, x_n : \EmbBT{\tau_n} @ A \\
  \\
  \EmbBT{\Gamma_0;\cdots; \Gamma_n}^{\alpha_1\cdots \alpha_n} & = & 
        \EmbBT{\Gamma_0}^{\varepsilon}, \ldots,
        \EmbBT{\Gamma_i}^{\alpha_1\cdots \alpha_i}, \ldots,
        \EmbBT{\Gamma_n}^{\alpha_1\cdots \alpha_n}
\end{eqnarray*}
  \caption{Embedding from the Kripke-style modal \(\lambda\)-calculus to \(\sname\).}
\label{fig:embedding-lambda-square}
\end{figure}

The following theorems state the correctness of the embedding.
\begin{thm}
  Consider a sequence $A$ of distinct transition variables of
  length \( n \).  Then \( \Gamma_0; \Gamma_1; \dots; \Gamma_n \vdash M
  : \tau \) implies \( \EmbBT{\Gamma_0; \dots; \Gamma_n}^A \vdash^A
  \EmbBT{M}^A : \EmbBT{\tau} \).  
  \proof %
  An easy induction on the type derivation \( \Gamma_0; \dots;
  \Gamma_n \vdash M : \tau \).  \qed
\end{thm}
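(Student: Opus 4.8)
The plan is to proceed by induction on the derivation of $\Gamma_0; \Gamma_1; \dots; \Gamma_n \vdash M : \tau$, keeping the sequence $A$ of distinct transition variables universally quantified, so that the induction hypothesis may be instantiated at sequences of length $n+1$ (for the \textbf{box} rule) and at proper prefixes of $A$ (for \textbf{unbox}). Before starting the induction I would record two easy auxiliary facts. First, the type translation always yields a \emph{closed} type: $\FMV(\EmbBT{\tau}) = \emptyset$ for every $\tau$, by a trivial induction on $\tau$ whose only modal case, $\EmbBT{\square\tau} = \forall\alpha.\Seal{\alpha}{\EmbBT{\tau}}$, binds $\alpha$; consequently $\FMV(\EmbBT{\Gamma_0;\dots;\Gamma_n}^A)\subseteq\FMV(A)$. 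Second, $\sname$ admits \emph{weakening}: if $\Env\vdash^{A}M:\tau$ and $\Env'\supseteq\Env$ is a valid context, then $\Env'\vdash^{A}M:\tau$ (easy induction on the derivation; in the \textsc{(Gen)} case one $\alpha$-renames the bound transition variable to keep its side condition satisfied for $\Env'$). I would also note the prefix-monotonicity of the context translation: if $A = \alpha_1\cdots\alpha_n$ and $A' = \alpha_1\cdots\alpha_j$ is a prefix of $A$, then $\EmbBT{\Gamma_0;\dots;\Gamma_j}^{A'}\subseteq\EmbBT{\Gamma_0;\dots;\Gamma_n}^{A}$, since $\EmbBT{\Gamma_i}$ at level $i$ depends only on the length-$i$ prefix $\alpha_1\cdots\alpha_i$. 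As usual, and without loss of generality by renaming, I assume the term variables occurring at distinct levels are distinct so that these embedded contexts are well-formed, and that the fresh transition variables chosen by the translation are globally fresh with respect to the whole context sequence and to $A$.

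The structural cases are routine. For the hypothesis rule, the typed variable lies in the latest context $\Gamma_n$ and is translated as $x:\EmbBT{\tau}@A$, which belongs to $\EmbBT{\Gamma_0;\dots;\Gamma_n}^A$, so $\sname$'s \textsc{(Var)} applies directly. For abstraction and application one applies the induction hypothesis to the premises and reassembles with $\sname$'s \textsc{(Abs)} and \textsc{(App)}, using $\EmbBT{\sigma\to\tau}=\EmbBT{\sigma}\to\EmbBT{\tau}$ and the fact that extending the last level by $x:\sigma$ translates precisely to adding $x:\EmbBT{\sigma}@A$.

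The two interesting cases are \textbf{box} and \textbf{unbox}. For the premise $\Gamma_0;\dots;\Gamma_n;\cdot\vdash M:\tau$ and conclusion $\Gamma_0;\dots;\Gamma_n\vdash\textbf{box } M:\square\tau$, pick the fresh $\alpha\notin\FMV(A)$ used by the translation, so that $A\alpha$ is a sequence of distinct transition variables of length $n+1$; the induction hypothesis gives $\EmbBT{\Gamma_0;\dots;\Gamma_n;\cdot}^{A\alpha}\vdash^{A\alpha}\EmbBT{M}^{A\alpha}:\EmbBT{\tau}$, and since the new last level is empty this context equals $\EmbBT{\Gamma_0;\dots;\Gamma_n}^{A}$. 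Applying $(\Next{}{})$ yields $\EmbBT{\Gamma_0;\dots;\Gamma_n}^{A}\vdash^{A}\Next{\alpha}{\EmbBT{M}^{A\alpha}}:\Seal{\alpha}{\EmbBT{\tau}}$, and because $\FMV(\EmbBT{\Gamma_0;\dots;\Gamma_n}^{A})\cup\FMV(A)\subseteq\FMV(A)$ does not contain $\alpha$, the side condition of \textsc{(Gen)} is met and we obtain $\Lambda\alpha.\Next{\alpha}{\EmbBT{M}^{A\alpha}}:\forall\alpha.\Seal{\alpha}{\EmbBT{\tau}}$, which is exactly $\EmbBT{\textbf{box } M}^A:\EmbBT{\square\tau}$. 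For $\textbf{unbox}_m$, with premise $\Gamma_0;\dots;\Gamma_j\vdash M:\square\tau$ and conclusion $\Gamma_0;\dots;\Gamma_{j+m}\vdash\textbf{unbox}_m M:\tau$ (so $n=j+m$), write $A=A'B$ with $A'$ of length $j$ and $B$ of length $m$, as in the translation; the induction hypothesis at $A'$ gives $\EmbBT{\Gamma_0;\dots;\Gamma_j}^{A'}\vdash^{A'}\EmbBT{M}^{A'}:\forall\alpha.\Seal{\alpha}{\EmbBT{\tau}}$. Then \textsc{(Ins)} with the transition $B$ gives $\EmbBT{M}^{A'}\app B:(\Seal{\alpha}{\EmbBT{\tau}})\Subst{\alpha:=B}=\Seal{B}{\EmbBT{\tau}}$ at stage $A'$, and applying $(\Prev{}{})$ once for each transition variable in $B$ — recalling that $\Prev{B}{N}$ abbreviates the corresponding nesting — drives the stage from $A'$ up to $A'B=A$, yielding $\EmbBT{\Gamma_0;\dots;\Gamma_j}^{A'}\vdash^{A}\Prev{B}{\EmbBT{M}^{A'}\app B}:\EmbBT{\tau}$. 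Finally, by prefix-monotonicity $\EmbBT{\Gamma_0;\dots;\Gamma_j}^{A'}\subseteq\EmbBT{\Gamma_0;\dots;\Gamma_n}^{A}$, so weakening concludes $\EmbBT{\Gamma_0;\dots;\Gamma_n}^{A}\vdash^{A}\EmbBT{\textbf{unbox}_m M}^A:\EmbBT{\tau}$.

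The step I expect to be the main obstacle is not any single computation but maintaining the right invariants around the modal rules: keeping $A$ generic so the hypothesis applies both to the lengthened stage $A\alpha$ produced by \textbf{box} and to the shortened stage $A'$ demanded by \textbf{unbox}; discharging the \textsc{(Gen)} side condition through closedness of translated types; and reconciling the context mismatch in the \textbf{unbox} case — where the hypothesis only delivers a judgement over $\EmbBT{\Gamma_0;\dots;\Gamma_j}^{A'}$ — via the weakening lemma together with prefix-monotonicity, which is precisely why I would fix the Barendregt-style global freshness convention on the translation's chosen transition variables up front.
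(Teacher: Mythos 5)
Your proposal is correct and follows the same route as the paper, which simply asserts "an easy induction on the type derivation": you carry out that induction with $A$ kept generic, and the auxiliary facts you isolate (closedness of $\EmbBT{\tau}$ to discharge the \textsc{(Gen)} side condition, weakening plus prefix-monotonicity of the context translation for \textbf{unbox}) are exactly the details the paper leaves implicit.
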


\begin{thm}
  Let \( A \) be a sequence of distinct transition variables of 
  length \( n \).  Provided that \( \Gamma_0; \dots; \Gamma_n \vdash M : \tau \)
  is derivable in the Kripke-style modal \(\lambda\)-calculus, then \( M \red_\beta N \) iff
  \( \EmbBT{M}^A \red \EmbBT{N}^A \).  
  \proof %
  By induction on the structure of \( M \). \qed
\end{thm}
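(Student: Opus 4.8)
The plan is to prove both implications by a single induction on the structure of the source term $M$, after establishing two \emph{commutation lemmas} for the embedding $\EmbBT{\cdot}^{(\cdot)}$ of Figure~\ref{fig:embedding-lambda-square} that carry the real content.

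The first is a \textbf{term-substitution lemma}: if $\Gamma_0;\dots;\Gamma_n \vdash M : \tau$ is derivable, $x$ is declared in the current world $\Gamma_n$, and $A$ has length $n$, then $\EmbBT{M}^A\Subst{x := \EmbBT{N}^A} = \EmbBT{M[x:=N]}^A$. The subtlety---and the one place where the typability hypothesis of the theorem is genuinely used---is that the embedding is stage-indexed and shifts its stage under $\textbf{box}$ (to $A\alpha$) and under $\textbf{unbox}$ (to a prefix of $A$), so a naive ``substitution commutes with a homomorphism'' argument would fail. It works only because in the Kripke-style calculus an ordinary hypothesis may be used at exactly the world at which it is declared; hence a well-typed $M$ has no free occurrence of $x$ underneath a $\textbf{box}$ or $\textbf{unbox}$, the substitution never meets a stage shift, and the two sides agree clause by clause. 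The second is a \textbf{transition-substitution lemma}: $\EmbBT{M}^{A\alpha}\Subst{\alpha := B} = \EmbBT{M}^{AB}$ whenever $\alpha$ is fresh, i.e.\ $\alpha \notin \FMV(M)\cup\FMV(A)\cup\FMV(B)$ and $\alpha$ is not among the transition variables the embedding invents in its $\textbf{box}$ clauses. This is needed for the $\textbf{box}$ case of the main induction (where the embedding of $\textbf{box}\,M_0$ recurses at stage $A\alpha$) and for analysing the image of a $\Box$-redex; matching the embedding's internal fresh choices on the two sides is handled by $\alpha$-renaming. Both lemmas are proved by a routine induction on $M$.

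With these in hand, the main induction splits on $M$. A variable has no redex on either side. For $\lambda x.M_0$ and for $M_1\,M_2$ the embedding is a plain homomorphism, so internal redices correspond by the induction hypothesis; moreover, by inspection of Figure~\ref{fig:embedding-lambda-square}, $\EmbBT{M_1}^A$ is a $\lambda$-abstraction exactly when $M_1$ is, so the head $\beta$-redices of $\EmbBT{M_1 M_2}^A$ and of $M_1 M_2$ correspond, and $(\lambda x.\EmbBT{M_1'}^A)\,\EmbBT{M_2}^A \red \EmbBT{M_1'}^A\Subst{x := \EmbBT{M_2}^A} = \EmbBT{M_1'[x:=M_2]}^A$ by the term-substitution lemma. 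For $\textbf{box}\,M_0$, $\EmbBT{\textbf{box}\,M_0}^A = \Lambda\alpha.\Next{\alpha}{\EmbBT{M_0}^{A\alpha}}$ is not itself a redex; its redices sit inside $\EmbBT{M_0}^{A\alpha}$, which is the image of $M_0$ at stage $A\alpha$, so the induction hypothesis gives $\EmbBT{\textbf{box}\,M_0'}^A$, matching $\textbf{box}\,M_0 \red \textbf{box}\,M_0'$. For $\textbf{unbox}_m\,M_0$, $\EmbBT{\textbf{unbox}_m\,M_0}^A = \Prev{B}{\EmbBT{M_0}^{A'}\app B}$ with $A = A'B$ and $|B| = m$ (the split is well defined because typability forces $|A|\ge m$); internal redices come from the induction hypothesis on $M_0$ at stage $A'$, and $\Prev{B}{\,\cdot\app B}$ is a redex only when $\EmbBT{M_0}^{A'}$ is a $\Lambda$, i.e.\ only when $M_0 = \textbf{box}\,M_0'$, i.e.\ only when $\textbf{unbox}_m(\textbf{box}\,M_0')$ is a $\Box$-redex of the source, in which case $\Prev{B}{(\Lambda\alpha.\Next{\alpha}{\EmbBT{M_0'}^{A'\alpha}})\app B} \red \Prev{B}{\Next{B}{\EmbBT{M_0'}^{A'B}}} \red \EmbBT{M_0'}^A$ (the first step by the transition-substitution lemma) matches the source $\Box$-reduction. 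The converse implication uses the same case analysis read backwards: the discussion shows the translation neither erases nor creates redices spuriously---the only $\Lambda$, $\Prev{}{}$ and $\app B$ occurrences it introduces are the ones above, none forming a redex unless the corresponding source redex is present---so each $\sname$ reduction of $\EmbBT{M}^A$ is the image of a unique source reduction and lands on $\EmbBT{N}^A$.

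The main obstacle is the term-substitution lemma, together with pinning down the side conditions of both commutation lemmas precisely enough that the two lemma-inductions and the main induction all close: one has to track simultaneously the stage index of the embedding, the world at which each ordinary variable lives (which is where typability enters), and the freshness of the transition variables conjured by the $\textbf{box}$ clauses. A minor wrinkle is the bookkeeping of steps: a $\Box$-reduction of the source corresponds to the two consecutive $\sname$ reductions shown above---an instantiation followed by a $\Prev{}{}$/$\Next{}{}$ cancellation---so that case of the equivalence is read with $\red^{*}$ in place of $\red$, whereas the function-application case is a genuine one-step-for-one-step correspondence.
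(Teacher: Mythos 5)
Your overall strategy---structural induction on \(M\) backed by a term-substitution lemma and a transition-substitution lemma for the embedding---is exactly the shape of argument the paper intends (its proof is stated only as ``induction on the structure of \(M\)''), and the two lemma statements you use in the main induction are the right ones. But the justification you give for the term-substitution lemma contains a genuine error. You claim that in the Kripke-style calculus a well-typed \(M\) ``has no free occurrence of \(x\) underneath a \(\textbf{box}\) or \(\textbf{unbox}\), so the substitution never meets a stage shift.'' This is false: with \(x : \Box\tau\) declared in the current world \(\Gamma_n\), the term \(\textbf{box}\,(\textbf{unbox}_1\,x)\) is well typed, and \(x\) occurs free under both a \(\textbf{box}\) and an \(\textbf{unbox}\). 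Its image is \(\Lambda\alpha.\Next{\alpha}{\Prev{\alpha}{(x\app\alpha)}}\), so the substitution certainly does traverse the stage-shifting clauses of the embedding. What saves the lemma is a weaker invariant: since \(\textbf{box}\) pushes a fresh empty context and \(\textbf{unbox}\) pops, once the stack dips below the world of \(\Gamma_n\) that world can never reappear on top, so every free occurrence of \(x\) sits at a position where the pushes and pops are balanced and the embedding stage has returned to exactly \(A\). Concretely, your induction for the lemma must be generalized---e.g.\ to ``if \(x\) is declared in \(\Gamma_j\) then \(\EmbBT{M}^{A'}\Subst{x := \EmbBT{N}^{A_j}} = \EmbBT{M\Subst{x:=N}}^{A'}\) for the appropriate prefix \(A_j\) of the stage''---because under the \(\textbf{box}\) clause the substituted variable is no longer in the top context; with your stated (false) invariant the \(\textbf{box}\) and \(\textbf{unbox}\) cases of that induction would be treated as vacuous, which they are not. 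The lemma itself is true and the fix is routine, but as written this step of your argument would fail.

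Two smaller points. First, the \(\Box\)-redex case: the image of \(\textbf{unbox}_m(\textbf{box}\,M_0')\) needs \(1+m\) reduction steps, not two---one instantiation and then \(m\) separate \(\Prev{}{}\)/\(\Next{}{}\) cancellations, since the one-step rule cancels a single matching pair at a time (only the parallel-reduction rule \textsc{(P-\(\Prev{}{}\Next{}{}\))} collapses a nested block at once). Second, the theorem writes \(\red_\beta\) and pairs a single source step with a single \(\sname\) step, which suggests it is only about \(\beta\)-redices of function application; your \(\red^{*}\) reading of the \(\Box\)-redex case is harmless but strictly goes beyond the stated equivalence, and in the reflection direction you should note that the intermediate terms arising from a partially reduced \(\Box\)-redex image are not of the form \(\EmbBT{N}^A\), so they do not threaten the ``only if'' direction.
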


Taha and Nielsen have shown a similar embedding from the Kripke-style
calculus into \(\lambda^\alpha\).  In their embedding, the translation
of \textbf{unbox}, which corresponds to the elimination rule for
\(\square\), is slightly more involved than ours, since they use
\textbf{run} and the CSP operator.  In our embedding, on the other
hand, \textbf{unbox} is expressed uniformly by \(M \app B\), which
corresponds to the elimination rule for \(\forall\).

\subsection{Comparing with \( \lambda^\alpha \) and \( \lambda^i \)}
\label{sec:compare-alpha}

Comparing \( \sname \) with \( \lambda^\alpha \)~\cite{604134}, we can
point out two differences: the meaning of \textbf{run} and the absence
of CSP primitive.  

In \( \lambda^\alpha \), \textbf{run} is a primitive, while, in \( \sname \), \( \textbf{run } M \) is
defined as a syntax sugar for \( M \app \varepsilon \).
The following rules are typing rules for \textbf{run} in \( \lambda^\alpha \) and \( \sname \), respectively.
\begin{center}
\infrule{
  \Gamma \vdash^A M : (\alpha) \langle \tau \rangle^\alpha
}{
  \Gamma \vdash^A \textbf{run } M : (\alpha) \tau
}
\qquad
\infrule{
  \Gamma \vdash^A M : \forall \alpha. \tau
}{
  \Gamma \vdash^A M \varepsilon : \tau \Subst{\alpha := \varepsilon}
}
\end{center}
Aside from the presence of a binder \((\alpha)\), which is not essential,
an important difference is how code type constructors are removed in
the conclusion.  In \( \lambda^\alpha \), \textbf{run} removes only
the \emph{outermost} bracket annotated with \( \alpha \), while, in \(
\sname \), \( M \app \varepsilon \) removes \emph{all} code-type
constructor \( \Seal{\alpha}{} \) in \(\tau\).  This difference in
typing rules is also reflected in reductions.
\begin{eqnarray*}
  \textbf{run } (\alpha) \langle v \rangle^\alpha & \red & (\alpha) v \\
  (\Lambda \alpha. v) \app \varepsilon & \red & v \Subst{\alpha := \varepsilon}
\end{eqnarray*}
(Here we assume that \( v \) does not contain the CSP constructor to
simplify the argument---See \cite{604134} for the complete
definition.)  So, it does not seem very easy to give an embedding of
either direction.

From a practical point of view, we do not think this difference is
very significant.  It is not clear when one wants to share one
environment classifier or transition variable among different stages.
If the use of classifiers or transition variables is staged, as we
discussed in Section~\ref{sec:erasure-semantics}, then the difference
is very little.  In fact, the current implementation of MetaOCaml is
based on \(\lambda^i\), which can be considered a subset of both
\(\lambda^\alpha\) and \(\sname\) (if CSP is ignored), and this fact
shows that the difference is practically insignificant.

As another difference, \( \lambda^\alpha \) allows the CSP constructor
\( \% \) to be applied to any terms to embed the value of the term
inside a quotation.  It is easy to see that \( \phi @ \varepsilon
\vdash^\alpha \phi \) is not provable in general and so such a
universal CSP operator would \emph{not} be expressible in \( \sname
\).  However, we can support CSP for many types.  CSP for values of
base types such as integers and Booleans is easy.  CSP for function
closures is also possible if they do not contain open code in their
bodies or environment.  We can deal with CSP for \emph{closed code},
i.e., the terms which have types of the form \( \forall
\alpha. \Seal{\alpha}{\tau} \) with \( \alpha \notin \FMV(\tau) \), as
syntactic sugar in \( \sname \).  The following rules are for CSP in
\( \lambda^\alpha \) and for CSP as syntactics sugar in \( \sname \).
\begin{center}
\infrule{
  \Env \vdash^A M : \tau
}{
  \Env \vdash^{A \alpha} \%\ M : \tau
}
\qquad
\infrule{
  \Env \vdash^A M : \forall \beta. \Seal{\beta}{\tau} \AND
    \beta \notin \FMV(\tau)
}{
  \Env \vdash^{A \alpha} \Gen{\beta}{(\Prev{\alpha}{(M \app \alpha \beta)})} :
    \forall \beta. \Seal{\beta}{\tau}
}
\end{center}
The only problematic case is CSP for open code and, as mentioned
above, functions containing open code, but we think it is rarely
needed.

% In fact, we can embed \( \lambda^\square \) into \( \sname \)
% and use a power function \( \textrm{power}_\forall \) at further stages,
% both of which need CSP in \( \lambda^\alpha \).

\(\lambda^i\)~\cite{Calcagno2004} is developed as a subset of
\(\lambda^\alpha\) where type inference is possible (although there
are slight differences in syntax and typing).  The difference between
\( \sname \) and \( \lambda^i \), is smaller than that between \(
\sname \) and \( \lambda^\alpha \).  In fact, we can construct an
embedding that preserves typing from \( \lambda^i \) (without the CSP
operator) into \( \sname \), by observing that the executable code
type \( \langle \tau \rangle\) in \( \lambda^i \) corresponds to%
\( \forall \alpha. \Seal{\alpha}{\tau} \) (where \( \alpha \notin
\FMV(\tau) \)) in \(\sname\).  Figure~\ref{fig:embedding-lambda-i}
  shows the complete description of the embedding.  Precisely
  speaking, it takes a type derivation rather than terms: For example,
  the rule for \textbf{open} means
\[
  \Biggl[\!\!\Biggl[
    \frac{\Gamma \vdash^A M : \langle \tau \rangle}{\Gamma \vdash^A \textbf{open}\;M : \langle \tau \rangle^\alpha}
  \Biggr]\!\!\Biggr]
    = \EmbIT{\Gamma \vdash^A M : \langle \tau \rangle} \app \alpha,
\]
(note that \( \alpha \) does not appear in the term \(
\textbf{open}\;M \) and it comes from the derivation) and the
rule for \textbf{close} means
\[
  \Biggl[\!\!\Biggl[ \frac{\Gamma \vdash^A M : \langle \tau \rangle^\alpha \qquad \alpha \notin \textrm{FV}(\Gamma, A, \tau)}
              {\Gamma \vdash^A \textbf{close}\; M : \langle \tau \rangle} \Biggr]\!\!\Biggr]
    = \Lambda \alpha. \EmbIT{\Gamma \vdash^A M : \langle \tau \rangle^\alpha}.
\]
% The reason why such a treatment is needed is that terms of \( \sname \) are explicitly
% typed but terms of \( \lambda^i \) are not, i.e., there is one-to-one correspondence
% between terms and type derivation in \( \sname \) but not in \( \lambda^i \).

\begin{thm}
For any \(\lambda^i\) term \(M\), which does not contain \%,
if \( \Gamma \vdash^A M : \tau \) in \( \lambda^i \), then
\( \EmbIT{\Gamma} \vdash^A \EmbIT{M} : \EmbIT{\tau} \).
\end{thm}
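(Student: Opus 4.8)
The plan is to prove the statement by induction on the $\lambda^i$ typing derivation $\Gamma \vdash^A M : \tau$, following the recursive structure of the embedding $\EmbIT{\cdot}$, which --- as explained above --- is defined on derivations rather than on raw terms. Before starting the induction I would record two elementary facts about how $\EmbIT{\cdot}$ acts on types. First, it commutes with classifier substitution, $\EmbIT{\tau \Subst{\alpha := B}} = \EmbIT{\tau} \Subst{\alpha := B}$ and similarly on contexts, which is needed for any $\lambda^i$ rule whose conclusion type is obtained by substitution. Second, the two code-type constructors translate as $\EmbIT{\langle \tau \rangle} = \forall \gamma . \Seal{\gamma}{\EmbIT{\tau}}$ with $\gamma \notin \FMV(\EmbIT{\tau})$, and $\EmbIT{\langle \tau \rangle^\alpha} = \Seal{\alpha}{\EmbIT{\tau}}$. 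I would also note that the stage $A$ and the overall shape of contexts pass through the embedding unchanged, so that the stage side conditions of the $\lambda^i$ rules transfer verbatim to $\sname$.

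The routine cases are discharged by applying the matching $\sname$ rule to the judgments supplied by the induction hypothesis: \textsc{(Var)} for variables, \textsc{(Abs)} for $\lambda$-abstraction, \textsc{(App)} for application, and the rules for quoting ($\Next{}{}$) and unquoting ($\Prev{}{}$) for the introduction and elimination of the open code type $\langle \tau \rangle^\alpha$. Since by hypothesis $M$ contains no $\%$, the CSP rule of $\lambda^i$ never fires; and the ``CSP for closed code'' construct is, in $\sname$, mere syntactic sugar built from the constructs already treated, so it needs no separate case.

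The three substantive cases are \textbf{close}, \textbf{open}, and \textbf{run}. For \textbf{close}, the $\lambda^i$ side condition $\alpha \notin \textrm{FV}(\Gamma, A, \tau)$ becomes exactly $\alpha \notin \FMV(\EmbIT{\Gamma}) \cup \FMV(A)$, which is the side condition of \textsc{(Gen)}; applying \textsc{(Gen)} to $\Gen{\alpha}{\EmbIT{M}}$ gives the type $\forall \alpha . \Seal{\alpha}{\EmbIT{\tau}}$, which is $\EmbIT{\langle \tau \rangle}$ up to renaming of the bound variable (legitimate because $\alpha \notin \FMV(\EmbIT{\tau})$). For \textbf{open}, the induction hypothesis yields $\EmbIT{\Gamma} \vdash^A \EmbIT{M} : \forall \gamma . \Seal{\gamma}{\EmbIT{\tau}}$ with $\gamma$ fresh; since $\langle \tau \rangle$ is a closed-code type, $\tau$ --- and hence $\EmbIT{\tau}$ --- mentions no free classifier, so instantiating via \textsc{(Ins)} with the length-one transition $\alpha$ gives $(\forall \gamma . \Seal{\gamma}{\EmbIT{\tau}}) \Subst{\gamma := \alpha} = \Seal{\alpha}{\EmbIT{\tau}} = \EmbIT{\langle \tau \rangle^\alpha}$, with no capture. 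For \textbf{run}, the same universal type is instantiated with the empty transition $\varepsilon$, using $\Seal{\varepsilon}{\EmbIT{\tau}} = \EmbIT{\tau}$. (If $\lambda^i$'s presentation also includes explicit classifier abstraction and single-variable instantiation in the style of $\lambda^\alpha$, these map to $\Gen{\alpha}{\cdot}$ and $\Ins{\cdot}{\beta}$ and are covered by \textsc{(Gen)}, \textsc{(Ins)}, and the substitution-commutation fact.)

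The step I expect to be delicate is the bookkeeping of transition-variable names. Because the embedding is indexed by derivations and the classifier $\alpha$ appearing in the conclusion of \textbf{open} is inherited from the surrounding derivation rather than written in the term $\textbf{open}\,M$, one must confirm that such names can consistently be kept apart from the bound variable used in $\EmbIT{\langle \cdot \rangle}$ and from $\FMV(\EmbIT{\Gamma})$; the two facts above together with $\alpha$-conversion of $\forall$-types in $\sname$ make this go through, but it is the point at which a careless induction would break. A secondary, more bureaucratic point is to make sure every typing rule in the actual presentation of $\lambda^i$ is covered --- in particular any subsumption-style rule relating $\langle \tau \rangle$ to $\langle \tau \rangle^\alpha$, which would be realized in $\sname$ by instantiation as in the \textbf{open} case --- and to verify at the outset that stages and environments really are carried through the embedding unchanged, as the shared annotation $A$ in the statement presupposes.
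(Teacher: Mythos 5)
Your proposal is correct and takes essentially the same route as the paper, which simply proves the theorem by induction on the $\lambda^i$ type derivation (the paper leaves all the case analysis implicit as ``easy''). Your detailed treatment of \textbf{close}, \textbf{open}, and \textbf{run} via \textsc{(Gen)} and \textsc{(Ins)} with transitions $\alpha$ and $\varepsilon$ is exactly what the embedding in Figure~\ref{fig:embedding-lambda-i} is designed to support.
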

\begin{proof}
Easy induction on the type derivation.
\end{proof}

\begin{figure}
\begin{eqnarray*}
  \EmbIT{b} & = & b \\
  \EmbIT{\tau \to \sigma} & = & \EmbIT{\tau} \to \EmbIT{\sigma} \\
  \EmbIT{\langle \tau \rangle^\alpha} & = & \Seal{\alpha}{\EmbIT{\tau}} \\
  \EmbIT{\langle \tau \rangle} & = & \forall \alpha. \Seal{\alpha}{\EmbIT{\tau}}
    \quad \textrm{(where \( \alpha \notin \textrm{FV}(\tau) \))} \\
  \\
  \EmbIT{x}^A & = & x \\
  \EmbIT{\lam x. M}^A & = & \lam x : \tau. \EmbIT{M}^A
    \quad \textrm{(\( \tau \) should be chosen appropriately)} \\
  \EmbIT{M \app N}^A & = & \EmbIT{M}^A \app \EmbIT{N}^A \\
  \EmbIT{\langle M \rangle} & = & \Next{\alpha}{\EmbIT{M}}
    \quad \textrm{(\( \alpha \) should be chosen appropriately)} \\
  \EmbIT{\;\tilde{}\; M} & = & \Prev{\alpha}{\EmbIT{M}}
    \quad \textrm{(\( \alpha \) should be chosen appropriately)} \\
  \EmbIT{\textbf{run}\; M} & = & \EmbIT{M} \; \varepsilon \\
  \EmbIT{\textbf{open}\; M} & = & \EmbIT{M} \; \alpha
    \quad \textrm{(\( \alpha \) should be chosen appropriately)} \\
  \EmbIT{\textbf{close}\; M} & = & \Lambda \alpha. \EmbIT{M}
    \quad \textrm{(\( \alpha \) should be chosen appropriately)} \\
  \\
  \EmbIT{x_1 : \tau_1^{A_1}, \dots x_n : \tau_n^{A_n}} & = &
    x_1 : \EmbIT{\tau_1} @ A_1, \dots, x_n : \EmbIT{\tau_n} @ A_n
\end{eqnarray*}
\caption{Embedding from \(\lambda^i\) without CSP primitive to
  \(\sname\).  Precisely speaking, in order to recover \( \tau \) and
  \( \alpha \), which appear only on the right hand side, this
  embedding takes type derivations of \( \lambda^i \) rather than
  terms as an input.}
\label{fig:embedding-lambda-i}
\end{figure}

Preservation of the semantics is hard to discuss precisely.  First
of all, the semantics of \(\lambda^i\) is not given
in~\cite{Calcagno2004} in spite of the subtle syntactical differences
between \(\lambda^\alpha\) and \(\lambda^i\).  However, as far as we
guess, the semantics of \(\lambda^i\) seems very close to the
erasure semantics in Section~\ref{sec:erasure-semantics}, and then we
expect to have preservation of the semantics.

\section{Related Work}
\subsection*{Multi-Stage Calculi Based on Modal Logics and Their Extensions.}
Our work can be considered a generalization of the previous work on
the Curry-Howard isomorphism between multi-stage calculi and modal
logics~\cite{Davies-Atemporallogicappro,Davies2001,Yuse2006}.
As we have seen in Section~\ref{sec:embedding},
there are embeddings from \( \lambda^\bigcirc \) and \( \lambda^\square \) to \( \sname \).

The restriction of \(\lambda^\square\) that all code be closed
precludes the definition of a code generator like
\(\texttt{power}_\forall\), which generates both efficient and
runnable code.  Nanevski and Pfenning~\cite{NanevskiPfenning05JFP}
have extended \(\lambda^\square\) with the notion of names, similar to
the symbols in Lisp, and remedied the defect of \(\lambda^{\square}\)
by allowing newly generated names (not variables) to appear in closed
code.

Taha and Sheard~\cite{TahaSheard00TCS} added \textbf{run} and CSP to
\(\lambda^\bigcirc\) and developed MetaML, but its type system was not
strong enough---\textbf{run} may fail at run-time.  Then, Moggi, Taha,
Benaissa, and Sheard~\cite{MoggiTahaBenaissaSheard99ESOP} developed
the calculus AIM (``An Idealized MetaML''), in which there are types
for both open and closed code; it was then simplified to
\(\lambda^{\textsf{BN}}\), which replaced closed code types with
closedness types for closed terms that are not necessarily code.  Both
calculi are based on categorical models and have sound type systems.
The notion of \(\alpha\)-closedness in \(\lambda^\alpha\) can be
considered a generalization of \(\lambda^{\textsf{BN}}\)'s closed
types.  In fact, the typing rule for \textbf{run} in
\(\lambda^{\textsf{BN}}\) is similar to the one in \(\lambda^\alpha\).
Although some of these calculi have sound type systems, it is hard to
regard them as logic, mainly due to the presence of CSP, which delays
the stage of the type judgment to any later stage, and the typing rule
for \textbf{run} (as discussed in Section~\ref{sec:compare-alpha}).

% One nice property of $ \lambda^{\alpha} $ is that a program can be
% executed without exploiting information on classifiers; in other
% words, classifiers can be erased after typechecking.  Although our
% calculus $ \sname $ does not have this ``erasure property,'' due to
% the presence of abstraction/instantiation of transition variables, by
% \changed{restricting transition variables to be staged,
% % restricting \(\forall\)-types to be of the form $ \forall \alpha
% % . \Seal{\alpha}{\tau} $ where $ \alpha \notin \FMV(\tau) $,
% information on transition variables can be mostly erased (See Section~\ref{sec:erasure-semantics}).}
% Under this restriction, the only information to be left after erasure is the
% length \(n\) of \(A\) in \(M \app A\), which only duplicates
% \(\Next{}{}\) at the head of the value of \(M\) \(n\) times.  This
% restriction, which resembles one in $ \lambda^i $~\cite{Calcagno2004},
% still allows embedding of $ \lambda^\bigcirc $ and $ \lambda^\square $
% and $ \texttt{power}_\forall $ (by inlining \(\texttt{power}_2\) into
% the body of it).

More recently, Yuse and Igarashi have proposed the calculus $
\lambda^{\bigcirc\square} $~\cite{Yuse2006} by combining
\(\lambda^\bigcirc\) and \(\lambda^\square\), while maintaining the
Curry-Howard isomorphism.  The main idea was to consider LTL with
modalities ``always'' (\(\square\)) and ``next'' (\(\bigcirc\)), which
represent closed and open code types, respectively.  It is similar to
AIM in this respect.  Although \(\lambda^{\bigcirc\square}\) is based
on logic, it cannot be embedded into \(\sname\) simply by combining
the two embeddings above.  In fact, in \(\lambda^{\bigcirc\square}\),
both directions of \(\square \bigcirc \tau \leftrightarrow \bigcirc
\square \tau \) are provable, whereas neither direction of \((\forall
\alpha. \Seal{\alpha}{\Seal{\beta}{\tau}}) \leftrightarrow
\Seal{\beta}{\forall \alpha. \Seal{\alpha}{\tau}} \) is provable in
\(\sname\).  However, in $ \lambda^{\bigcirc\square} $ it seems
impossible to program a code generator like
\(\texttt{power}_\forall\), which generates specialized code used at
any stage; the best possible one presented can generate specialized
code used only at any \emph{later} stage, so running the specialized
code is not possible.

It is considered not easy to develop a sound type system for staging
constructs \emph{with side effects}.  Calcagno, Moggi, and Sheard
developed a sound type system for a multi-stage calculus with
references using closed types~\cite{CalcagnoMoggiSheard03JFP}.  It is
interesting to study whether their closedness condition can be relaxed
by using \(\alpha\)-closedness.

\subsection*{Other Multi-Stage Calculi.}

Kim, Yi, and Calcagno's $ \lambda^{poly}_{open} $~\cite{1111060} is a
rather powerful multi-stage calculus with open and closed code
fragments, intentionally variable-capturing substitution, lifting
values into code, and even references and ML-style type inference.
The type structure of $ \lambda^{poly}_{open} $ is rather different: 
Since variables in code values can escape from its initial binder's 
scope and get captured by other binders, a
code type records the names of free variables and their types, as well
as the type of the whole code.  It is not clear how (a pure fragment
of) the calculus can be related to other foundational calculi;
possible directions may be to use the calculus of
contexts~\cite{SatoSakuraiKameyama01JFLP} by Sato, Sakurai, and
Kameyama, and the contextual modal type theory by Nanevski, Pfenning,
and Pientka~\cite{NanevskiPfenningPientka08TOCL}.

Viera and Pardo~\cite{VieraPardo06GPCE} have proposed a multi-stage
language with intensional code analysis, that is, pattern matching on
code.  The language requires typechecking at run-time.

More recently, Kameyama, Kiselyov, and
Shan~\cite{KameyamaKiselyovShan09PEPM} have developed an extension of
(a two-level version of) \(\lambda^\bigcirc\) with control operators
shift/reset~\cite{DanvyFilinski90LFP}, which enable an interesting
pattern of code generation such as
let-insertion~\cite{LawallDanvy94LFP} in the direct style.  It will be
interesting to investigate how this calculus extends to dynamic code execution
(i.e., \textbf{run}).

\subsection*{Modal Logics.}

As we discussed above, the $ \square $-fragment of modal logic,
the $ \bigcirc $-fragment of LTL can be
embedded into our logic, and the $ \square\bigcirc $-fragment
of LTL and our logic is incomparable.

Our logic has three characteristic features: (1) it is multi-modal,
(2) it has universal quantification over modalities and (3) modal
operators are ``relative'', meaning their semantics depends on the
possible world at which they are interpreted.  Most other logics
do not have all of these features.

Dynamic logic~\cite{dynamiclogic} is a multi-modal logic for reasoning
about programs.  Its modal operators are $ [\alpha] $ for each program
$ \alpha $, and $ [\alpha] \phi $ means ``when $ \alpha $ halts, $
\phi $ must stand after execution of $ \alpha $ from the current
state''.  Dynamic logic is multi-modal and its modal operators are
``relative'', but does not have quantification over programs.
Therefore, there is no formula in Dynamic logic which would correspond
to $ \forall \alpha . \Seal{\alpha} {\Seal{\alpha}{\phi}} $.  There
is, however, a formula which is expressive in Dynamic logic but not in
our logic: e.g., a Dynamic logic formula $ [\alpha^*] \phi $,
which means intuitively $ \phi \wedge [\alpha] \phi \wedge [\alpha]
[\alpha] \phi \wedge \dots $, cannot be expressed in our logic.

Hybrid logic~\cite{hybridlogic} is a modal logic with a new kind of
atomic formula called \emph{nominals}, each of which must be true
exactly one state in any model (therefore, a nominal names a state).
For each nominal $ i $, $ @_i $ is a modal operator and $ @_i \phi $
means ``$\phi$ stands at the state denoted by $ i $''.  Hybrid logic
has a universal quantifier over nominals (and another binder $
\downarrow $: $ \downarrow x . \phi $ means ``let \(x\) stand for the
nominal for the current world, then $ \phi $ stands'').  Hybrid logic
differs from our logic, in that modal operators $ @_i $ indicate
worlds directly, hence are not ``relative''.  In Hybrid logic, $ @_i
@_j \phi \leftrightarrow @_j \phi $, but $
\Seal{\alpha}{\Seal{\beta}{\phi}} $ and $ \Seal{\beta}{\phi} $ are not
equivalent in our logic.

\section{Conclusion and Future Work}
We have studied a logical aspect of environment classifiers by
developing a simply typed multi-stage calculus \(\sname\) with
environment classifiers.  This calculus corresponds to a multi-modal
logic with quantifier over transitions by the Curry-Howard isomorphism
and satisfies time-ordered normalization as well as basic properties
such as subject reduction, confluence, and strong normalization.  The
classical proof system is sound and complete with respect to the
Kripke semantics.
% based on automata, which gives a
% logical foundation for environment classifiers with two
% correspondences, classifiers to words and stages to states decided by
% words.  
Our calculus simplifies the previous calculus \(\lambda^\alpha\) of
environment classifiers by reducing \textbf{run} and some uses of CSP
to an extension of another construct.  We believe our work helps
clarify the semantics of environment classifiers.

We have also studied evaluation of (a slight extension of)
\(\sname\) and shown staged execution of a program is
possible.  Also, it is shown that erasure execution is possible for a
certain subset, which is close to \(\lambda^i\), the basis of
MetaOCaml.

% has four operators, which are quoting,
% un-quoting, closing for a transition variable and opening by a
% transition, and CSP for closed code fragment and \texttt{run} can be
% regarded as abbreviation using opening operator.

From a theoretical perspective, it is interesting to study the
semantics of the intuitionistic version of the logic, as mentioned
earlier, and also the calculus corresponding to the classical version
of the logic.  It is known that the naive combination of staging
constructs and control operators is problematic since bound variables
in quotation may escape from its scope by a control operator.  We
expect that a logical analysis, like the one presented here
and Reed and Pfenning~\cite{ReedPfenning07M4M}, will help
analyze the problem.

From a practical perspective, one feature missing from \(\sname\) is
CSP for all types.  As argued in the introduction, we think typical
use of CSP is rather limited and so easy to support.  Type inference
for full \(\sname\) would not be possible for the same reason as
\(\lambda^\alpha\)~\cite{Calcagno2004}.  However, it would be easy
to applying type inference for \(\lambda^i_{let}\)~\cite{Calcagno2004}
to a similar subset of \(\sname\).

\subsection*{Acknowledgments.}  This work was begun while the first
author was at Kyoto University.  We would like to thank Lintaro Ina,
Naoki Kobayashi, Ryosuke Sato, Naokata Shikuma, and anonymous
reviewers for useful comments.  This work was supported in
part by Grant-in-Aid for Scientific Research No.\ 21300005.

\bibliographystyle{splncs}
\bibliography{database,igarashi}

\newpage
\appendix
\section{Proof of Time Ordered Normalization (Theorem~\ref{thm:TON})}
\label{sec:proof-of-TON}
First, we give an inductive characterization of the class of
\(T\)-normal terms.  A judgment of the form \( \Delta \vdash
\Downarrow^T M \), where \( \Delta \) is a finite set of transition
variables, means ``\( M \) is \( T \)-normal under bound transition
variables \( \Delta \).''  We distinguish transition variables bound
outside of \(M\) since they are interpreted as the empty transition in
annotated reduction.  We also need an auxiliary judgment of the form
\( \Delta \vdash \bigtriangledown^T M \), read ``\( M \) is
\(T\)-neutral under bound transition variables \( \Delta \)''.  We
write \( T \Subst{\Delta := \varepsilon} \) for \( T \Subst{\alpha_1
  := \varepsilon} \dots \Subst{\alpha_n := \varepsilon} \), where \(
\Delta = \{ \alpha_1, \dots, \alpha_n \} \).  The inference rules for
these judgments are shown in Figure~\ref{fig:normal-form}.

\begin{figure}[b]
  \centering
  \typicallabel{Axiom}
\infrule{
  \varepsilon \nleq T \Subst{\Delta := \varepsilon}
}{
  \Delta \vdash \bigtriangledown^T M
}
\qquad
\infrule{
  M = x \textrm{ or } M_0 \app M_1 \textrm{ or } M_0 A
}{
  \Delta \vdash \bigtriangledown^T M
}
\\[1ex]
\infrule{
}{
  \Delta \vdash \Downarrow^T x
}
\qquad
\infrule{
  \Delta \vdash \Downarrow^T M
}{
  \Delta \vdash \Downarrow^T \lambda x : \tau. M
}
\qquad
\infrule{
  \Delta \vdash \Downarrow^T M \AND
  \Delta \vdash \bigtriangledown^T M \AND
  \Delta \vdash \Downarrow^T N
}{
  \Delta \vdash \Downarrow^T M \app N
}
\\[1ex]
\infrule{
  \Delta \vdash \Downarrow^T M
}{
  \Delta \vdash \Downarrow^{\alpha T} \Next{\alpha}{M}
}
\qquad
\infrule{
  \Delta \vdash \Downarrow^T M \AND
  \Delta \vdash \bigtriangledown^T M
}{
  \Delta \vdash \Downarrow^{\alpha^{-1} T} \Prev{\alpha}{M}
}
\\[1ex]
\infrule{
  \Delta, \alpha \vdash \Downarrow^T M \AND
  \alpha \notin \FMV(T) \cup \Delta
}{
  \Delta \vdash \Downarrow^T \Lambda \alpha. M
}
\qquad
\infrule{
  \Delta \vdash \Downarrow^T M \AND
  \Delta \vdash \bigtriangledown^T M
}{
  \Delta \vdash \Downarrow^T M \app A
}
  \caption{\(T\)-neutral terms and \(T\)-normal terms.}
  \label{fig:normal-form}
\end{figure}

We first prove that this proof system characterize \(T\)-normal forms
as in Lemma~\ref{lem:normal-form}, which is obtained as a special case of
the lemma below.  In what follows, \(\FMV(T)\) denotes the set of
transition variables in the path \(T\).

\begin{lem}\label{lem:normal-form-aux}
  Let \( M \) be a typable term, \( T \) be a path, \(\Delta\) be a
  finite set of transition variables.  Then, the following two
  conditions are equivalent:
\begin{enumerate}[\em(1)]
\item For any term \(N\) and path \(U\), if \( M \stackrel{U}{\red} N
  \), then \( U \Subst{\Delta := \varepsilon} \nleq T \Subst{\Delta :=
    \varepsilon} \).
\item \( \Delta \vdash \Downarrow^T M \).
\end{enumerate}
\Proof
We show \( (1) \implies (2) \) by induction on \( M \).\medskip

\begin{asparaitem}
\item Case \( M = x \): \( \Delta \vdash \Downarrow^T M \) trivially holds.
\medskip
\item Case \( M = \Next{\alpha}{M'} \): We first show that \(M'
  \stackrel{U}{\red} N' \) implies \( U \Subst{\Delta := \varepsilon}
  \nleq (\alpha^{-1} T) \Subst{\Delta := \varepsilon} \) for any
  \(N'\) and \(U\).  Assume \( M' \stackrel{U}{\red} N' \).  Then, we
  have \( \Next{\alpha}{M'} \stackrel{\alpha U}{\red}
  \Next{\alpha}{N'} \).  By (1), we have \( (\alpha U) \Subst{\Delta
    := \varepsilon} \nleq T \Subst{\Delta := \varepsilon} \) and then
  \( U \Subst{\Delta := \varepsilon} \nleq (\alpha^{-1} T)
  \Subst{\Delta := \varepsilon} \).
\noindent{m}
  By the induction hypothesis, we have \( \Delta \vdash
  \Downarrow^{\alpha^{-1} T} M' \), so \( \Delta \vdash \Downarrow^T
  \Next{\alpha}{M'} \).
  \medskip
\item Case \( M = \Prev{\alpha}{M'} \): Similarly to the case above,
  we have \( \Delta \vdash \Downarrow^{\alpha T} M' \).  What remains
  to show is \( \Delta \vdash \bigtriangledown^{\alpha T} M' \).

  Since \(M\) is typable, a possible form of \(M'\) is a variable, an
  application, an instantiation, or \(\Next{\alpha} M''\) for some
  \(M''\).  In the first three cases, \( \Delta \vdash
  \bigtriangledown^{\alpha T} M' \) is trivial.  If \(M =
  \Next{\alpha} M''\), then we have \( M' \stackrel{\alpha^{-1}}{\red}
  M''\).  Then, by (1), we have \( \alpha^{-1} \Subst{\Delta :=
    \varepsilon} \not \le T \Subst{\Delta := \varepsilon} \) and
  then \( \varepsilon \not \le (\alpha T) \Subst{\Delta :=
    \varepsilon} \).  By the first rule for \(\cdot \vdash
  \bigtriangledown^{(\cdot)} \cdot \), we have \( \Delta \vdash
  \bigtriangledown^{\alpha T} M' \).
%
% By the typability of \( M \), we have \( M' = \Next{\alpha}{M''} \).  Therefore \( M \stackrel{\alpha^{-1}}{\red} M'' \).
% Because \( \varepsilon \le (\alpha T) \Subst{\Delta := \varepsilon} \), it is the case that \( \alpha^{-1} \Subst{\Delta := \varepsilon}
% \le T \Subst{\Delta := \varepsilon} \).  This contradict to the assumption.
\medskip\

\item Case \( M = \Lambda \alpha. M' \): 

  Assume \( M' \stackrel{U}{\red} N' \) and \( \alpha \notin \FMV(T)
  \cup \Delta \).  Then, we have \( \Lambda \alpha. M' \stackrel{U
    \Subst{\alpha := \varepsilon}}{\red} \Lambda \alpha. N' \).  By
  (1), \( (U \Subst{\alpha := \varepsilon}) \Subst{\Delta :=
    \varepsilon} \nleq T \Subst{\Delta := \varepsilon} \).  We have \(
  (U \Subst{\alpha := \varepsilon}) \Subst{\Delta := \varepsilon} = U
  \Subst{\Delta, \alpha := \varepsilon} \) and \( T \Subst{\Delta,
    \alpha := \varepsilon} = T \Subst{\Delta := \varepsilon}\) because
  \( \alpha \notin \FMV(T) \).  So, \( U \Subst{\Delta, \alpha :=
    \varepsilon} \nleq T \Subst{\Delta, \alpha := \varepsilon} \).
  Then, we have \( \Delta, \alpha \vdash \Downarrow^{T} M' \) by
  the induction hypothesis.  So, \( \Delta \vdash \Downarrow^T \Lambda
  \alpha. M' \) as required.
\end{asparaitem}

Other cases are similar.

The proof of \( (2) \Rightarrow (1) \) is by easy induction on the structure of the derivation \( \Delta \vdash \Downarrow^T M \).
% (this is equivalent to the induction on the structure of \( M \)).
We show the case \( M = \Prev{\alpha}{M'} \) as a representative case.

Assume \( \Delta \vdash \Downarrow^T \Prev{\alpha}{M'} \).  By definition, we have \( \Delta \vdash \Downarrow^{\alpha T} M' \)
and \( \Delta \vdash \bigtriangledown^{\alpha T} M' \).  Assume \( \Prev{\alpha}{M'} \stackrel{U}{\red} N \).  There are two cases.
\medskip
\begin{asparaitem}
\item Case \( \Prev{\alpha}{M'} \stackrel{U}{\red} \Prev{\alpha}{N'} \) with \( M' \stackrel{\alpha U}{\red} N' \):
Because \( \Delta \vdash \Downarrow^{\alpha T} M' \), by induction hypothesis we have \( (\alpha U) \Subst{\Delta := \varepsilon} \nleq
(\alpha T) \Subst{\Delta := \varepsilon} \).
This implies that \( U \Subst{\Delta := \varepsilon} \nleq T \Subst{\Delta := \varepsilon} \).
\\
\item Case \( M' = \Next{\alpha}{M''} \) and \(
  \Prev{\alpha}{\Next{\alpha}{M''}} \stackrel{\alpha^{-1}}{\red} M''
  \): Since \( \Delta \vdash \bigtriangledown^{\alpha T}
  \Next{\alpha}{M''} \), we have \( \varepsilon \nleq (\alpha T)
  \Subst{\Delta := \varepsilon} \).  This equation is equivalent to \(
  \alpha^{-1} \Subst{\Delta := \varepsilon} \nleq T \Subst{\Delta :=
    \varepsilon} \). \qed
\end{asparaitem}
\end{lem}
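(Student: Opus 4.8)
\proof
The plan is to establish the two directions of the equivalence separately: \( (1) \Rightarrow (2) \) by induction on the structure of the typable term \( M \), and \( (2) \Rightarrow (1) \) by induction on the derivation of \( \Delta \vdash \Downarrow^T M \) (the \( \Downarrow \)-rules of Figure~\ref{fig:normal-form} are syntax-directed on \( M \), so this is in effect another induction on \( M \)). Beyond routine case analysis the only ingredient is a little arithmetic of paths in the free group: \( \le \) is left-invariant, i.e.\ \( V \le U \) iff \( WV \le WU \) for every path \( W \); and collapsing transition variables to \( \varepsilon \) is order-insensitive, so \( (U\Subst{\alpha := \varepsilon})\Subst{\Delta := \varepsilon} = U\Subst{\Delta \cup \{\alpha\} := \varepsilon} \).

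For \( (1) \Rightarrow (2) \), every case follows one recipe. Any step \( M_i \stackrel{V}{\red} N_i \) of an immediate subterm lifts, through the unique congruence rule of \( \stackrel{\cdot}{\red} \) for the head constructor of \( M \), to a step of \( M \) whose label is obtained from \( V \) by a fixed operation: prepend \( \alpha \) under \( \Next{\alpha}{} \), prepend \( \alpha^{-1} \) under \( \Prev{\alpha}{} \), apply \( \Subst{\alpha := \varepsilon} \) under \( \Lambda\alpha \), and leave it unchanged under application and instantiation. Feeding this lifted step to hypothesis (1) and then inverting the operation — by left-invariance of \( \le \), and, under \( \Lambda\alpha \), first \( \alpha \)-renaming so that \( \alpha \notin \FMV(T) \cup \Delta \) and using the commutation above — produces exactly the instance of (1) for \( M_i \) at the path demanded by the matching rule of Figure~\ref{fig:normal-form}: \( \alpha^{-1}T \) under \( \Next{\alpha}{} \), \( \alpha T \) under \( \Prev{\alpha}{} \), \( T \) under \( \Lambda\alpha \) and for the subterms of an application, and so on. The induction hypothesis then supplies the \( \Downarrow \)-judgments of the subterms, and the matching rule yields \( \Delta \vdash \Downarrow^T M \). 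The neutrality side conditions \( \Delta \vdash \bigtriangledown^{T'} M_0 \) are discharged thus: if the head \( M_0 \) is a variable, an application, or an instantiation, the second neutrality rule applies directly; otherwise \( M \) has a head redex — a \( \beta \)-redex \( (\lambda x.M')N \) or a quote/unquote redex \( \Prev{\alpha}{\Next{\alpha}{M''}} \) — which contracts with label \( \varepsilon \) or \( \alpha^{-1} \) respectively, and applying (1) to this contraction gives the hypothesis \( \varepsilon \nleq T'\Subst{\Delta := \varepsilon} \) of the first neutrality rule at the required path \( T' \).

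For \( (2) \Rightarrow (1) \), take any step \( M \stackrel{U}{\red} N \) and split on whether it is a head contraction or a congruence into a subterm. A congruence step is the inverse of the lift above, so the premise \( \Delta \vdash \Downarrow^{T'} M_i \) of the \( \Downarrow \)-rule at hand delivers, via the induction hypothesis, the required \( U\Subst{\Delta := \varepsilon} \nleq T\Subst{\Delta := \varepsilon} \) (under \( \Lambda\alpha \) one again uses \( \alpha \notin \FMV(T) \cup \Delta \)). A head contraction can occur only in the application, \( \Prev{}{} \), and instantiation cases, whose \( \Downarrow \)-rule carries a \( \bigtriangledown \)-premise; if that premise comes from the first neutrality rule we have \( \varepsilon \nleq T'\Subst{\Delta := \varepsilon} \), and since the contracted label simplifies under \( \Subst{\Delta := \varepsilon} \) to \( \varepsilon \) (for \( \beta \) and instantiation) or to \( \alpha^{-1}\Subst{\Delta := \varepsilon} \) (for the quote/unquote redex), left-invariance turns this into exactly the conclusion; if the premise comes from the second rule, the head is a variable, application, or instantiation and no head redex exists, so the subcase is vacuous. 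Lemma~\ref{lem:normal-form} then follows by taking \( \Delta \) to be the set of transition variables bound above \( M \), which is empty for a top-level term.

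I expect the main obstacle to be the bookkeeping around the binder \( \Lambda\alpha \), which intervenes in both directions because the annotated-reduction rule \textsc{(AR-Gen)} collapses \( \alpha \) to \( \varepsilon \): one must \( \alpha \)-rename so that \( \alpha \notin \FMV(T) \cup \Delta \), check the identities \( (U\Subst{\alpha := \varepsilon})\Subst{\Delta := \varepsilon} = U\Subst{\Delta \cup \{\alpha\} := \varepsilon} \) and \( T\Subst{\Delta := \varepsilon} = T\Subst{\Delta \cup \{\alpha\} := \varepsilon} \), and invoke the induction hypothesis with the enlarged set \( \Delta \cup \{\alpha\} \). The other delicate point is the \( \Prev{\alpha}{} \) case of \( (1) \Rightarrow (2) \): the ``hidden'' redex \( \Prev{\alpha}{\Next{\alpha}{M''}} \), present only when the body of \( \Prev{\alpha}{} \) has the form \( \Next{\alpha}{M''} \), is precisely what one must use to establish the \( \bigtriangledown \)-premise — which is exactly why the neutrality judgment has to be built into the characterization in the first place.
\qed
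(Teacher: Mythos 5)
Your proposal follows essentially the same route as the paper's own proof: induction on \( M \) for \( (1)\Rightarrow(2) \) and on the derivation of \( \Delta \vdash \Downarrow^T M \) for \( (2)\Rightarrow(1) \), with the same lifting of subterm reductions through the congruence rules of annotated reduction (prepending \( \alpha \) or \( \alpha^{-1} \), collapsing \( \alpha \) to \( \varepsilon \) under \( \Lambda\alpha \) after renaming so that \( \alpha \notin \FMV(T) \cup \Delta \)), and the same use of the hidden head redex to supply the \( \varepsilon \nleq T'\Subst{\Delta := \varepsilon} \) hypothesis of the neutrality rule. Even the one delicate step you lean on---that a typable head which is not a variable, an application, or an instantiation must give rise to a head redex---is precisely the claim the paper itself makes in its \( \Prev{\alpha}{} \) case (both treatments pass over heads of the form \( \Prev{\beta}{M''} \)), so your argument is a faithful reconstruction of the paper's proof rather than a divergent one.
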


\begin{lem}
\label{lem:normal-form}
Suppose  \( M \) is typable. Then, \( M \) is \(T\)-normal if and only if
\( \emptyset \vdash \Downarrow^T M \).
\end{lem}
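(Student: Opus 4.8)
The plan is to obtain Lemma~\ref{lem:normal-form} as the $\Delta = \emptyset$ instance of Lemma~\ref{lem:normal-form-aux}. The first step is the bookkeeping observation that when $\Delta$ is empty the operation $(\cdot)\Subst{\Delta := \varepsilon}$ — defined as the composition $\Subst{\alpha_1 := \varepsilon}\cdots\Subst{\alpha_n := \varepsilon}$ over $\Delta = \{\alpha_1,\dots,\alpha_n\}$ — is the empty composition, hence the identity on paths. With this in hand, condition (1) of Lemma~\ref{lem:normal-form-aux} specializes to: for every term $N$ and path $U$, if $M \stackrel{U}{\red} N$ then $U \nleq T$; equivalently, there is no $U \le T$ and $N$ with $M \stackrel{U}{\red} N$. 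By the definition of $T$-normality given just before Theorem~\ref{thm:TON}, this is exactly the assertion that $M$ is $T$-normal. Simultaneously, condition (2) of Lemma~\ref{lem:normal-form-aux} with $\Delta = \emptyset$ reads verbatim $\emptyset \vdash \Downarrow^T M$. Thus the equivalence $(1) \Leftrightarrow (2)$ of Lemma~\ref{lem:normal-form-aux} becomes precisely the claimed equivalence, and no further work is required.

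There is essentially no obstacle at the level of the present lemma: all the content resides in Lemma~\ref{lem:normal-form-aux}, which has already been established by induction on $M$ in the $(1)\Rightarrow(2)$ direction and by induction on the derivation of $\Delta \vdash \Downarrow^T M$ in the $(2)\Rightarrow(1)$ direction. The reason that auxiliary lemma is phrased with an arbitrary finite set $\Delta$ of externally bound transition variables, rather than just the $\Delta = \emptyset$ case, is that the inductions would not close otherwise: in the case $M = \Lambda\alpha.M'$ the induction hypothesis must be applied to $M'$ under $\Delta,\alpha$ (since in annotated reduction the bound $\alpha$ is interpreted as $\varepsilon$), and several cases — e.g.\ $M = \Prev{\alpha}{M'}$ and the applications of the $\bigtriangledown$ judgment — invoke side conditions of the form $\varepsilon \nleq T\Subst{\Delta := \varepsilon}$ that mention $\Delta$ explicitly. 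So the only genuine step for Lemma~\ref{lem:normal-form} is to record that the $\Delta = \emptyset$ specialization collapses $(\cdot)\Subst{\Delta := \varepsilon}$ to the identity, after which the statement follows immediately; I expect the write-up to be a single short paragraph.
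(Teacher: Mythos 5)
Your proposal is correct and matches the paper's proof, which simply states that the lemma is immediate from Lemma~\ref{lem:normal-form-aux}; your observation that $(\cdot)\Subst{\Delta := \varepsilon}$ is the identity when $\Delta = \emptyset$, so that condition (1) collapses to the definition of $T$-normality, is exactly the specialization the paper intends.
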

\begin{Proof}
Immediate from Lemma~\ref{lem:normal-form-aux}.
\end{Proof}

Now, we prove that reduction preserves \(T\)-normality and
\(T\)-neutrality, as in
Lemma~\ref{lem:normality-neutrality-preservation}, from which
Theorem~\ref{thm:TON} immediately follows.  Before that, we prove that
(term and transition) substitution preserves \(T\)-normality and
\(T\)-neutrality under a certain condition.

\begin{lem}
\label{lem:staged-substitution}
Suppose \( \Gamma, x : \sigma @ B \vdash^A M : \tau \) and \( \Gamma
\vdash^B N : \sigma \) and \( B \Subst{\Delta := \varepsilon} \nleq
(AT) \Subst{\Delta := \varepsilon} \).  If \( \Delta \vdash
\Downarrow^T M \) and \( \Delta \vdash \Downarrow^{B^{-1} A T} N \),
then \( \Delta \vdash \Downarrow^T (M \Subst{x := N}) \).  Similarly,
if \( \Delta \vdash \bigtriangledown^T M \), then \( \Delta \vdash
\bigtriangledown^T (M \Subst{x := N}) \).
\proof
By induction on \( M \).  We show only the main cases.\medskip
\begin{asparaitem}
\item Case \( M = y \): The subcase \( x \neq y \) is trivial.  
Assume \( x = y \).
Because \( \Gamma, x : \sigma @ B \vdash^A x : \tau \), it is the case that \( A = B \).
So \( B^{-1} A T = T \) and \( \Delta \vdash \Downarrow^{T} N \).
Moreover, we have \( \Delta \vdash \bigtriangledown^T N \) because \( \varepsilon \nleq T \Subst{\Delta := \varepsilon} \),
which follows from \( B \Subst{\Delta := \varepsilon} \nleq (AT) \Subst{\Delta := \varepsilon} \) and \( B = A \).
\medskip
\item Case \( M = \Next{\alpha}{M'} \): From the type derivation of \( M \), we have \( \tau = \Seal{\alpha}{\tau_0} \) and \( \Gamma, x : \sigma @ B \vdash^{A \alpha} M' : \tau_0 \).  From \( \Delta \vdash \Downarrow^T \Next{\alpha}{M'} \),
we obtain \( \Delta \vdash \Downarrow^{\alpha^{-1} T} M' \).  Moreover, \( B \Subst{\Delta := \varepsilon} \nleq 
(AT)\Subst{\Delta := \varepsilon} = 
((A \alpha) (\alpha^{-1} T))
\Subst{\Delta := \varepsilon} \) holds.  By the induction hypothesis,
we now have \( \Delta \vdash \Downarrow^{\alpha^{-1} T} (M' \Subst{x := N}) \).
Therefore \( \Delta \vdash \Downarrow^T (\Next{\alpha}{M'}) \Subst{x := N} \).

If \( \Delta \vdash \bigtriangledown^T \Next{\alpha}{M'} \), it is the case that \( \varepsilon \nleq T \Subst{\Delta := \varepsilon} \).
Therefore \( \Delta \vdash \bigtriangledown^T (\Next{\alpha}{M'}) \Subst{x := N} \).
\medskip
\item Case \( M = \Lambda \alpha. M' \): Without loss of generality, we can assume that \( \alpha \notin \FMV(A) \cup \FMV(N) \cup
\FMV(\Gamma) \cup \FMV(T) \cup \FMV(B) \cup \Delta \).  From the type derivation of \( M \), we have \( \tau = \forall \alpha. \tau_0 \)
and \( \Gamma, x : \sigma @ B \vdash^A M' : \tau_0 \).  From \( \Delta \vdash \Downarrow^T M \), we also have \( \Delta, \alpha \vdash \Downarrow^T M' \).  Because \( \alpha \) is fresh and \( B \Subst{\Delta := \varepsilon} \nleq
(AT) \Subst{\Delta := \varepsilon} \), we have \( B \Subst{\Delta, \alpha := \varepsilon} \nleq
(AT) \Subst{\Delta, \alpha := \varepsilon} \).  By the induction hypothesis, we have \( \Delta, \alpha \vdash \Downarrow^T
M' \Subst{x := N} \), which implies \( \Delta \vdash \Downarrow^T (\Lambda \alpha. M') \Subst{x := N} \).

If \( \Delta \vdash \bigtriangledown^T \Lambda \alpha. M' \), it is the case that \( \varepsilon \nleq T \Subst{\Delta := \varepsilon} \).
Therefore \( \Delta \vdash \bigtriangledown^T (\Lambda \alpha. M') \Subst{x := N} \). \qed
\end{asparaitem}
\end{lem}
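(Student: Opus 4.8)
The plan is to prove both statements simultaneously by structural induction on \( M \), since they are genuinely interdependent: the normality rule for an application \( M_0 \app M_1 \) requires the \emph{neutrality} of its function part \( M_0 \), and substituting \( N \) for \( x \) can turn a neutral \( M_0 \) (namely \( M_0 = x \)) into a \( \lambda \)- or \( \Next{}{} \)-headed term, so I must know that neutrality is preserved in order to close the normality case. The guiding intuition is that \( B^{-1} A T \) is exactly the \emph{local threshold} seen at an occurrence of \( x \): since \( x \) lives at stage \( B \) while \( M \) sits at stage \( A \), the path from the root of \( M \) to that occurrence composes with \( T \) to give \( B^{-1} A T \), which is why \( N \) is assumed \( \Downarrow^{B^{-1} A T} \). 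The side condition \( B \Subst{\Delta := \varepsilon} \nleq (AT)\Subst{\Delta := \varepsilon} \) is the device that prevents a \emph{newly created} redex from falling inside the threshold.

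\textbf{Key steps.} First I would handle the variable base case. When \( x \neq y \) substitution is vacuous; when \( x = y \), well-typedness forces \( A = B \), so \( B^{-1} A T = T \) and the assumption \( \Downarrow^{B^{-1} A T} N \) is already \( \Downarrow^T N \). For the neutrality half I note that \( A = B \) collapses the side condition to \( \varepsilon \nleq T \Subst{\Delta := \varepsilon} \), which yields \( \bigtriangledown^T N \) by the first neutrality rule. For the code constructors I would verify that both the side condition and the \( N \)-threshold are invariant: reading off \( \Downarrow^{\alpha^{-1} T} M' \) from \( \Downarrow^T \Next{\alpha}{M'} \) (resp.\ \( \Downarrow^{\alpha T} M' \) from \( \Downarrow^T \Prev{\alpha}{M'} \)), the identities \( B^{-1}(A\alpha)(\alpha^{-1}T) = B^{-1} A T \) and \( (A\alpha)(\alpha^{-1}T) = AT \) (and the mirror identities through \( A\alpha^{-1} \) for \( \Prev{}{} \)) let me feed exactly the same hypotheses into the induction hypothesis, after which the corresponding formation rule rebuilds the result. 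For \( \Lambda \alpha. M' \) I would choose \( \alpha \) fresh for everything in sight, observe that the side condition is unaffected because \( \alpha \notin \FMV(A) \cup \FMV(B) \cup \FMV(T) \), and weaken \( \Downarrow^{B^{-1} A T} N \) to \( \Delta, \alpha \vdash \Downarrow^{B^{-1} A T} N \) using \( \alpha \notin \FMV(N) \); the induction hypothesis at \( \Delta, \alpha \) then rebuilds the abstraction.

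\textbf{Main obstacle.} The neutrality half is otherwise almost immediate, since \( M \) being \( T \)-neutral either already forces \( \varepsilon \nleq T \Subst{\Delta := \varepsilon} \) (preserved verbatim) or exhibits \( M \) as a variable, application, or instantiation, all of whose head shapes survive substitution, the lone exception being the substituted variable, dispatched exactly as in the base case. The step I expect to be the real obstacle is the book-keeping that shows substitution never introduces an in-threshold redex in the application and \( \Prev{}{} \) cases. Concretely, in \( M_0 \app M_1 \) with \( M_0 = x \), substitution can produce a genuine \( \beta \)-redex at path \( \varepsilon \); closing the normality case there hinges on the neutrality statement delivering \( \bigtriangledown^T (x \Subst{x := N}) \), which in turn forces \( \varepsilon \nleq T \Subst{\Delta := \varepsilon} \) out of the side condition, so that the application formation rule still applies. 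The analogous danger in \( \Prev{\alpha}{M'} \) with \( M' = x \) and \( N = \Next{\alpha}{\cdots} \) is handled by the same mechanism at path \( \alpha^{-1} \), and I would double-check that the group-theoretic rewriting of \( AT \) through \( A\alpha^{-1} \) keeps the side condition and the threshold for \( N \) exactly invariant, since that invariance is what makes the induction go through uniformly.
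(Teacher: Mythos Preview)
Your proposal is correct and follows essentially the same approach as the paper: structural induction on \(M\), proving the normality and neutrality statements together, with the variable, \(\Next{\alpha}{}\), and \(\Lambda\alpha\) cases handled exactly as you describe (the paper shows only those three and declares the rest routine). You are in fact more explicit than the paper in two places it glosses over---the weakening of \(\Delta \vdash \Downarrow^{B^{-1}AT} N\) to \(\Delta,\alpha \vdash \Downarrow^{B^{-1}AT} N\) in the \(\Lambda\) case, and the reason the application and \(\Prev{}{}\) cases close via the neutrality half when \(M_0 = x\)---both of which are genuine obligations your write-up discharges correctly.
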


\begin{lem}
\label{lem:staged-transition-substitution}
Suppose \( \Gamma \vdash^A M : \tau \) and \( \alpha \notin \FMV(AT) \).
\begin{enumerate}[\em(1)]
\item If \( \Delta, \alpha \vdash \Downarrow^T M \),
then \( \Delta \vdash \Downarrow^{T \Subst{\alpha := B}} (M \Subst{\alpha := B}) \); and 
\item if \( \Delta, \alpha \vdash \bigtriangledown^T M \),
then \( \Delta \vdash \bigtriangledown^{T \Subst{\alpha := B}} (M \Subst{\alpha := B}) \).
\end{enumerate}

\proof
  We first show (2) by case analysis on the last rule used to derive
  \( \Delta, \alpha \vdash \bigtriangledown^T M \). \medskip
\begin{asparaitem}
\item Case \( \Delta, \alpha \vdash \bigtriangledown^T M \) with \( \varepsilon \nleq T \Subst{\Delta, \alpha := \varepsilon} \):
Because \( \alpha \notin \FMV(AT) \) and \( \varepsilon \le A \), we have \( A = A' \alpha^n \) and \( T = \alpha^{-n} T' \)
and \( \alpha \notin \FMV(A') \cup \FMV(T') \) for some \( n \ge 0 \).
Then, \( \varepsilon \nleq T \Subst{\Delta, \alpha := \varepsilon} = (\alpha^{-n} T') \Subst{\Delta, \alpha := \varepsilon}
= T' \Subst{\Delta := \varepsilon} \).  
Therefore \( B^{-n} \Subst{\Delta := \varepsilon} 
\nleq (B^{-n} \Subst{\Delta := \varepsilon})(T' \Subst{\Delta := \varepsilon})
= T \Subst{\alpha := B} \Subst{\Delta := \varepsilon}\).  It follows that
 \( \varepsilon \nleq T \Subst{\alpha := B} \Subst{\Delta := \varepsilon} \) because \( B^{-n} \Subst{\Delta :=\varepsilon} \le \varepsilon \).  So we have \( \Delta \vdash \bigtriangledown^{T \Subst{\alpha := B}}
(M \Subst{\alpha := B}) \).\medskip
\item The other case is easy.\medskip
\end{asparaitem}

\noindent Then, (1) is
  proved by induction on the derivation of \( \Delta, \alpha \vdash
  \Downarrow^T M \).  We show only the main cases.\medskip
\begin{asparaitem}
\item Case \( \Delta, \alpha \vdash \Downarrow^T N_0 N_1 \): 
%We have \( \Delta, \alpha \vdash \Downarrow^T N_0 \) and
%\( \Delta, \alpha \vdash \bigtriangledown^T N_0 \) and \( \Delta, \alpha \vdash \Downarrow^T N_1 \).  By the induction hypothesis and (2),
%we have \( \Delta \vdash \Downarrow^{T \Subst{\alpha := B}} (N_0 \Subst{\alpha := B}) \) and
%\( \Delta \vdash \bigtriangledown^{T \Subst{\alpha := B}} (N_0 \Subst{\alpha := B}) \) and
%\( \Delta \vdash \Downarrow^{T \Subst{\alpha := B}} (N_1 \Subst{\alpha
%:= B}) \). So we obtain
%\( \Delta \vdash \Downarrow^{T \Subst{\alpha := B}} (N_0 N_1) \Subst{\alpha := B} \).
Applying the induction hypothesis and (2) to, respectively,
\[\eqalign{
   \Delta, \alpha &\vdash \Downarrow^T N_0\cr
   \Delta, \alpha &\vdash \bigtriangledown^T N_0\cr
   \Delta, \alpha &\vdash \Downarrow^T N_1
  }
  \qquad\hbox{results in}\qquad
  \eqalign{
  \Delta &\vdash\Downarrow^{T \Subst{\alpha := B}} (N_0 \Subst{\alpha:= B})\cr
  \Delta &\vdash\bigtriangledown^{T \Subst{\alpha := B}} (N_0\Subst{\alpha:=B})\cr
  \Delta &\vdash\Downarrow^{T \Subst{\alpha:=B}}(N_1\Subst{\alpha:= B}).
  }
\]
  So we obtain
\( \Delta \vdash \Downarrow^{T \Subst{\alpha := B}} (N_0 N_1) \Subst{\alpha := B} \).
\medskip
\item Case \( \Delta, \alpha \vdash \Downarrow^T \Prev{\beta}{M'} \): It is the case that
\( \Delta, \alpha \vdash \Downarrow^{\beta T} M' \) and \( \Delta, \alpha \vdash \bigtriangledown^{\beta T} M' \).
Because \( \Gamma \vdash^A \Prev{\beta}{M'} : \tau \),
we have \( A = A' \beta \) and \( \Gamma \vdash^{A'} M' : \Seal{\beta}{\tau} \).  Since \( \alpha \notin \FMV(AT) = \FMV(A' \beta T) \),
by applying the induction hypothesis and (2), we have \( \Delta \vdash \Downarrow^{(\beta T) \Subst{\alpha := B}} M' \Subst{\alpha := B} \) and
\( \Delta \vdash \bigtriangledown^{(\beta T) \Subst{\alpha := B}} M' \Subst{\alpha := B} \).
Therefore \( \Delta \vdash \Downarrow^{T \Subst{\alpha := B}} (\Next{\beta \Subst{\alpha := B}}{(M' \Subst{\alpha := B})}) \).\qed
\end{asparaitem}
\end{lem}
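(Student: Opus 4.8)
The plan is to prove part~(2), preservation of \(T\)-neutrality, first, and only then part~(1): the normal-form rules (Figure~\ref{fig:normal-form}) for \(M_0 \app M_1\), for \(\Prev{\beta}{M_0}\), and for \(M_0 \app A_0\) each carry a \(\bigtriangledown\)-premise on an immediate subterm, so the induction establishing~(1) will have to invoke~(2) on precisely those subterms, the induction hypothesis for~(1) alone not delivering neutrality.

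For part~(2) I would case-split on the last rule deriving \(\Delta, \alpha \vdash \bigtriangledown^T M\). If it is the rule that reads off the head shape of \(M\) (a variable, an application, or an instantiation), then \(M\Subst{\alpha := B}\) has the same head shape and the very same rule applies, so this case is immediate. The remaining rule concludes from \(\varepsilon \nleq T\Subst{\Delta, \alpha := \varepsilon}\), and here everything is a computation in the free group of paths. From \(\Gamma \vdash^A M : \tau\) we know \(A\) is a (positive) transition, so \(\varepsilon \le A\); combined with \(\alpha \notin \FMV(AT)\) this forces every occurrence of \(\alpha\) in \(A\), and its inverse in \(T\), to lie inside the segment of \(A\) that cancels against \(T\) at the seam — one must be careful here, since \(\FMV(AT)\) refers to the reduced path, so \(\alpha\) may occur literally in both \(A\) and \(T\) and only vanish under cancellation. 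Writing that segment as a positive word \(w\), so that \(A = A' w\) and \(T = w^{-1} T'\) with \(\alpha\) absent from \(A'\) and from \(T'\), we get \(T\Subst{\alpha := B} = (w\Subst{\alpha := B})^{-1} T'\); since \(B\) is positive, \((w\Subst{\alpha := B})^{-1}\Subst{\Delta := \varepsilon}\) is the inverse of a positive word, hence \(\le \varepsilon\), and a short monotonicity argument for \(\le\) transfers non-positivity of \(T\Subst{\Delta, \alpha := \varepsilon}\) to non-positivity of \((T\Subst{\alpha := B})\Subst{\Delta := \varepsilon}\), re-establishing the rule.

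Part~(1) I would prove by induction on the derivation of \(\Delta, \alpha \vdash \Downarrow^T M\), running an inversion of \(\Gamma \vdash^A M : \tau\) alongside it. The recipe in each case is uniform: read the \(\Downarrow\)-premises (and the \(\bigtriangledown\)-premises, where present) off the derivation; check that the subterm's stage/path pair still satisfies the hypothesis — for \(\Next{\beta}{M_0}\) the pair moves from \((A, T)\) to \((A\beta, \beta^{-1}T)\) and \(\FMV((A\beta)(\beta^{-1}T)) = \FMV(AT)\); for \(\Prev{\beta}{M_0}\) typing gives \(A = A'\beta\) and the pair becomes \((A', \beta T)\) with \(\FMV(A'(\beta T)) = \FMV(AT)\); for \(\Gen{\gamma}{M_0}\) one renames \(\gamma\) away from \(\{\alpha\} \cup \Delta \cup \FMV(T) \cup \FMV(B)\) and applies the hypothesis with \(\Delta\) enlarged by \(\gamma\); then apply the induction hypothesis, together with part~(2) wherever a \(\bigtriangledown\)-premise must be produced, and reassemble, using that \((\cdot)\Subst{\alpha := B}\) commutes with the term constructors and with the path operations (e.g.\ \((\alpha^{-1}T)\Subst{\alpha := B} = (B\Subst{\alpha := B})^{-1}\,(T\Subst{\alpha := B})\), iterating the \(\Prev{}{}\)-clause of the normal-form rules when \(\beta\) is instantiated to a transition of length \(> 1\)). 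The main obstacle, I expect, is the path bookkeeping concentrated in part~(2): getting the decomposition \(A = A'w\), \(T = w^{-1}T'\) right and then discharging the \(\le\)-inequalities after substituting the positive transition \(B\) for \(\alpha\); everything else is the usual constructor-by-constructor reshuffling. Once this lemma and Lemma~\ref{lem:staged-substitution} are in hand, preservation of \(T\)-normality (and \(T\)-neutrality) under a single annotated-reduction step — hence under \(\red^*\) — follows by an induction on the reduction, and Theorem~\ref{thm:TON} is immediate.
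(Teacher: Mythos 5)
Your overall architecture is the same as the paper's: prove (2) first by case analysis on the last rule, then (1) by induction on the derivation of \( \Delta, \alpha \vdash \Downarrow^T M \), invoking (2) exactly at the \(\bigtriangledown\)-premises of the rules for applications, \(\Prev{}{}\), and instantiations, with the bookkeeping \((A,T)\mapsto(A\beta,\beta^{-1}T)\), \((A'\beta,T)\mapsto(A',\beta T)\) preserving the product \(AT\); that part of your plan is fine. The gap is in the one case you yourself flagged as the crux of (2). With your general decomposition \(A = A'w\), \(T = w^{-1}T'\), where the cancelling positive segment \(w\) may contain letters other than \(\alpha\), the ``short monotonicity argument'' does not go through: the premise \( \varepsilon \nleq T \Subst{\Delta, \alpha := \varepsilon} \) only says that \( (w \Subst{\Delta, \alpha := \varepsilon})^{-1}\,(T' \Subst{\Delta := \varepsilon}) \) is not positive, and when \( w \Subst{\Delta, \alpha := \varepsilon} \neq \varepsilon \) this does not give \( \varepsilon \nleq T' \Subst{\Delta := \varepsilon} \), which is exactly what your ``inverse of a positive word is \(\le \varepsilon\)'' step needs. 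Indeed, in the generality you allow the desired conclusion can fail outright: take \( \Delta = \emptyset \), \( A = \alpha\gamma \), \( T = \gamma^{-1}\alpha^{-1}\beta\gamma \) (so \( AT = \beta\gamma \) and \( \alpha \notin \FMV(AT) \)), and \( B = \beta \). Then \( T \Subst{\alpha := \varepsilon} = \gamma^{-1}\beta\gamma \) is not positive, so the first \(\bigtriangledown\)-rule's premise holds, yet \( T \Subst{\alpha := B} = \varepsilon \) is positive, so that rule cannot be re-established after substitution, and if \( M \) is, say, \( \Next{\delta}{x} \) the head-shape rule does not apply either.

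Compare this with the paper's proof of the same case: it asserts the stronger decomposition \( A = A'\alpha^n \), \( T = \alpha^{-n} T' \) with \( \alpha \notin \FMV(A') \cup \FMV(T') \). Under that form, erasing \(\alpha\) and \(\Delta\) from \(T\) yields exactly \( T' \Subst{\Delta := \varepsilon} \), the needed non-positivity is immediate, and the substituted prefix is \( B^{-n} \), the inverse of a positive word, so the transfer closes. Your (correct) observation that \( \alpha \notin \FMV(AT) \) refers to the \emph{reduced} product, so that \(\alpha\) may be buried inside \(A\) and only cancel across a longer seam, is precisely the situation that decomposition does not cover — so the two proofs diverge exactly here, and neither your generalized \(w\) nor an appeal to monotonicity can bridge it. To repair your argument you must either restrict to, and justify, the trailing-\(\alpha^n\) configuration (which the bare hypothesis \( \alpha \notin \FMV(AT) \) does not force, so this means strengthening the hypothesis or the induction invariant), or reformulate the case; as written, your case analysis for (2), and hence the (2)-dependent steps of (1), does not go through.
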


\begin{lem}\label{lem:normality-neutrality-preservation}
Suppose \( \Gamma \vdash^A M : \tau \) and \( M \stackrel{U}{\red} N \) 
and  \( U \Subst{\Delta := \varepsilon} \nleq T \Subst{\Delta := \varepsilon} \).  If \( \Delta \vdash \Downarrow^T M \), then \( \Delta \vdash \Downarrow^T N \).  Similarly, if \( \Delta \vdash \bigtriangledown^T M \), then \( \Delta \vdash \bigtriangledown^T N \).
\proof
By induction on the derivation of \( M \stackrel{U}{\red} N \).\medskip
\begin{asparaitem}
\item Case \( M = (\lambda x : \sigma. P) \app Q \stackrel{\varepsilon}{\red} P \Subst{x := Q} = N \):
It is the case that \( U = \varepsilon \) and so \(\varepsilon \nleq
T \Subst{\Delta := \varepsilon}\).
Assume \( \Delta \vdash \Downarrow^T M \).
Then we have \( \Delta \vdash \Downarrow^T P \) and \( \Delta \vdash \bigtriangledown^T P \) and
\( \Delta \vdash \Downarrow^T Q \).
% The equation \( U = \varepsilon \) and the condition implies \( \varepsilon \nleq T \Subst{\Delta := \varepsilon} \).
From  \(\varepsilon \nleq
T \Subst{\Delta := \varepsilon}\), it follows that \( A \Subst{\Delta := 
\varepsilon} \nleq (A T) \Subst{\Delta := \varepsilon} \).  By Lemma~\ref{lem:staged-substitution},
we have \( \Delta \vdash \Downarrow^T (P \Subst{x := Q}) \).
Moreover, we have \( \Delta \vdash \bigtriangledown^T N \) because \( U \Subst{\Delta := \varepsilon} = \varepsilon \nleq
T \Subst{\Delta := \varepsilon} \).
\medskip
\item Case \( M = (\Lambda \alpha. P) \app B \stackrel{\varepsilon}{\red} P \Subst{\alpha := B} = N \):
We can assume without loss of generality that \( \alpha \notin \FMV(\Gamma) \cup \FMV(A) \cup \FMV(T) \cup \Delta \).
Assume \( \Delta \vdash \Downarrow^T M \). 
Then we have \( \Delta, \alpha \vdash \Downarrow^T P \).  Because \( \alpha \notin \FMV(AT) \),
by applying Lemma~\ref{lem:staged-transition-substitution},  we have
\( \Delta \vdash \Downarrow^{T \Subst{\alpha := B}} P \Subst{\alpha := B} \). Because \( \alpha \notin \FMV(T) \),
\( T \Subst{\alpha := B} = T \).  So we have \( \Delta \vdash \Downarrow^T (P \Subst{\alpha := B}) \).

The proof of \( \Delta \vdash \bigtriangledown^T N \) is similar to the prior case since \(U = \varepsilon\).
\medskip
\item Case \( M = \Next{\alpha}{P} \stackrel{U}{\red} \Next{\alpha}{P'} = N \) with \( P \stackrel{\alpha^{-1} U}{\red} P' \):
Assume \( \Delta \vdash \Downarrow^T \Next{\alpha}{P} \).  Then \( \Delta \vdash \Downarrow^{\alpha^{-1} T} P \).
The typability of \( M \) implies the typability of \( P \).  We have \( (\alpha^{-1} U) \Subst{\Delta := \varepsilon}
\nleq (\alpha^{-1} T) \Subst{\Delta := \varepsilon} \) because \( U \Subst{\Delta := \varepsilon} \nleq T \Subst{\Delta := \varepsilon} \).
So, by the induction hypothesis, we obtain \( \Delta \vdash \Downarrow^{\alpha^{-1} T} P' \).  Therefore \( \Delta \vdash \Downarrow^{T} N \).

Assume \( \Delta \vdash \bigtriangledown^T \Next{\alpha}{P} \).  It
must be the case that \( \varepsilon \nleq T \Subst{\Delta :=
  \varepsilon} \).  Therefore \( \Delta \vdash \bigtriangledown^T
\Next{\alpha}{P'} \).
\medskip
\item Case \( M = \Lambda \alpha. P \stackrel{U}{\red} \Lambda \alpha. P' = N \) with \( P \stackrel{U'}{\red} P' \) and
\( U = U' \Subst{\alpha := \varepsilon} \): We can assume without loss of generality that
\( \alpha \notin \FMV(\Gamma) \cup \FMV(A) \cup \FMV(T) \cup \Delta \).
Moreover \( \alpha \notin \FMV(U) \) because \( U = U' \Subst{\alpha := \varepsilon} \).
It follows that
\begin{eqnarray*}
  U' \Subst{\Delta, \alpha := \varepsilon}
    & = & (U' \Subst{\alpha := \varepsilon}) \Subst{\Delta := \varepsilon} \\
    & & \qquad \textrm{(by the definition of \( U \))} \\
    & = &  U \Subst{\Delta := \varepsilon} \\
    & & \qquad \textrm{(by assumption)} \\
    & \nleq & T \Subst{\Delta := \varepsilon} \\
    & & \qquad \textrm{(because \( \alpha \notin \FMV(T) \))} \\
    & = & T \Subst{\Delta, \alpha := \varepsilon}
\end{eqnarray*}

Assume that \( \Delta \vdash \Downarrow^T M \).  Then \( \Delta, \alpha \vdash \Downarrow^T P \).
The typability of \( M \) implies the typability of \( P \).
So, by the induction hypothesis, we have \( \Delta \vdash \Downarrow^T P' \), which implies \( \Delta \vdash \Downarrow^T N \).

The proof of the remaining part is similar to other cases. \qed
\end{asparaitem}
\end{lem}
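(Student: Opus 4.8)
The plan is to argue by induction on the derivation of \( M \stackrel{U}{\red} N \), keeping the side hypothesis \( U \Subst{\Delta := \varepsilon} \nleq T \Subst{\Delta := \varepsilon} \) in force at every step, and proving the normality claim (about \( \Downarrow^T \)) and the neutrality claim (about \( \bigtriangledown^T \)) by a single simultaneous induction, since the normality rules for \( \Next{}{} \), \( \Prev{}{} \) and application feed on both. The neutrality half is almost automatic: inspecting the two rules defining \( \bigtriangledown^{(\cdot)} \), a judgment \( \Delta \vdash \bigtriangledown^T M \) holds either because \( \varepsilon \nleq T \Subst{\Delta := \varepsilon} \) --- a condition mentioning neither \( M \) nor \( N \), hence transported at no cost --- or because \( M \) is a variable, an application, or an instantiation, and this syntactic shape is preserved by every congruence rule, while the head rules contract at \( U = \varepsilon \) or \( U = \alpha^{-1} \), where the first clause applies anyway. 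So the work concentrates on the normality half.

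For the three head-redex base cases I would reduce to the two substitution lemmas just proved. In the \(\beta\)-case \( (\lambda x : \sigma. P)\,\app Q \stackrel{\varepsilon}{\red} P \Subst{x := Q} \), inversion of the typing derivation gives \( \Gamma, x : \sigma @ A \vdash^A P : \tau \) and \( \Gamma \vdash^A Q : \sigma \), so the context-stage \( B \) in Lemma~\ref{lem:staged-substitution} is \( A \) itself; the inequality \( A \Subst{\Delta := \varepsilon} \nleq (A T) \Subst{\Delta := \varepsilon} \) it requires is just a rephrasing of \( \varepsilon = U \Subst{\Delta := \varepsilon} \nleq T \Subst{\Delta := \varepsilon} \), and unfolding \( \Downarrow^T \) of the redex supplies exactly \( \Downarrow^T P \) and \( \Downarrow^{A^{-1} A T} Q = \Downarrow^T Q \). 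In the instantiation case \( (\Lambda \alpha. P)\,\app B \stackrel{\varepsilon}{\red} P \Subst{\alpha := B} \), I would first rename \( \alpha \) fresh, extract \( \Delta, \alpha \vdash \Downarrow^T P \) together with \( \alpha \notin \FMV(T) \) from \( \Downarrow^T \) of the redex, and then invoke Lemma~\ref{lem:staged-transition-substitution}, using \( \alpha \notin \FMV(AT) \) (freshness plus \( \alpha \notin \FMV(T) \)) and the identity \( T \Subst{\alpha := B} = T \). In the quote-cancellation case \( \Prev{\alpha}{\Next{\alpha}{P}} \stackrel{\alpha^{-1}}{\red} P \), simply unfolding \( \Downarrow^T \Prev{\alpha}{\Next{\alpha}{P}} \) into \( \Downarrow^{\alpha T} \Next{\alpha}{P} \) and then into \( \Downarrow^T P \) already yields the conclusion; the neutrality conclusion in all three cases follows from the first \( \bigtriangledown \)-rule, since either the side hypothesis (for \( \beta \) and instantiation, where \( U = \varepsilon \)) or the assumed premise \( \bigtriangledown^T M \) (for the quote case, whose redex is a \( \Prev{}{} \), hence neither a variable, an application, nor an instantiation) forces \( \varepsilon \nleq T \Subst{\Delta := \varepsilon} \).

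The congruence cases all apply the induction hypothesis to the reduced immediate subterm after rewriting the path annotation: the subterm's step label and \( \Downarrow \)-annotation are obtained from those of \( M \) by prefixing \( \alpha^{-1} \) under \( \Next{\alpha}{(\cdot)} \), prefixing \( \alpha \) under \( \Prev{\alpha}{(\cdot)} \), leaving them unchanged under \( \lambda x{:}\tau.(\cdot) \), \( (\cdot)\app L \), \( L\app(\cdot) \) and \( (\cdot)\app A \), and substituting \( \varepsilon \) for \( \alpha \) under \( \Lambda \alpha.(\cdot) \) while enlarging \( \Delta \) to \( \Delta \cup \{\alpha\} \). The side hypothesis travels across each of these because \( X \le Y \iff \xi X \le \xi Y \) for \( \xi \in \{ \alpha, \alpha^{-1} \} \), and across the \( \Lambda \)-binder because, after renaming \( \alpha \) so that \( \alpha \notin \FMV(T) \cup \Delta \), one has \( U' \Subst{\Delta, \alpha := \varepsilon} = (U' \Subst{\alpha := \varepsilon}) \Subst{\Delta := \varepsilon} = U \Subst{\Delta := \varepsilon} \) and \( T \Subst{\Delta, \alpha := \varepsilon} = T \Subst{\Delta := \varepsilon} \); the \( \Prev{}{} \) and application subcases additionally consume the neutrality half of the induction hypothesis, which is exactly why the two halves must be established together.

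I expect the only genuinely delicate points to be this reconciliation of \( \Subst{\alpha := \varepsilon} \) with \( \Subst{\Delta := \varepsilon} \) across \( \Lambda \), and the careful verification of the stage inequalities that feed Lemmas~\ref{lem:staged-substitution} and \ref{lem:staged-transition-substitution} in the two head-instantiation cases; all other cases are mechanical once inversion of the typing rules is used to recover typability of subterms. Finally, Theorem~\ref{thm:TON} falls out by iterating this lemma along \( M \red^* N \): by Lemmas~\ref{lem:normal-form} and \ref{lem:normal-form-aux}, being \(T\)-normal is precisely \( \emptyset \vdash \Downarrow^T (\cdot) \), which forces the label \( U \) of the next annotated step to satisfy \( U \nleq T \) --- exactly the side hypothesis of the present lemma with \( \Delta = \emptyset \) --- so \(T\)-normality propagates one step, hence all the way along the sequence.
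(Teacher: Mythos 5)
Your proposal is correct and follows essentially the same route as the paper's proof: a simultaneous induction on the annotated reduction derivation, discharging the \(\beta\)- and instantiation head cases via Lemmas~\ref{lem:staged-substitution} and \ref{lem:staged-transition-substitution} (with the same stage-inequality and \( T\Subst{\alpha := B} = T \) observations), and handling the congruence cases by shifting the path annotation, including the \(\Subst{\alpha := \varepsilon}\) versus \(\Subst{\Delta := \varepsilon}\) reconciliation under \(\Lambda\). The only difference is presentational: you spell out the quote-cancellation case and the uniform treatment of neutrality, which the paper leaves as ``similar.''
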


\section{List of Error-generating and Error-propagating Rules}
\label{sec:complete-rules}
\subsection*{The List of Error-generating Rules}
\ \\
\begin{center}
\infrule{
  \bred{\varepsilon}{M}{M'} \AND
  M' \notin \mathbb{Z} \AND
  \Diamond \in \{ +, -, *, = \}
}{
  \bred{\varepsilon}{M \Diamond N}{\err}
}
\qquad
\infrule{
  \bred{\varepsilon}{M}{M'} \AND
  M' \in \mathbb{Z} \AND
  \bred{\varepsilon}{N}{N'} \AND
  N' \notin \mathbb{Z} \AND
  \Diamond \in \{ +, -, *, = \}
}{
  \bred{\varepsilon}{M \Diamond N}{\err}
}
\\[1ex]
\infrule{
  \bred{\varepsilon}{M}{M'} \AND
  M' \notin \{ \textbf{true}, \textbf{false} \}
}{
  \bred{\varepsilon}{\eIf{M}{N_1}{N_2}}{\err}
}
\qquad
\infrule{
}{
  \bred{\varepsilon}{x}{\err}
}
\\[1ex]
\infrule{
  \bred{\varepsilon}{M}{M'} \AND
  M' \neq \lam x:\tau.M''
}{
  \bred{\varepsilon}{M \app N}{\err}
}
\qquad
\infrule{
  \bred{\varepsilon}{M}{M'} \AND
  M' \neq \Lambda \alpha. M''
}{
  \bred{\varepsilon}{\Ins{M}{A}}{\err}
}
\\[1ex]
\infrule{
  \bred{\varepsilon}{M}{M'} \AND
  M' \neq \Next{\alpha}{M''}
}{
  \bred{\alpha}{\Prev{\alpha}{M}}{\err}
}
\qquad
\infrule{
  \mathstrut
}{
  \bred{\varepsilon}{\Prev{\alpha}{M}}{\err}
}
\end{center}

\subsection*{The List of Error-propagating Rules}
\ \\
\begin{center}
\infrule{
  \bred{A}{M}{\err} \AND
  \Diamond \in \{ +, -, *, = \}
}{
  \bred{A}{M \Diamond N}{\err}
}
\qquad
\infrule{
  \bred{A}{M}{M'} \AND
  \bred{A}{N}{\err} \AND
  \Diamond \in \{ +, -, *, = \}
}{
  \bred{A}{M \Diamond N}{\err}
}
\\[1ex]
\infrule{
  \bred{A}{M}{\err}
}{
  \bred{A}{\eIf{M}{N_1}{N_2}}{\err}
}
\qquad
\infrule{
  \bred{\varepsilon}{M}{\textbf{true}} \AND
  \bred{\varepsilon}{N_1}{\err}
}{
  \bred{\varepsilon}{\eIf{M}{N_1}{N_2}}{\err}
}
\\[1ex]
\infrule{
  \bred{\varepsilon}{M}{\textbf{false}} \AND
  \bred{\varepsilon}{N_2}{\err}
}{
  \bred{\varepsilon}{\eIf{M}{N_1}{N_2}}{\err}
}
\qquad
\infrule{
  \bred{A}{M}{M'} \AND
  \bred{A}{N_1}{\err} \AND
  A \neq \varepsilon
}{
  \bred{A}{\eIf{M}{N_1}{N_2}}{\err}
}
\\[1ex]
\infrule{
  \bred{A}{M}{M'} \AND
  \bred{A}{N_1'}{N_1'} \AND
  \bred{A}{N_2}{\err} \AND
  A \neq \varepsilon
}{
  \bred{A}{\eIf{M}{N_1}{N_2}}{\err}
}
\qquad
\infrule{
  \bred{A}{M}{\err} \AND
  A \neq \varepsilon
}{
  \bred{A}{\lam x:\tau .M}{\err}
}
\\[1ex]
\infrule{
  \bred{A}{M}{\err}
}{
  \bred{A}{M \app N}{\err}
}
\qquad
\infrule{
  \bred{A}{M}{M'} \AND
  \bred{A}{N}{\err}
}{
  \bred{A}{M \app N}{\err}
}
\qquad
\infrule{
  \bred{A \alpha}{M}{\err}
}{
  \bred{A}{\Next{\alpha}{M}}{\err}
}
\\[1ex]
\infrule{
  \bred{A}{M}{\err}
}{
  \bred{A \alpha}{\Prev{\alpha}{M}}{\err}
}
\qquad
\infrule{
  \bred{A}{M}{\err}
}{
  \bred{A}{\Lambda \alpha. M}{\err}
}
\qquad
\infrule{
  \bred{A}{M}{\err}
}{
  \bred{A}{\Ins{M}{B}}{\err}
}
\\[1ex]
\infrule{
  \bred{\varepsilon}{M \Subst{f := \textbf{fix }f : \tau \to \sigma. M}}{\err}
}{
  \bred{\varepsilon}{\textbf{fix }f : \tau \to \sigma. M}{\err}
}
\qquad
\infrule{
  \bred{A}{M}{\err} \AND
  A \neq \varepsilon
}{
  \bred{A}{\textbf{fix } f : \tau \to \sigma. M}{\err}
}
\end{center}

\section{Proofs of Properties about mini-ML}
\label{sec:proof-miniML}
\subsection{Proof of Theorem~\ref{thm:ml-sound} (Type Soundness of \miniML)}
\label{sec:proof-sound}

\begin{lem}
\label{lem:sound2}
If $ \Gamma \vdash^{\alpha A} v : \tau $ and
$ v \in V^{\alpha A} $,
then $ \Gamma^{- \alpha} \vdash^A v : \tau $.
\begin{proof}
By induction on the structure of $ v $.
We show only cases for $ v = x $ and $ v = \Prev{\beta}{v'} $.

\begin{enumerate}[$\bullet$]
\item Case $ v = x $: We have $ x : \tau @ \alpha A \in \Gamma $.
Because $ x : \tau @ A \in \Gamma^{- \alpha} $,
$ \Gamma^{- \alpha} \vdash^A x : \tau $.
\\
\item Case $ v = \Prev{\beta}{v'} $:
We have $ A \neq \varepsilon $
because $ \Prev{\beta}{v'} \in V^{\alpha A} $.
That means $ A = A' \beta $ for some $ A' $ and
$ \Gamma \vdash^{\alpha A' \beta} \Prev{\beta}{v'} : \tau $.
So we have \( \Gamma \vdash^{\alpha A'} v' : \Seal{\beta}{\tau} \).
By the induction hypothesis,
$ \Gamma^{- \alpha} \vdash^{A'} v' : \Seal{\beta}{\tau} $.
Therefore,
by ($\Prev{}{}$), we have $ \Gamma^{- \alpha} \vdash^{A' \beta} \Prev{\beta}{v'} : \tau $.
\end{enumerate}
\end{proof}
\end{lem}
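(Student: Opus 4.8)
The plan is to prove the statement by induction on the structure of the value $v$. Since $v \in V^{\alpha A}$ and the stage $\alpha A$ is non-empty, $v$ is built by one of the productions for $V^{B}$ with $B\neq\varepsilon$: it is a variable, a constant $n$, $\textbf{true}$ or $\textbf{false}$, an abstraction $\lambda x:\sigma. v_0$, a fixed point $\textbf{fix}\ f:\sigma\to\sigma'. v_0$, an application $v_1\app v_2$, a quotation $\Next{\beta}{v_0}$, a transition abstraction $\Lambda\beta. v_0$, an instantiation $\Ins{v_0}{B}$, or an unquotation $\Prev{\beta}{v'}$. In each case I would invert the (syntax-directed) typing derivation of $\Gamma\vdash^{\alpha A}v:\tau$, apply the induction hypothesis to the immediate value subterms, and reapply the matching typing rule at the shortened stage. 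The cases $v=x$ and $v=\Prev{\beta}{v'}$ are exactly the ones spelled out above; the remaining cases are routine variations on the same pattern.

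Three small facts keep the induction running. First, for every value constructor the stages of the immediate subterms are again of the form $\alpha\cdot(\cdots)$, so the induction hypothesis is applicable; for instance, in $\Next{\beta}{v_0}$ the subterm $v_0$ lies in $V^{\alpha A\beta}$ and is typed at $\alpha(A\beta)$, so the hypothesis gives $\Gamma^{-\alpha}\vdash^{A\beta}v_0:\tau_0$ and hence $\Gamma^{-\alpha}\vdash^{A}\Next{\beta}{v_0}:\Seal{\beta}{\tau_0}$ by $(\Next{}{})$. Second, the operation $(-)^{-\alpha}$ commutes with extension of the context by a stage-$\alpha A$ binding, $(\Gamma,x:\sigma@\alpha A)^{-\alpha}=\Gamma^{-\alpha},x:\sigma@A$, which dispatches the binder cases $\lambda x:\sigma. v_0$ and $\textbf{fix}\ f.v_0$. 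Third, $\FMV(\Gamma^{-\alpha})\subseteq\FMV(\Gamma)$ (and $\FMV(A)\subseteq\FMV(\alpha A)$), so the freshness side condition of $\textsc{(Gen)}$ is still met in the case $v=\Lambda\beta. v_0$, whose stage drops from $\alpha A$ to $A$.

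The one case that needs care — and the only place the hypothesis $v\in V^{\alpha A}$ is genuinely used — is $v=\Prev{\beta}{v'}$. Were $v$ an arbitrary term, $\Prev{\beta}{v'}$ could be typed at stage $\alpha A=\beta$ (forcing $A=\varepsilon$, $\beta=\alpha$, with $v'$ typed at $\varepsilon$), and then the conclusion would ask for $\Gamma^{-\alpha}\vdash^{\varepsilon}\Prev{\alpha}{v'}:\tau$, which is unobtainable since no stage $B$ satisfies $B\alpha=\varepsilon$. The value grammar rules this out: $\Prev{\beta}{v'}\in V^{\alpha A}$ forces $v'\in V^{C}$ with $C\beta=\alpha A$ and $C\neq\varepsilon$, so $C$ begins with $\alpha$, say $C=\alpha A'$ with $A=A'\beta$. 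Inversion then gives $\Gamma\vdash^{\alpha A'}v':\Seal{\beta}{\tau}$, the induction hypothesis (applicable since $v'\in V^{\alpha A'}$) gives $\Gamma^{-\alpha}\vdash^{A'}v':\Seal{\beta}{\tau}$, and $(\Prev{}{})$ yields $\Gamma^{-\alpha}\vdash^{A'\beta}\Prev{\beta}{v'}:\tau$, i.e. $\Gamma^{-\alpha}\vdash^{A}\Prev{\beta}{v'}:\tau$. So the main — and in the end quite mild — obstacle is just to notice that the side condition $C\neq\varepsilon$ in the definition of $V^{B}$ is precisely what keeps the subterm's stage of the form $\alpha\cdot(\cdots)$ so that the induction hypothesis applies.
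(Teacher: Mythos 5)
Your proposal is correct and follows essentially the same route as the paper: structural induction on the value $v$, inverting the syntax-directed typing rules and reapplying them at the shortened stage, with the crucial observation in the $\Prev{\beta}{v'}$ case that the side condition $A'\neq\varepsilon$ in the value grammar forces the subterm's stage to still begin with $\alpha$, so the induction hypothesis applies. Your additional remarks (commutation of $(-)^{-\alpha}$ with context extension for the binder cases, and preservation of the \textsc{(Gen)} freshness condition) are exactly the routine details the paper leaves implicit.
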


\noindent\textit{Proof of Theorem~\ref{thm:ml-sound}}.
The first part of the theorem 
\begin{quotation}
  If $ \Gamma $ is $ \varepsilon $-free and
$ \Gamma \vdash^{A} M : \tau $ and
$ \bred{A}{M}{R} $,
then $ R = v $ and $ v \in V^{A} $ for some $ v $
and $ \Gamma \vdash^{A} v : \tau $.
\end{quotation}
is proved by 
induction on the derivation of $ \bred{A}{M}{R} $ with case analysis
on the form of $M$ and the last rule used to derive  $ \bred{A}{M}{R} $.

We only show representative cases here.
\begin{enumerate}[$\bullet$]
\item Case $ M = x $
	\begin{enumerate}[$-$]
	\item Subcase \infrule{}{ \bred{A}{x}{x} } ($ A \neq \varepsilon $)
    \begin{enumerate}[(1)]
    \item Immediate.
    \end{enumerate}
  \item Subcase \infrule{}{ \bred{\varepsilon}{x}{\err} }
    \begin{enumerate}[(1)]
    \item We have $ x : \tau @ \varepsilon \in \Gamma $ because \( \Gamma \vdash^\varepsilon x : \tau \)
      but this contradicts the assumption that $ \Gamma $ is $ \varepsilon $-free.
    \end{enumerate}
  \end{enumerate}
\item Case $ M = \Prev{\alpha}{N} $
  \begin{enumerate}[$-$]
  \item Subcase
    \infrule{
      \bred{\varepsilon}{N}{\Next{\alpha}{N'}}
    }{
      \bred{\alpha}{\Prev{\alpha}{N}}{N'}
    }
    \begin{enumerate}[(1)]
    \item We have
      $ \Gamma \vdash^\varepsilon N : \Seal{\alpha}{\tau} $
      because \( \Gamma \vdash^\varepsilon \Prev{\alpha}{N} \).
    \item By the induction hypothesis,
      $ \Gamma \vdash^\varepsilon \Next{\alpha}{N'} : \Seal{\alpha}{\tau} $ and
      $ \Next{\alpha}{N'} \in V^\varepsilon $ .
    \item So we have
      $ \Gamma \vdash^\alpha N' : \tau $ and
      $ N' \in V^\alpha $.
    \end{enumerate}
  \item Subcase
    \infrule{
      \bred{A}{N}{N'}
    }{
      \bred{A \alpha}{\Prev{\alpha}{N}}{\Prev{\alpha}{N'}}
    } ($ A \neq \varepsilon $)
    \begin{enumerate}[(1)]
    \item We have
      $ \Gamma \vdash^A N : \Seal{\alpha}{\tau} $
      because \( \Gamma \vdash^A \Prev{\alpha}{N} : \tau \).
    \item By the induction hypothesis,
      $ \Gamma \vdash^A N' : \Seal{\alpha}{\tau} $ and
      $ N' \in V^A $ .
    \item Therefore,
      $ \Gamma \vdash^{A \alpha} \Prev{\alpha}{N'} : \tau $ by ($\Prev{}{}$) and
      $ \Prev{\alpha}{N'} \in V^{A \alpha} $.
    \end{enumerate}
  \item Subcase
    \infrule{
      \bred{\varepsilon}{N}{N'} \AND
      N' \neq \Next{\alpha}{N''}
    }{
      \bred{\alpha}{\Prev{\alpha}{N}}{\err}
    }
    \begin{enumerate}[(1)]
    \item We have
      $ \Gamma \vdash^\varepsilon N : \Seal{\alpha}{\tau} $
      because \( \Gamma \vdash^\varepsilon \Prev{\alpha}{N} : \tau \).
    \item By the induction hypothesis,
      $ \Gamma \vdash^\varepsilon N' : \Seal{\alpha}{\tau} $ and
      $ N' \in V^\varepsilon $, which is a contradiction.
%     \item Because $ R \neq \Next{\alpha}{N'} $ and $ R \in V^\varepsilon $,
%       we have $ R = n, \textbf{true}, \textbf{false}, \lam x : \sigma . N', \Lambda \beta . N' $ or $ \Next{\beta}{N'} $
%       ($ \beta \neq \alpha $) but none of them 
%       can have the type $ \Seal{\alpha}{\tau} $.
    \end{enumerate}
   \end{enumerate}
\item Case $ M = \Lambda \alpha . N $
  \begin{enumerate}[$-$]
  \item Subcase
    \infrule{
      \bred{A}{N}{N'}
    }{
      \bred{A}{\Lambda \alpha . N}{\Lambda \alpha . N'}
    }
    \begin{enumerate}[(1)]
    \item We have
      $ \tau = \forall \alpha . \tau' $ and
      $ \Gamma \vdash^A N : \tau' $ for some $ \tau' $ and
      $ \alpha \notin \FMV(\Gamma) \cup \FMV(A) $
      because \( \Gamma \vdash^A \Gen{\alpha}{N} : \tau \).
    \item By the induction hypothesis,
      $ \Gamma \vdash^A N' : \tau' $ and
      $ N' \in V^A $.
    \item Therefore,
      $ \Gamma \vdash^A \Lambda \alpha . N' : \tau $ by (\textsc{Gen}) and
      $ \Lambda \alpha N' \in V^A $.
    \end{enumerate}
  \end{enumerate}
\item Case $ M = N \app B $
  \begin{enumerate}[$-$]
  \item Subcase
    \infrule{
      \bred{\varepsilon}{N}{\Lambda \alpha . N'} \AND
      \bred{\varepsilon}{N' \Subst{\alpha := B}}{M}
    }{
      \bred{\varepsilon}{N \app B}{M}
    }
    \begin{enumerate}[(1)]
    \item We have
      $ \tau = \sigma \Subst{\alpha := B} $ and
      $ \Gamma \vdash^\varepsilon N : \forall \alpha . \sigma $
      for some $ \sigma $ and $ \alpha $,
      because \( \Gamma \vdash^\varepsilon N \app B : \tau \).
    \item By the induction hypothesis,
      $ \Gamma \vdash^\varepsilon
        \Lambda \alpha . N' : \forall \alpha . \sigma $.
      Therefore
      $ \Gamma \vdash^\varepsilon N' : \sigma $ and
      $ \alpha \notin \FMV(\Gamma) $.
    \item By Substitution Lemma (Lemma~\ref{lem:subst})
      and $ \alpha \notin \FMV(\Gamma) $,
      $ \Gamma \vdash^\varepsilon N' \Subst{\alpha := B} : \tau $.
    \item By the induction hypothesis,
      $ \Gamma \vdash^\varepsilon M : \tau $ and
      $ M \in V^\varepsilon $.
    \end{enumerate}
  \item Subcase
    \infrule{
      \bred{A}{N}{N'}
    }{
      \bred{A}{N \app B}{N' \app B}
    } ($ A \neq \varepsilon $)
    \begin{enumerate}[(1)]
    \item We have
      $ \tau = \sigma \Subst{\alpha := B} $ and
      $ \Gamma \vdash^A N : \forall \alpha . \sigma $
      for some $ \sigma $ and $ \alpha $,
      because \( \Gamma \vdash^A N \app B : \tau \).
    \item By the induction hypothesis,
      $ \Gamma \vdash^A N' : \forall \alpha . \sigma $ and
      $ N' \in V^A $.
    \item Therefore,
      $ \Gamma \vdash^A N' \app B : \tau $ by (\textsc{Ins}) and
      $ N' \app B \in V^A $.
    \end{enumerate}
  \item Subcase
    \infrule{
      \bred{\varepsilon}{N}{N'} \AND
      N' \neq \Lambda \alpha . N''
    }{
      \bred{\varepsilon}{N \app B}{\err}
    }
    \begin{enumerate}[(1)]
    \item We have
      $ \tau = \sigma \Subst{\alpha := B} $ and
      $ \Gamma \vdash^\varepsilon N : \forall \alpha . \sigma $
      for some $ \sigma $ and $ \alpha $,
      because \( \Gamma \vdash^\varepsilon N \app B : \tau \).
    \item By the induction hypothesis,
      $ \Gamma \vdash^\varepsilon N' : \forall \alpha . \sigma $ and
      $ N' \in V^A $, which is a contradiction.
%     \item Because $ N' \neq \Lambda \alpha . R $ and $ N' \in V^\varepsilon $,
%       $ N' = n, \textbf{true}, \textbf{false}, \lam x : \tau' . R $ or $ \Next{\beta}{R} $.
%       But none of them can have the type $ \forall \alpha . \sigma $.
    \end{enumerate}
   \end{enumerate}
\end{enumerate}

Now, we prove the second part.
% Assume $ \Gamma \vdash^\varepsilon M : \Seal{\alpha}{\tau} $ and
% $ \Gamma $ is $ \varepsilon $-free and
% $ \bred{\varepsilon}{M}{v} $.
By the first part,
$ \Gamma \vdash^\varepsilon v : \Seal{\alpha}{\tau} $ and
$ v \in V^\varepsilon $.
Therefore $ v = \Next{\alpha}{v'} $ and
$ v' \in V^\alpha $.
From the typing rules, we have
$ \Gamma \vdash^\alpha v' : \tau $.
Then, we have $ \Gamma^{- \alpha} \vdash^\varepsilon v' : \tau $
by Lemma~\ref{lem:sound2}. \qed

\subsection{Proof of Theorem~\ref{thm:erasure}}
\label{sec:proof-erasure}
We first prove type soundness for the new type system
(Lemma~\ref{lem:staged-preservation}).  Although the new type system
identifies a subset of well-typed \(\sname\) terms, it has to be
proved again, since we want to gurantee that the evaluation of a term
typed under the new type system results in (if converges) a value that
is also typed under it.  We start with proving various substitution lemmas.

% two kinds of Substitution Lemmas (for substitution of terms and transition variables) and
% Preservation.

% First we show Substitution Lemma of terms and transition variables.
% It is easy to show Substitution Lemma of terms.

\begin{lem}[Term Substitution Preserves Typing]
\label{lem:subst-staged-transition}
If \( \Gamma, x : \sigma @ B; \Delta \svdash^A M : \tau \) and
\( \Gamma; \Delta \svdash^B M : \sigma \), then \( \Gamma; \Delta \svdash^A M \Subst{x := N} : \tau \).
\begin{proof}
Similar to the proof of Lemma~\ref{lem:subst}.
\end{proof}
\end{lem}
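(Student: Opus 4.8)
The plan is to prove the lemma by induction on the derivation of $\Gamma, x : \sigma @ B; \Delta \svdash^{A} M : \tau$, with a case analysis on the last rule applied, exactly mirroring the proof of Lemma~\ref{lem:subst} (and Lemma~\ref{lem:ml-subst}). The observation that makes this essentially a transcription of those earlier proofs is that substituting a term for a \emph{term} variable alters neither the transition environment $\Delta$, nor any transition, nor any type occurring in the derivation. Hence every well-formedness side condition carried by the new rules of Figure~\ref{fig:staged-typing} -- namely $\svdash \Delta$, $\Delta \svdash A$, $\Delta \svdash^{A} \tau$ (Figure~\ref{fig:staged-type}), and $\Delta \svdash \Gamma$ -- is inherited almost for free: the transition side of the judgment is literally unchanged, and the only change to the type environment is that the declaration $x : \sigma @ B$ may disappear, which can only make $\Delta \svdash \Gamma$ easier to establish. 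As in the earlier substitution lemmas, one also silently appeals to weakening of the type environment by the bound variables $y$ and $f$ (which, by the variable convention, are distinct from $x$ and from the free variables of $N$).

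Concretely, I would dispatch the cases as follows. For $M = x$: if $x$ is the substituted variable then $\tau = \sigma$, $A = B$, $M\Subst{x:=N} = N$, and the goal is precisely the second hypothesis $\Gamma; \Delta \svdash^{B} N : \sigma$ (whose side condition $\Delta \svdash \Gamma$ follows from the side condition $\Delta \svdash \Gamma, x:\sigma@A$ of \textsc{(S-Var)}); if $x$ is some other variable $y$, then $M\Subst{x:=N} = y$ and the same instance of \textsc{(S-Var)} applies verbatim. For $M = n$, $M = bv$, $M = M_1 \Diamond M_2$, $M = \eIf{M_1}{M_2}{M_3}$, $M = \lambda y : \tau_0. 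M_0$, $M = M_1 M_2$, $M = \textbf{fix}\ f : \tau_0 \to \sigma_0. M_0$, $M = \Next{\alpha}{M_0}$, and $M = \Prev{\alpha}{M_0}$, I apply the induction hypothesis to the immediate subderivations and reassemble with the same rule, transporting the (unchanged) well-formedness premises along. For $M = \Ins{M_0}{B_0}$ (rule \textsc{(S-Ins1)}) and $M = \SIns{M_0}{\beta}$ (rule \textsc{(S-Ins2)}), the type-level substitutions $\tau_0\Subst{\alpha:=B_0}$ and $\tau_0\Subst{\alpha:=\beta}$ commute trivially with the term substitution, since the two operations act on disjoint syntactic categories, and $(\Ins{M_0}{B_0})\Subst{x:=N} = \Ins{M_0\Subst{x:=N}}{B_0}$, etc.

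The single place that needs a genuinely new (but entirely routine) ingredient is the rule \textsc{(S-Gen)}: here $M = \Lambda \alpha. M_0$ with premise $\Gamma, x:\sigma@B; \Delta, \alpha @ A B' \svdash^{A} M_0 : \tau_0$ and side condition $\alpha \notin \FMV(\Gamma, x:\sigma@B) \cup \FMV(\Delta)$. To invoke the induction hypothesis on this premise I need the hypothesis on $N$ relative to the \emph{extended} transition environment, i.e.\ $\Gamma; \Delta, \alpha @ A B' \svdash^{B} N : \sigma$, whereas I am only given it for $\Delta$. So I would first establish a weakening lemma stating that $\Gamma; \Delta \svdash^{C} P : \rho$ together with $\svdash \Delta, \alpha@D$ and $\alpha$ fresh imply $\Gamma; \Delta, \alpha @ D \svdash^{C} P : \rho$ (this in turn rests on the easy monotonicity of $\Delta \svdash D'$, $\Delta \svdash^{C} \rho$, and $\Delta \svdash \Gamma$ under extension of $\Delta$); by the variable convention $\alpha$ may be taken fresh for $N$ as well, so it applies. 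After weakening, the induction hypothesis yields $\Gamma; \Delta, \alpha@AB' \svdash^{A} M_0\Subst{x:=N} : \tau_0$, and re-applying \textsc{(S-Gen)} -- whose side condition still holds, neither $\Gamma$ nor $\Delta$ having changed -- gives $\Gamma; \Delta \svdash^{A} (\Lambda\alpha.M_0)\Subst{x:=N} : \forall \alpha@B'. \tau_0$, as required. I expect this transition-environment weakening step, and more generally the bookkeeping of the new well-formedness premises in the \textsc{(S-Gen)}, \textsc{(S-Ins1)}, and \textsc{(S-Ins2)} cases, to be the only mildly delicate part; everything else is a direct copy of the proof of Lemma~\ref{lem:subst}.
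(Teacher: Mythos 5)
Your proposal is correct and takes essentially the same route as the paper, which proves this lemma simply by redoing the induction of Lemma~\ref{lem:subst} over the new typing rules. The transition-environment weakening you isolate for the \textsc{(S-Gen)} case (together with the pointwise inheritance of the well-formedness side conditions) is exactly the routine extra bookkeeping the paper leaves implicit.
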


Next, we show that substitution for transition variables preserves
various kinds of well-formedness and typing.

% ,
% dividing the proof into a series of proofs of lemmas.
% The goal is Lemma~\ref{lem:transition-subst-term}.
% The ordering of lemmas is followed to the ordering of the definition of well-formedness.
\begin{lem}
\label{lem:transition-subst-transition}
If \( \Delta, \alpha @ B \svdash A \) and  \( \Delta \svdash B C \),
then \( \Delta \Subst{\alpha := C} \svdash A \Subst{\alpha := C} \).
\proof
By induction of the length of \( A \).
The base case is trivial because \( \Delta \svdash \varepsilon \)
for any \( \Delta \).
So, we show the induction step.

Let \( A = \alpha_1 \dots \alpha_n \).
Assume \( \Delta, \alpha @ B \svdash A \).
It is easy to see that \( \Delta, \alpha @ B \svdash \alpha_1 \dots \alpha_n \)
implies \( \Delta, \alpha @ B \svdash \alpha_1 \dots \alpha_{n - 1} \).
Then \( \Delta \Subst{\alpha := C} \svdash (\alpha_1 \dots \alpha_{n - 1})
\Subst{\alpha := C} \)
by the induction hypothesis.
We have \( \alpha_n @ \alpha_1 \dots \alpha_{n - 1} \in
(\Delta, \alpha @ B) \) from the definition of
\( \Delta, \alpha @ B \svdash A \).
The proof is divided into two cases.\medskip

\begin{asparaitem}
\item Case \( \alpha_n \neq \alpha \): It is the case that \( \alpha_n
  @ \alpha_1 \dots \alpha_{n - 1} \in \Delta \).  The conclusion is
  trivial, because \( \alpha_n @ (\alpha_1 \dots \alpha_{n - 1})
  \Subst{\alpha := C} \in \Delta \Subst{\alpha := C} \).\medskip
\item Case \( \alpha_n = \alpha \): It is the case that
\( \alpha_1 \dots \alpha_{n - 1} = B \).
Let \( C = \beta_1 \dots \beta_m \).
Because \( \Delta \svdash B C \), 
we have
\[
  \Delta_0 = \{ \alpha_1 @ \varepsilon, \alpha_2 @ \alpha_1, \dots,
    \alpha_{n - 1} @ \alpha_1 \dots \alpha_{n - 2}, \beta_1 @ B, \beta_2 @ B \beta_1, \dots,
    \beta_m @ B \beta_1 \dots \beta_{m - 1} \}
  \subseteq \Delta.
\]
Let \( \Delta_1 = \Delta - \Delta_0 \).
Note that \( \alpha \notin \{ \alpha_1, \dots, \alpha_{n - 1}, \beta_1, \dots, \beta_m \}
= \FMV(\Delta_0) \) and \( \Delta_0 \svdash B C \).
Therefore we have \( \Delta \Subst{\alpha := C} = \Delta_0 \Subst{\alpha := C} \cup \Delta_1 \Subst{\alpha := C}
= \Delta_0 \cup \Delta_1 \Subst{\alpha := C} \).
Because \( (B \alpha) \Subst{\alpha := C} = B C \) and \( \Delta_0 \svdash B C \),
we obtain \( \Delta \Subst{\alpha := C} \svdash A \Subst{\alpha := C} \).
\qed
\end{asparaitem}
\end{lem}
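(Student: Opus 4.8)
The plan is to argue by induction on the length $n$ of the transition $A = \alpha_1 \cdots \alpha_n$. The base case $n = 0$ is immediate, since $\Delta' \svdash \varepsilon$ holds for every transition environment $\Delta'$, in particular for $\Delta \Subst{\alpha := C}$. Throughout I will use two elementary facts, both immediate from the definition of well-formed transitions: every prefix of a transition well formed under some $\Delta'$ is again well formed under $\Delta'$; and no variable occurs twice in a well-formed transition, because $\Delta'$ assigns a unique stage to each variable.

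For the inductive step, from $\Delta, \alpha @ B \svdash \alpha_1 \cdots \alpha_n$ we obtain $\Delta, \alpha @ B \svdash \alpha_1 \cdots \alpha_{n-1}$, so the induction hypothesis gives $\Delta \Subst{\alpha := C} \svdash (\alpha_1 \cdots \alpha_{n-1}) \Subst{\alpha := C}$; it then remains to account for the last variable $\alpha_n$, for which the definition supplies $\alpha_n @ \alpha_1 \cdots \alpha_{n-1} \in \Delta, \alpha @ B$. I would split on whether $\alpha_n = \alpha$. If $\alpha_n \neq \alpha$, then this declaration already lies in $\Delta$, hence $\alpha_n @ (\alpha_1 \cdots \alpha_{n-1}) \Subst{\alpha := C} \in \Delta \Subst{\alpha := C}$; since $\alpha_n$ is untouched by the substitution, $(\alpha_1 \cdots \alpha_{n-1}) \Subst{\alpha := C}$ is exactly the prefix of $A \Subst{\alpha := C}$ preceding its last variable $\alpha_n$, so combining this with the induction hypothesis yields $\Delta \Subst{\alpha := C} \svdash A \Subst{\alpha := C}$.

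The delicate case is $\alpha_n = \alpha$. Since the only declaration of $\alpha$ in $\Delta, \alpha @ B$ is $\alpha @ B$, the membership $\alpha @ \alpha_1 \cdots \alpha_{n-1} \in \Delta, \alpha @ B$ forces $\alpha_1 \cdots \alpha_{n-1} = B$. One then checks $\alpha \notin \FMV(B)$ (else $\alpha$ would occur twice in the well-formed transition $A$, once as $\alpha_n$ and once inside $B$) and $\alpha \notin \FMV(C)$ (from $\Delta \svdash B C$, together with the fact that $\alpha$ is not declared in $\Delta$), so that $A \Subst{\alpha := C} = B C$; it therefore suffices to prove $\Delta \Subst{\alpha := C} \svdash B C$. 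From $\Delta \svdash B C$ I would extract the finite sub-environment $\Delta_0 \subseteq \Delta$ that the definition forces to be present, namely one declaration $\gamma @ (\text{the prefix of } B C \text{ before } \gamma)$ for each variable $\gamma$ of $B C$. Then $\Delta_0 \svdash B C$ by construction, and $\FMV(\Delta_0) = \FMV(B C)$ does not contain $\alpha$, so $\Delta_0 = \Delta_0 \Subst{\alpha := C} \subseteq \Delta \Subst{\alpha := C}$, whence $\Delta \Subst{\alpha := C} \svdash B C = A \Subst{\alpha := C}$.

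I expect this last case to be the main obstacle, not because it is deep but because it requires combining several small bookkeeping facts that are easy to overlook: that $\alpha$ can occur in $A$ only immediately after $B$, that $\alpha$ occurs in neither $B$ nor $C$, and that the well-formedness of $B C$ is witnessed by declarations mentioning only variables of $B C$ and hence survives the substitution verbatim. Once these points are pinned down, the induction closes routinely, with the auxiliary remark that a transition well formed under $\Delta$ and not mentioning $\alpha$ is automatically well formed under $\Delta \Subst{\alpha := C}$.
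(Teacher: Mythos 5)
Your proposal is correct and follows essentially the same route as the paper's proof: induction on the length of $A$, splitting on whether the last variable is $\alpha$, and in the critical case extracting the finite sub-environment $\Delta_0 \subseteq \Delta$ witnessing $\Delta \svdash BC$, observing $\alpha \notin \FMV(\Delta_0)$ so that it survives the substitution verbatim. You merely spell out a few bookkeeping facts (that $\alpha$ occurs in neither $B$ nor $C$, and that no variable repeats in a well-formed transition) that the paper leaves implicit.
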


\begin{lem}
\label{lem:transition-subst-trans-env}
If \( \svdash \Delta, \alpha @ B \) and \( \Delta \svdash B C \),
then \( \svdash \Delta \Subst{\alpha := C} \).
\begin{proof}
Let \( \beta @ A \in \Delta \Subst{\alpha := C} \).
We show \( \Delta \Subst{\alpha := C} \svdash A \).

There exists some \( A' \) such that \( \beta @ A' \in \Delta \) and \( A = A' \Subst{\alpha := C} \)
because \( \beta @ A \in \Delta \Subst{\alpha := C} \).
Since \( \svdash \Delta, \alpha @ B \), we have \( \Delta, \alpha @ B \svdash A' \).
So by Lemma~\ref{lem:transition-subst-transition}, we have
\( \Delta \Subst{\alpha := C} \svdash A' \Subst{\alpha := C} \) as required.
\end{proof}
\end{lem}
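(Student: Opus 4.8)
The plan is to unfold the definition of well-formedness of a transition environment and reduce the claim immediately to Lemma~\ref{lem:transition-subst-transition}. First I would record the (purely notational) fact that, since the subjects of the declarations in a transition environment are pairwise distinct, \( \alpha \) is not the subject of any declaration in \( \Delta \); hence \( \Delta\Subst{\alpha := C} \) is simply \( \{\, \beta @ (A\Subst{\alpha := C}) \mid \beta @ A \in \Delta \,\} \). By the definition of \( \svdash \Delta\Subst{\alpha := C} \), it then suffices to take an arbitrary declaration \( \beta @ A \in \Delta \) and show that the stage \( A\Subst{\alpha := C} \) is well formed under \( \Delta\Subst{\alpha := C} \), i.e.\ \( \Delta\Subst{\alpha := C} \svdash A\Subst{\alpha := C} \).

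For a fixed such \( \beta @ A \), I would proceed as follows. Because \( \beta @ A \in \Delta \subseteq \Delta, \alpha @ B \) and \( \svdash \Delta, \alpha @ B \) holds by assumption, the definition of a well-formed transition environment directly yields \( \Delta, \alpha @ B \svdash A \). Applying Lemma~\ref{lem:transition-subst-transition} to this stage \( A \) --- using \( \Delta, \alpha @ B \svdash A \) together with the second hypothesis \( \Delta \svdash B C \) of the present lemma --- gives \( \Delta\Subst{\alpha := C} \svdash A\Subst{\alpha := C} \), which is exactly what is required. As \( \beta @ A \) was arbitrary, this establishes \( \svdash \Delta\Subst{\alpha := C} \).

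I do not expect a genuine obstacle here: the real combinatorial work --- the induction on the length of \( A \) with the case split on whether the final transition variable of \( A \) is \( \alpha \), and the accompanying decomposition of \( \Delta \) into the part declaring the variables of \( B \) and \( C \) and its complement --- has already been carried out in Lemma~\ref{lem:transition-subst-transition}, which I may assume. The only subtlety to keep in mind is that the second hypothesis must be the non-circular \( \Delta \svdash B C \) rather than \( \Delta, \alpha @ B \svdash B C \); this is precisely the shape in which Lemma~\ref{lem:transition-subst-transition} is stated, so the appeal to it is immediate.
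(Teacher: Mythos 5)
Your proposal is correct and follows essentially the same route as the paper: reduce the claim to showing each declared stage of \( \Delta\Subst{\alpha := C} \) is well formed, obtain \( \Delta, \alpha @ B \svdash A' \) from \( \svdash \Delta, \alpha @ B \), and conclude by Lemma~\ref{lem:transition-subst-transition} with the hypothesis \( \Delta \svdash B C \). The only difference is cosmetic bookkeeping (noting that \( \alpha \) is not a subject of \( \Delta \) so declarations of \( \Delta\Subst{\alpha := C} \) are exactly the substituted ones), which the paper handles by simply picking a preimage declaration \( \beta @ A' \in \Delta \).
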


\begin{lem}
\label{lem:transition-subst-type}
If \( \Delta, \alpha @ B \svdash^A \tau \) and \( \Delta \svdash B C \),
then \( \Delta \Subst{\alpha := C} \svdash^{A \Subst{\alpha := C}} \tau \Subst{\alpha := C} \).
\proof
By induction on the structure of the type \( \tau \).\medskip

\begin{asparaitem}
\item Case \( \tau = \textbf{int} \): Because \( \Delta, \alpha @ B
  \svdash^A \textbf{int} \), we have \( \Delta, \alpha @ B \svdash A
  \).  So by Lemma~\ref{lem:transition-subst-transition}, we obtain \(
  \Delta \Subst{\alpha := C} \svdash A \Subst{\alpha := C} \).
  Therefore we obtain \( \Delta \Subst{\alpha := C} \svdash^{A
    \Subst{\alpha := C}} \textbf{int} \) by applying
  \textsc{(ST-Base)}.  \medskip

\item Case \( \tau = \Seal{\beta}{\sigma} \):
Because \( \Delta, \alpha @ B \svdash^A \Seal{\beta}{\sigma} \),
we have \( \Delta, \alpha @ B \svdash^{A \beta} \sigma \).
By the induction hypothesis, we have \( \Delta \Subst{\alpha := C} \svdash^{(A \beta) \Subst{\alpha := C}}
\sigma \Subst{\alpha := C} \).
Therefore by applying \textsc{(ST-Code)} as many times as needed,
we obtain \( \Delta \Subst{\alpha := c} \svdash^{A \Subst{\alpha := C}}
\Seal{\beta \Subst{\alpha := C}}{(\sigma \Subst{\alpha := C})} \).
\medskip

\item Case \( \tau = \forall \beta @ A'. \sigma \):
Because \( \Delta, \alpha @ B \svdash^A \forall \beta @ A'. \sigma \),
we have \( \Delta, \alpha @ B, \beta @ A A' \svdash^A \sigma \).
By the induction hypothesis, we have
\( \Delta \Subst{\alpha := C}, \beta @ (A A') \Subst{\alpha := C} \svdash^{A \Subst{\alpha := C}}
\sigma \Subst{\alpha := C} \).
Hence by applying \textsc{(ST-Univ)},
we obtain \( \Delta \Subst{\alpha := C} \svdash^{A \Subst{\alpha := C}}
\forall \beta @ (A' \Subst{\alpha := C}). (\sigma \Subst{\alpha := C}) \). \qed
\end{asparaitem}
\end{lem}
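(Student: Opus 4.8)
The plan is to prove Lemma~\ref{lem:transition-subst-type} by structural induction on \(\tau\) — equivalently, by induction on the derivation of \(\Delta, \alpha @ B \svdash^A \tau\) — with \(\Delta\), \(A\), \(B\), and \(C\) all left universally quantified, so that the induction hypothesis can be reinstantiated with a larger transition environment and a different stage in the quantifier case. In the base case \(\tau = b\) with \(b \in \{\textbf{int}, \textbf{bool}\}\), rule \textsc{(ST-Base)} supplies only \(\Delta, \alpha @ B \svdash A\); I would invoke Lemma~\ref{lem:transition-subst-transition} with the hypothesis \(\Delta \svdash B C\) to get \(\Delta \Subst{\alpha := C} \svdash A \Subst{\alpha := C}\) and then re-apply \textsc{(ST-Base)}, using \(b \Subst{\alpha := C} = b\). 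For \(\tau = \sigma_1 \to \sigma_2\) I would invert \textsc{(ST-Imp)}, apply the induction hypothesis to each premise, and recombine with \textsc{(ST-Imp)}.

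For \(\tau = \Seal{\beta}{\sigma}\), inverting \textsc{(ST-Code)} gives \(\Delta, \alpha @ B \svdash^{A\beta} \sigma\), the induction hypothesis gives \(\Delta\Subst{\alpha:=C} \svdash^{(A\beta)\Subst{\alpha:=C}} \sigma\Subst{\alpha:=C}\), and I reassemble with \textsc{(ST-Code)}. The only wrinkle is that when \(\beta = \alpha\) the stage \((A\beta)\Subst{\alpha:=C}\) equals \((A\Subst{\alpha:=C})\,C\), so the single \(\Seal{\alpha}{\cdot}\) turns into \(\Seal{C}{\cdot}\); I would then apply \textsc{(ST-Code)} once for each transition variable in \(C\), which matches the convention \(\Seal{AB}{\tau}=\Seal{A}{\Seal{B}{\tau}}\) and the definition of \((\Seal{\beta}{\sigma})\Subst{\alpha:=C}\). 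The quantifier case \(\tau = \forall \beta @ A'.\sigma\) is the delicate one. Using tacit renaming I would first assume \(\beta \notin \{\alpha\} \cup \FMV(A)\cup\FMV(B)\cup\FMV(C)\cup\FMV(\Delta)\). Inverting \textsc{(ST-Univ)} yields \(\Delta, \alpha @ B, \beta @ AA' \svdash^A \sigma\). To apply the induction hypothesis with base transition environment \(\Delta, \beta @ AA'\) I need \(\Delta, \beta @ AA' \svdash B C\); this follows from \(\Delta \svdash B C\) because well-formedness of a transition is monotone in the transition environment (enlarging it can only make more of the required memberships hold). The hypothesis then produces \((\Delta, \beta @ AA')\Subst{\alpha:=C} \svdash^{A\Subst{\alpha:=C}} \sigma\Subst{\alpha:=C}\); since \(\beta\neq\alpha\) this environment is \(\Delta\Subst{\alpha:=C}, \beta @ (AA')\Subst{\alpha:=C}\), and since substitution distributes over concatenation \((AA')\Subst{\alpha:=C} = (A\Subst{\alpha:=C})(A'\Subst{\alpha:=C})\). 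Applying \textsc{(ST-Univ)} gives \(\Delta\Subst{\alpha:=C}\svdash^{A\Subst{\alpha:=C}} \forall\beta@(A'\Subst{\alpha:=C}).(\sigma\Subst{\alpha:=C})\), which by the definition of substitution on \(\forall\)-types is exactly \((\forall\beta@A'.\sigma)\Subst{\alpha:=C}\).

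I expect the main obstacle to be organizational rather than mathematical. The crucial points are: (i) phrasing the induction so that \(\Delta\), the stage, and \(B\), \(C\) are all quantified, so the hypothesis can be applied with \(\Delta, \beta @ AA'\) in place of \(\Delta\) and stage \(A\beta\) in place of \(A\); (ii) discharging the freshness side conditions on the bound variable \(\beta\) through the capture-avoidance convention; and (iii) the small observation that \(\Delta \svdash A\) is monotone in \(\Delta\), so no extra hypothesis is needed to keep \(BC\) well-formed after enlarging the environment. Beyond these, the argument is routine bookkeeping about free transition variables and the distribution of substitution over concatenation of transitions, together with the base-case appeal to Lemma~\ref{lem:transition-subst-transition}.
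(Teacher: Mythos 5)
Your proposal is correct and follows essentially the same route as the paper's proof: structural induction on \( \tau \) with \( \Delta, A, B, C \) quantified, Lemma~\ref{lem:transition-subst-transition} for the base case, repeated \textsc{(ST-Code)} applications to absorb \( \beta = \alpha \) being replaced by the sequence \( C \), and re-instantiating the hypothesis with \( \Delta, \beta @ AA' \) in the \( \forall \) case. The only difference is that you make explicit two points the paper leaves tacit (freshness of \( \beta \) via renaming, and monotonicity of \( \Delta \svdash BC \) in \( \Delta \)), both of which are correct.
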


\begin{lem}
\label{lem:transition-subst-env}
If \( \Delta, \alpha @ B \svdash \Gamma \) and \( \Delta \svdash B C \),
then \( \Delta \Subst{\alpha := C} \svdash \Gamma \Subst{\alpha := C} \).
\begin{proof}
Take \(x:\tau@A \in \Gamma\).  By the well-formedness of \(\Gamma\),
we have \(\Delta, \alpha@B \svdash^A \tau\).
By Lemma~\ref{lem:transition-subst-type},
we have \( \Delta \Subst{\alpha := C} \svdash^{A \Subst{\alpha := C}} \tau \Subst{\alpha := C} \), as required.
\end{proof}
\end{lem}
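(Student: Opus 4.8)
The plan is to unfold the definition of the judgment \( \Delta \svdash \Gamma \): by definition this holds exactly when \( \svdash \Delta \) and, for every declaration \( x : \tau @ A \in \Gamma \), we have \( \Delta \svdash^A \tau \). So the goal \( \Delta \Subst{\alpha := C} \svdash \Gamma \Subst{\alpha := C} \) splits into two independent obligations, and I expect to discharge each of them by appealing to an already-established substitution lemma, so that this lemma is essentially a repackaging step.

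First I would handle well-formedness of the transition environment. From the hypothesis \( \Delta, \alpha @ B \svdash \Gamma \) one extracts \( \svdash \Delta, \alpha @ B \); combined with \( \Delta \svdash B C \), Lemma~\ref{lem:transition-subst-trans-env} yields \( \svdash \Delta \Subst{\alpha := C} \) directly.

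Second, I would take an arbitrary declaration in \( \Gamma \Subst{\alpha := C} \). Since substitution on contexts is applied componentwise, such a declaration has the form \( x : \tau \Subst{\alpha := C} @ A \Subst{\alpha := C} \) for some \( x : \tau @ A \in \Gamma \). Well-formedness of \( \Gamma \) under \( \Delta, \alpha @ B \) gives \( \Delta, \alpha @ B \svdash^A \tau \), and then Lemma~\ref{lem:transition-subst-type} (again using \( \Delta \svdash B C \)) gives \( \Delta \Subst{\alpha := C} \svdash^{A \Subst{\alpha := C}} \tau \Subst{\alpha := C} \), which is exactly the required well-formedness of this declaration at its (substituted) stage. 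As the declaration was arbitrary, the two obligations together establish the claim.

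There is no real obstacle remaining at this level: the genuine work — showing that transition substitution respects well-formedness of transitions, transition environments, and types, in particular the case analysis on whether the final variable of a transition is the one being substituted (Lemma~\ref{lem:transition-subst-transition})—has already been done in the preceding lemmas. The only subtlety worth flagging is that one must apply the definition of \( \Gamma \Subst{\alpha := C} \) so that the stage annotation \( A \) is itself substituted, since it is precisely \( A \Subst{\alpha := C} \) that appears as the stage in the conclusion of Lemma~\ref{lem:transition-subst-type}.
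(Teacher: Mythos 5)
Your proof is correct and follows essentially the same route as the paper's: apply Lemma~\ref{lem:transition-subst-type} to each declaration \(x:\tau@A \in \Gamma\) using \(\Delta \svdash BC\). The only difference is that you also explicitly discharge the well-formedness of \(\Delta\Subst{\alpha:=C}\) via Lemma~\ref{lem:transition-subst-trans-env}, a step the paper's one-line proof leaves implicit.
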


Now we show the substitution lemma for transition variables.
  However, transition substitution does \emph{not} preserve typing in
  general, and, even worse, transition substitution is undefined in
  some cases.  This is because of the distinction between two kinds of
  applications \( \SIns{M}{\alpha} \) and \( \Ins{M}{A} \).  For
  example, if we substitute \( A \) for \( \alpha \) into \(
  \SIns{M}{\alpha} \), the result is \( \SIns{M}{A} \), which is not a
  valid term.

So, the substitution lemma we will prove is a restricted one.
Transition substitution preserves typing in the following two cases:
\begin{enumerate}[(1)]
\item Substitution of a single transition variable, \( M \Subst{\alpha := \beta} \) for any \( M \).
\item Substitution of an arbitrary transition \(A\) at the initial
  stage in a value, \( v \Subst{\alpha := A} \),
  where \( v \) is a value (an element of \(V^\varepsilon\)) and \(
  \alpha @ \varepsilon \).
\end{enumerate}
They are what we need to prove type soundness.

\begin{lem}
\label{lem:transition-subst-term}\ 
\begin{enumerate}[\em(1)]
\item If \( \Gamma; \Delta, \alpha @ B \svdash^A M : \tau \) and \( \beta @ B \in \Delta \),
then \( \Gamma \Subst{\alpha := \beta}; \Delta \Subst{\alpha := \beta} \svdash^{A \Subst{\alpha := \beta}}
M \Subst{\alpha := \beta} : \tau \Subst{\alpha := \beta} \).
\item 
For any \( v \in V^A \), 
if \( \Gamma; \Delta, \alpha @ \varepsilon \svdash^A v : \tau \) and \( \Delta \svdash B \), then \( \Gamma \Subst{\alpha := B}; \Delta \Subst{\alpha := B} \svdash^{A \Subst{\alpha := B}}
v \Subst{\alpha := B} : \tau \Subst{\alpha := B} \).
\end{enumerate}
\proof
We can prove (1) by induction on the type derivation.
\begin{asparaitem}
\item Case \textsc{(S-Var)}: We have \( M = x \) and \( \Delta, \alpha @ B \svdash \Gamma, x : \tau @ A \).
By Lemma~\ref{lem:transition-subst-env}, we have \( \Delta \Subst{\alpha := \beta} \svdash
\Gamma \Subst{\alpha := \beta}, x : \tau \Subst{\alpha := \beta} @ A \Subst{\alpha := \beta} \).
By applying the \textsc{(S-Var)} rule, we obtain \\
\( \Gamma \Subst{\alpha := \beta}; \Delta \Subst{\alpha := \beta} \svdash^{A \Subst{\alpha := \beta}}
x : \tau \Subst{\alpha := \beta}\).
\\
\item Case \textsc{(S-Gen)}: We have \( M = \Gen{\gamma}{M'} \) and \( \tau = \forall \gamma @ C. \sigma \)
and \( \Gamma; \Delta, \alpha @ B, \gamma @ AC \svdash^A M' : \sigma \).  By the induction hypothesis, we have
\( \Gamma \Subst{\alpha := \beta}; \Delta \Subst{\alpha := \beta}, \gamma @ (AC) \Subst{\alpha := \beta}
\svdash^{A \Subst{\alpha := \beta}} M' \Subst{\alpha := \beta} : \sigma \Subst{\alpha := \beta} \)
(note that \( \alpha \neq \gamma \)).
By applying the \textsc{(S-Gen)} rule, we obtain \\
\( \Gamma \Subst{\alpha := \beta}; \Delta \Subst{\alpha := \beta} \svdash^{A \Subst{\alpha := \beta}}
\Gen{\gamma}{(M' \Subst{\alpha := \beta})} : \forall \gamma @ (C \Subst{\alpha := \beta}). \sigma \Subst{\alpha := \beta} \).
\\
\item Case \textsc{(S-Ins2)}: We have \( M = \SIns{M'}{\gamma_1} \),
and there is some \( \sigma \) and \( \gamma_0 \) and \( C \) such that
\( \tau = \sigma \Subst{\gamma_0 := \gamma_1} \) and
\( \Gamma; \Delta, \alpha @ B \svdash^A M' : \forall \gamma_0 @ C. \sigma \) and
\( \gamma_1 @ A C \in \Delta \).
We can assume without loss of generality \( \gamma_0 \notin \FMV(C) \cup \{ \alpha, \beta \} \).
By the induction hypothesis, we have
\( \Gamma \Subst{\alpha := \beta}; \Delta \Subst{\alpha := \beta} \svdash^{A \Subst{\alpha := \beta}}
M' \Subst{\alpha := \beta} : (\forall \gamma_0 @ C. \sigma) \Subst{\alpha := \beta} \).
Since \( (\forall \gamma_0 @ C. \sigma) \Subst{\alpha := \beta}
  = \forall \gamma_0 @ (C \Subst{\alpha := \beta}). (\sigma \Subst{\alpha := \beta}) \)
and \( \gamma_1 \Subst{\alpha := \beta} @ (AC) \Subst{\alpha := \beta} \in \Delta \Subst{\alpha := \beta} \),
we can apply \textsc{(S-Ins2)} to obtain
\( \Gamma \Subst{\alpha := \beta}; \Delta \Subst{\alpha := \beta} \svdash^{A \Subst{\alpha := \beta}}
(\SIns{M'}{\gamma_1}) \Subst{\alpha := \beta} : \sigma \Subst{\alpha := \beta} \Subst{\gamma_0 := \gamma_1 \Subst{\alpha := \beta}} \).
Because \( \gamma_0 \notin \{ \alpha, \beta \} \), we have
\begin{eqnarray*}
  \sigma \Subst{\alpha := \beta} \Subst{\gamma_0 := \gamma_1 \Subst{\alpha := \beta}}
    & = &
      \sigma \Subst{\gamma_0 := \gamma_1} \Subst{\alpha := \beta} \\
    & = &
      \tau \Subst{\alpha := \beta}.
\end{eqnarray*}
\end{asparaitem}

The proof of (2) is the same as the proof of (1), except for the case \textsc{(S-Ins2)}.
We prove this case.  Let \( v = \SIns{v'}{\gamma_1} \).  To prove this case,
it is important to notice that \( \gamma_1 \neq \alpha \).
This fact can be shown by the following observation.
Assume that \( \gamma_1 = \alpha \).
Then, the stage \(A\) of \( v \) must be \( \varepsilon \)
because the stage of \( \alpha \) is \( \varepsilon \),
but  \( v \in V^\varepsilon\) contradicts the fact that \(v\)
is a transition application.
Therefore \( \gamma_1 \neq \alpha \).
The rest of the proof is the same as (1). \qed
\end{lem}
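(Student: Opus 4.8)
The plan is to prove both parts by induction on the typing derivation, using the four well-formedness--preservation lemmas already established (Lemmas~\ref{lem:transition-subst-transition}--\ref{lem:transition-subst-env}) to discharge the side conditions that appear on the type environments, on the transition environment \(\Delta\), and on stages, together with the term-substitution lemma (Lemma~\ref{lem:subst-staged-transition}) where needed. First I would record that the hypothesis of each auxiliary lemma is available: in case~(1), since \( \beta @ B \in \Delta \) and \( \Delta \) is well formed, we have \( \Delta \svdash B\beta \), so the auxiliary lemmas apply with \( C := \beta \); in case~(2) the hypothesis \( \Delta \svdash B \) is exactly what they need with \( C := B \).

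For part~(1), the routine cases --- the rules inherited from \(\sname\) (\textsc{App}, \textsc{Abs}, the constant rules, and the rules for \(\Next{}{}\) and \(\Prev{}{}\)) --- go through by applying the induction hypothesis to each premise and re-applying the same rule, appealing to the relevant well-formedness lemma for the side conditions; \textsc{(S-Var)} additionally uses Lemma~\ref{lem:transition-subst-env}. For \textsc{(S-Gen)}, \( M = \Gen{\gamma}{M'} \) with \( \tau = \forall \gamma @ C.\sigma \): I rename \( \gamma \) so that \( \gamma \notin \{\alpha,\beta\} \) and \( \gamma \) is fresh, push the substitution through the extended transition environment, re-apply \textsc{(S-Gen)}, and observe \( (\forall \gamma @ C.\sigma)\Subst{\alpha:=\beta} = \forall \gamma @ (C\Subst{\alpha:=\beta}).(\sigma\Subst{\alpha:=\beta}) \). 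The substantive cases are \textsc{(S-Ins1)} and \textsc{(S-Ins2)}. In \textsc{(S-Ins1)}, \( M = \Ins{M'}{C} \) with \( M' : \forall \gamma @ \varepsilon.\Seal{\gamma}{\sigma_0} \); since \( \varepsilon\Subst{\alpha:=\beta} = \varepsilon \) the ``\(@\,\varepsilon\)'' premise survives, \( \Delta\Subst{\alpha:=\beta}\svdash (A\Subst{\alpha:=\beta})(C\Subst{\alpha:=\beta}) \) follows from Lemma~\ref{lem:transition-subst-transition}, and \textsc{(S-Ins1)} re-applies. In \textsc{(S-Ins2)}, \( M = \SIns{M'}{\gamma_1} \); here the key observation is that a variable-for-variable substitution keeps \( \gamma_1\Subst{\alpha:=\beta} \) a single transition variable, so the substituted term is still well formed, \textsc{(S-Ins2)} re-applies, and the result type matches after the routine commutation \( \sigma\Subst{\gamma_0:=\gamma_1}\Subst{\alpha:=\beta} = \sigma\Subst{\alpha:=\beta}\Subst{\gamma_0:=\gamma_1\Subst{\alpha:=\beta}} \) (using freshness of \( \gamma_0 \)).

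For part~(2) the induction is the same, with the value restriction doing its work only in the \textsc{(S-Ins2)} case, where we now substitute an arbitrary transition \( B \) for \( \alpha @ \varepsilon \). If \( v = \SIns{v'}{\gamma_1} \), I would argue \( \gamma_1 \ne \alpha \): \textsc{(S-Ins2)} forces \( \gamma_1 @ AC \in \Delta \), so if \( \gamma_1 = \alpha \) then, since \( \alpha @ \varepsilon \in \Delta \) and transition environments are functional, \( AC = \varepsilon \) and hence \( A = \varepsilon \); but a transition application is not one of the forms of \( V^\varepsilon \), a contradiction. Therefore \( \gamma_1\Subst{\alpha:=B} = \gamma_1 \), the substituted term is well formed, \( v' \) is again a value so the induction hypothesis applies to it, and the case closes as in~(1).

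I expect the main obstacle to be conceptual rather than computational: transition substitution is not total on terms and does not preserve typability in general (substituting a compound \( A \) into \( \SIns{M}{\alpha} \) yields the ill-formed \( \SIns{M}{A} \)), so the real content of the lemma is pinning down the two regimes --- one variable for another, or an arbitrary transition into a value under the discipline \( \alpha @ \varepsilon \) --- in which this pathology provably never arises, which is precisely what the two \textsc{(S-Ins2)} sub-arguments above verify. The remaining work is bookkeeping: threading the well-formedness lemmas through the derivation and checking the standard substitution-commutation identities.
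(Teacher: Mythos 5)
Your proposal follows essentially the same route as the paper's proof: induction on the typing derivation, discharging the well-formedness side conditions via Lemmas~\ref{lem:transition-subst-transition}--\ref{lem:transition-subst-env} (with \(C:=\beta\) in part (1), justified by \(\beta@B\in\Delta\), and \(C:=B\) in part (2)), and isolating \textsc{(S-Ins2)} as the only case needing care, where in part (2) you argue \(\gamma_1\neq\alpha\) exactly as the paper does --- \(\gamma_1=\alpha\) would force the stage to be \(\varepsilon\), contradicting that \(V^\varepsilon\) contains no transition application. The additional details you spell out (the \textsc{(S-Ins1)} case, the commutation identity, and why the auxiliary lemmas apply) are refinements of, not departures from, the paper's argument.
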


Then we show the preservation of \( \svdash \) by evaluation.
\begin{lem}[Type Soundness (2)]
\label{lem:staged-preservation}
If \( \Gamma; \Delta \svdash^A M \) and
\( \bred{A}{M}{R} \), then  \( R = v^A\) and \( \Gamma; \Delta \svdash^A v^A \)
for some \(v^A \in V^A\).
\begin{proof}
% An easy induction on the derivation \( \bred{A}{M}{R} \),
% using Substitution Lemma (Lemma~\ref{lem:subst-staged-transition} and
% Lemma~\ref{lem:transition-subst-term}).
Similar to the proof of Theorem~\ref{thm:ml-sound}.
\end{proof}
\end{lem}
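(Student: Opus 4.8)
The plan is to re-run the induction from the proof of Theorem~\ref{thm:ml-sound} almost verbatim, but carrying the transition environment $\Delta$ and its well-formedness obligations through every case. Concretely, I would induct on the derivation of $\bred{A}{M}{R}$, splitting on the shape of $M$ and the last evaluation rule: in each case one inverts the staged typing judgment $\Gamma;\Delta\svdash^A M:\tau$ to obtain staged typings (and, via $\Delta\svdash\Gamma$, well-formedness facts) for the immediate subterms, applies the induction hypothesis to the sub-derivations of $\bred{}{}{}$, and reassembles a staged typing for $R$; the $\err$ cases are discharged exactly as in the proof of Theorem~\ref{thm:ml-sound} (e.g.\ $\bred{\varepsilon}{x}{\err}$ would force $x:\tau@\varepsilon\in\Gamma$, which is impossible once $\Gamma$ is assumed $\varepsilon$-free). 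Since $\Gamma;\Delta\svdash^A M:\tau$ implies $\Gamma\vdash^A M:\tau$, the parts of the conclusion saying that $R$ is not a stuck non-value and that $R$ has a shape belonging to $V^A$ can be quoted directly from Theorem~\ref{thm:ml-sound} (whose appendix proof in fact establishes the statement at an arbitrary stage $A$); what genuinely has to be redone here is the \emph{preservation of the staged judgment} $\Gamma;\Delta\svdash^A(\cdot):\tau$.

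The genuinely new ingredients are the substitution lemmas for the staged system. The $\beta$-case invokes Lemma~\ref{lem:subst-staged-transition}; the $\Next{}{}$/$\Prev{}{}$ cases only invert the $(\Next{}{})$/$(\Prev{}{})$ rules and never disturb $\Delta$; the $\Lambda$-case re-applies $\textsc{(S-Gen)}$ after strengthening $\Delta$ by $\alpha@AB$. The two instantiation constructs are where the care goes. For $\Ins{N}{B}$ evaluated at stage $\varepsilon$ (premises $\bred{\varepsilon}{N}{\Lambda\alpha.N'}$ and $\bred{\varepsilon}{N'\Subst{\alpha:=B}}{R}$): inverting $\textsc{(S-Ins1)}$ gives $\Gamma;\Delta\svdash^\varepsilon N:\forall\alpha@\varepsilon.\tau_0$ with $\tau_0=\Seal{\alpha}{\sigma}$ and $\Delta\svdash B$; the induction hypothesis plus inversion of $\textsc{(S-Gen)}$ give $\Gamma;\Delta,\alpha@\varepsilon\svdash^\varepsilon N':\tau_0$ with $\alpha\notin\FMV(\Gamma)\cup\FMV(\Delta)$ and $N'\in V^\varepsilon$; then Lemma~\ref{lem:transition-subst-term}(2)---transition substitution of an arbitrary transition into a value whose bound variable is declared at $\alpha@\varepsilon$---yields $\Gamma;\Delta\svdash^\varepsilon N'\Subst{\alpha:=B}:\tau_0\Subst{\alpha:=B}=\tau$, to which the induction hypothesis applies on the second premise. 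The parallel case for $\SIns{N}{\beta}$ instead uses Lemma~\ref{lem:transition-subst-term}(1), substitution of a single transition variable, which is total and unconditionally type-preserving; the non-$\varepsilon$ stage subcases of both constructs merely re-apply $\textsc{(S-Ins1)}$/$\textsc{(S-Ins2)}$ to the evaluated subterm. All the auxiliary well-formedness bookkeeping (that $\Delta$, $\Gamma$, and stages $A\Subst{\alpha:=B}$ remain well formed) is delegated to Lemmas~\ref{lem:transition-subst-transition}, \ref{lem:transition-subst-trans-env}, \ref{lem:transition-subst-type}, and \ref{lem:transition-subst-env}, which are already invoked inside Lemma~\ref{lem:transition-subst-term}.

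The step I expect to be the main obstacle is this instantiation case at stage $\varepsilon$, precisely because transition substitution is only partially type-preserving in the staged system---substituting an arbitrary $B$ for $\alpha$ can produce an ill-formed $\SIns{M}{B}$ or violate the $\alpha@A$ staging discipline. The resolution is to notice that, at stage $\varepsilon$, the body of the $\Lambda$ that $N$ evaluates to is a full value in $V^\varepsilon$ and the quantifier is $\forall\alpha@\varepsilon.\Seal{\alpha}{\sigma}$, so the restrictive hypotheses of Lemma~\ref{lem:transition-subst-term}(2) are exactly met---this is precisely why the rule $\textsc{(S-Ins1)}$ is restricted in the way it is. The one small lemma worth stating explicitly en route is that the value families are closed under transition substitution, i.e.\ $v\in V^A$ implies $v\Subst{\alpha:=B}\in V^{A\Subst{\alpha:=B}}$, so that the reconstructed result genuinely lies in $V^A$; with that in hand the remaining cases are routine transcriptions of the corresponding cases of Theorem~\ref{thm:ml-sound}.
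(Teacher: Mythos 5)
Your plan matches the paper's intended proof (which it leaves implicit as ``similar to Theorem~\ref{thm:ml-sound}''): the same induction on the evaluation derivation, discharging the error cases via $\varepsilon$-freeness, using Lemma~\ref{lem:subst-staged-transition} in the $\beta$/\textbf{fix} cases and the two parts of Lemma~\ref{lem:transition-subst-term} in the two instantiation cases, exactly as you describe. One caveat: your auxiliary claim that $v \in V^A$ implies $v\Subst{\alpha := B} \in V^{A\Subst{\alpha := B}}$ is both unnecessary---value-hood and typing of the final result come from the induction hypothesis applied to the second evaluation premise, just as in the proof of Theorem~\ref{thm:ml-sound}---and false in general (e.g.\ $x \app y \in V^{\alpha}$ but $(x \app y)\Subst{\alpha := \varepsilon} \notin V^{\varepsilon}$), so it should simply be dropped rather than proved.
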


We also have the counterpart of Lemma~\ref{lem:sound2} for the new type system.
% The last lemma is the typability of generated code fragments.
We define \( \Delta^{- \alpha} = \{ \beta @ B ~|~ \beta @ \alpha B \in \Delta \} \). %similar to the definition of \( \Gamma^{- \alpha} \).
\begin{lem}
\label{lem:staged-sound2}
If \( \Gamma; \Delta \svdash^{\alpha A} M : \tau \) and \( M \in
V^{\alpha A} \), then we have \( \Gamma^{- \alpha}; \Delta^{- \alpha}
\svdash^A M : \tau \).
\begin{proof}
Similar to the proof of Lemma~\ref{lem:sound2}.
\end{proof}
\end{lem}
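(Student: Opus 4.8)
The plan is to follow the proof of Lemma~\ref{lem:sound2} almost verbatim, arguing by induction on the structure of the value \( M \in V^{\alpha A} \), but with one extra ingredient: the side conditions of the staged typing rules (Figure~\ref{fig:staged-typing}) mention well-formedness of transition environments, types, and contexts, so these have to be re-established after applying \( (\cdot)^{-\alpha} \). Hence I would first record how \( (\cdot)^{-\alpha} \) interacts with the well-formedness judgments, namely: (i) if \( \svdash \Delta \) then \( \svdash \Delta^{-\alpha} \); (ii) if \( \Delta \svdash \alpha A \) then \( \Delta^{-\alpha} \svdash A \); (iii) if \( \Delta \svdash^{\alpha A} \tau \) then \( \Delta^{-\alpha} \svdash^A \tau \); and (iv) if \( \Delta \svdash \Gamma \) then \( \Delta^{-\alpha} \svdash \Gamma^{-\alpha} \). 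Facts (i) and (ii) are immediate from the definitions: a declaration \( \beta @ B \) survives in \( \Delta^{-\alpha} \) exactly when \( \beta @ \alpha B \in \Delta \), and since \( \alpha B \) (resp.\ \( \alpha A \)) is well formed under \( \Delta \), all the prefix-indexed declarations needed to witness well-formedness of \( B \) (resp.\ \( A \)) are again present in \( \Delta^{-\alpha} \). Fact (iii) follows by a routine induction on \( \tau \), using (ii) in the base-type case and, for \( \forall \beta @ B. \tau_0 \), the observation that \( (\Delta, \beta @ \alpha A B)^{-\alpha} = \Delta^{-\alpha}, \beta @ A B \); fact (iv) is then immediate from (i) and (iii).

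With these in hand I would run the induction on \( M \). Since \( \alpha A \neq \varepsilon \), \( M \) is a variable, a constant, a \( \lambda \)- or \textbf{fix}-abstraction, an application, or a term headed by \( \Next{}{} \), \( \Prev{}{} \), \( \Lambda \), \( \Ins{}{}{} \), or \( \SIns{}{}{} \). In every case I would invert the corresponding rule of Figure~\ref{fig:staged-typing}, apply the induction hypothesis to the immediate subvalues at the appropriately shifted stage (for \( \Lambda \alpha'. M_0 \) the hypothesis is used with \( \Delta \) extended by \( \alpha' @ \alpha A B \), which \( (\cdot)^{-\alpha} \) sends to \( \alpha' @ A B \)), transport the well-formedness premises via (i)--(iv), and re-apply the same rule under \( \Gamma^{-\alpha}; \Delta^{-\alpha} \) at stage \( A \). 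The cases needing attention to the stage index are \( M = \Next{\beta}{M_0} \), where the hypothesis is used at \( \alpha A \beta \) and produces the judgment at \( A \beta \), and \( M = \Prev{\beta}{M_0} \), where---as in the proof of Lemma~\ref{lem:sound2}---membership in \( V^{\alpha A} \) forces \( \alpha A = A' \beta \) with \( A' \neq \varepsilon \), hence \( A = A'' \beta \) and \( M_0 \in V^{\alpha A''} \), so the hypothesis applies at \( \alpha A'' \) and rule \( (\Prev{}{}) \) re-derives the judgment at \( A'' \beta = A \). For \( \SIns{M_0}{\gamma} \) the premise \( \gamma @ \alpha A B \in \Delta \) of \textsc{(S-Ins2)} becomes \( \gamma @ A B \in \Delta^{-\alpha} \) by the very definition of \( (\cdot)^{-\alpha} \), and for \( \Ins{M_0}{B} \) the premise \( \Delta \svdash \alpha A B \) of \textsc{(S-Ins1)} becomes \( \Delta^{-\alpha} \svdash A B \) by (ii).

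I expect the only (modest) obstacle to be the well-formedness transport facts (i)--(iv): getting the stage indices right in (iii) for the \( \forall \beta @ B. \tau_0 \) case and checking that \( (\cdot)^{-\alpha} \) commutes with context extension in the form the \( \Lambda \)- and \( \forall \)-cases require. Once that bookkeeping is settled, the structural part of the induction is entirely parallel to the proof of Lemma~\ref{lem:sound2}, with the pair \( (\Gamma, \Delta) \) replacing \( \Gamma \) and each typing rule additionally carrying---and discharging, via (i)--(iv)---its well-formedness side conditions.
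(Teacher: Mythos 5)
Your proposal is correct and follows essentially the same route as the paper, which proves this lemma exactly as you do: by structural induction on the value \(M\), mirroring the case analysis of Lemma~\ref{lem:sound2} (in particular the \(\Prev{\beta}{}\) case exploiting \(A'\beta=\alpha A\), \(A'\neq\varepsilon\)), with the new rules \textsc{(S-Gen)}, \textsc{(S-Ins1)}, \textsc{(S-Ins2)} handled by the identities \((\Delta,\beta@\alpha AB)^{-\alpha}=\Delta^{-\alpha},\beta@AB\) and \(\gamma@\alpha AB\in\Delta \Leftrightarrow \gamma@AB\in\Delta^{-\alpha}\). Your preliminary facts (i)--(iv) about transporting well-formedness through \((\cdot)^{-\alpha}\) are precisely the bookkeeping the paper leaves implicit under ``similar to the proof of Lemma~\ref{lem:sound2},'' and they are stated and justified correctly.
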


Finally, we prove Theorem~\ref{thm:erasure} via a more general property
for which proof by induction works.
% We prove a more general statement,
% because the statement of Theorem~\ref{thm:erasure} is too strong to prove using induction.
\begin{thm}%[Erasure Property for Arbtrary Stage]
\label{thm:general-erasure}
If  \(\Gamma\) is \(\varepsilon\)-free and \( \Gamma; \Delta \svdash^A M : \tau \)
and \( n \) is the length of \( A \), then
\begin{enumerate}[(1)]
\item if \( \bred{A}{M}{N} \), then \( \ered{n}{\flat(M)}{\flat(N)} \); and 
\item if \( \ered{n}{\flat(M)}{N} \), there is some \( N' \) such that
\( \bred{A}{M}{N'} \) and \( N = \flat(N') \).
\end{enumerate}
\proof
By induction on the derivation of \( \bred{A}{M}{N} \) (for (1)) or
\( \ered{n}{\flat(M)}{N} \) (for (2)) with case analysis on the last
rule used in the derivation.  
Actually, the only interesting case is when 
\( M = \Ins{M_0}{B} \) and \( \bred{A}{\Ins{M_0}{B}}{N} \).
  \\
\begin{enumerate}[$\bullet$]
\item Proof of (1): Assume \( M = \Ins{M_0}{B} \) and \( \bred{A}{\Ins{M_0}{B}}{N} \).
The case \( A \neq \varepsilon \) is easy.  Assume \( A = \varepsilon \).
So, the last derivation step for \( \bred{\varepsilon}{\Ins{M_0}{B}}{N} \) is of the form
\vspace{.5em}
\begin{center}
\infrule{
  \bred{\varepsilon}{M_0}{\Gen{\alpha}{M_1}} \AND
  \bred{\varepsilon}{M_1 \Subst{\alpha := B}}{N}
}{
  \bred{\varepsilon}{\Ins{M_0}{B}}{N}
}.
\end{center}

Since \( \Gamma; \Delta \svdash^\varepsilon \Ins{M_0}{B} : \tau \),
we have \( \Gamma; \Delta \svdash^\varepsilon M_0 : \forall \alpha@\varepsilon. \Seal{\alpha}{\sigma} \)
and \( \tau = \sigma \Subst{\alpha := B} \) for some \( \sigma \).
By Lemma~\ref{lem:staged-preservation},
we have \( \Gamma; \Delta \svdash^\varepsilon \Gen{\alpha}{M_1} : \forall \alpha @ \varepsilon. \Seal{\alpha}{\sigma} \)
and \( \Gen{\alpha}{M_1} \in V^\varepsilon \).
Therefore, \( \Gamma; \Delta, \alpha @ \varepsilon \svdash^\varepsilon M_1 : \Seal{\alpha}{\sigma} \) and
\( M_1 \in V^\varepsilon \).
Because \( M_1 \in V^\varepsilon \), we have \( M_1 = \Next{\alpha}{M_2} \) with \( M_2 \in V^\alpha \).
So \( \Gamma; \Delta, \alpha @ \varepsilon \svdash^\alpha M_2 : \sigma \).
By Lemma~\ref{lem:staged-sound2}, \( \Gamma^{- \alpha}; (\Delta, \alpha @ \varepsilon)^{- \alpha}
\svdash^\varepsilon M_2 : \sigma \).
Because \( (\Delta, \alpha @ \varepsilon)^{- \alpha} = \Delta^{- \alpha} \),
we have \( \alpha \notin \FMV((\Delta, \alpha @ \varepsilon)^{- \alpha}) \),
and therefore \( \alpha \notin \FMV(M_2) \).
So \( (\Next{\alpha}{M_2}) \Subst{\alpha := B} = \Next{B}{M_2} \).
Then, we know that the last derivation step must have a more specific shape:
\vspace{.5em}
\begin{center}
\infrule{
  \bred{\varepsilon}{M_0}{\Gen{\alpha}{\Next{\alpha}{M_2}}} \AND
  \bred{\varepsilon}{\Next{B}{M_2}}{N}
}{
  \bred{\varepsilon}{\Ins{M_0}{B}}{N}
}
\end{center}
\vspace{.5em}

By the induction hypothesis, we have \( \ered{0}{\flat(M_0)}{\Lambda \Next{}{\flat(M_2)}} \) and
\( \ered{0}{\overbrace{\Next{}{} \dots \Next{}{}}^m \flat(M_2)}{\flat(N)} \),
where \( m \) is the length of \( B \), because
\( \flat(\Next{B}{M_2}) = \overbrace{\Next{}{} \dots \Next{}{}}^m \flat(M_2) \).
Finally, we obtain \( \ered{0}{\flat(M_0) \app m}{\flat(N)} \) from them.
\\
\item Proof of (2):
Assume \( M = \Ins{M_0}{B} \) and \( \ered{n}{\flat(\Ins{M_0}{B})}{N^\flat} \).
The case \( n \neq 0 \) is easy.  We assume \( n = 0 \).  Then,
 the last derivation step is of the form:
\vspace{.5em}
\begin{center}
\infrule{
  \ered{0}{\flat(M_0)}{\Lambda \Next{}{M_2^\flat}} \AND
  \ered{0}{\overbrace{\Next{}{} \dots \Next{}{}}^m M_2^\flat}{N^\flat}
}{
  \ered{0}{\flat(M_0) \app m}{N^\flat}
}
\end{center}
where \( m \) is the length of \( B \).  Since \( \Gamma; \Delta
\svdash^\varepsilon \Ins{M_0}{B} : \tau \), we have \( \Gamma; \Delta
\svdash^\varepsilon M_0 : \forall \alpha @
\varepsilon. \Seal{\alpha}{\sigma} \) for some \( \sigma \).

By the induction hypothesis, there is some \( M_1 \) such that \(
\bred{\varepsilon}{M_0}{M_1} \) and \( \Lambda \Next{}{M_2^\flat} =
\flat(M_1) \).  So, \( M_1 = \Lambda \alpha. \Next{\beta}{M_2} \) for
some \( \alpha \) and \( \beta \) and \(M_2\) such that \(\flat(M_2) =
M_2^\flat\).  Actually, \( \alpha = \beta \) since \( M_0 \) has the type
\( \forall \alpha @ \varepsilon. \Seal{\alpha}{\sigma} \)
and evaluation preserves types.
% \old{Actually, \( \alpha = \beta \) since \( M_1 \) is
% typable by Lemma~\ref{lem:staged-preservation}.}
By the same argument
as in the proof of (1), we have \( \alpha \notin \FMV(M_2) \).
Therefore, \( (\Next{\alpha}{M_2}) \Subst{\alpha := B} = \Next{B}{M_2}
\).  Because \( \flat(\Next{B}{M_2}) = \overbrace{\Next{}{} \dots
  \Next{}{}}^m M_2^\flat \), by the induction hypothesis, we have \(
\bred{\varepsilon}{\Next{B}{M_2}}{N} \) for some \( N \) with \(
\flat(N) = N^\flat \). \qed
\end{enumerate}
\end{thm}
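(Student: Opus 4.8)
The plan is to prove the two directions by a single induction---for (1) on the derivation of $\bred{A}{M}{N}$ and for (2) on the derivation of $\ered{n}{\flat(M)}{N}$---with a case analysis on the last evaluation rule, hence on the shape of $M$. Since $\flat(\cdot)$ commutes with every term constructor (turning $\Next{\alpha}{}$, $\Prev{\alpha}{}$ into their unlabelled forms, $\Gen{\alpha}{}$ into $\Lambda$, $\Ins{M}{A}$ into $\flat(M)\app n$ with $n$ the length of $A$, and $\SIns{M}{\beta}$ into $\flat(M)\app\unittrans$), and since the erasure big-step rules are precisely those of Figure~\ref{fig:evaluation} with the stage $A$ replaced by its length (and $\varepsilon$ read as $0$), all but one family of cases are routine: peel off the head constructor, apply the induction hypothesis to the premises---tracking that going under $\Next{\alpha}{}$ sends stage $A$ to $A\alpha$ and length $n$ to $n+1$, and under $\Prev{\alpha}{}$ the reverse---and reassemble, using that $\flat$ maps a value to a value of the matching erased form. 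The error-producing cases ($\bred{A}{M}{\err}$) and every non-$\varepsilon$ stage are likewise immediate, as is the computationally meaningless instantiation $\SIns{M}{\beta}$, whose erasure $\flat(M)\app\unittrans$ never changes quotation depth. Throughout I will use that $\Gamma;\Delta\svdash^A M:\tau$ implies $\Gamma\vdash^A M:\tau$, together with term substitution (Lemma~\ref{lem:subst-staged-transition}), the transition-substitution lemmas (Lemmas~\ref{lem:transition-subst-transition}--\ref{lem:transition-subst-term}), and type soundness for the staged system (Lemma~\ref{lem:staged-preservation}).

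The one genuinely interesting case---and the reason the statement is phrased via the staged type system---is $M=\Ins{M_0}{B}$ evaluated at stage $\varepsilon$, so $n=0$. In the source semantics the rule forces $M_0\Downarrow\Gen{\alpha}{M_1}$ and then $M_1\Subst{\alpha:=B}\Downarrow N$, whereas in the erased semantics $\flat(M)=\flat(M_0)\app m$ with $m$ the length of $B$, and the matching rule demands that $\flat(M_0)$ evaluate to a value of the rigid shape $\Lambda\,\Next{}{M_2^\flat}$, after which $\Next{}{}$ is duplicated $m$ times. To reconcile them I will use the typing hypothesis decisively: because $M$ is typed by \textsc{(S-Ins1)}, $M_0$ has type $\forall\alpha@\varepsilon.\Seal{\alpha}{\sigma}$; by Lemma~\ref{lem:staged-preservation} the value $\Gen{\alpha}{M_1}$ carries this type, so $M_1:\Seal{\alpha}{\sigma}$ and, being an element of $V^\varepsilon$, is necessarily a quotation $\Next{\alpha}{M_2}$ with $M_2\in V^\alpha$. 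The crux is then $\alpha\notin\FMV(M_2)$: by Lemma~\ref{lem:staged-sound2} one lowers $M_2$ to stage $\varepsilon$ under the transition environment $(\Delta,\alpha@\varepsilon)^{-\alpha}=\Delta^{-\alpha}$, which does not declare $\alpha$, so $\alpha$ cannot occur free there. Hence $(\Next{\alpha}{M_2})\Subst{\alpha:=B}=\Next{B}{M_2}$, whose erasure is exactly $m$ nested $\Next{}{}$'s wrapping $\flat(M_2)$; applying the induction hypothesis to $\bred{\varepsilon}{M_0}{\Gen{\alpha}{\Next{\alpha}{M_2}}}$ and to $\bred{\varepsilon}{\Next{B}{M_2}}{N}$, then combining via the erased rule for $\flat(M_0)\app m$, gives (1). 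Part (2) runs the same argument backwards: from $\ered{0}{\flat(M_0)\app m}{N^\flat}$ the induction hypothesis supplies some $M_1$ with $\bred{\varepsilon}{M_0}{M_1}$ and $\flat(M_1)=\Lambda\,\Next{}{M_2^\flat}$; the staged type of $M_0$ forces $M_1$ to be $\Gen{\alpha}{\Next{\alpha}{M_2}}$ (binder and label must coincide), the same freshness argument gives $\alpha\notin\FMV(M_2)$, and the induction hypothesis in the other direction reconstructs $N'$.

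I expect this \textsc{(S-Ins1)} case to be the main obstacle---concretely, establishing that the evaluated body of the abstraction is a quotation not mentioning the bound variable, which is exactly what makes ``substitute a transition of length $m$'' collapse to ``stack $m$ copies of $\Next{}{}$''. This is where the restrictions built into the staged system (banning nested $\Seal{\alpha}{\Seal{\alpha}{\cdot}}$ and confining depth-changing instantiation to \textsc{(S-Ins1)}) earn their keep, and it is precisely where the erasure property breaks for full \miniML, as the $M_1,M_2$ example shows. A small preliminary obligation is to record the auxiliary facts about the staged system in the order used---the well-formedness and transition-substitution lemmas, staged type soundness (Lemma~\ref{lem:staged-preservation}), and the stage-lowering lemma (Lemma~\ref{lem:staged-sound2})---before running the main induction; beyond that, the whole argument reduces to careful bookkeeping of stages against their lengths.
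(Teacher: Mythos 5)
Your proposal is correct and follows essentially the same route as the paper's proof: the same mutual induction on the evaluation derivations with the $\Ins{M_0}{B}$-at-stage-$\varepsilon$ case singled out, the same use of staged type soundness (Lemma~\ref{lem:staged-preservation}) and stage lowering (Lemma~\ref{lem:staged-sound2}) to force $M_1=\Next{\alpha}{M_2}$ with $\alpha\notin\FMV(M_2)$, and the same collapse $(\Next{\alpha}{M_2})\Subst{\alpha:=B}=\Next{B}{M_2}$ feeding the induction hypotheses in both directions (including the observation that in (2) the binder and quotation label must coincide by type preservation).
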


\paragraph{\emph{Proof of Theorem~\ref{thm:erasure}}}
Theorem~\ref{thm:erasure} is a special case of Theorem~\ref{thm:general-erasure}.
\qed

\section{Proof of Completeness (Theorem~\ref{thm:completeness})}
\label{sec:proof-of-completeness}
\subsection{Completeness in the Quantifier-Free Setting}
We first prove completeness in the quantifier-free setting:
the set of propositions has no quantifiers and
the deduction rules and semantics has no rule for quantifiers.

Formally,
the set $ \TypeSet_{\bot}^{- \forall} $ of
quantifier-free propositions is defined by the following grammar:
\begin{center}
\(\begin{array}{@{}lr@{}r@{}l}
    \textrm{\it Propositions}\quad
      & \phi \in \TypeSet_{\bot}^{- \forall} &{}::={}&
        b \OR
        \bot \OR
	\phi \to \phi \OR
	\Seal{\alpha}{\phi}
  \end{array}\ .
\)
\end{center}
The natural deduction system consists of
(\textsc{Var}),
(\textsc{Abs}),
(\textsc{App}),
(\textsc{$\Next{}{}$}),
(\textsc{$\Prev{}{}$}) and
(\textsc{$\bot$-E}).
The Kripke semantics for this fragment is
essentially the same as the full logic,
but the proposition $ \phi $ in
the satisfaction relation $ {\mathcal T}, v, \rho; s \Vdash \phi $
is restricted to be quantifier-free one.

An assumption $ \Gamma $ is
a (possibly infinite) set $ \{ \phi_i @ A_i \} $
where $ \phi_i \in \TypeSet_\bot^{- \forall} $ and
$ A_i \in \Gene^{*} $.
We say an assumption $ \Gamma $ is \emph{consistent}
if $ \Gamma \nvdash^{\varepsilon} \bot $.
(When \(\Gamma\) is infinite, \(\Gamma \vdash^{A} \phi\)
means that there exists a finite assumption \(\Gamma' \subset \Gamma\)
such that \(\Gamma' \vdash^{A} \phi\).)
We say $ \Gamma $ is \emph{maximally consistent}
when $ \Gamma $ is consistent and
maximal (under the ordinary set inclusion ordering).

\begin{lem}
\label{lem:maximally-consistent}
For any consistent assumption $ \Gamma $,
there is a maximally consistent assumption $ \Gamma' $
s.t. $ \Gamma' \supseteq \Gamma $.
\begin{proof}
We apply Zorn's Lemma to the set
$ \Theta = \{ \Delta ~|~ \Gamma \subseteq \Delta
  \textrm{ and $ \Delta $ is consistent} \} $.
To apply Zorn's lemma,
we prove that every totally ordered subsets $ \Xi \subseteq \Theta $
has its upper bound in $ \Theta $,
by showing $ \bigcup \Xi \in \Theta $.
It is clear that $ \Gamma \subseteq \bigcup \Xi $.

We show that $ \bigcup \Xi $ is consistent.
Assume $ \bigcup \Xi $ is inconsistent,
i.e. $ \bigcup \Xi \vdash^{\varepsilon} \bot $.
Then there is a finite subset $ \Delta \subseteq \bigcup \Xi $
such that $ \Delta \vdash^{\varepsilon} \bot $.
Let $ \{ \phi_i @ A_i ~|~ 0 \le i \le n \} = \Delta $.
Because $ \Delta \subseteq \bigcup \Xi $,
for every $ \phi_i @ A_i \in \Delta $,
there is an assumption $ \Xi_i \in \Xi $ which satisfies
$ \phi_i @ A_i \in \Xi_i $.
Because the number $ n $ of elements in $ \Delta $ is finite
and $ \Xi $ is totally ordered,
there is a maximal assumption
$ \Xi_j \supseteq \Xi_i \textrm{ ($ 0 \le i \le n $)} $.
Therefore $ \Delta \subseteq \Xi_j $ and $ \Xi_j $ is inconsistent,
and this causes a contradiction.
\end{proof}
\end{lem}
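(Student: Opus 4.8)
The plan is to invoke Zorn's Lemma on the partially ordered set
\(
  \Theta = \{ \Delta \mid \Gamma \subseteq \Delta \text{ and } \Delta \nvdash^{\varepsilon} \bot \}
\)
of all consistent assumptions extending \(\Gamma\), ordered by set inclusion. This set is non-empty, since \(\Gamma \in \Theta\) by the consistency hypothesis, so before applying Zorn's Lemma the only thing to verify is that every chain (totally ordered subset) \(\Xi \subseteq \Theta\) has an upper bound lying in \(\Theta\). The natural candidate is \(\bigcup \Xi\): it contains every member of \(\Xi\) and contains \(\Gamma\), so the real content is to show that \(\bigcup \Xi\) is again consistent.

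For this I would argue by contradiction, using the finiteness of derivations. Suppose \(\bigcup \Xi \vdash^{\varepsilon} \bot\). By the convention on derivability from possibly infinite assumptions, this means \(\Delta \vdash^{\varepsilon} \bot\) for some finite \(\Delta \subseteq \bigcup \Xi\). Writing \(\Delta = \{ \phi_1 @ A_1, \dots, \phi_n @ A_n \}\), each \(\phi_i @ A_i\) belongs to some \(\Xi_i \in \Xi\). Since \(\Xi\) is totally ordered and there are only finitely many \(\Xi_i\), one of them, say \(\Xi_j\), contains all the others, hence \(\Delta \subseteq \Xi_j\). Then \(\Xi_j \vdash^{\varepsilon} \bot\), contradicting \(\Xi_j \in \Theta\). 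Therefore \(\bigcup \Xi\) is consistent and belongs to \(\Theta\), and Zorn's Lemma yields a maximal element \(\Gamma'\) of \(\Theta\).

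It then remains to observe that such a \(\Gamma'\) is maximally consistent in the sense defined above, i.e.\ maximal among all consistent assumptions, not merely among those extending \(\Gamma\). But if \(\Gamma' \subseteq \Delta\) with \(\Delta\) consistent, then \(\Gamma \subseteq \Gamma' \subseteq \Delta\), so \(\Delta \in \Theta\), and maximality of \(\Gamma'\) in \(\Theta\) forces \(\Delta = \Gamma'\). Since \(\Gamma' \supseteq \Gamma\) by construction, this yields the claim.

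I do not expect a genuine obstacle here: this is the standard Lindenbaum construction. The one point that deserves care — and the step I would single out — is the compactness observation that a derivation of \(\bot\) uses only finitely many assumptions, combined with the fact that a finite subset of the union of a chain already lies inside a single member of the chain; once that is in hand, the rest is a routine application of Zorn's Lemma.
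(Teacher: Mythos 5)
Your proposal is correct and follows essentially the same route as the paper: Zorn's Lemma applied to the consistent extensions of \(\Gamma\), with consistency of the union of a chain established via the finiteness of derivations and the fact that a finite subset of a chain's union lies in a single member. Your extra remark that maximality within \(\Theta\) implies maximality among all consistent assumptions is a small clarification the paper leaves implicit, but it does not change the argument.
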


Let $ \Gamma $ be maximally consistent.
We construct a transition system $ {\mathcal T}_\Gamma $
and valuations $ \rho_\Gamma $ and $ v_\Gamma $ from it.
The transition system\footnote{If we want to prove completeness
of the logic in Sec.~\ref{sec:another-semantics},
we define $ {\mathcal T} $ as
$ (\{ A \in \Gene^{*} ~|~ \bot @ A \notin \Gamma \}, \Gene,
   \{ \stackrel{\alpha}{\rightarrow} ~|~ \alpha \in \Gene \} ) $,
where $ A \stackrel{\alpha}{\rightarrow} A \alpha $
if $ \bot @ A \alpha \notin \Gamma $ and otherwise undefined.}
$ {\mathcal T}_{\Gamma} $ is
$ (\Gene^*, \Gene, \{ \stackrel{\alpha}{\rightarrow} ~|~ \alpha \in \Gene \}) $,
where $ A \stackrel{\alpha}{\rightarrow} A \alpha $,
and valuation $ \rho_\Gamma $ for transition variables is defined by
$ \rho_{\Gamma} (\alpha) = \alpha $.
We define valuation $ v_\Gamma $ for propositional variables as
$ v_{\Gamma}(A, p) = 1 $ if and only if $ p @ A \in \Gamma $.

\begin{lem}
\label{lem:kripke-model}
$ {\mathcal T}_{\Gamma}, v_{\Gamma}, \rho_{\Gamma}; \varepsilon
  \Vdash \Seal{A}{\phi} $
if and only if $ \phi @ A \in \Gamma $.
\begin{proof}
Easy induction with respect to the proposition $ \phi $.
\end{proof}
\end{lem}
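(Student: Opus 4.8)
The plan is to establish the standard canonical-model satisfaction property and then reduce the claim about $\Seal{A}{\phi}$ at the root to a claim about $\phi$ at the state $A$. First I would record two preliminary facts. Since $\rho_{\Gamma}(\alpha)=\alpha$ and, in $\mathcal{T}_{\Gamma}$, we have $B \stackrel{\alpha}{\rightarrow} B\alpha$, iterating the semantic clause for $\Seal{\alpha}{\cdot}$ yields $\mathcal{T}_{\Gamma}, v_{\Gamma}, \rho_{\Gamma}; B \Vdash \Seal{A}{\phi}$ iff $\mathcal{T}_{\Gamma}, v_{\Gamma}, \rho_{\Gamma}; BA \Vdash \phi$; taking $B=\varepsilon$, the lemma reduces to proving $\mathcal{T}_{\Gamma}, v_{\Gamma}, \rho_{\Gamma}; A \Vdash \phi$ iff $\phi @ A \in \Gamma$ for every $A \in \Gene^{*}$. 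The second fact is the usual consequence of maximal consistency: if $\Gamma \vdash^{A} \chi$ then $\chi @ A \in \Gamma$ (otherwise $\Gamma \cup \{\chi @ A\}$ would be a strictly larger consistent set, since any derivation of $\bot$ from it combines with the derivation of $\chi @ A$ from $\Gamma$ to give $\Gamma \vdash^{\varepsilon}\bot$). I would also note that, since the stage at which the refutation is carried out is arbitrary in rule \textsc{($\bot$-E)}, ex falso quodlibet holds across stages: from $\Gamma \vdash^{C}\bot$ one derives $\Gamma\vdash^{A}\psi$ for any $A$ and $\psi$, by applying \textsc{($\bot$-E)} with the hypothesis $\psi\to\bot@A$ and the premise $\Gamma, \psi\to\bot@A \vdash^{C}\bot$ obtained by weakening.

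Then I would prove $\mathcal{T}_{\Gamma}, v_{\Gamma}, \rho_{\Gamma}; A \Vdash \phi \iff \phi @ A \in \Gamma$ by induction on the quantifier-free proposition $\phi$. The base-type case $\phi=b$ is immediate from the definition of $v_{\Gamma}$. For $\phi=\bot$, the left-hand side never holds, and by the ex falso remark together with consistency of $\Gamma$ we get $\bot@A\notin\Gamma$. For $\phi=\psi_{1}\to\psi_{2}$, by the semantic clause and the induction hypothesis the left-hand side is equivalent to ``$\psi_{1}@A\notin\Gamma$ or $\psi_{2}@A\in\Gamma$''; if $\psi_{1}\to\psi_{2}@A\in\Gamma$ and $\psi_{1}@A\in\Gamma$, then \textsc{(App)} and maximality give $\psi_{2}@A\in\Gamma$; conversely, if $\psi_{2}@A\in\Gamma$ then \textsc{(Abs)} (after weakening) gives $\Gamma\vdash^{A}\psi_{1}\to\psi_{2}$, and if $\psi_{1}@A\notin\Gamma$ then $\Gamma\cup\{\psi_{1}@A\}$ is inconsistent, so by ex falso $\Gamma,\psi_{1}@A\vdash^{A}\psi_{2}$ and \textsc{(Abs)} again gives $\Gamma\vdash^{A}\psi_{1}\to\psi_{2}$; maximality then puts $\psi_{1}\to\psi_{2}@A$ in $\Gamma$. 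For $\phi=\Seal{\alpha}{\psi}$, the left-hand side is equivalent, by the semantic clause and the induction hypothesis at state $A\alpha$, to $\psi@A\alpha\in\Gamma$, which in turn is equivalent to $\Seal{\alpha}{\psi}@A\in\Gamma$: $(\Prev{}{})$ with maximality gives one direction, $(\Next{}{})$ with maximality the other.

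The step I expect to be the main obstacle is the implication case, specifically the subcase $\psi_{1}@A\notin\Gamma$, where I must produce a derivation of $\psi_{1}\to\psi_{2}$ at stage $A$ from nothing more than the inconsistency of $\Gamma\cup\{\psi_{1}@A\}$; this is exactly where it matters that \textsc{($\bot$-E)} permits the contradiction to be obtained at $\varepsilon$ rather than only at $A$, i.e., the feature that distinguishes the rule of Section~\ref{sec:deduction-rule} from the alternative of Section~\ref{sec:another-semantics}. Once the two preliminary facts are in place, the remaining cases are routine, and the original statement follows by specializing to the state $\varepsilon$.
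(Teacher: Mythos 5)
Your proposal is correct and takes essentially the same route as the paper: the paper's ``easy induction with respect to the proposition $\phi$'' is precisely the truth-lemma induction you spell out, reducing $\varepsilon \Vdash \Seal{A}{\phi}$ to satisfaction of $\phi$ at the state $A$ and using deductive closure of the maximally consistent $\Gamma$ in each case. Your observation that the implication case relies on the stage-flexible form of (\textsc{$\bot$-E}) from Section~\ref{sec:deduction-rule} is a correct and worthwhile filling-in of a detail the paper leaves implicit.
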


\begin{thm}[Completeness for the quantifier-free fragment]
If $ \Gamma \Vdash \phi $ then $ \Gamma \vdash^{\varepsilon} \phi $.
\begin{proof}
We prove by contraposition.
Assume $ \Gamma \nvdash^{\varepsilon} \phi $.
Then $ \Gamma, (\phi \to \bot) @ \varepsilon $ is consistent.
By Lemma~\ref{lem:maximally-consistent}
there is a maximally consistent assumption
$ \Gamma' \supseteq \Gamma \cup \{ (\phi \to \bot) @ \varepsilon \} $.
By Lemma~\ref{lem:kripke-model},
$ {\mathcal T}_{\Gamma'}, v_{\Gamma'}, \rho_{\Gamma'}; \varepsilon \Vdash \Gamma $
and
$ {\mathcal T}_{\Gamma'}, v_{\Gamma'}, \rho_{\Gamma'}; \varepsilon
  \Vdash \phi \to \bot $.
So, $ {\mathcal T}_{\Gamma'}, v_{\Gamma'}, \rho_{\Gamma'}; \varepsilon
  \nVdash \phi $ and, therefore, $ \Gamma \nVdash \phi $.
\end{proof}
\end{thm}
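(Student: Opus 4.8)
The plan is to prove completeness by contraposition, via the classical canonical--model (Henkin) construction adapted so that worlds record a position in $\Gene^{*}$. Given $\Gamma \nvdash^{\varepsilon} \phi$, I want to produce $\mathcal{T}$, $v$, $\rho$ and a state $s$ with $\mathcal{T}, v, \rho; s \Vdash \Gamma$ and $\mathcal{T}, v, \rho; s \nVdash \phi$; taking $s = \varepsilon$ will suffice, and the model will be read off from a maximally consistent set containing $\Gamma$ together with the negation of $\phi$ at stage $\varepsilon$.

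First I would observe that $\Gamma \cup \{(\phi \to \bot) @ \varepsilon\}$ is consistent: if $\Gamma, (\phi \to \bot) @ \varepsilon \vdash^{\varepsilon} \bot$ held, the double negation elimination rule \textsc{$\bot$-E} would give $\Gamma \vdash^{\varepsilon} \phi$, contradicting the hypothesis. By Lemma~\ref{lem:maximally-consistent} (a Zorn's lemma argument) extend it to a maximally consistent $\Gamma'$, and form the canonical structure $\mathcal{T}_{\Gamma'} = (\Gene^{*}, \Gene, \{\stackrel{\alpha}{\longrightarrow} \mid \alpha \in \Gene\})$ with $A \stackrel{\alpha}{\longrightarrow} A\alpha$, together with $\rho_{\Gamma'}(\alpha) = \alpha$ and $v_{\Gamma'}(A, p) = 1$ iff $p @ A \in \Gamma'$; note that then $A$ reaches $AB$ via $\rho_{\Gamma'}(B)$, so that $\Seal{B}{\cdot}$ at $A$ speaks about $AB$. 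The core step is the truth lemma (Lemma~\ref{lem:kripke-model}): $\mathcal{T}_{\Gamma'}, v_{\Gamma'}, \rho_{\Gamma'}; \varepsilon \Vdash \Seal{A}{\psi}$ iff $\psi @ A \in \Gamma'$, proved by induction on $\psi$. The base case $\psi = b$ is the definition of $v_{\Gamma'}$; the case $\psi = \bot$ needs $\bot @ A \notin \Gamma'$, which holds because from $\bot @ A$ one derives $\Gamma' \vdash^{A} \bot$ and then, via \textsc{$\bot$-E}, $\Gamma' \vdash^{\varepsilon} \bot$, contradicting consistency. The cases $\psi = \psi_1 \to \psi_2$ and $\psi = \Seal{\alpha}{\psi_0}$ reduce, using the induction hypothesis, to the standard maximal-consistency fact that $\psi @ A \in \Gamma'$ iff $\Gamma' \vdash^{A} \psi$; for the modal case one additionally uses that \textsc{$\Next{}{}$} and \textsc{$\Prev{}{}$} make $\Gamma' \vdash^{A} \Seal{\alpha}{\psi_0}$ and $\Gamma' \vdash^{A\alpha} \psi_0$ interderivable, which matches the semantic clause evaluated at $A\alpha = A\,\rho_{\Gamma'}(\alpha)$. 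Finally, since $\Gamma \subseteq \Gamma'$ and $(\phi \to \bot) @ \varepsilon \in \Gamma'$, the truth lemma gives $\mathcal{T}_{\Gamma'}, v_{\Gamma'}, \rho_{\Gamma'}; \varepsilon \Vdash \Gamma$ and $\mathcal{T}_{\Gamma'}, v_{\Gamma'}, \rho_{\Gamma'}; \varepsilon \Vdash \phi \to \bot$; since $\bot$ is never satisfied, $\mathcal{T}_{\Gamma'}, v_{\Gamma'}, \rho_{\Gamma'}; \varepsilon \nVdash \phi$, hence $\Gamma \nVdash \phi$.

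The step I expect to be the main obstacle is making ``provability equals membership'' and the $\bot$-case argument go through \emph{across stages}. The rules \textsc{Abs} and \textsc{App} require the discharged assumption, the function, and the result to sit at one and the same stage, so to show, say, that $\Gamma' \vdash^{A} \psi$ forces $\psi @ A \in \Gamma'$ one must first use \textsc{$\bot$-E} to relocate a derivation of $\bot$ from an arbitrary stage down to stage $A$ (and ultimately to $\varepsilon$), and only then appeal to \textsc{Abs}, \textsc{App} and the consistency of $\Gamma'$. Verifying that these stage relocations are legitimate instances of the stated rules is the only delicate point; once it is settled, the rest is the routine classical maximal-consistent-set argument, and the whole construction is exactly the one used for first-order logic, with the quantifier clauses deferred to the Skolemization step, which is not needed for this quantifier-free fragment.
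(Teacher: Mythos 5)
Your proposal is correct and follows essentially the same route as the paper: contraposition, consistency of $\Gamma \cup \{(\phi\to\bot)@\varepsilon\}$ via the stage-relocating \textsc{$\bot$-E} rule, extension to a maximally consistent $\Gamma'$ by Lemma~\ref{lem:maximally-consistent}, the same canonical model $(\Gene^{*},\Gene,\{A\stackrel{\alpha}{\rightarrow}A\alpha\})$ with $\rho_{\Gamma'}(\alpha)=\alpha$ and $v_{\Gamma'}(A,p)=1$ iff $p@A\in\Gamma'$, and the truth lemma (Lemma~\ref{lem:kripke-model}). The additional detail you supply about membership versus provability across stages is simply an elaboration of the induction the paper leaves implicit in Lemma~\ref{lem:kripke-model}, and your proposed use of \textsc{$\bot$-E} (whose $\bot$ may sit at an arbitrary stage) is exactly what makes those relocations legitimate.
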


\subsection{Completeness of the Full Logic}
We prove completeness for the full logic.
The way to prove completeness is
similar to that for first-order predicate logic.

We abbreviate
$ \phi \to \bot $ to $ \neg \phi $,
$ (\neg \phi) \to \psi $ to $ \phi \vee \psi $,
$ \neg (\phi \to \neg \psi) $ to $ \phi \wedge \psi $ and
$ \neg \forall \alpha . \neg \phi $ to $ \exists \alpha . \phi $.
The deduction rules for these connectives can be given
as in ordinary classical logic,
e.g. as follows:
\begin{center}
\infrule[$\wedge$-I]{
  \Gamma \vdash^{A} \phi \AND
  \Gamma \vdash^{A} \psi
}{
  \Gamma \vdash^{A} \phi \wedge \psi
} \\
\infrule[$\exists$-E]{
  \Gamma \vdash^{A} \exists \alpha . \phi \AND
  \Gamma, \phi @ A \vdash^{A} \psi \AND
  \alpha \notin \FMV(\Gamma, \psi @ A)
}{
  \Gamma \vdash^{A} \psi
} .
\end{center}

We abbreviate $ \forall \alpha_1 \forall \alpha_2 \dots \forall \alpha_n \phi $
to $ \forall^n \bar{\alpha} \phi $.
We also write $ \phi $ as $ \forall^0 \bar{\alpha} \phi $.
We say a proposition is in \emph{prenex normal form}
when it has the form
\[
 \forall^{n_0} \bar{\alpha}_0 \exists \beta_1
\forall^{n_1} \bar{\alpha}_1 \dots
\exists \beta_m \forall^{n_{m + 1}} \bar{\alpha}_{m + 1} \psi 
\]
and
$ \psi $ has no quantifier.
It is easy to see that
for any proposition $ \phi $
there is an equivalent proposition $ \phi' $
(i.e., $ \vdash^\varepsilon \phi \leftrightarrow \phi' $)
which is in prenex normal form.
Therefore we assume without loss of generality
that all propositions in an assumption are in prenex normal form.

We construct a quantifier-free assumption $ \Delta(\Gamma) $
from an assumption $ \Gamma $ with quantifiers.
Assume that all binding transition variables are
different from other binding variables and free variables
in $ \Gamma $.

First we construct ``Herbrand universe'' from $ \Gamma $.
Herbrand universe is a term algebra,
which is freely generated algebra from a signature.
Let $ \Gene_0 $ be the set of free transition variables in $ \Gamma $
and $ \Gene_1 $ be the set of binding transition variables in $ \Gamma $.
The set $ S_\Gamma $ is defined as follows:
\begin{equation*}
  S_\Gamma = \{ (\beta_k, \Sigma_{i = 1}^{k - 1} n_i) ~|~
    \forall^{n_0} \bar{\alpha}_0 \exists \beta_1
    \dots \exists \beta_k \dots
    \forall^{n_{m}} \bar{\alpha}_{m} \psi @ A \in \Gamma \}
\end{equation*}
We regard $ \textrm{dom}(S_\Gamma) $ as the set of pairs of
function symbols and its arities.
If there is no constant (i.e., arity 0 function) in $ S_\Gamma $
and $ \Gene_0 = \Gene_1 = \emptyset $,
then we add a constant $ (c, 0) $ to $ S_\Gamma $.
If there is no function symbol whose arity is not 0,
then we add a function symbol $ (f, 1) $ to $ S_\Gamma $.
We define the set $ H_\Gamma $ as the least set
which satisfies the following condition:
\begin{enumerate}[$\bullet$]
\item if $ (\beta, n) \in S_\Gamma $ and
  $ \alpha_i \in H_\Gamma \cup \Gene_0 \cup \Gene_1 $($ 1 \le i \le n $),
  then $ \beta(\alpha_1, \dots, \alpha_n) \in H_\Gamma $
\end{enumerate}

Second we regard $ \Gene $ as the ``Herbrand universe''
by following observation.
Because $ \Gene $ is countably infinite and
$ \Gene_0 $ and $ \Gene_1 $ is countable,
we assume without loss of generality
that $ \Gene \backslash (\Gene_0 \cup \Gene_1) $ is countably infinite.
It is easy to see that $ H_\Gamma $ is countably infinite.
Therefore there is an bijection
$ H_\Gamma \to (\Gene \backslash (\Gene_0 \cup \Gene_1)) $.
By using this bijection,
we regard any element in $ H_\Gamma $ as an element in $ \Gene $.
So $ \Gene_0 \cup \Gene_1 \cup H_\Gamma = \Gene $ and
$ \Gene_0 \cap H_\Gamma = \Gene_1 \cap H_\Gamma = \emptyset $.

% First we define the set of transition variables for $ \Delta(\Gamma) $,
% which includes all free transition variables in $ \Gamma $
% and ``Skolem functions''.
% We define a function $ E_\Gamma $ from binding variables
% with the existential quantifier to natural numbers called arities,
% as follows:
% \begin{equation*}
%   E_\Gamma = \{ (\beta_k, \Sigma_{i = 1}^{k - 1} n_i) ~|~
%     \forall^{n_0} \bar{\alpha}_0 \exists \beta_1
%     \dots \exists \beta_k \dots
%     \forall^{n_{m}} \bar{\alpha}_{m} \psi @ A \in \Gamma \}
% \end{equation*}
% We define the set $ \Gene' $ of transition variables as
% the least set which satisfies following conditions:
% \begin{enumerate}
% \item $ \FMV(\Gamma) \subseteq \Gene' $
% \item if $ \beta \in \textrm{dom}(E_\Gamma) $ has arity $ n $ and
%   $ \alpha_i \in \Gene' $ ($ 1 \le i \le n $),
%   then there exists a unique transition variable $ \gamma \in \Gene' $
% \end{enumerate}
% We write such $ \gamma $ as $ \beta(\alpha_1, \dots, \alpha_n) $ to
% make the dependency no $ \beta, \alpha_1, \dots, \alpha_n $ explicit.

Then we define $ \Delta(\Gamma) $ as follows:
\begin{eqnarray*}
  \Delta(\Gamma) = & \{ \psi & \Subst{
    \bar{\alpha}_0, \beta_1, \bar{\alpha}_1, \beta_2, \dots
    \bar{\alpha}_m :=
    \bar{\gamma}_0,
    \beta_1(\bar{\gamma}_0),
    \bar{\gamma}_1,
    \beta_2(\bar{\gamma}_0, \bar{\gamma}_1),
    \dots,
    \bar{\gamma}_m} @ A \\
    & & ~|~
    \forall^{n_0} \bar{\alpha}_0 \exists \beta_1 \dots
    \forall^{n_m} \bar{\alpha}_m \psi @ A \in \Gamma,
    \gamma_{i j} \in \Gene \}
\end{eqnarray*}

\begin{lem}
\label{thm:lemma1}
If $ \Gamma \nvdash^{\varepsilon} \bot $,
then $ \Delta(\Gamma) \nvdash^{\varepsilon} \bot $
in quantifier-free logic.
\begin{proof}
We prove this lemma by contraposition.

Assume that $ \Delta(\Gamma) \vdash^{\varepsilon} \bot $.
Then there is a finite subset
$ \{ \phi_i @ A_i ~|~ 0 \le i \le n \} \subseteq \Delta $
such that $ \{ \phi_i @ A_i ~|~ 0 \le i \le n \} \vdash^{\varepsilon} \bot $.
From the construction of $ \Delta(\Gamma) $,
for every $ \phi_i $,
there are $ \theta_i $ and $ \gamma_{i j} \in \Gene $
such that $ \theta_i @ A \in \Gamma $ and
$ \theta_i = \forall^{n_0} \bar{\alpha}_0 \exists \beta_1 \dots
    \forall^{n_m} \bar{\alpha}_m \psi_i $ and
$ \phi_i = \psi_i \Subst{
    \bar{\alpha}_0, \beta_1, \bar{\alpha}_1, \beta_2, \dots
    \bar{\alpha}_m :=
    \bar{\gamma}_0,
    \beta_1(\bar{\gamma}_0),
    \bar{\gamma}_1,
    \beta_2(\bar{\gamma}_0, \bar{\gamma}_1),
    \dots,
    \bar{\gamma}_m} $.
Then we can get
$ \{ \theta_i @ A_i ~|~ 0 \le i \le n \} \vdash^{\varepsilon} \bot $
by using (\textsc{Ins}) and ($\exists$-E)
in an appropriate order.
Therefore
$ \Gamma \vdash^\varepsilon \bot $.
\end{proof}
\end{lem}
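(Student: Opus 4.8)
The plan is to prove the contrapositive: from a refutation of the Skolemized, quantifier-free theory $\Delta(\Gamma)$ we construct a refutation of $\Gamma$ in the full proof system. So suppose $\Delta(\Gamma) \vdash^{\varepsilon} \bot$ in the quantifier-free calculus. Since derivations are finite, there is a finite subset $\{\phi_0 @ A_0, \dots, \phi_n @ A_n\} \subseteq \Delta(\Gamma)$ with $\{\phi_0 @ A_0, \dots, \phi_n @ A_n\} \vdash^{\varepsilon} \bot$. By the definition of $\Delta(\Gamma)$, each $\phi_i @ A_i$ comes from some $\theta_i @ A_i \in \Gamma$ where, invoking the prenex normalization observed above, we may take $\theta_i = \forall^{n_0}\bar{\alpha}_0\, \exists \beta_1\, \forall^{n_1}\bar{\alpha}_1 \cdots \exists \beta_m\, \forall^{n_m}\bar{\alpha}_m.\,\psi_i$ with $\psi_i$ quantifier-free, and $\phi_i = \psi_i[\,\bar{\alpha}_0, \beta_1, \bar{\alpha}_1, \dots := \bar{\gamma}_0^{(i)}, \beta_1(\bar{\gamma}_0^{(i)}), \bar{\gamma}_1^{(i)}, \dots\,]$ for suitable $\gamma_{jk}^{(i)} \in \Gene$, where the Herbrand/Skolem terms $\beta_k(\cdots)$ are, via the fixed bijection $H_\Gamma \to \Gene \setminus (\Gene_0 \cup \Gene_1)$, transition variables not occurring in $\Gamma$, and distinct Skolem terms are distinct such variables.

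Next I would transform the given quantifier-free refutation of $\{\phi_i @ A_i\}$ into a refutation of $\{\theta_i @ A_i\}$ by reconstructing the quantifier prefixes around it. Processing the prefix of each $\theta_i$ from the outside inward: a universal block $\forall^{n_j}\bar{\alpha}_j$ is discharged by $n_j$ applications of (\textsc{Ins}) instantiating with the recorded witnesses $\bar{\gamma}_j^{(i)}$; an existential $\exists \beta_k$ is discharged by ($\exists$-E), taking as its minor premise the already built subderivation in which $\beta_k$ has been realized by the Skolem variable $\beta_k(\bar{\gamma}_0^{(i)}, \dots, \bar{\gamma}_{k-1}^{(i)})$ — legitimate because the conclusion $\bot$ contains no transition variables. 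Iterating this for every $i$ and every quantifier eventually exposes exactly the assumptions $\phi_i @ A_i$ at the leaves, so the result is a derivation $\{\theta_i @ A_i\} \vdash^{\varepsilon} \bot$; hence $\Gamma \vdash^{\varepsilon} \bot$, contradicting the hypothesis and finishing the proof.

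The step I expect to need the most care is the eigenvariable side condition of ($\exists$-E): each time we remove an $\exists \beta_k$ its witness variable $\beta_k(\bar{\gamma}_0^{(i)}, \dots, \bar{\gamma}_{k-1}^{(i)})$ must be absent from the remaining assumptions and from the conclusion. Freshness relative to $\bot$ is immediate, and freshness relative to $\Gamma$ is exactly what the Herbrand construction secures, since all Skolem variables lie outside $\Gene_0 \cup \Gene_1$ and the bound variables of $\Gamma$ were taken pairwise distinct (so the encoding of Skolem terms by fresh variables is injective and faithful to their nesting). The genuine bookkeeping is in choosing the finite subset and the order of eliminations — peeling inner Skolem terms last, and, where two chosen instances happen to reuse one variable of $\Gene$ both as a plain universal witness and as part of another's Skolem term, renaming witnesses apart beforehand — so that at the moment ($\exists$-E) is applied the relevant eigenvariable really has disappeared from the ambient context. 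The stage annotations $A_i$ are inert throughout: (\textsc{Ins}), (\textsc{Gen}) and ($\exists$-E) all act at a fixed stage, so the whole argument reduces, stage by stage, to the familiar Skolemization bookkeeping for first-order classical logic.
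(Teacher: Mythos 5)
Your proposal is correct and follows essentially the same route as the paper: contraposition, extraction of a finite subset of $\Delta(\Gamma)$, recovery of each instance $\phi_i$ from its source $\theta_i @ A_i \in \Gamma$, and reconstruction of the quantifier prefixes with (\textsc{Ins}) and ($\exists$-E) to obtain $\{\theta_i @ A_i\} \vdash^{\varepsilon} \bot$. The only difference is one of detail: where the paper says ``in an appropriate order,'' you spell out the eigenvariable conditions and the well-founded ordering of Skolem-variable eliminations, which is exactly the bookkeeping the paper leaves implicit.
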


\begin{lem}
\label{thm:lemma2}
If $ {\mathcal T}, v, \rho; \varepsilon \Vdash \Delta(\Gamma) $,
then $ {\mathcal T}, v, \rho; \varepsilon \Vdash \Gamma $.
\begin{proof}
Let $ \forall^{n_0} \bar{\alpha}_0 \exists \beta_1 \dots
  \forall^{n_m} \bar{\alpha}_m \psi @ A \in \Gamma $.
We can get the following proposition by easy induction on the number of
existential quantifiers.
\begin{equation*}
  \textrm{for all $ \gamma_{i, j} \in \Gene $}
  {\mathcal T}, v, \rho; \varepsilon \Vdash
  \exists \beta_k \forall^{n_k} \bar{\alpha}_k \dots
  \forall^{n_m} \bar{\alpha}_n \psi
  \Subst{\bar{\alpha}_0, \beta_1, \dots, \bar{\alpha}_k :=
    \bar{\gamma}_0, \beta_1(\bar{\gamma}_0), \dots, \bar{\gamma}_k}
\end{equation*}
This lemma is an easy consequence of this proposition.
\end{proof}
\end{lem}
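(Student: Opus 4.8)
The plan is to reduce the claim to a statement about a single formula of $\Gamma$ and then strip off that formula's quantifier prefix one block at a time, using the Skolem function symbols introduced in the construction of $\Delta(\Gamma)$ as witnesses for the existential quantifiers. Since the satisfaction relation on contexts is defined pointwise, it suffices to fix an arbitrary $\theta @ A \in \Gamma$ and prove $\mathcal{T}, v, \rho; \varepsilon \Vdash \Seal{A}{\theta}$. By the prenex-normal-form assumption, $\theta = \forall^{n_0}\bar\alpha_0\,\exists\beta_1\,\forall^{n_1}\bar\alpha_1\cdots\exists\beta_m\,\forall^{n_m}\bar\alpha_m\,\psi$ with $\psi$ quantifier-free, and by the definition of $\Delta(\Gamma)$ the Herbrand instance $\psi\Subst{\bar\alpha_0,\beta_1,\dots,\bar\alpha_m := \bar\gamma_0,\beta_1(\bar\gamma_0),\dots,\bar\gamma_m}@A$ lies in $\Delta(\Gamma)$, and hence is satisfied by $\mathcal{T}, v, \rho$, for every choice of $\gamma_{ij}\in\Gene$. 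I would also record the standard substitution lemma for the Kripke semantics: replacing a bound transition variable by a transition variable $\delta$ affects satisfaction exactly as updating $\rho$ at that variable to $\rho(\delta)$ does.

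Then I would prove, by downward induction on $k$ from $m$ to $0$, the auxiliary statement that for all $\gamma_{ij}\in\Gene$ the formula obtained from $\theta$ by deleting its first $k$ universal blocks and first $k$ existential quantifiers and substituting $\bar\gamma_i$ for $\bar\alpha_i$ and the Skolem term $\beta_i(\bar\gamma_0,\dots,\bar\gamma_{i-1})$ for $\beta_i$ (for $i<k$) is satisfied at stage $A$ under $\mathcal{T}, v, \rho$. The innermost step ($k=m$) is immediate from the satisfied Herbrand instances together with the semantic substitution lemma, which converts the universally quantified innermost block into precisely those instances. The inductive step discharges one further universal block $\forall^{n_k}\bar\alpha_k$ (the Herbrand instances already range over all $\gamma_{kj}\in\Gene$) and supplies the witness $\beta_k(\bar\gamma_0,\dots,\bar\gamma_{k-1})$ for $\exists\beta_k$. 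Instantiating the auxiliary statement at $k=0$ gives $\mathcal{T}, v, \rho; \varepsilon \Vdash \Seal{A}{\theta}$; since $\theta @ A \in \Gamma$ was arbitrary, $\mathcal{T}, v, \rho; \varepsilon \Vdash \Gamma$ follows.

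The step I expect to be the crux is the treatment of the universal blocks: one must be sure that satisfying every instance in which the universally quantified variables have been replaced by transition variables drawn from the countably infinite set $\Gene$ really forces satisfaction of the genuine $\forall\alpha$, whose semantics quantifies over all of $L^*$. The argument rests on the semantic substitution lemma above together with the genericity afforded by the infinitely many substituted variables; keeping the indexing of the alternating quantifier prefix straight through the induction, and carrying the $\Seal{A}{-}$ annotation faithfully through the substitutions, is the only remaining bookkeeping, and everything else is routine.
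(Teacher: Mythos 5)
Your proposal is essentially the paper's own proof: the paper likewise argues pointwise over a prenex formula of $\Gamma$ and establishes, by induction on the quantifier prefix, exactly your auxiliary statement that every partially Skolemized instance is satisfied for all choices of $\gamma_{ij} \in \Gene$, with the Skolem terms serving as witnesses for the existentials and the totality of instances discharging the universal blocks. The crux you flag---that satisfaction of all variable-instances must yield the genuine $\forall$, whose semantics ranges over all of $L^{*}$---is treated no more explicitly in the paper (it is subsumed under ``easy induction''), so your argument is at the same level of rigor and follows the same route as the published one.
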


\paragraph{\emph{Proof of Theorem~\ref{thm:completeness}}}
We prove by contraposition.
Assume $ \Gamma \nvdash^{\varepsilon} \phi $.
Then $ \Gamma, \neg \phi @ \varepsilon \nvdash^{\varepsilon} \bot $.
By Lemma~\ref{thm:lemma1},
$ \Delta(\Gamma \cup \{ \neg \phi @ \varepsilon \} ) $ is a consistent assumption
in the quantifier-free fragment.
Therefore,
from completeness of the quantifier-free fragment,
there is a model such that
$ {\mathcal T}, v, \rho; \varepsilon \Vdash \Delta(\Gamma \cup \{ \neg \phi @ \varepsilon \}) $.
By Lemma~\ref{thm:lemma2},
$ {\mathcal T}, v, \rho; \varepsilon \Vdash \Gamma $
and $ {\mathcal T}, v, \rho; \varepsilon \Vdash \neg \phi $.
Therefore $ {\mathcal T}, v, \rho; \varepsilon \nVdash \phi $.
So $ \Gamma \nVdash \phi $.
\qed

\end{document}